\documentclass{amsart}
\usepackage{amsmath,amsfonts,amssymb}
\usepackage{graphicx,amsmath, amscd}
\usepackage{color,comment}
\usepackage[T1]{fontenc}
\usepackage[latin1]{inputenc}

\usepackage[lmargin=2cm,rmargin=2cm,tmargin=2cm,bmargin=2cm]{geometry}
\usepackage[toc,page]{appendix}

\usepackage{tikz}

\usepackage{booktabs}
\usepackage{diagbox}
\usepackage{array}
\newcolumntype{C}[1]{>{\centering\let\newline\\\arraybackslash\hspace{0pt}}m{#1}}



\newtheorem{theorem}{Theorem}[section]
\newtheorem{lemma}[theorem]{Lemma}

\newtheorem{proposition}[theorem]{Proposition}
\newtheorem{definition}[theorem]{Definition}
\newtheorem{remark}[theorem]{Remark}
\newtheorem{defi/prop}[theorem]{Definition/Proposition}
\newtheorem{example}[theorem]{Example}

\newtheorem{fact}[theorem]{Fact}

\newcommand{\N}{\mathbf{N}}

\newcommand{\R}{\mathbf{R}}
\newcommand{\C}{\mathbf{C}}
\renewcommand{\P}{\mathbf{P}}

\newcommand{\e}{\varepsilon}
\renewcommand{\leq}{\leqslant}
\renewcommand{\geq}{\geqslant}

\newcommand{\st}{\  : \ }

\newcommand{\Id}{\mathrm{Id}}
\newcommand{\id}{\mathrm{id}}

\newcommand{\A}{\mathrm{A}}
\newcommand{\B}{\mathrm{B}}

\DeclareMathOperator{\vrad}{vrad}

\DeclareMathOperator{\im}{Im}
\DeclareMathOperator{\tr}{Tr}
\DeclareMathOperator{\Sym}{Sym}
\DeclareMathOperator{\E}{\mathbf{E}}

\newcommand{\ketbra}[2]{| #1 \rangle\!\langle #2 |}

\newcommand{\ket}[1]{| #1 \rangle}

\author{C\'ecilia Lancien}

\address{\textbf{C\'{e}cilia Lancien:} Institut Camille Jordan, Universit\'{e} Claude Bernard Lyon 1, 69622 Villeurbanne Cedex, France \& F\'{\i}sica Te\`{o}rica: Informaci\'{o} i Fen\`{o}mens Qu\`{a}ntics, Universitat Aut\`{o}noma de Barcelona, 08193 Bellaterra (Barcelona), Spain.} 

\email{lancien@math.univ-lyon1.fr}

\title{$k$-extendibility of high-dimensional bipartite quantum states}

\begin{document}

\begin{abstract}
The idea of detecting the entanglement of a given bipartite state by searching for symmetric extensions of this state was first proposed by Doherty, Parrilo and Spedialeri. The complete family of separability tests it generates, often referred to as the hierarchy of \textit{$k$-extendibility tests}, has already proved to be most promising. The goal of this paper is to try and quantify the efficiency of this separability criterion in typical scenarios. For that, we essentially take two approaches. First, we compute the average width of the set of $k$-extendible states, in order to see how it scales with the one of separable states. And second, we characterize when random-induced states are, depending on the ancilla dimension, with high probability violating or not the $k$-extendibility test, and compare the obtained result with the corresponding one for entanglement \textit{vs} separability. The main results can be precisely phrased as follows: on $\C^d\otimes\C^d$, when $d$ grows, the average width of the set of $k$-extendible states is equivalent to $(2/\sqrt{k})/d$, while random states obtained as partial traces over an environment $\C^s$ of uniformly distributed pure states are violating the $k$-extendibility test with probability going to $1$ if $s<((k-1)^2/4k)d^2$. Both statements converge to the conclusion that, if $k$ is fixed, $k$-extendibility is asymptotically a weak approximation of separability, even though any of the other well-studied separability relaxations is outperformed by $k$-extendibility as soon as $k$ is above a certain (dimension independent) value.
\end{abstract}

\maketitle

\section{Introduction}

Deciding whether a given bipartite quantum state is entangled or separable (or even just close to separable) is known to be a computationally hard task (see \cite{Gurvits} and \cite{Gharibian}). Several much more easily checkable necessary conditions for separability do exist though, the most famous and widely used ones being perhaps the positivity of partial transpose criterion \cite{Peres}, the realignment criterion \cite{CW} or the $k$-extendibility criterion \cite{DPS}. All of them have in common that verifying if a given state fulfils them or not may be cast as a Semi-Definite Program (SDP) and hence be efficiently solved (see e.g.~the quite extensive review \cite{Doherty} for much more on that topic).

We focus here on a relaxation of the notion of separability of quite different kind: the so-called $k$-extendibility criterion for separability, which was introduced in \cite{DPS}. It is especially appealing because it provides a hierarchy of increasingly powerful separability tests (expressible as SDPs of increasing dimension), which is additionally complete, meaning that any entangled state is guaranteed to fail a test after some finite number of steps in the hierarchy. Let us be more precise.

\begin{definition}
Let $k\in\N$. A state $\rho_{\A\B}$ on a bipartite Hilbert space $\mathrm{A}\otimes\mathrm{B}$ is $k$-extendible with respect to $\mathrm{B}$ if there exists a state $\rho_{\A\B^k}$ on $\mathrm{A}\otimes\mathrm{B}^{\otimes k}$ which is invariant under any permutation of the $\mathrm{B}$ subsystems and such that $\rho_{\A\B}=\tr_{\B^{k-1}}\rho_{\A\B^k}$.
\end{definition}

\begin{theorem}[The complete family of $k$-extendibility criteria for separability, \cite{DPS}] \label{th:k-extendibility-criterion}
A state on a bipartite Hilbert space $\mathrm{A}\otimes\mathrm{B}$ is separable if and only if it is $k$-extendible with respect to $\mathrm{B}$ for all $k\in\N$.
\end{theorem}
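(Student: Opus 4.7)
The plan is to prove the two directions of the equivalence separately, the forward one being essentially definitional. If $\rho_{\A\B}=\sum_i p_i \sigma_i^{\A}\otimes\tau_i^{\B}$ is a separable state, then
$$\rho_{\A\B^k}:=\sum_i p_i \sigma_i^{\A}\otimes(\tau_i^{\B})^{\otimes k}$$
is a well-defined state on $\A\otimes\B^{\otimes k}$ that is invariant under any permutation of the $\B$ factors and has marginal equal to $\rho_{\A\B}$ on $\A\otimes\B$. So all the content lies in the converse.

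For the converse, I would combine a compactness argument with a quantum de Finetti theorem. Assume $\rho_{\A\B}$ is $k$-extendible for every $k\in\N$, and for each such $k$ fix a symmetric extension $\rho^{(k)}_{\A\B^k}$. By compactness of the state space on $\A\otimes\B^{\otimes j}$ (for every fixed $j$), a standard diagonal extraction yields a subsequence $(k_n)$ along which, for every $j\geq 1$, the marginal $\tr_{\B^{k_n-j}} \rho^{(k_n)}_{\A\B^{k_n}}$ converges to some state $\rho^{(\infty)}_{\A\B^j}$. The resulting family $\{\rho^{(\infty)}_{\A\B^j}\}_{j\geq 1}$ is consistent under partial trace, each member is invariant under permutations of its $\B$ factors, and $\rho^{(\infty)}_{\A\B}=\rho_{\A\B}$; in other words, the $k$-extendibility assumption has been promoted to the existence of an honest infinite symmetric extension.

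The final step is to invoke the infinite quantum de Finetti theorem (St\o rmer / Hudson-Moody) applied to the $\B^{\otimes\infty}$ tail, with $\A$ playing the role of a side system: any such permutation-invariant state admits an integral representation
$$\rho^{(\infty)}_{\A\B^\infty} = \int \sigma^{\A}(\tau)\otimes\tau^{\otimes\infty}\,d\mu(\tau),$$
for some probability measure $\mu$ on the state space of $\B$ and some measurable assignment $\tau\mapsto\sigma^{\A}(\tau)$ of states of $\A$. Tracing out all but one $\B$ subsystem then yields $\rho_{\A\B}=\int\sigma^{\A}(\tau)\otimes\tau\,d\mu(\tau)$, which is manifestly separable. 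The main obstacle is precisely this appeal to the de Finetti theorem; in finite dimensions one could alternatively bypass the compactness/limit machinery by invoking a quantitative finite de Finetti bound (for instance that of Christandl-K\"onig-Mitchison-Renner), which states that the single-$\B$ marginal of any symmetric state on $\A\otimes\B^{\otimes k}$ is within $O(1/k)$ in trace distance of a separable state, and then conclude by closedness of the set of separable states as $k\to\infty$.
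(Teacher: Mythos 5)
Your proposal is correct. The forward direction is exactly the paper's argument (the explicit extension $\sum_i p_i\,\sigma_i\otimes\tau_i^{\otimes k}$). For the converse, the paper simply invokes the quantum \emph{finite} de Finetti theorem of \cite{KR,CKMR}: a $k$-extendible state is within $O(d_{\B}^2/k)$ in trace norm of the (closed) set of separable states, so letting $k\to\infty$ concludes. Your primary route -- diagonal extraction of an infinite permutation-invariant extension followed by the infinite de Finetti representation of St\o rmer/Hudson--Moody with $\A$ as a side system -- is a genuinely different (and historically the original) path to the same conclusion; it avoids any quantitative bound but requires the compactness bookkeeping and the measure-theoretic integral decomposition. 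The alternative you sketch at the end is precisely the paper's argument, and in finite dimensions it is the more economical of the two since it reduces the whole converse to a single explicit estimate plus closedness of $\mathcal{S}(\A{:}\B)$. Both versions are sound; the only point worth making explicit in the compactness route is that permutation invariance and consistency of the family $\{\rho^{(\infty)}_{\A\B^j}\}_j$ are inherited from the finite extensions along the subsequence, which you correctly assert.
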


Note that one direction in Theorem \ref{th:k-extendibility-criterion} is obvious, namely that a separable state on some bipartite system is necessarily $k$-extendible for all $k\in\N$ (with respect to both subsystems). Indeed, if $\rho=\sum_{x}p_x\sigma_x\otimes\tau_x$ is separable, then $\sum_{x}p_x\sigma_x^{\otimes k}\otimes\tau_x$ and $\sum_{x}p_x\sigma_x\otimes\tau_x^{\otimes k}$ are symmetric extensions of $\rho$ to $k$ copies of the first and second subsystems respectively. The other direction in Theorem \ref{th:k-extendibility-criterion} follows from the quantum finite de Finetti theorem (see e.g.~\cite{KR,CKMR} for the seminal statements). The latter establishes, roughly speaking, that starting from a permutation-invariant state on some tensor power system and tracing out all except a few of the subsystems, one gets a state that may be well-approximated by a convex combination of tensor power states (with a vanishing error as the initial number of subsystems increases).

It is easy to see that if a state is $k$-extendible for some $k\in\N$, then it is automatically $k'$-extendible for all $k'\leq k$. Hence, the necessary and sufficient condition for separability provided by Theorem \ref{th:k-extendibility-criterion} actually decomposes into a series of increasingly constraining necessary conditions for separability, which are only asymptotically also sufficient (see Figure \ref{fig:k-ext}). In real life however, checks can only be done up to a finite level in this hierarchy. It thus makes sense to ask, given a finite $k\in\N$, how ``powerful'' the $k$-extendibility test is to detect entanglement.

\begin{figure}[h] \caption{The nested and converging sequence of $k$-extendibility relaxations of separability, $k\in\N$}
\label{fig:k-ext}
\begin{center}
\includegraphics[width=6cm]{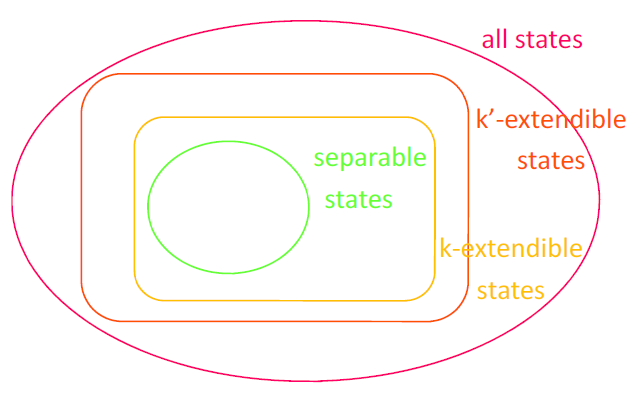}
\end{center}
\end{figure}

Actually, various more quantitative versions of Theorem \ref{th:k-extendibility-criterion} do exist, that put bounds on how far a $k$-extendible state can be from separable. Let us mention two quite different statements in that direction. The original result, appearing in \cite{CKMR}, establishes that a state on $\mathrm{A}\otimes\mathrm{B}$ which is $k$-extendible with respect to $\mathrm{B}$ is at distance at most $2d_{\B}^2/k$, in $1$-norm, from the set of separable states. It is a direct consequence of one of the quantitative versions of the quantum finite De Finetti theorem. A more recent result, proved essentially in \cite{BCY} and improved in \cite{BH}, stipulates that such a state is at distance at most $\sqrt{2\ln d_{\A}/k}$, in $\mathbf{LOCC}^{\rightarrow}$-norm, from the set of separable states (see e.g.~\cite{AL} for a precise definition of the operational one-way-LOCC norm). It relies on the observation that a $k$-extendible state has a small squashed entanglement, and therefore cannot be distinguished well from a separable state by local observers. The main problem of such estimates is that they become non-trivial only when $k\gg d_{\B}$ or $k\gg \ln d_{\A}$. So in the case where $d_{\A},d_{\B}$ are ``big'', can anything interesting still be said for a ``not too big'' $k$? On the other hand, these bounds valid for any $k$-extendible state are known to be close from optimal (there are examples of $k$-extendible states whose closest separable state is at distance of order $d_{\B}/k$ in $1$-norm or of order $\sqrt{\ln d_{\A}/k}$ in $\mathbf{LOCC}^{\rightarrow}$-norm). Consequently, one may only hope to make stronger statements about average behaviours.

This is precisely the general question we address here, being especially interested in the case of high-dimensional bipartite quantum systems. We try and quantify in two distinct ways the typical efficiency of the $k$-extendibility criterion for separability in this asymptotic regime.

The first approach consists in estimating a specific size parameter (known as the \textit{mean width}) of the set of $k$-extendible states when the dimension of the underlying Hilbert space goes to infinity. Comparing the obtained value with the known asymptotic estimate for the mean width of the set of separable states then tells us how the sizes of these two sets of states scale with one another. The computation is carried out in Section \ref{section:w(k-ext)} (where all needed notions related to high-dimensional convex geometry are properly defined as well) and ends with the concluding Theorem \ref{th:w(k-ext)}, some technical parts being relegated to Appendix \ref{appendix:gaussian}. In Section \ref{section:w-comparison}, the result is commented and comparisons are made between the mean-widths of, on the one hand, $k$-extendible states, and on the other, separable or PPT states. Besides, a smaller upper bound is derived, in Section \ref{section:w(k-ext-ppt]} and its companion technical Appendix \ref{appendix:gaussian-gamma}, on the mean width of the set of $k$-extendible states whose extension is required to be PPT (precise definitions and motivations to look at this set of states appear there).

The second approach consists in looking at random mixed states which are obtained by partial tracing over an ancilla space a uniformly distributed pure state, and characterizing when these are, with overwhelming probability as the dimension of the system grows, $k$-extendible or not. Again, comparing the obtained result with the known one for separability provides some information on how powerful the $k$-extendibility test is to detect entanglement. Section \ref{section:random-states} introduces all required material regarding the considered model of random-induced states and one possible way of detecting their non-$k$-extendibility. The adopted strategy is next seen through in Section \ref{section:non-k-ext-witness}, relying on technical statements put in Appendix \ref{appendix:wishart}, and concludes as Theorem \ref{th:not-kext}. The determined environment dimension below which random-induced states are with high probability violating the $k$-extendibility criterion is then compared, in Section \ref{section:threshold-comparison}, with the previously established ones for violating other separability criteria, and for actually not being separable.

Finally, generalizations to the unbalanced case are stated in Section \ref{section:unbalanced} (Theorems \ref{th:unbalanced} and \ref{th:unbalanced'}), while Section \ref{section:miscellaneous} exposes miscellaneous concluding remarks and loose ends.

The reader may have a look at Table \ref{table:comparison} for a sample corollary of this study.

\begin{table}[h] \caption{Comparison of the average and typical case performance of the $k$-extendibility criterion with that of the PPT and realignment criteria}
\label{table:comparison}
\begin{center}
\renewcommand{\arraystretch}{1.7}
\begin{tabular}{| C{8cm} | C{2.5cm} | C{2.5cm} |}
\hline
\diagbox[width=8cm]{$k$-extendibility ``beats''}{from the point of view of} & mean width of the set & entanglement detection of random states\\
\hline
PPT & for $k\geq 11$ & for $k\geq 17$ \\
\hline
realignment & ? & for $k\geq 5$ \\
\hline
\end{tabular}
\end{center}
\end{table}

Appendix \ref{appendix:combinatorics} gathers a bunch of standard definitions and facts about the combinatorics of permutations and partitions which are necessary for our purposes. All employed notation on that matter are also introduced there. In Appendix \ref{appendix:wick}, the connection is made between computing moments of GUE or Wishart matrices and counting permutations having a certain genus. These general observations play a key role in the moments' derivations of Appendices \ref{appendix:gaussian}, \ref{appendix:gaussian-gamma} and \ref{appendix:wishart}, which are, as for them, specifically the ones that we need to obtain our various statements. To get tractable expressions, though, a formula relating the number of cycles in some specific permutations is additionally required, whose proof is detailed in Appendix \ref{appendix:technical}. Appendix \ref{appendix:geodesics}, finally, is devoted to establishing the last crucial ingredient in most of our reasonings, namely bounding the number of non-geodesic permutations (in terms of the number of geodesic ones) in some particular instances which are of interest to us. Aside, Appendix \ref{appendix:convergence} is dedicated to proving more precise results than the ones which are strictly needed on the convergence of the studied random matrix ensembles, and in generalizing the developed method to establish the asymptotic freeness of certain gaussian random matrices.

\subsection*{Notation} For any Hilbert space $\mathrm{H}\equiv\C^n$, we shall denote by $\mathcal{H}(\mathrm{H})\equiv\mathcal{H}(n)$ the set of Hermitian operators on $\mathrm{H}$, and by $\mathcal{H}_+(\mathrm{H})\equiv\mathcal{H}_+(n)$ the subset of positive operators on $\mathrm{H}$. For each $p\in\N$, we define $\|\cdot\|_p$ as the Schatten $p$-norm on $\mathcal{H}(\mathrm{H})$, i.e. $\|\cdot\|_p=\left(\tr\left[|\cdot|^p\right]\right)^{1/p}$, and $\|\cdot\|_{\infty}=\lim_{p\rightarrow+\infty}\|\cdot\|_p$. Particular instances of interest are the trace class norm $\|\cdot\|_1$, the Hilbert--Schmidt norm $\|\cdot\|_2$, and the operator norm $\|\cdot\|_{\infty}$. We shall also denote by $\mathcal{D}(\mathrm{H})\equiv\mathcal{D}(n)$ the set of states on $\mathrm{H}$ (positive and trace $1$ operators).

We will in fact mostly consider the case where $\mathrm{H}=\mathrm{A}\otimes\mathrm{B}\equiv\C^d\otimes\C^d$ is a (balanced) bipartite Hilbert space. And we introduce the additional notations $\mathcal{S}(\mathrm{A}:\mathrm{B})\equiv\mathcal{S}(d\times d)$ for the set of separable states on $\mathrm{H}$, $\mathcal{P}(\mathrm{A}:\mathrm{B})\equiv\mathcal{P}(d\times d)$ for the set of PPT states on $\mathrm{H}$ (in both cases in the cut $\mathrm{A}:\mathrm{B}$), and for each $k\in\N$, $\mathcal{E}_k(\mathrm{A}:\mathrm{B})\equiv\mathcal{E}_k(d\times d)$ for the set of $k$-extendible states on $\mathrm{H}$ (in the cut $\mathrm{A}:\mathrm{B}$ and with respect to $\mathrm{B}$).

\subsection*{Preliminary technical lemma}

It will be essential for us in the sequel to express in a more tractable way the quantity $\sup_{\sigma\in\mathcal{E}_k(\A{:}\B)}\mathrm{Tr}(M\sigma)$, for any given Hermitian $M$ on $\A\otimes\B$. Such amenable expression is provided by Lemma \ref{lemma:sup-k-ext} below.

\begin{lemma} \label{lemma:sup-k-ext}
Let $k\in\N$. For any $M_{\A\B}\in\mathcal{H}(\mathrm{A}\otimes \mathrm{B})$, we have
\[ \underset{\sigma_{\A\B}\in\mathcal{E}_k(\A{:}\B)}{\sup}\ \tr\big(M_{\A\B}\sigma_{\A\B}\big) = \left\|\frac{1}{k}\underset{j=1}{\overset{k}{\sum}}\widetilde{M}_{\A\B^k}(j)\right\|_{\infty}, \]
where for each $1\leq j\leq k$, denoting by $\Id_{\widehat{\B}^k_j}$ the identity on $\B_1\otimes\cdots\otimes \B_{j-1}\otimes \B_{j+1}\otimes\cdots \otimes \B_k$, we defined $\widetilde{M}_{\A\B^k}(j)=M_{\A\B_j}\otimes\Id_{\widehat{\B}^k_j}$.
\end{lemma}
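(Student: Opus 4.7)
The plan is to rewrite the supremum over $k$-extendible states on $\A\otimes\B$ as a supremum over permutation-invariant states on the enlarged space $\A\otimes\B^{\otimes k}$, and then reduce the latter to a largest-eigenvalue computation via a symmetrization argument.

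First, I unfold the definition of $k$-extendibility: every $\sigma_{\A\B}\in\mathcal{E}_k(\A{:}\B)$ can be written as $\tr_{\B^{k-1}}\sigma_{\A\B^k}$ for some state $\sigma_{\A\B^k}$ on $\A\otimes\B^{\otimes k}$ invariant under all permutations of the $\B$-subsystems, and conversely any such partial trace lies in $\mathcal{E}_k(\A{:}\B)$. By the defining adjoint property of the partial trace one has
\[ \tr\bigl(M_{\A\B}\sigma_{\A\B}\bigr)=\tr\bigl(\widetilde{M}_{\A\B^k}(1)\,\sigma_{\A\B^k}\bigr). \]
I would then use the permutation invariance of $\sigma_{\A\B^k}$: for the transposition $\pi_{1,j}\in S_k$ swapping the first and $j$-th $\B$-subsystem, $(\Id_{\A}\otimes U_{\pi_{1,j}})\widetilde{M}_{\A\B^k}(1)(\Id_{\A}\otimes U_{\pi_{1,j}})^\dagger=\widetilde{M}_{\A\B^k}(j)$, so by cyclicity of the trace, $\tr\bigl(\widetilde{M}_{\A\B^k}(j)\,\sigma_{\A\B^k}\bigr)$ is independent of $j$. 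Averaging over $1\leq j\leq k$ yields
\[ \tr\bigl(M_{\A\B}\sigma_{\A\B}\bigr)=\tr\bigl(N_{\A\B^k}\,\sigma_{\A\B^k}\bigr),\qquad N_{\A\B^k}:=\frac{1}{k}\sum_{j=1}^k \widetilde{M}_{\A\B^k}(j), \]
where $N_{\A\B^k}$ is Hermitian and now itself permutation-invariant by construction.

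It thus remains to evaluate the supremum of $\tr\bigl(N_{\A\B^k}\tau\bigr)$ over permutation-invariant states $\tau$ on $\A\otimes\B^{\otimes k}$. The inequality $\tr\bigl(N_{\A\B^k}\tau\bigr)\leq\|N_{\A\B^k}\|_\infty$ is immediate. For the converse, I would pick a unit eigenvector $\ket{v}$ of $N_{\A\B^k}$ associated with the eigenvalue realising $\|N_{\A\B^k}\|_\infty$, set $\tau_0:=\ketbra{v}{v}$, and observe that its symmetrization $\bar\tau_0:=\frac{1}{k!}\sum_{\pi\in S_k}(\Id_{\A}\otimes U_\pi)\tau_0(\Id_{\A}\otimes U_\pi)^\dagger$ is a permutation-invariant state. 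Since $N_{\A\B^k}$ commutes with every $\Id_{\A}\otimes U_\pi$, cyclicity of the trace yields $\tr\bigl(N_{\A\B^k}\bar\tau_0\bigr)=\tr\bigl(N_{\A\B^k}\tau_0\bigr)=\|N_{\A\B^k}\|_\infty$, matching the announced identity.

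I do not foresee any real obstacle: both the passage from $\mathcal{E}_k(\A{:}\B)$ to its permutation-invariant lifts (immediate from the definition) and the symmetrization argument on the enlarged system (standard) are routine. The only point warranting a bit of care is the first step, namely the bookkeeping of which operator acts on which tensor factor when translating back and forth between $\tr\bigl(M_{\A\B}\sigma_{\A\B}\bigr)$ and $\tr\bigl(\widetilde{M}_{\A\B^k}(j)\,\sigma_{\A\B^k}\bigr)$.
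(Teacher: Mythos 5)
Your proof is correct and follows essentially the same route as the paper's: the identity $\frac{1}{k}\sum_{j=1}^k\widetilde{M}_{\A\B^k}(j)=\Sym_{\A{:}\B^k}\big(M_{\A\B}\otimes\Id_{\B^{k-1}}\big)$ that you obtain by conjugating $\widetilde{M}_{\A\B^k}(1)$ with the transpositions $\pi_{1,j}$ and averaging is exactly what the paper computes by grouping permutations according to $\pi(1)$, and your use of the invariance of $\sigma_{\A\B^k}$ is just the dual of the paper's move of pushing the symmetrization channel from the state onto the observable. The only organizational difference is that you keep the supremum over permutation-invariant states and then recover attainment of the norm by symmetrizing a top eigenvector (using that the averaged observable commutes with all $\Id_\A\otimes U(\pi)$), whereas the paper works with arbitrary states so that the supremum is immediately the operator norm; both steps are sound.
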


Before proving Lemma \ref{lemma:sup-k-ext}, let us introduce once and for all the following notation, which we shall later use on several occasions: for any $M_{\A\B^k}\in\mathcal{H}(\mathrm{A}\otimes \mathrm{B}^{\otimes k})$, we define its symmetrisation with respect to $\mathrm{B}^{\otimes k}$ as
\[ \Sym_{\A{:}\B^k}(M_{\A\B^k}) = \frac{1}{k!}\underset{\pi\in\mathfrak{S}(k)}{\sum}\big(\Id_\A\otimes U(\pi)_{\B^k}\big) M_{\A\B^k}\big(\Id_\A\otimes U(\pi)_{\B^k}\big)^{\dagger}, \]
where for each permutation $\pi\in\mathfrak{S}(k)$, $U(\pi)_{\B^k}$ denotes the associated permutation unitary on $\mathrm{B}^{\otimes k}$ (see e.g.~\cite{Harrow} for further details).

\begin{proof}
By definition, the condition $\sigma_{\A\B}\in\mathcal{E}_k(\A{:}\B)$ is equivalent to the condition $\sigma_{\A\B}=\tr_{\B^{k-1}}\Sym_{\A{:}\B^k}(\sigma_{\A\B^k})$ for some $\sigma_{\A\B^k}\in\mathcal{D}(\mathrm{A}\otimes \mathrm{B}^{\otimes k})$. Hence, for any $M_{\A\B}\in\mathcal{H}(\mathrm{A}\otimes \mathrm{B})$, we have
\begin{align*}
\underset{\sigma_{\A\B}\in\mathcal{E}_k(\A{:}\B)}{\sup}\ \tr_{\A\B}\big[M_{\A\B}\sigma_{\A\B}\big] &=\underset{\sigma_{\A\B^k}\in\mathcal{D}(\mathrm{A}\otimes\mathrm{B}^{\otimes k})}{\sup}\ \tr_{\A\B} \big[M_{\A\B} \tr_{\B^{k-1}}\Sym_{\A{:}\B^k}(\sigma_{\A\B^k})\big]\\
&= \underset{\sigma_{\A\B^k}\in\mathcal{D}(\mathrm{A}\otimes\mathrm{B}^{\otimes k})}{\sup}\ \tr_{\A\B^{k}} \big[\big(M_{\A\B}\otimes\Id_{\B^{k-1}}\big) \Sym_{\A{:}\B^k}(\sigma_{\A\B^k})\big]\\
&= \underset{\sigma_{\A\B^k}\in\mathcal{D}(\mathrm{A}\otimes\mathrm{B}^{\otimes k})}{\sup}\ \tr_{\A\B^{k}} \big[ \Sym_{\A{:}\B^k}\big(M_{\A\B}\otimes\Id_{\B^{k-1}}\big)\sigma_{\A\B^k}\big]\\
&= \left\|\Sym_{\A{:}\B^k} \big(M_{\A\B}\otimes\Id_{\B^{k-1}}\big)\right\|_{\infty}.
\end{align*}
Now, for each $\pi\in\mathfrak{S}(k)$, $\big(\Id_\A\otimes U(\pi)_{\B^k}\big) \big(M_{\A\B}\otimes\Id_{\B^{k-1}}\big) \big(\Id_\A\otimes U(\pi)_{\B^k}\big)^{\dagger}= M_{\A\B_{\pi(1)}}\otimes \Id_{\widehat{\B}^k_{\pi(1)}}$. Therefore, grouping together, for each $1\leq j\leq k$, the permutations $\pi\in\mathfrak{S}(k)$ such that $\pi(1)=j$, we get
\[ \Sym_{\A{:}\B^k} \big(M_{\A\B}\otimes\Id_{\B^{k-1}}\big)=\frac{1}{k}\sum_{j=1}^k M_{\A\B_j}\otimes\Id_{\widehat{\B}^k_j}, \]
and hence the advertised result.
\end{proof}

\section{Mean width of the set of $k$-extendible states for ``small'' $k$}
\label{section:w(k-ext)}

\subsection{Preliminaries on convex geometry}

Let us introduce a few notions coming from classical convex geometry which we shall need in the sequel. For any $K\subset\mathcal{H}(\C^n)$ and any $M\in\mathcal{H}(\C^n)$ having unit Hilbert--Schmidt norm, we define the width of $K$ in the direction $M$ as
\[ w(K,M) = \underset{\Delta\in K}{\sup}\mathrm{Tr}(M\Delta) . \]
The mean width of $K$ is then defined as the average of $w(K,\cdot)$ over the whole Hilbert--Schmidt unit sphere $S_{HS}(\C^n)$ of $\mathcal{H}(\C^n)$ (equipped with the Haar measure $\sigma$) i.e.
\[ w(K)= \int_{M\in S_{HS}(\C^n)}w(K,M)\mathrm{d}\sigma(M) = \int_{M\in S_{HS}(\C^n)} \left[\underset{\Delta\in K}{\sup}\mathrm{Tr}(M\Delta)\right]\mathrm{d}\sigma(M). \]
This average width $w$ is an interesting size parameter, on its own, but also because it is related to other important geometric quantities, such as e.g.~the volume radius $\mathrm{vrad}$, which is defined as the radius of the Euclidean ball having same volume (i.e.~Lebesgue measure). For instance, we have for any convex body $K$ the Urysohn inequality $w(K)\geq \mathrm{vrad}(K)$, and for most of the convex bodies $K$ we shall be considering a ``reverse'' Urysohn inequality $w(K)\leq \mu\,\mathrm{vrad}(K)$ for some $\mu\geq 1$. These connections and the precise formulation of these convex geometry results are exemplified in Section \ref{section:w-comparison}.

In order to compute the quantity $w(K)$, it is often convenient to re-express it as a Gaussian rather than spherical averaging. We thus denote by $GUE(n)$ the Gaussian Unitary Ensemble on $\C^n$, which is the standard Gaussian vector in $\mathcal{H}(\C^n)$ (equivalently, $G\sim GUE(n)$ if $G=(H+H^{\dagger})/\sqrt{2}$ with $H$ a $n\times n$ matrix having independent complex normal entries). And we define the Gaussian mean width of $K$ as
\[ w_G(K) = \E_{G\sim GUE(n)}\left[\underset{\Delta\in K}{\sup}\mathrm{Tr}(G\Delta)\right] .\]
Just observing that for $G\sim GUE(n)$, $G/\|G\|_2$ is uniformly distributed over $S_{HS}(\C^n)$, and $G/\|G\|_2$, $\|G\|_2$ are independent random variables, we get that the link between both quantities is, setting $\gamma(n)=\E_{G\sim GUE(n)}\|G\|_2$, which is known to satisfy $\gamma(n)\sim_{n\rightarrow +\infty}n$ (see e.g.~\cite{AGZ}, Chapter 2, for a proof),
\begin{equation} \label{eq:w-wG} w(K)= \frac{1}{\gamma(n)}w_G(K). \end{equation}

\begin{remark} All the sets $K$ that we will consider in the sequel will actually be subsets of $\mathcal{D}(\C^n)$, hence living in the hyperplane of $\mathcal{H}(\C^n)$ composed of trace $1$ elements, i.e.~in a space of real dimension $n^2-1$, rather than $n^2$. It would thus seem more natural to define their mean width $w(K)$ as an average width over a $n^2-2$, rather than $n^2-1$, dimensional Euclidean unit sphere. The Gaussian mean width $w_G(K)$, on the other hand, is an intrinsic notion that does not depend on the ambient dimension (because marginals of standard Gaussian vectors are themselves standard Gaussian vectors). As a consequence, we see from equation \eqref{eq:w-wG} that computing the mean width of $K$ as if it was a $n^2$ dimensional set is asymptotically equivalent to computing it taking into account that it is in fact a $n^2-1$ dimensional set. We may therefore serenely forget about this issue.
\end{remark}


Our aim is now to estimate, for any fixed $k\in\N$, the mean width of the set of $k$-extendible states on $\A\otimes\B$ when $\A\equiv\B\equiv\C^d$ and $d\rightarrow+\infty$. By the definitions above, we have
\[ w\big(\mathcal{E}_k(\A{:}\B)\big) 
= \frac{1}{\gamma(d^2)} \mathbf{E}_{G_{\A\B}\sim GUE(d^2)} \left[ \underset{\sigma_{\A\B}\in \mathcal{E}_k(\A{:}\B)}{\sup}\mathrm{Tr}\big(G_{\A\B}\sigma_{\A\B}\big) \right]. \]
Using the result of Lemma \ref{lemma:sup-k-ext}, and the notation introduced there, we thus get,
\begin{equation} \label{eq:w(k-ext)-definition} w\big(\mathcal{E}_k(\A{:}\B)\big) = \frac{1}{\gamma(d^2)} \mathbf{E}_{G_{\A\B}\sim GUE(d^2)} \left\| \frac{1}{k} \underset{j=1}{\overset{k}{\sum}}\widetilde{G}_{\A\B^k}(j) \right\|_{\infty}. \end{equation}

\subsection{An operator-norm estimate}

As justified above, to obtain the mean width of the set of $k$-extendible states on $\C^d\otimes\C^d$, what we need is to compute the average operator-norm $\mathbf{E} \left\|\sum_{j=1}^k\widetilde{G}_{\A\B^k}(j)\right\|_{\infty}$, for $G$ a GUE matrix on $\C^d\otimes\C^d$. We will show that the following asymptotic estimate holds.

\begin{proposition} \label{prop:gaussian-infty}
Fix $k\in\N$. Then,
\[ \mathbf{E}_{G_{\A\B}\sim GUE(d^2)} \left\| \underset{j=1}{\overset{k}{\sum}}\widetilde{G}_{\A\B^k}(j) \right\|_{\infty} \underset{d\rightarrow+\infty}{\sim} 2\sqrt{k}d .\]
\end{proposition}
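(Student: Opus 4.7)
The plan is to use the moment method on the self-adjoint random matrix $S = \sum_{j=1}^k \widetilde{G}_{\A\B^k}(j)$, which acts on a space of dimension $N = d^{k+1}$. From the pointwise sandwich $\|S\|_\infty^{2p} \leq \tr S^{2p} \leq N\|S\|_\infty^{2p}$, taking expectations and using Jensen's inequality gives
\[ \mathbf{E}\|S\|_\infty \leq \big(\mathbf{E}\tr S^{2p}\big)^{1/(2p)} \leq N^{1/(2p)}\,\big(\mathbf{E}\|S\|_\infty^{2p}\big)^{1/(2p)} . \]
Both inequalities become tight as $p = p(d) \to \infty$ faster than $\log d$, so that $N^{1/(2p)} = d^{(k+1)/(2p)} \to 1$. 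The strategy is thus to compute the asymptotics of $\mathbf{E}\tr S^{2p}$ in a suitable window of $p$, turn the upper inequality directly into an upper bound on $\mathbf{E}\|S\|_\infty$, and obtain the matching lower bound from the other inequality via Gaussian concentration of $\|S\|_\infty$ around its mean.

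The core of the work is the explicit computation of $\mathbf{E}\tr S^{2p}$. Expanding
\[ \tr S^{2p} = \sum_{\mathbf{j}\in\{1,\ldots,k\}^{2p}} \tr\big(\widetilde{G}(j_1)\cdots\widetilde{G}(j_{2p})\big) \]
and applying Wick's theorem to the Gaussian expectations, $\mathbf{E}\tr S^{2p}$ becomes a double sum indexed by pair partitions $\pi$ of $\{1,\ldots,2p\}$ and by labelings $\mathbf{j}\in\{1,\ldots,k\}^{2p}$. For each $(\pi,\mathbf{j})$, the summand factorises into a product of $d$-powers: one from the $\A$-loops, governed by the interplay of $\pi$ with the cyclic permutation $\gamma = (1,2,\ldots,2p)$ as in the standard GUE moment expansion, and one for each of the $k$ families of $\B_i$-loops, whose loop count depends on how the color class $\mathbf{j}^{-1}(i)$ interacts with $\pi$. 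Following the topological/genus formalism of Appendix \ref{appendix:wick}, the maximal exponent $2p+k+1$ is attained exactly when $\pi$ is non-crossing, i.e.~geodesic between the identity and $\gamma$ in the Cayley graph of $\mathfrak{S}(2p)$, and $\mathbf{j}$ is constant on each pair of $\pi$. This optimal stratum contributes $C_p\,k^p\,d^{\,2p+k+1}$, where $C_p = \tfrac{1}{p+1}\binom{2p}{p}$ is the Catalan number counting non-crossing pairings and $k^p$ counts the pair-constant labelings.

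The main technical obstacle is to show that all other $(\pi,\mathbf{j})$-configurations are genuinely subleading and collectively contribute only an $o(1)$ correction. Two deviations must be controlled. First, pair partitions of positive genus lose a factor $d^{-2}$ per unit of genus, exactly as in the classical GUE moment computation. Second, and more delicate, a labeling that is not pair-constant introduces \emph{defects}: each defect (a pair of $\pi$ carrying two different labels) simultaneously reduces the loop count in some $\B_i$-family and forces the corresponding color class to be non-geodesic with respect to $\pi$. Bounding the number of such defective labelings while paying a $d^{-2}$ factor per defect is exactly what the counting estimates for non-geodesic permutations in Appendix \ref{appendix:geodesics} are designed to provide; executing this bound uniformly in $p$ and combining it with the combinatorial multiplicities is the crux of the proof.

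Putting everything together yields $\mathbf{E}\tr S^{2p} = C_p\,k^p\,d^{2p+k+1}\,(1+o(1))$ uniformly for $p$ in a range such as $(\log d)^2 \leq p \leq d^{1/2}$. Since $C_p^{1/(2p)} \to 2$, this gives $(\mathbf{E}\tr S^{2p})^{1/(2p)} \sim 2\sqrt{k}\,d$, whence $\mathbf{E}\|S\|_\infty \leq (2\sqrt{k}+o(1))d$ by the upper sandwich inequality. For the matching lower bound, observe that $G \mapsto \|S\|_\infty$ is $k$-Lipschitz for the Hilbert--Schmidt metric, since $\|\widetilde{H}(j)\|_\infty = \|H\|_\infty$ and the triangle inequality yield $\|S_G - S_{G'}\|_\infty \leq k\|G-G'\|_2$. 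Gaussian Lipschitz concentration then gives $(\mathbf{E}\|S\|_\infty^{2p})^{1/(2p)} \leq \mathbf{E}\|S\|_\infty + O(k\sqrt{p})$, which plugged into the lower sandwich inequality produces $\mathbf{E}\|S\|_\infty \geq (2\sqrt{k}-o(1))d$ as soon as $p = o(d^2)$, completing the equivalent.
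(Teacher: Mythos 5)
Your proposal is correct, and the upper bound is essentially the paper's argument: the inequality $\mathbf{E}\|S\|_{\infty}\leq(\mathbf{E}\tr S^{2p})^{1/2p}$, the identification of the dominant stratum (non-crossing $\pi$ with pair-constant labelings, contributing $\mathrm{Cat}_p\,k^p\,d^{2p+k+1}$), and the control of defective configurations via the counting estimates of Appendix \ref{appendix:geodesics}, with $p=p(d)\to\infty$ slowly so that $d^{(k+1)/2p}\to 1$. Where you genuinely diverge is the lower bound. The paper deduces $\mathbf{E}\|S\|_{\infty}\geq c_d\,2\sqrt{k}d$ from the convergence in moments of the empirical spectral distribution to $\mu_{SC(k)}$ (testing against continuous functions supported inside $[-2\sqrt{k},2\sqrt{k}]$ and citing \cite{AGZ} for the details), whereas you run the sandwich $\tr S^{2p}\leq d^{k+1}\|S\|_{\infty}^{2p}$ backwards and absorb the gap between $(\mathbf{E}\|S\|_{\infty}^{2p})^{1/2p}$ and $\mathbf{E}\|S\|_{\infty}$ by Gaussian Lipschitz concentration, using that $G\mapsto\|S_G\|_{\infty}$ is $k$-Lipschitz in Hilbert--Schmidt norm. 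Your route is self-contained and fully quantitative (it only needs the trivial positivity lower bound $\mathbf{E}\tr S^{2p}\geq\mathrm{Cat}_p k^p d^{2p+k+1}$ plus $k\sqrt{p}=o(d)$, amply satisfied in the working window), and it avoids the somewhat informal weak-convergence step of the paper; the paper's route, in exchange, requires no concentration input and is the one that survives verbatim in settings where only moment convergence is available. Two small caveats: your claimed uniformity ``$\mathbf{E}\tr S^{2p}=\mathrm{Cat}_p k^p d^{2p+k+1}(1+o(1))$ for $p\leq d^{1/2}$'' is slightly too generous as stated, since the paper's correction factor $1+k^2p^5/4d^2$ is only $1+o(1)$ for $p\ll d^{2/5}$ (though $(\mathbf{E}\tr S^{2p})^{1/2p}\sim 2\sqrt{k}d$ does persist up to $p\ll d^{1/2}$ after taking the $2p$-th root); and the defect-counting step you rightly call the crux is asserted rather than executed, but Lemma \ref{lemma:number-functions,pairings-defect} is exactly the estimate needed, so nothing is missing in substance.
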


As a preliminary step towards estimating the sup-norm $\mathbf{E} \left\|\sum_{j=1}^k\widetilde{G}_{\A\B^k}(j)\right\|_{\infty}$, we will look at the $2p$-order moments $\mathbf{E}\, \mathrm{Tr}\left[\left(\sum_{j=1}^k\widetilde{G}_{\A\B^k}(j)\right)^{2p}\right]$, $p\in\mathbf{N}$, and show that they can be expressed in terms of the $2p$-order moments of a centered semicircular distribution of appropriate parameter.

So let us recall first a few required definitions. For any $\sigma>0$, we shall denote by $\mu_{SC(\sigma^2)}$ the centered semicircular distribution of variance parameter $\sigma^2$, whose density is given by
\[ \mathrm{d}\mu_{SC(\sigma^2)}(x)=\frac{1}{2\pi\sigma^2}\sqrt{4\sigma^2-x^2} \mathbf{1}_{[-2\sigma,2\sigma]}(x)\mathrm{d}x .\]
We shall also denote, for each $p\in\N$, by $\mathrm{M}_{SC(\sigma^2)}^{(p)}$ its $p$-order moment, i.e.~ $\mathrm{M}_{SC(\sigma^2)}^{(p)}=\int_{-\infty}^{+\infty}x^{p}\mathrm{d}\mu_{SC(\sigma^2)}(x)$. It is well-known that
\[ \forall\ p\in\N,\ \mathrm{M}_{SC(\sigma^2)}^{(2p-1)}=0\ \ \text{and}\ \ \mathrm{M}_{SC(\sigma^2)}^{(2p)}=\sigma^{2p}\,\mathrm{Cat}_p, \]
where $\mathrm{Cat}_p$ is the $p^{\text{th}}$ Catalan number defined in Lemma \ref{lemma:catalan}.

\begin{proposition} \label{prop:gaussian-p} Fix $k\in\N$. Then, when $d\rightarrow+\infty$, the random matrix $\left(\sum_{j=1}^k\widetilde{G}_{\A\B^k}(j)\right)/d$ converges in moments towards a centered semicircular distribution of parameter $k$. Equivalently, this means that, for any $p\in\N$,
\[ \mathbf{E}_{G_{\A\B}\sim GUE(d^2)}\, \mathrm{Tr}\left[ \left( \underset{j=1}{\overset{k}{\sum}}\widetilde{G}_{\A\B^k}(j) \right)^{2p-1} \right]=0\ \ \text{and}\ \ \mathbf{E}_{G_{\A\B}\sim GUE(d^2)}\, \mathrm{Tr}\left[ \left( \underset{j=1}{\overset{k}{\sum}}\widetilde{G}_{\A\B^k}(j) \right)^{2p} \right] \underset{d\rightarrow+\infty}{\sim} \mathrm{M}_{SC(k)}^{(2p)}d^{2p+k+1} .\]
\end{proposition}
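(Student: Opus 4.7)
The approach is the moment method: apply Wick's formula to the expanded $2p$-th power trace, reorganize the result as a combinatorial sum over pair partitions of $\{1,\ldots,2p\}$ weighted by multi-indices in $\{1,\ldots,k\}^{2p}$, and identify the leading term as the Catalan-weighted semicircle moment. The odd-moment statement is immediate: after expanding
\[ \tr\Bigl[\bigl(\textstyle\sum_j \widetilde{G}_{\A\B^k}(j)\bigr)^{2p-1}\Bigr] = \sum_{j_1,\ldots,j_{2p-1}} \tr\bigl[\widetilde{G}_{\A\B^k}(j_1)\cdots\widetilde{G}_{\A\B^k}(j_{2p-1})\bigr], \]
each summand is a monomial of odd total degree in the Gaussian entries of $G_{\A\B}$, whose expectation vanishes.

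For the even moments, I would write
\[ \E\,\tr\Bigl[\bigl(\textstyle\sum_j \widetilde{G}_{\A\B^k}(j)\bigr)^{2p}\Bigr] = \sum_{\mathbf{j}\in\{1,\ldots,k\}^{2p}} \E\,\tr\bigl[\widetilde{G}_{\A\B^k}(j_1)\cdots\widetilde{G}_{\A\B^k}(j_{2p})\bigr], \]
and apply the Wick expansion (in the form developed in Appendix~\ref{appendix:wick}) to each fixed-$\mathbf{j}$ term. Since the factor $\widetilde{G}_{\A\B^k}(j_\ell) = G_{\A\B_{j_\ell}}\otimes\Id_{\widehat{\B}^k_{j_\ell}}$ is a single copy of $G_{\A\B}$ acting on $\A$ and on the $B$-subsystem of label $j_\ell$ (and trivially elsewhere), every Wick pairing $\pi$ of $\{1,\ldots,2p\}$ induces a permutation structure on the $\A$-indices together with $k$ separate permutation structures on the $\B_1,\ldots,\B_k$-indices (on each $\B_j$ the permutation is nontrivial only at positions $\ell$ with $j_\ell=j$, and is the identity elsewhere). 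The contribution of a pair $(\pi,\mathbf{j})$ then evaluates to $d^{N(\pi,\mathbf{j})}$, where $N(\pi,\mathbf{j})$ is the total number of cycles of these $k+1$ permutations.

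The leading order comes from those $(\pi,\mathbf{j})$ realising the maximum of $N(\pi,\mathbf{j})$. A genus / Euler-characteristic argument of the type carried out in Appendix~\ref{appendix:geodesics} yields the uniform bound $N(\pi,\mathbf{j})\leq 2p+k+1$, with equality if and only if $\pi$ is a non-crossing pair partition \emph{and} $\mathbf{j}$ is constant on every block of $\pi$, i.e.~$j_\ell=j_{\ell'}$ whenever $\{\ell,\ell'\}\in\pi$. The non-crossing pair partitions are counted by $\mathrm{Cat}_p$, and each one admits exactly $k^p$ such compatible multi-indices (one independent choice of label per pair). Summing these maximal contributions gives
\[ \E\,\tr\Bigl[\bigl(\textstyle\sum_j \widetilde{G}_{\A\B^k}(j)\bigr)^{2p}\Bigr] \underset{d\to\infty}{\sim} k^p\,\mathrm{Cat}_p\,d^{2p+k+1} = \mathrm{M}_{SC(k)}^{(2p)}\,d^{2p+k+1}, \]
which is the claimed asymptotic equivalent.

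The main technical obstacle is the precise genus bound together with the equality case, and the quantitative control of sub-leading terms: one must show that, summed over all non-maximal pairs $(\pi,\mathbf{j})$, the error is indeed of strictly smaller order than $d^{2p+k+1}$, despite the number of such pairs growing as a power of $k$ and $p$. This is exactly what the combinatorial identities of Appendix~\ref{appendix:technical} and the bounds on the number of non-geodesic permutations from Appendix~\ref{appendix:geodesics} are designed to supply, and it is the step where all the real work lies.
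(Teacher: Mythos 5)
Your proposal is correct and follows essentially the same route as the paper: Wick expansion over pair partitions and label functions, reduction of the power of $d$ to a total cycle count, the geodesic bound $\sharp(\gamma^{-1}\lambda)+\sharp(\gamma_f^{-1}\lambda)+k-|\im(f)|\leq 2p+k+1$ with equality exactly when $\lambda\in NC^{(2)}(2p)$ and the labels are constant on its blocks, and the count $k^p\,\mathrm{Cat}_p$ of maximizers. One small correction to your closing remark: since $p$ and $k$ are fixed here, the sub-leading terms are controlled trivially (finitely many pairs $(\pi,\mathbf{j})$, each contributing at most $d^{2p+k-1}$), so the quantitative bounds on non-geodesic pairings from Appendix \ref{appendix:geodesics} are not needed for this proposition --- they only become essential in the proof of Proposition \ref{prop:gaussian-infty}, where $p$ is taken to grow with $d$.
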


\begin{remark}
Stronger convergence results than the one established in Proposition \ref{prop:gaussian-p} may in fact be proved, as discussed in Appendix \ref{appendix:convergence}.
\end{remark}

\begin{proof} [Proof of Proposition \ref{prop:gaussian-p}]
Let $p\in\N$. Computing the value of the $2p$-order moment $\mathbf{E}\, \mathrm{Tr}\left[\left(\sum_{j=1}^k\widetilde{G}_{\A\B^k}(j)\right)^{2p}\right]$ may be done using the Gaussian Wick formula (see Lemma \ref{lemma:wick} for the statement and Appendix \ref{appendix:wick-gaussian} for a succinct summary of how to derive moments of GUE matrices from it). In our case, what we get by the computations carried out in Appendix \ref{appendix:gaussian} and summarized in Proposition \ref{prop:gaussian-p-preliminary} is that, for any $d\in\N$, denoting by $\sharp(\cdot)$ the number of cycles in a permutation,
\begin{align}
\mathbf{E}\, \mathrm{Tr}\left[\left(\sum_{j=1}^k\widetilde{G}_{\A\B^k}(j)\right)^{2p}\right] & =  \sum_{f:[2p]\rightarrow[k]} \mathbf{E}\, \mathrm{Tr}\left[\underset{i=1}{\overset{2p}{\overrightarrow{\prod}}}\widetilde{G}_{\A\B^k}(f(i))\right]\\
& =  \underset{f:[2p]\rightarrow[k]}{\sum} \underset{\lambda\in\mathfrak{P}^{(2)}(2p)}{\sum} d^{\sharp(\gamma^{-1}\lambda)+\sharp(\gamma_f^{-1}\lambda) + k - |\im(f)|},
\label{eq:p-moment-gaussian} \end{align}
where we defined on $\{1,\ldots,2p\}$, $\mathfrak{P}^{(2)}(2p)$ as the set of pair partitions, $\gamma=(2p\,\ldots\,1)$ as the canonical full cycle, and for each $f:[2p]\rightarrow[k]$, $\gamma_f=\gamma_{f=1}\cdots\gamma_{f=k}$ as the product of the canonical full cycles on each of the level sets of $f$.

We now have to understand which $\lambda\in\mathfrak{P}^{(2)}(2p)$ and $f:[2p]\rightarrow[k]$ contribute to the dominating term in the moment expansion \eqref{eq:p-moment-gaussian}, i.e.~are such that the quantity $\sharp(\gamma^{-1}\lambda)+\sharp(\gamma_f^{-1}\lambda) + k - |\im(f)|$ is maximal.

First of all, for any $\lambda\in\mathfrak{P}^{(2)}(2p)$, we have
\begin{equation} \label{eq:lambda-lambda_f'}
\sharp(\lambda)+\sharp(\gamma^{-1}\lambda) = 4p - \left(|\lambda|+|\gamma^{-1}\lambda|\right) \leq 4p-|\gamma^{-1}| = 4p-(2p-1)=2p+1,
\end{equation}
where the first equality is by Lemma \ref{lemma:cycles-transpositions}, while the second inequality is by equation \eqref{eq:geodesic'} in Lemma \ref{lemma:distance} and is an equality if and only if the pair-partition $\lambda$ is non-crossing. Next, for any $\lambda\in\mathfrak{P}^{(2)}(2p)$ and $f:[2p]\rightarrow[k]$, we have
\begin{equation} \label{eq:lambda-lambda_f''}
\sharp(\lambda)+\sharp(\gamma_f^{-1}\lambda) = 4p-\left(|\lambda|+|\gamma_f^{-1}\lambda|\right) \leq 4p - |\gamma_f^{-1}|= 4p - \left(2p-|\im(f)|\right) =2p+|\im(f)|,
\end{equation}
where the first equality is again by Lemma \ref{lemma:cycles-transpositions}, while the second inequality is by equation \eqref{eq:geodesic''} in Lemma \ref{lemma:distance} and is an equality if and only if the pair-partition $\lambda$ is non-crossing and is finer than the partition of $\{1,\ldots,2p\}$ induced by $\gamma_f$ (i.e.~$f$ takes the same value on elements belonging to the same pair-block of $\lambda$).

Putting equations \eqref{eq:lambda-lambda_f'} and \eqref{eq:lambda-lambda_f''} together, we get that for any $\lambda\in\mathfrak{P}^{(2)}(2p)$ and $f:[2p]\rightarrow[k]$ (just keeping in mind that necessarily $\sharp(\lambda)=p$),
\begin{equation} \label{eq:lambda-lambda_f} \sharp(\gamma^{-1}\lambda)+\sharp(\gamma_f^{-1}\lambda) + k - |\im(f)|\leq 2p+k+1, \end{equation}
with equality if and only if $\lambda\in NC^{(2)}(2p)$ and $f\circ\lambda=f$. Since it is well-known that there are $\mathrm{Cat}_p$ elements in $NC^{(2)}(2p)$, and for each of these there are $k^p$ functions which are constant on each of its $p$ pair-blocks, we indeed get the asymptotic estimate announced in Proposition \ref{prop:gaussian-p}, namely
\[ \mathbf{E}\, \mathrm{Tr}\left[ \left( \underset{j=1}{\overset{k}{\sum}}\widetilde{G}_{\A\B^k}(j) \right)^{2p} \right] \underset{d\rightarrow+\infty}{\sim} k^p\,\mathrm{Cat}_p\,d^{2p+k+1} =\mathrm{M}_{SC(k)}^{(2p)}d^{2p+k+1} . \qedhere \]
\end{proof}

\begin{proof} [Proof of Proposition \ref{prop:gaussian-infty}]
The convergence in moments stated in Proposition \ref{prop:gaussian-p} implies 
that, asymptotically, the matrix $\left(\sum_{j=1}^k\widetilde{G}_{\A\B^k}(j)\right)/d$ has a smallest eigenvalue and a largest eigenvalue which are, on average, at most the lower-edge and at least the upper-edge of the support of $\mu_{SC(k)}$, i.e.~$-2\sqrt{k}$ and $2\sqrt{k}$. Indeed, convergence of all moments of the empirical spectral distribution of $\left(\sum_{j=1}^k\widetilde{G}_{\A\B^k}(j)\right)/d$ entails convergence of all polynomial functions, and therefore of all continuous functions with bounded support, when integrated against it. And this in turn entails (when applied to continuous functions with support strictly included in $[-2\sqrt{k},2\sqrt{k}]$) that the extreme eigenvalues of $\left(\sum_{j=1}^k\widetilde{G}_{\A\B^k}(j)\right)/d$ cannot be, on average, strictly bigger than $-2\sqrt{k}$ or smaller $2\sqrt{k}$. The reader is referred to \cite{AGZ}, Chapter 2, for all the technical details of the argument. Hence in other words, Proposition \ref{prop:gaussian-p} guarantees that there exist positive constants $c_d\rightarrow_{d\rightarrow+\infty}1$ such that
\begin{equation} \label{eq:lower-bound-gaussian} \mathbf{E} \left\| \underset{j=1}{\overset{k}{\sum}}\widetilde{G}_{\A\B^k}(j) \right\|_{\infty} 
\geq c_d\, 2\sqrt{k}d .\end{equation}

In the opposite direction, Proposition \ref{prop:gaussian-p} only guarantees that the matrix $\left(\sum_{j=1}^k\widetilde{G}_{\A\B^k}(j)\right)/d$ asymptotically has, on average, no strictly positive fraction of eigenvalues strictly below $-2\sqrt{k}$ or above $2\sqrt{k}$. So to show that the reverse inequality to \eqref{eq:lower-bound-gaussian} holds too, a little more care is required. Indeed, to say it roughly, we have to make sure that in the moment's expression \eqref{eq:p-moment-gaussian}, the permutations contributing to the non-dominating terms (in $d$) are not too numerous.

For $d\in\N$ fixed, it holds thanks to Jensen's inequality and monotonicity of Schatten norms that
\begin{equation} \label{eq:infty-p-gaussian} \forall\ p\in\N,\ \mathbf{E} \left\|\sum_{j=1}^k\widetilde{G}_{\A\B^k}(j)\right\|_{\infty} \leq \left( \mathbf{E} \left\|\sum_{j=1}^k\widetilde{G}_{\A\B^k}(j)\right\|_{\infty}^{2p}\right)^{1/2p} \leq \left(\mathbf{E}\, \mathrm{Tr}\left[\left(\sum_{j=1}^k\widetilde{G}_{\A\B^k}(j)\right)^{2p}\right] \right)^{1/2p} .\end{equation}
So let us fix $d\in\N$ and $p\in\N$, and rewrite \eqref{eq:p-moment-gaussian} explicitly as an expansion in powers of $d$, keeping in the sum the permutations not saturating equation \eqref{eq:lambda-lambda_f}. Being cautious only with the permutations not saturating equation \eqref{eq:lambda-lambda_f''}, and not with those not saturating equation \eqref{eq:lambda-lambda_f'}, we get
\begin{equation} \label{eq:momentsG-defect}
\mathbf{E}\, \mathrm{Tr}\left[\left(\sum_{j=1}^k\widetilde{G}_{\A\B^k}(j)\right)^{2p}\right] \leq \left(\underset{f:[2p]\rightarrow[k]}{\sum} \sum_{\delta=0}^{\lfloor(p+k)/2\rfloor} \left|\mathfrak{P}^{(2)}_{f,\delta}(2p)\right| d^{-2\delta}\right) d^{2p+k+1},
\end{equation}
where we defined, for each $f:[2p]\rightarrow[k]$ and each $0\leq\delta\leq \lfloor(p+k)/2\rfloor$,
\[ \mathfrak{P}^{(2)}_{f,\delta}(2p)=\left\{ \lambda\in\mathfrak{P}^{(2)}(2p) \st \sharp(\gamma_f^{-1}\lambda) =p+|\im(f)|-2\delta \right\}. \]
In words, $\mathfrak{P}^{(2)}_{f,\delta}(2p)$ is nothing else than the set of permutations which have a defect $2\delta$ from lying on the geodesics between the identity and the product of the canonical full cycles on each of the level sets of $f$. This justifies in particular \textit{a posteriori} why the summation in \eqref{eq:momentsG-defect} is only over even defects (see the parity argument in Lemma \ref{lemma:distance-general}).

Now, by Lemma \ref{lemma:number-functions,pairings-defect}, we know that, if $0\leq\delta\leq\lfloor p/2\rfloor$, then
\[ \left|\left\{ (f,\lambda) \st \lambda\in\mathfrak{P}^{(2)}_{f,\delta}(2p) \right\}\right| \leq k^p\,\mathrm{Cat}_p \times\left(\frac{kp^2}{2}\right)^{2\delta}. \]
And if $\lceil p/2\rceil\leq\delta\leq \lfloor (p+k)/2\rfloor$, then trivially
\[ \left|\left\{ (f,\lambda) \st \lambda\in\mathfrak{P}^{(2)}_{f,\delta}(2p) \right\}\right| \leq k^{2p}\,\frac{(2p)!}{2^pp!} \leq k^p\,\mathrm{Cat}_p\times \left(\frac{kp^2}{2}\right)^p. \]
Putting everything together, we therefore get,
\[ \mathbf{E}\, \mathrm{Tr}\left[\left(\sum_{j=1}^k\widetilde{G}_{\A\B^k}(j)\right)^{2p}\right] \leq k^p\,\mathrm{Cat}_p \left(1+\sum_{\delta=1}^{\lfloor p/2\rfloor}\left(\frac{kp^2}{2d}\right)^{2\delta}+\frac{k}{2}\left(\frac{kp^2}{2d}\right)^p\right)d^{2p+k+1}. \]
Yet, $\max\left\{\left(kp^2/2d\right)^{2\delta} \st 1\leq\delta\leq \lceil p/2\rceil\right\}$ is attained for $\delta=1$, provided $p\leq(2d/k)^{1/2}$. So if such is the case,
\[ \sum_{\delta=1}^{\lfloor p/2\rfloor}\left(\frac{kp^2}{2d}\right)^{2\delta}+\frac{k}{2}\left(\frac{kp^2}{2d}\right)^p \leq \frac{p+k}{2} \frac{k^2p^4}{4d^2} \leq \frac{k^2p^5}{4d^2}, \]
where the last inequality holds as long as $p\geq k$. And hence, under all the previous assumptions,
\[ \mathbf{E}\, \mathrm{Tr}\left[\left(\sum_{j=1}^k\widetilde{G}_{\A\B^k}(j)\right)^{2p}\right] \leq\, \mathrm{M}_{SC(k)}^{(2p)}\left(1 +\frac{k^2p^5}{4d^2}\right) d^{2p+k+1}. \]

So set $p_d=(2d/k)^{(2-\epsilon)/5}$ for some $0<\epsilon<1$ (which is indeed smaller than $(2d/k)^{1/2}$ and bigger than $k$ for $d$ big enough, in particular bigger than $k^{7/2}/2$). And using inequality \eqref{eq:infty-p-gaussian} in the special case $p=p_d$, we eventually get
\begin{equation} \label{eq:upper-bound-gaussian} \mathbf{E} \left\|\sum_{j=1}^k\widetilde{G}_{\A\B^k}(j)\right\|_{\infty}  \leq \left(\mathrm{M}_{SC(k)}^{(2p_d)} \left(1 + \frac{k^2p_d^5}{4d^2}\right)\right)^{1/2p_d} d^{1+(k+1)/2p_d} \underset{d\rightarrow+\infty}{\sim} 2\sqrt{k}d. \end{equation}
Combining the lower bound in equation \eqref{eq:lower-bound-gaussian} and the upper bound in equation \eqref{eq:upper-bound-gaussian} yields Proposition \ref{prop:gaussian-infty}.
\end{proof}

\subsection{Conclusion}

Combining Proposition \ref{prop:gaussian-infty} with equation \eqref{eq:w(k-ext)-definition}, we straightforwardly obtain the estimate we were looking for, which is stated in Theorem \ref{th:w(k-ext)} below.

\begin{theorem} \label{th:w(k-ext)}
Let $k\in\N$. The mean width of the set of $k$-extendible states on $\C^d\otimes\C^d$ satisfies
\[ w\big(\mathcal{E}_k(\C^d{:}\C^d)\big) \underset{d\rightarrow +\infty}{\sim} \frac{2}{\sqrt{k}}\frac{1}{d} .\]
\end{theorem}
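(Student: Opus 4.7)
The proof is essentially a direct substitution, since the substantial work has already been done in Proposition \ref{prop:gaussian-infty}. The plan is to start from the identity \eqref{eq:w(k-ext)-definition}, which rewrites the mean width as a normalized Gaussian expectation:
\[ w\big(\mathcal{E}_k(\A{:}\B)\big) = \frac{1}{k\,\gamma(d^2)}\, \mathbf{E}_{G_{\A\B}\sim GUE(d^2)} \left\| \sum_{j=1}^{k}\widetilde{G}_{\A\B^k}(j) \right\|_{\infty}, \]
and then plug in the two asymptotic inputs we already have in hand.

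The first input is Proposition \ref{prop:gaussian-infty}, which gives
\[ \mathbf{E}_{G_{\A\B}\sim GUE(d^2)} \left\| \sum_{j=1}^{k}\widetilde{G}_{\A\B^k}(j) \right\|_{\infty} \underset{d\rightarrow+\infty}{\sim} 2\sqrt{k}\,d. \]
The second input is the classical estimate $\gamma(n)\sim_{n\to+\infty} n$ recalled just before equation \eqref{eq:w-wG}, applied with $n = d^2$, so that $\gamma(d^2)\sim d^2$ as $d\to+\infty$. Substituting both asymptotics into the identity above yields
\[ w\big(\mathcal{E}_k(\C^d{:}\C^d)\big) \underset{d\to+\infty}{\sim} \frac{1}{k\,d^2}\cdot 2\sqrt{k}\,d = \frac{2}{\sqrt{k}}\frac{1}{d}, \]
which is exactly the claim.

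There is no real obstacle at this stage: the hard work sits in Proposition \ref{prop:gaussian-p} (moment computation via the Wick formula, combinatorics of pair-partitions and level sets of $f$) and in converting moment convergence into convergence of the operator-norm in Proposition \ref{prop:gaussian-infty}. Once both are granted, Theorem \ref{th:w(k-ext)} is only a one-line consequence of Lemma \ref{lemma:sup-k-ext} together with the change of variables from spherical to Gaussian averaging encoded in \eqref{eq:w-wG}. I would therefore present the proof as a short paragraph simply invoking Proposition \ref{prop:gaussian-infty}, equation \eqref{eq:w(k-ext)-definition}, and the asymptotic $\gamma(d^2)\sim d^2$.
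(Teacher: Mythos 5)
Your proposal is correct and coincides with the paper's own (very short) argument: the paper likewise obtains Theorem \ref{th:w(k-ext)} by combining Proposition \ref{prop:gaussian-infty} with equation \eqref{eq:w(k-ext)-definition} and the normalization $\gamma(d^2)\sim d^2$. The arithmetic $\frac{1}{k d^2}\cdot 2\sqrt{k}\,d = \frac{2}{\sqrt{k}}\frac{1}{d}$ checks out, so nothing is missing.
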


\section{Discussion and comparison with the mean width of the set of PPT states}
\label{section:w-comparison}

It was shown in \cite{AS} that the mean width of the set of separable states on $\C^d\otimes\C^d$ is of order $1/d^{3/2}$. And we just showed in Theorem \ref{th:w(k-ext)} that, for $k\in\N$ fixed, the mean width of the set of $k$-extendible states on $\C^d\otimes\C^d$ is of order $1/d$, so that, for $d$ large,
\[ w\big(\mathcal{S}(\C^d{:}\C^d)\big) \ll w\big(\mathcal{E}_k(\C^d{:}\C^d)\big). \]
This result is not surprising: it just means that, when $d$ grows, if $k$ does not grow in some way too, then the set of $k$-extendible states becomes an increasingly poor approximation of the set of separable states on $\C^d\otimes\C^d$. There had been several evidences, already, in that direction, with examples of highly-extendible, though entangled, states (see e.g.~\cite{BCY} and \cite{NOP1}).

It is well-known that the exact same feature is actually exhibited by the set of PPT states on $\C^d\otimes\C^d$, whose mean width is of order $1/d$ too. Let us be more precise.

\begin{proposition} \label{prop:w(ppt)}
There exist positive constants $c_d,C_d\rightarrow_{d\rightarrow+\infty}1$ such that the mean width of the set of PPT states on $\C^d\otimes\C^d$ satisfies
\[ c_d\,\frac{e^{-1/2}}{d} \leq w\big(\mathcal{P}(\C^d{:}\C^d)\big) \leq C_d\,\frac{2}{d}. \]
\end{proposition}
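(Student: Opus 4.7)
For the upper bound, the key observation is the trivial inclusion $\mathcal{P}(\C^d{:}\C^d) \subset \mathcal{D}(\C^d \otimes \C^d)$, from which $\sup_{\sigma \in \mathcal{P}} \tr(M\sigma) \leq \sup_{\sigma \in \mathcal{D}} \tr(M\sigma) = \lambda_{\max}(M)$ for every Hermitian $M$. Applied to $M = G \sim GUE(d^2)$ and averaged, this yields $w(\mathcal{P}) \leq \E[\lambda_{\max}(G)]/\gamma(d^2)$. The classical asymptotic $\E[\lambda_{\max}(G)] \sim 2d$ (upper edge of the semicircular law, see \cite{AGZ}) together with $\gamma(d^2) \sim d^2$ then gives the announced $w(\mathcal{P}) \leq C_d \cdot 2/d$ with $C_d \to 1$.

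For the lower bound, my plan is to invoke Urysohn's inequality $w(\mathcal{P}) \geq \vrad(\mathcal{P})$ and reduce the problem to a volume estimate for the set of PPT states. The natural comparison is with the full state set $\mathcal{D}(\C^{d^2})$, whose volume radius is known to be of order $1/d$ asymptotically (via the explicit Zyczkowski--Sommers formula for $\vol(\mathcal{D}(\C^n))$ together with Stirling). The target constant $e^{-1/2}$ should emerge by combining this asymptotic with a lower bound on the ratio $\vol(\mathcal{P})/\vol(\mathcal{D})$ (in the spirit of the Aubrun--Szarek estimates on volumes of PPT sets), whose $(d^4-1)$-th root turns a subexponential volume loss into a negligible vrad loss.

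A more self-contained alternative is to construct, for each $G$, a specific PPT witness correlated with $G$: take $\sigma_G = (\Id + \alpha G)/d^2$ with $\alpha = 1/\max(\|G\|_\infty, \|G^{T_\B}\|_\infty)$, chosen so that both $\sigma_G$ and $\sigma_G^{T_\B}$ are positive. Using concentration of $\|G\|_\infty$ and $\|G^{T_\B}\|_\infty$ around $2d$ and of $\tr(G^2)$ around $d^4$, one obtains $\E[\tr(G\sigma_G)] \gtrsim c\,d$, whence $w(\mathcal{P}) \gtrsim c/d$. The naive version gives $c = 1/2$, slightly weaker than $e^{-1/2}$; sharpening the prefactor can be attempted either by replacing $\max(\|G\|_\infty, \|G^{T_\B}\|_\infty)$ with a tighter quantity reflecting the joint law of $G$ and $G^{T_\B}$, or by optimising over a parametric family of PPT witnesses, e.g.\ $T_\B$-symmetric perturbations $(\Id + \alpha(G + G^{T_\B}))/d^2$ together with a careful analysis of the extreme eigenvalue of $G + G^{T_\B}$.

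The main obstacle is pinning down the specific constant $e^{-1/2}$, as opposed to merely getting the correct scaling $1/d$: any one-line inclusion or witness argument gives the right order, but matching the prefactor demands either precise volume asymptotics for $\mathcal{P}$ on the Urysohn route, or sharp joint concentration estimates for $(\|G\|_\infty,\|G^{T_\B}\|_\infty)$ and $\tr(G^2)$ on the direct-construction route. Either analysis is substantially more delicate than the trivial inclusion $\mathcal{P} \subset \mathcal{D}$ that sufficed for the upper bound.
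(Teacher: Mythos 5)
Your upper bound coincides exactly with the paper's: the inclusion $\mathcal{P}(\C^d{:}\C^d)\subset\mathcal{D}(\C^d\otimes\C^d)$, the identity $\sup_{\sigma\in\mathcal{D}}\mathrm{Tr}(G\sigma)=\lambda_{\max}(G)$, the semicircle asymptotic $\E\lambda_{\max}(G)\sim 2d$, and $\gamma(d^2)\sim d^2$. Nothing to add there.

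The lower bound is where you diverge, and where the difficulty you flag is real. The paper's mechanism is Urysohn ($w\geq\vrad$) \emph{combined with the Milman--Pajor inequality} $\vrad(K\cap L)\,\vrad(K-L)\geq\vrad(K)\vrad(L)$ for convex bodies with a common centroid, applied to $\mathcal{P}=\mathcal{D}(d^2)\cap\mathcal{D}(d^2)^{\Gamma}$ (both centered at $\Id/d^2$). Together with $\vrad(K-L)\leq w(K-L)=w(K)+w(L)$ and the Sommers--Zyczkowski asymptotic $\vrad(\mathcal{D}(d^2))\sim e^{-1/4}/d$, this gives $w(\mathcal{P})\geq\vrad(\mathcal{D}(d^2))^2/\bigl(2w(\mathcal{D}(d^2))\bigr)$; the squaring of $e^{-1/4}$ is precisely where the constant $e^{-1/2}$ comes from. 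Your first route starts correctly with Urysohn but then rests on an unproven assertion, namely that $\vol(\mathcal{P})/\vol(\mathcal{D})\geq e^{-o(d^4)}$. No such bound is available: the Aubrun--Szarek-type volume estimates for $\mathcal{P}$ are themselves obtained via Milman--Pajor and only give a constant-factor lower bound on the ratio of volume radii, i.e.\ they permit a volume loss of order $c^{d^4}$. Note also that if your assertion were true it would produce the constant $e^{-1/4}$, not $e^{-1/2}$ --- a sign that the intended mechanism must be different. So route one, as written, has a genuine gap whose only known fix is exactly the Milman--Pajor step.

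Your second route is a correct, genuinely different and more elementary argument. Two details need to be supplied: $\sigma_G$ must be normalized by $d^2+\alpha\,\mathrm{Tr}(G)$ rather than $d^2$ to be a state (a $1+O(1/d^2)$ correction, since $|\mathrm{Tr}(G)|=O(d)$ with high probability), and the concentration of $\|G^{\Gamma}\|_{\infty}$ around $2d$ follows from the observation that partial transposition is a Hilbert--Schmidt isometry of $\mathcal{H}(d^2)$, so that $G^{\Gamma}\sim GUE(d^2)$. With these in place, $\E\,\mathrm{Tr}(G\sigma_G)\sim d/2$ and hence $w(\mathcal{P})\gtrsim 1/(2d)$: the right order, with constant $1/2<e^{-1/2}$ as you say. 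As an aside bearing on this comparison: the paper's own final expression $\vrad(\mathcal{D}(d^2))^2/\bigl(2w(\mathcal{D}(d^2))\bigr)$, evaluated with $w(\mathcal{D}(d^2))\sim 2/d$, equals $e^{-1/2}/(4d)$ rather than $e^{-1/2}/d$, so as literally executed the Milman--Pajor route yields the constant $e^{-1/2}/4\approx 0.15$; your explicit-witness bound $1/2$ is then numerically stronger, even though neither argument reaches the constant $e^{-1/2}$ displayed in the statement.
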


\begin{proof} Proposition \ref{prop:w(ppt)} was basically established in \cite{AS}, but not stated in this exact way and with these exact constants, so we briefly recall the argument here for the sake of completeness.

To get the asymptotic upper bound, we just use
\[ w\big(\mathcal{P}(\C^d{:}\C^d)\big) \leq w\big(\mathcal{D}(\C^d\otimes\C^d)\big) \underset{d\rightarrow+\infty}{\sim} \frac{2}{d}. \]
The last equivalence is a consequence of Wigner's semicircle law (see e.g.~\cite{AGZ}, Chapter 2, for a proof) from which it follows that
\[ w_G\big(\mathcal{D}(\C^d\otimes\C^d)\big) = \E_{G\sim GUE(d^2)} \underset{\sigma\in\mathcal{D}(\C^d\otimes\C^d)}{\sup}\mathrm{Tr}(G\sigma) = \E_{G\sim GUE(d^2)} \|G\|_{\infty} \underset{d\rightarrow+\infty}{\sim}2d. \]

To get the asymptotic lower-bound, we will make use of two results from classical convex geometry. Before stating them, we need one more definition: For any convex body $K$, we denote by $\vrad(K)$ its volume radius, which is defined as the radius of the Euclidean ball having the same volume (i.e.~Lebesgue measure) as $K$.\\
$\bullet$ Urysohn inequality (see e.g.~\cite{Pisier}, Corollary 1.4): For any convex body $K$,
\begin{equation} \label{eq:urysohn} w(K)\geq\vrad(K). \end{equation}
$\bullet$ Milman-Pajor inequality (see \cite{MP}, Corollary 3): For any convex bodies $K,L$ having the same center of gravity,
\begin{equation} \label{eq:milman-pajor} \vrad(K\cap L)\vrad(K-L)\geq\vrad(K)\vrad(L). \end{equation}
Combining equations \eqref{eq:urysohn} and \eqref{eq:milman-pajor}, we get that if $K,L$ are convex bodies having the same center of gravity, then
\[ \vrad(K\cap L)\geq\frac{\vrad(K)\vrad(L)}{\vrad(K-L)} \geq\frac{\vrad(K)\vrad(L)}{w(K-L)} \geq\frac{\vrad(K)\vrad(L)}{w(K)+w(L)}. \]
In our case, denoting by $\Gamma$ the partial transposition, we have $\mathcal{P}(d\times d)=\mathcal{D}(d^2)\cap\mathcal{D}(d^2)^{\Gamma}$, with $\mathcal{D}(d^2)$ and $\mathcal{D}(d^2)^{\Gamma}$ both having the maximally mixed state $\Id/d^2$ as center of gravity. Hence,
\[ w\big(\mathcal{P}(d\times d)\big) \geq \vrad\big(\mathcal{P}(d\times d)\big) \geq \frac{\vrad\big(\mathcal{D}(d^2)\big)^2}{2w\big(\mathcal{D}(d^2)\big)}, \]
the first inequality being by the Urysohn inequality, and the second being by the Milman-Pajor inequality, after noticing that $\vrad\big(\mathcal{D}(d^2)^{\Gamma}\big)=\vrad\big(\mathcal{D}(d^2)\big)$ and $w\big(\mathcal{D}(d^2)^{\Gamma}\big)=w\big(\mathcal{D}(d^2)\big)$. Now, we just argued that $w\big(\mathcal{D}(d^2)\big)\sim_{d\rightarrow+\infty} 2/d$, while it was shown in \cite{SZ} that $\vrad\big(\mathcal{D}(d^2)\big)\sim_{d\rightarrow+\infty} e^{-1/4}/d$. Therefore,
\[ w\big(\mathcal{P}(d\times d)\big) \geq \frac{\vrad\big(\mathcal{D}(d^2)\big)^2}{2w\big(\mathcal{D}(d^2)\big)} \underset{d\rightarrow+\infty}{\sim} \frac{e^{-1/2}}{d}. \qedhere \]
\end{proof}

As a straightforward consequence of Theorem \ref{th:w(k-ext)} and Proposition \ref{prop:w(ppt)}, we have, roughly speaking, that for $k\geq 11$, the set of $k$-extendible states becomes asymptotically a ``better'' approximation of the set of separable states than the set of PPT states, on average. Indeed, if $k\geq 11$, then $2/\sqrt{k}<e^{-1/2}$, so that for $d$ large enough
\[ w\big(\mathcal{E}_k(\C^d{:}\C^d)\big) < w\big(\mathcal{P}(\C^d{:}\C^d)\big). \]

\section{Adding the PPT constraint on the extension}
\label{section:w(k-ext-ppt]}

The hierarchy of SDPs originally proposed in \cite{DPS} to detect entanglement was in fact slightly different from the one that would be derived from Theorem \ref{th:k-extendibility-criterion}. Indeed, for a given bipartite state $\rho_{\A\B}$, the $k^{\text{th}}$ test would here consist in looking for a symmetric extension $\rho_{\A\B^k}$ of $\rho_{\A\B}$, while in \cite{DPS} it was additionally imposed that this extension had to be PPT in any cut of the $k+1$ subsystems. This of course increased quite considerably the size of the SDP to be solved at each step, but with the hope that it would at the same time decrease dramatically the number of steps an entangled state would pass.

Another hierarchy of SDPs was later proposed in \cite{NOP1} and \cite{NOP2}, built on the exact same ideas as those in \cite{DPS}. It was noticed there that only demanding that the (Bose) symmetric extension of the state be PPT in one fixed (even) cut of the $k+1$ subsystems already implied a noticeable speed-up in the convergence of the algorithm. It therefore seems worth taking a closer look at the set of states arising from these constraints. The latter is properly defined as follows.

\begin{definition}
Let $k\in\N$. A state $\rho_{\A\B}$ on a bipartite Hilbert space $\mathrm{A}\otimes\mathrm{B}$ is $k$-PPT-extendible with respect to $\mathrm{B}$ if there exists a state $\rho_{\A\B^k}$ on $\mathrm{A}\otimes\mathrm{B}^{\otimes k}$ which is PPT in the cut $\mathrm{A}\otimes\mathrm{B}^{\otimes \lfloor k/2\rfloor}:\mathrm{B}^{\otimes \lceil k/2\rceil}$, invariant under any permutation of the $\mathrm{B}$ subsystems and such that $\rho_{\A\B}=\tr_{\B^{k-1}}\rho_{\A\B^k}$. We denote by $\mathcal{E}_k^{PPT}(\A{:}\B)$ the set of $k$-PPT-extendible states on $\mathrm{A}\otimes\mathrm{B}$ (in the cut $\A{:}\B$ and with respect to $\mathrm{B}$).
\end{definition}

\begin{theorem} \label{th:w(k-PPT-ext)}
Let $k\in\N$. There exist positive constants $C_d\rightarrow_{d\rightarrow+\infty}1$ such that the mean width of the set of $k$-PPT-extendible states on $\C^d\otimes\C^d$ satisfies
\[ w\big(\mathcal{E}_k^{PPT}(\C^d{:}\C^d)\big) \leq C_d\,\frac{\sqrt{2}}{\sqrt{k}}\frac{1}{d} .\]
\end{theorem}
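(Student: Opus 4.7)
The strategy mirrors that of Theorem~\ref{th:w(k-ext)}, with an additional use of the PPT constraint on the symmetric extension.

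First I would establish an analog of Lemma~\ref{lemma:sup-k-ext} for the PPT-extendible case. Identical symmetrisation over permutations of the $\mathrm{B}$-subsystems shows that, for any $M_{\A\B}\in\mathcal{H}(\A\otimes\B)$,
\[
\sup_{\sigma_{\A\B}\in\mathcal{E}_k^{PPT}(\A{:}\B)}\tr\bigl(M_{\A\B}\sigma_{\A\B}\bigr) = \sup\,\tr\!\left(\frac{1}{k}\sum_{j=1}^k\widetilde{M}_{\A\B^k}(j)\cdot\sigma_{\A\B^k}\right),
\]
with the supremum on the right taken over all states $\sigma_{\A\B^k}$ that are PPT in the cut $\A\otimes\B^{\otimes\lfloor k/2\rfloor}:\B^{\otimes\lceil k/2\rceil}$. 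Writing $\Gamma$ for the corresponding partial transposition and $S=(1/k)\sum_j\widetilde{M}_{\A\B^k}(j)$, the identity $\tr(S\sigma_{\A\B^k})=\tr(S^\Gamma\sigma_{\A\B^k}^\Gamma)$, combined with the PPT requirement (which makes $\sigma_{\A\B^k}^\Gamma$ a state as well), yields $\sup\leq\|S^\Gamma\|_\infty$. Substituting $M=G\sim GUE(d^2)$ and dividing by $\gamma(d^2)$ produces
\[
w\bigl(\mathcal{E}_k^{PPT}(\C^d{:}\C^d)\bigr) \leq \frac{1}{\gamma(d^2)}\,\E_{G\sim GUE(d^2)}\,\|S^\Gamma\|_\infty.
\]

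The core technical task is then to estimate $\E\,\|S^\Gamma\|_\infty$. Writing out explicitly
\[
S^\Gamma = \frac{1}{k}\Bigl[\sum_{j=1}^{\lfloor k/2\rfloor}\widetilde{G}_{\A\B^k}(j) + \sum_{j=\lfloor k/2\rfloor+1}^{k}\widetilde{G^{T_\B}}_{\A\B^k}(j)\Bigr],
\]
with $G^{T_\B}$ the partial transpose of $G$ on its $\B$-factor, I would mirror the moment method of Propositions~\ref{prop:gaussian-p} and~\ref{prop:gaussian-infty}. The Gaussian Wick expansion of $\E\tr((S^\Gamma)^{2p})$ still enumerates pair partitions, but the pair contractions now come in two flavors: \emph{pure} pairs (both slots $G$, or both $G^{T_\B}$), producing the familiar $F_\A\otimes F_\B$ factor, and \emph{mixed} pairs (one $G$, one $G^{T_\B}$), producing instead a factor of the form $F_\A\otimes d|\Omega_\B\rangle\langle\Omega_\B|$ arising from $\E[G\otimes G^{T_\B}]=F_\A\otimes d|\Omega_\B\rangle\langle\Omega_\B|$. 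A careful enumeration of the dominant Wick diagrams, tracking these pure/mixed labels on top of the usual non-crossing structure and the permutation $\gamma_f$ on level sets of the index function $f$, should yield that the rescaled random matrix $kS^\Gamma/d$ converges in moments to a centered semicircular distribution of variance $k/2$ (rather than $k$, as in Proposition~\ref{prop:gaussian-p}). Via the operator-norm argument of Proposition~\ref{prop:gaussian-infty}, this implies $\E\,\|S^\Gamma\|_\infty \sim \sqrt{2}\,d/\sqrt{k}$, which combined with $\gamma(d^2)\sim d^2$ delivers the claimed bound $w\leq C_d\sqrt{2}/(\sqrt{k}d)$.

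The main technical obstacle lies in this refined moment enumeration: rigorously showing that the pure/mixed pairing structure, induced by the partial transpose on half of the $\B$-factors, effectively halves the semicircular variance compared to the fully-$G$ case. The full combinatorial analysis follows the blueprint of Appendix~\ref{appendix:gaussian}, but with extra bookkeeping to account for the pure/mixed labelling and the corresponding interplay with the cycle structure of $\gamma_f$; these details constitute the content of the technical companion Appendix~\ref{appendix:gaussian-gamma}. A standard sub-leading control argument, as in the proof of Proposition~\ref{prop:gaussian-infty}, then upgrades the moment asymptotic to the desired operator-norm estimate, with the constants $C_d\to 1$.
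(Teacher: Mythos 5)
Your proposal follows essentially the same route as the paper: the reduction $\sup_{\sigma\in\mathcal{E}_k^{PPT}}\tr(M\sigma)\leq\big\|\big[\Sym_{\A{:}\B^k}(M\otimes\Id)\big]^{\Gamma}\big\|_{\infty}$, the splitting of $S^{\Gamma}$ into the $\lfloor k/2\rfloor$ untransposed and $\lceil k/2\rceil$ transposed summands, and the Wick-expansion observation that contractions mixing the two halves carry extra index-matching constraints (equivalently, index functions $f$ not valued entirely in one half are suppressed) are exactly the paper's Appendix \ref{appendix:gaussian-gamma} argument, and your identification of the mixed-pair covariance $F_\A\otimes d\ketbra{\Omega_\B}{\Omega_\B}$ is the correct mechanism behind those extra constraints. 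One imprecision is worth flagging: it is not true that $kS^{\Gamma}/d$ converges in moments to a semicircular distribution of variance $k/2$ — the dominant Wick terms contribute only $d^{2p+\lceil k/2\rceil+1}$ rather than $d^{2p+k+1}$ (cf.\ equation \eqref{eq:moment-Gamma}), so the normalized empirical spectral distribution in fact degenerates, with only a vanishing fraction of eigenvalues living at scale $d$. This does not affect your conclusion, since the upper bound only uses $\lim_{p\rightarrow+\infty}\left(\E\,\tr\left[\left(\sum_j\widetilde{G}_{\A\B^k}(j)^{\Gamma}\right)^{2p}\right]\right)^{1/2p}=2\sqrt{k/2}\,d$, where the polynomial-in-$d$ prefactor is washed out; but the phrase ``converges in moments to $SC(k/2)$'' should be replaced by the weaker (and correct) moment-growth statement, exactly as the paper does.
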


\begin{proof}
Using the notation introduced in Lemma \ref{lemma:sup-k-ext}, we start from the simple observation that, for any $M_{\A\B}\in\mathcal{H}(\mathrm{A}\otimes \mathrm{B})$,
\begin{align*}
\underset{\sigma_{\A\B}\in\mathcal{E}_k^{PPT}(\A{:}\B)}{\sup}\ \tr\big[M_{\A\B}\sigma_{\A\B}\big]
&= \underset{\sigma_{\A\B^k}\in\mathcal{P}(\mathrm{A}\mathrm{B}^{\lfloor k/2\rfloor}:\mathrm{B}^{\lceil k/2\rceil})}{\sup}\ \tr \big[ \Sym_{\A{:}\B^k}\big(M_{\A\B}\otimes\Id_{\B^{k-1}}\big)\sigma_{\A\B^k}\big]\\
&\leq \min\left( \big\|\Sym_{\A{:}\B^k} \big(M_{\A\B}\otimes\Id_{\B^{k-1}}\big)\big\|_{\infty}, \big\|\left[\Sym_{\A{:}\B^k} \big(M_{\A\B}\otimes\Id_{\B^{k-1}}\big)\right]^{\Gamma}\big\|_{\infty} \right),
\end{align*}
where $\Gamma$ stands here for the partial transposition over the $\lceil k/2\rceil$ last $\mathrm{B}$ subsystems, so that in fact
\[ \left[\Sym_{\A{:}\B^k} \big(M_{\A\B}\otimes\Id_{\B^{k-1}}\big)\right]^{\Gamma} = \frac{1}{k} \left( \sum_{j=1}^{\lfloor k/2\rfloor} M_{\A\B_j}\otimes\Id_{\widehat{\B}^k_j} + \sum_{j=\lfloor k/2\rfloor +1}^k M_{\A\B_j}^{\Gamma}\otimes\Id_{\widehat{\B}^k_j} \right), \]
where $\Gamma$ now stands for the partial transposition over $\mathrm{B}$.

The upper bound in Theorem \ref{th:w(k-PPT-ext)} will thus be a direct consequence of the sup-norm estimate
\[ \mathbf{E}_{G_{\A\B}\sim GUE(d^2)} \left\| \sum_{j=1}^{k}\widetilde{G}_{\A\B^k}(j)^{\Gamma} \right\|_{\infty} \underset{d\rightarrow+\infty}{\sim} \sqrt{2}\sqrt{k}d .\]
The latter is proved in the exact same way as Proposition \ref{prop:gaussian-infty}, i.e.~by first showing that for any $p\in\N$,
\begin{equation} \label{eq:moment-Gamma} \mathbf{E}_{G_{\A\B}\sim GUE(d^2)}\, \mathrm{Tr}\left[ \left( \sum_{j=1}^{k}\widetilde{G}_{\A\B^k}(j)^{\Gamma} \right)^{2p} \right] \underset{d\rightarrow+\infty}{\sim} 2\mathrm{M}_{SC(k/2)}^{(2p)}d^{2p+k/2+1} ,\end{equation}
and second arguing that also $\mathbf{E}\left\| \sum_{j=1}^{k} \widetilde{G}_{\A\B^k}(j)^{\Gamma}\right\|_{\infty}\sim_{d\rightarrow+\infty}\lim_{p\rightarrow+\infty} \left( \mathbf{E}\, \mathrm{Tr}\left[ \left( \sum_{j=1}^{k} \widetilde{G}_{\A\B^k}(j)^{\Gamma}\right)^{2p} \right] \right)^{1/2p}$. This last step will be omitted here since the argument is very similar to the one appearing in the proof of Proposition \ref{prop:gaussian-infty}. Concerning the moment estimate \eqref{eq:moment-Gamma}, it is first of all proved in Appendix \ref{appendix:gaussian-gamma} that
\[ \E\, \mathrm{Tr}\left[\left(\sum_{j=1}^k\widetilde{G}_{\A\B^k}(j)^{\Gamma}\right)^{2p}\right]
 \underset{d\rightarrow+\infty}{\sim} \underset{\lambda\in\mathfrak{P}^{(2)}(2p)}{\sum}\, \underset{f:[2p]\rightarrow\left[\lfloor k/2\rfloor\right]\,\text{or}\,\left[\lceil k/2\rceil\right]}{\sum} d^{\sharp(\gamma^{-1}\lambda)+\sharp(\gamma_f^{-1}\lambda)+k-|\im(f)|}. \]
And by the same arguments as the in the proof of Proposition \ref{prop:gaussian-p}, we can then identify which $\lambda$ and $f$ actually contribute to the dominant order in the latter expression, yielding
\begin{align*}
\E \,\mathrm{Tr}\left[\left(\sum_{j=1}^k\widetilde{G}_{\A\B^k}(j)^{\Gamma}\right)^{2p}\right]
& \underset{d\rightarrow+\infty}{\sim} \underset{\lambda\in NC^{(2)}(2p)}{\sum} \left( \underset{\underset{f\circ\lambda=f}{f:[2p]\rightarrow\left[\lfloor k/2\rfloor\right]}}{\sum} d^{2p+\lfloor k/2\rfloor+1} + \underset{\underset{f\circ\lambda=f}{f:[2p]\rightarrow\left[\lceil k/2\rceil\right]}}{\sum} d^{2p+\lceil k/2\rceil+1} \right) \\
& \underset{d\rightarrow+\infty}{\sim} \mathrm{Cat}_p\left(\lfloor k/2\rfloor^pd^{2p+\lfloor k/2\rfloor+1} + \lceil k/2\rceil^pd^{2p+\lceil k/2\rceil+1}\right),
\end{align*}
which is the announced moment estimate \eqref{eq:moment-Gamma}.
\end{proof}

Comparing Theorem \ref{th:w(k-PPT-ext)} to Theorem \ref{th:w(k-ext)}, we see that the asymptotic mean width of the set of $k$-PPT-extendible states is at least $\sqrt{2}$ smaller than the asymptotic mean width of the set of $k$-extendible states. For instance, the set of $2$-PPT-extendible states is, on average, asymptotically smaller than the set of $4$-extendible states.
This however does not really shed light on why adding the constraint, at each step in the sequence of tests, that the symmetric extension is PPT across one fixed (even) cut would make the entanglement detection notably faster.

\section{Preliminaries on random-induced states and witnesses}
\label{section:random-states}

We will employ the notation $\rho\sim\mu_{n,s}$ to mean that $\rho=\tr_{\C^s}\ketbra{\psi}{\psi}$ with $\ket{\psi}$ a random Haar-distributed pure state on $\C^n\otimes\C^s$ (i.e.~$\rho$ describes an $n$-dimensional system which is obtained by partial-tracing over an $s$-dimensional ancilla space a uniformly distributed pure state on the global ``system+ancilla'' space). An equivalent mathematical characterization of such random state model is $\rho=W/\mathrm{Tr}W$ with $W\sim\mathcal{W}_{n,s}$ an $(n,s)$-Wishart matrix, i.e.~$W=GG^{\dagger}$ with $G$ a $n\times s$ matrix having independent complex normal entries (see e.g.~\cite{ZS}).

Let $K\subset\mathcal{D}(\C^n)$ be a convex body. For any $\rho\in\mathcal{D}(\C^n)$, a standard way of showing that $\rho\notin K$ is to produce a ``not belonging to $K$ witness'', i.e.~some $M\in\mathcal{H}_+(\C^n)$ which is such that
\[ \underset{\sigma\in K}{\sup}\mathrm{Tr}(M\sigma)<\mathrm{Tr}(M\rho). \]
By testing $\rho$ itself as possible such ``not belonging to $K$ witness'', we have
\begin{equation} \label{eq:witness} \underset{\sigma\in K}{\sup}\mathrm{Tr}(\rho\sigma)<\mathrm{Tr}(\rho^2)\ \Rightarrow\ \rho\notin K. \end{equation}

Crucially for the applications we have in mind, the functions $\rho\mapsto\mathrm{Tr}(\rho^2)$ and $\rho\mapsto\sup_{\sigma\in K}\mathrm{Tr}(\rho\sigma)$ both have nice concentration properties around their average. More precisely, we have the two following results.

\begin{proposition} \label{prop:concentration1}
Let $n,s\in\N$. Then, there exist universal constants $c,c'>0$ such that, for any $\eta>0$, first of all
\[ \mathbf{P}_{\rho\sim\mu_{n,s}}\left(\left|\mathrm{Tr}(\rho^2)- \mathbf{E}_{\tau\sim\mu_{n,s}}\big[\mathrm{Tr}(\tau^2)\big]\right|\geq\eta\right) \leq e^{-cs} + e^{-c'n^3s\eta^2}, \]
and second of all, for any convex body $K\subset\mathcal{D}(\C^n)$,
\[ \mathbf{P}_{\rho\sim\mu_{n,s}}\left(\left|\underset{\sigma\in K}{\sup}\mathrm{Tr}(\rho\sigma)-\mathbf{E}_{\tau\sim\mu_{n,s}}\left[\underset{\sigma\in K}{\sup}\mathrm{Tr}(\tau\sigma)\right]\right|\geq\eta\right) \leq e^{-cs} + e^{-c'n^2s\eta^2}. \]
\end{proposition}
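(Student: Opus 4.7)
The plan is to exploit L\'evy's concentration inequality on the unit sphere of $\C^n\otimes\C^s$, combined with the (standard but non-trivial) random-matrix fact that $\|\rho\|_\infty$ is of order $1/n$ with probability at least $1-e^{-cs}$. The first step is to realize $\rho\sim\mu_{n,s}$ as $\rho=\tr_{\C^s}\ketbra{\psi}{\psi}$ with $\ket{\psi}$ Haar-distributed on the unit sphere $S(\C^n\otimes\C^s)$ (of real dimension $2ns-1$), and to view the two quantities of interest as smooth functions on that sphere,
\[ \Phi_1(\psi) = \tr(\rho(\psi)^2), \qquad \Phi_2(\psi) = \underset{\sigma\in K}{\sup}\,\tr(\rho(\psi)\sigma). \]

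A direct computation of directional derivatives gives $\nabla\Phi_1(\psi) = 4(\rho\otimes\Id_{\C^s})\ket{\psi}$, so that $\|\nabla\Phi_1(\psi)\| = 4\sqrt{\tr(\rho^3)}\leq 4\|\rho\|_\infty$. For $\Phi_2$, one writes it as the supremum over $\sigma\in K$ of the quadratic forms $\psi\mapsto \bra{\psi}(\sigma\otimes\Id_{\C^s})\ket{\psi}$, each of which has gradient of norm $2\sqrt{\tr(\rho\sigma^2)}\leq 2\sqrt{\|\rho\|_\infty}$ (using $\tr(\sigma^2)\leq 1$ for any state $\sigma$). Since a supremum of $L$-Lipschitz functions is itself $L$-Lipschitz, the same bound $2\sqrt{\|\rho\|_\infty}$ controls the Lipschitz constant of $\Phi_2$.

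The key probabilistic input is then a uniform control of $\|\rho\|_\infty$. Writing $\rho = W/\tr W$ with $W\sim\mathcal{W}_{n,s}$, standard non-asymptotic bounds on Wishart spectra -- an upper tail on $\|W\|_\infty$ of order $(\sqrt{n}+\sqrt{s})^2$ and a lower tail on $\tr W$ of order $ns$, each with deviation probability exponentially small in $s$ -- provide a universal constant $C>0$ such that the ``good event'' $E_C:=\{\|\rho\|_\infty\leq C/n\}$ satisfies $\mathbf{P}_{\rho\sim\mu_{n,s}}(E_C^c)\leq e^{-cs}$. On $E_C$, the pointwise gradient bounds above yield Lipschitz constants $L_1:=4C/n$ for $\Phi_1$ and $L_2:=2\sqrt{C/n}$ for $\Phi_2$.

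To conclude, one uses Kirszbraun's extension theorem (or equivalently a smooth truncation argument) to replace each $\Phi_i$ by a function $\widetilde\Phi_i$ globally $L_i$-Lipschitz on $S(\C^n\otimes\C^s)$ that agrees with $\Phi_i$ on $E_C$. Because $\Phi_i$ and $\widetilde\Phi_i$ both take values in $[0,1]$, the centering shift $|\mathbf{E}\Phi_i-\mathbf{E}\widetilde\Phi_i|$ is at most $2\mathbf{P}(E_C^c)\leq 2e^{-cs}$ and can be absorbed. L\'evy's inequality applied to $\widetilde\Phi_i$ on $S^{2ns-1}$ then gives $\mathbf{P}(|\widetilde\Phi_i-\mathbf{E}\widetilde\Phi_i|\geq\eta)\leq 2\exp\bigl(-c'(2ns-1)\eta^2/L_i^2\bigr)$, which, since $L_1^2 \asymp 1/n^2$ and $L_2^2 \asymp 1/n$, translates into the announced $e^{-c'n^3s\eta^2}$ and $e^{-c'n^2s\eta^2}$ exponents. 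Transferring back from $\widetilde\Phi_i$ to $\Phi_i$ only costs the extra $e^{-cs}$ contribution from $E_C^c$. The genuinely non-trivial step is the Wishart tail bound $\|\rho\|_\infty\leq C/n$ with probability $1-e^{-cs}$; the gradient computations and the extension-plus-concentration mechanism are then routine.
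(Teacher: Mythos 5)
Your proposal is correct and follows essentially the same route as the paper: realize $\rho$ as a function of a uniform point on $S^{2ns-1}$, restrict to the high-probability event $\|\rho\|_\infty\leq C/n$ (whose complement costs the $e^{-cs}$ term) to obtain Lipschitz constants of order $1/n$ and $1/\sqrt{n}$ for the two functionals, and apply a restricted form of L\'evy's inequality; your Kirszbraun-extension-plus-recentering step is just an unpacked version of the local L\'evy lemma the paper imports from Aubrun--Szarek--Ye. The one point to tighten is that a pointwise gradient bound on the non-convex good event does not by itself yield a Lipschitz estimate on that event (a segment joining two good points may leave it), so you should replace the gradient computation by the direct two-point estimates the paper uses, namely $\left|f(X)-f(Y)\right|\leq\left(\|X\|_{\infty}+\|Y\|_{\infty}\right)\|X-Y\|_2$ for the support function and the analogous bound for $\mathrm{Tr}\left(\left(XX^{\dagger}\right)^2\right)$, which give the same constants and only require the two endpoints to lie in the good event.
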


\begin{proof} To show Proposition \ref{prop:concentration1}, we will make essential use of a local version of Levy's Lemma, namely (see \cite{ASY}, Lemma 3.4, for a proof): Let $\Omega\subset S^{m-1}$ be a subset of the Euclidean unit sphere of $\R^m$ satisfying $\P(\Omega)\geq 7/8$. Let also $f:S^{m-1}\rightarrow\R$ be a function whose restriction to $\Omega$ is $L$-Lipschitz and $M$ be a central value for $f$ (i.e.~$\P(\{f\geq M\})\geq 1/4$ and $\P(\{f\leq M\})\geq 1/4$). Then, for any $\eta>0$,
\[ \P\left(\left\{|f-M|\geq\eta\right\}\right) \leq \P\left(S^{m-1}\setminus\Omega\right) + e^{-c_0m\eta^2/L^2}, \]
where $c_0>0$ is a universal constant.

It is well-known (see e.g.~\cite{ZS} for a proof) that $\rho\sim\mu_{n,s}$ is equivalent to $\rho=XX^{\dagger}$ with $X$ uniformly distributed over the Hilbert--Schmidt unit sphere of $n\times s$ complex matrices, and the latter can be identified with the real Euclidean unit sphere $S^{2ns-1}$. Therefore, one may apply Levy's lemma above with $\Omega=\left\{X\in S^{2ns-1} \st \|X\|_{\infty}\leq 3/\sqrt{n}\right\}$, which is such that $\P(S^{2ns-1}\setminus\Omega)\leq e^{-cs}$ for some universal constant $c>0$ (see e.g.~\cite{ASW}, Lemma 6 and Appendix B, for a proof).

Consider first $f:X\in S^{2ns-1}\mapsto\mathrm{Tr}((XX^{\dagger})^2)$, which is $36/n$-Lipschitz on $\Omega$.  Indeed, for any $X,Y\in\Omega$,
\[ \left|f(X)-f(Y)\right| \leq \left\|\left(XX^{\dagger}\right)^2-\left(YY^{\dagger}\right)^2\right\|_1 \leq \left(\|XX^{\dagger}\|_{\infty}+\|YY^{\dagger}\|_{\infty}\right)\left(\|X\|_{2}+\|Y\|_{2}\right)\|X-Y\|_2 \leq \frac{36}{n} \|X-Y\|_2. \]
The second inequality is just by H\"{o}lder's inequality (more specifically $\|ABC\|_1\leq\|\A\|_{\infty}\|B\|_2\|C\|_2$) and the triangle inequality, after noticing that $\left(XX^{\dagger}\right)^2-\left(YY^{\dagger}\right)^2 = XX^{\dagger}\Delta+\Delta YY^{\dagger}$ with $\Delta=X(X^{\dagger}-Y^{\dagger})+(X-Y)Y^{\dagger}$. And the third inequality is because, by assumption, for any $Z\in\Omega$, $\|Z\|_2=1$ and $\|ZZ^{\dagger}\|_{\infty}=\|Z\|_{\infty}^2\leq 9/n$.

Now, the fact that $\P(S^{2ns-1}\setminus\Omega)\leq e^{-cs}$, combined with the fact that $|f|$ is bounded by $1$ on $S^{2ns-1}$, implies that the average of $|f|$ on $S^{2ns-1}\setminus\Omega$ is bounded by $e^{-cs}$, which tends to $0$ when $s$ tends to infinity. While the Lipschitz estimate for $f$ on $\Omega$ implies that the average of $f$ on $\Omega$ differs from its median by at most $C/n^{3/2}s^{1/2}$, which also tends to $0$ when $n,s$ tend to infinity. We can therefore conclude that the average of $f$ is a central value of $f$ for $n,s$ big enough. Hence, taking $M=\E f$ as central value for $f$, we get the concentration estimate
\[ \mathbf{P}_X\left(\left|\mathrm{Tr}\left(\left(XX^{\dagger}\right)^2\right)- \mathbf{E}_Y\mathrm{Tr}\left(\left(YY^{\dagger}\right)^2\right)\right|\geq\eta\right) \leq e^{-cs} + e^{-c'n^3s\eta^2}. \]

Take next $f:X\in S^{2ns-1}\mapsto\sup_{\sigma\in K}\mathrm{Tr}(XX^{\dagger}\sigma)$, which is $6/\sqrt{n}$-Lipschitz on $\Omega$. Indeed, for any $X,Y\in\Omega$,
\[ \left|f(X)-f(Y)\right| \leq \left| \sup_{\sigma\in K}\mathrm{Tr}\left(\left(XX^{\dagger}-YY^{\dagger}\right)\sigma\right)\right| \leq \left\|XX^{\dagger}-YY^{\dagger}\right\|_{\infty} \leq \left(\|X\|_{\infty}+\|Y\|_{\infty}\right) \left\|X-Y\right\|_{\infty} \leq \frac{6}{\sqrt{n}}\|X-Y\|_2. \]
The second inequality is just by duality, since $K$ is contained in the unit ball for the $1$-norm. The third inequality is by the triangle inequality, after noticing that $XX^{\dagger}-YY^{\dagger}=(X-Y)X^{\dagger}+Y(X^{\dagger}-Y^{\dagger})$. And the fourth inequality is by the norm inequality $\|\cdot\|_{\infty}\leq\|\cdot\|_2$ and because, by assumption, for any $Z\in\Omega$, $\|Z\|_{\infty}\leq 3/\sqrt{n}$.

Arguing as before, we see that the average of $f$ is a central value of $f$ for $n,s$ big enough (this time, the average of $|f|$ on $S^{2ns-1}\setminus\Omega$ is bounded by $e^{-cs}$ while the average of $f$ on $\Omega$ differs from its median by at most $C/ns^{1/2}$). Hence, taking $M=\E f$ as central value for $f$, we get the concentration estimate
\[ \P_X\left(\left\{\left|\underset{\sigma\in K}{\sup}\mathrm{Tr}(XX^{\dagger}\sigma)-\E_Y\underset{\sigma\in K}{\sup}\mathrm{Tr}(YY^{\dagger}\sigma)\right|\geq\eta\right\}\right) \leq e^{-cs} + e^{-c'n^2s\eta^2}. \]

Hence, we indeed have the two announced deviation probability bounds.
\end{proof}

Combining the two statements in Proposition \ref{prop:concentration1}, together with equation \eqref{eq:witness}, we get as a consequence: Let $K\subset\mathcal{D}(\C^n)$ be a convex body. Then, for any $\eta>0$,
\begin{equation} \label{eq:proba-notinK'} \mathbf{E}_{\rho\sim\mu_{n,s}}\big[\mathrm{Tr}(\rho^2)\big] - \mathbf{E}_{\rho\sim\mu_{n,s}}\left[\underset{\sigma\in K}{\sup}\mathrm{Tr}(\rho\sigma)\right] > \eta
\ \Rightarrow\ \mathbf{P}_{\rho\sim\mu_{n,s}}\big(\rho\notin K\big)\geq 1-e^{-cs\min(1,n^2\eta^2)}, \end{equation}
where $c>0$ is a universal constant.

From now on, we will in fact consider random-induced states on the bipartite space $\C^d\otimes\C^d$. So let $K\subset\mathcal{D}(\C^d\otimes\C^d)$ (such as e.g.~$\mathcal{P}(\C^d{:}\C^d)$ or $\mathcal{E}_k(\C^d{:}\C^d)$, $k\in\N$). It follows from equation \eqref{eq:proba-notinK'} that, for any $\eta>0$,
\begin{equation} \label{eq:proba-notinK} \mathbf{E}_{\rho\sim\mu_{d^2,s}}\big[\mathrm{Tr}(\rho^2)\big] - \mathbf{E}_{\rho\sim\mu_{d^2,s}}\left[\underset{\sigma\in K}{\sup}\mathrm{Tr}(\rho\sigma)\right] > \eta
\ \Rightarrow\ \mathbf{P}_{\rho\sim\mu_{d^2,s}}\big(\rho\notin K\big)\geq 1-e^{-cs\min(1,d^4\eta^2)}, \end{equation}
where $c>0$ is a universal constant.

\section{Non $k$-extendibility of random-induced states for ``small'' $k$}
\label{section:non-k-ext-witness}

\subsection{Strategy}

Our goal in the sequel will be to identify a range of environment size $s$ for which random-induced states on $\C^d\otimes\C^d$ are, with high-probability, not $k$-extendible. In view of equation \eqref{eq:proba-notinK}, this may be done by characterizing
\[ \left\{ s\in\N \st  \mathbf{E}_{\rho\sim\mu_{d^2,s}}\left[\underset{\sigma\in\mathcal{E}_k(\C^d{:}\C^d)}{\sup}\mathrm{Tr}(\rho\sigma)\right] < \mathbf{E}_{\rho\sim\mu_{d^2,s}}\big[\mathrm{Tr}(\rho^2)\big] \right\}. \]

Yet by Lemma \ref{lemma:sup-k-ext}, and using the notation introduced there, we have that for any state $\rho_{\A\B}$ on $\mathrm{A}\otimes\mathrm{B}$,
\[ \underset{\sigma_{\A\B}\in\mathcal{E}_k(\A{:}\B)}{\sup}\mathrm{Tr}\big(\rho_{\A\B}\sigma_{\A\B}\big) =\left\|\underset{j=1}{\overset{k}{\sum}}\widetilde{\rho}_{\A\B^k}(j)\right\|_{\infty} .\]

\subsection{An operator-norm estimate}

As explained above, to know when random-induced states on $\A\otimes\B$ are not $k$-extendible, what we need first is to compute the average operator-norm $\mathbf{E} \left\|\sum_{j=1}^k\widetilde{W}_{\A\B^k}(j)\right\|_{\infty}$, for $W$ a $(d^2,s)$-Wishart matrix. We will proceed in a very similar way to what was done in Section \ref{section:w(k-ext)}, and establish what can be seen as the analogues of Propositions \ref{prop:gaussian-infty} and \ref{prop:gaussian-p} but for Wishart instead of GUE matrices.

\begin{proposition} \label{prop:wishart-infty} Fix $k\in\N$ and $c>0$. Then,
\[ \mathbf{E}_{W_{\A\B}\sim\mathcal{W}_{d^2,cd^2}} \left\|\underset{j=1}{\overset{k}{\sum}}\widetilde{W}_{\A\B^k}(j)\right\|_{\infty} \underset{d\rightarrow+\infty}{\sim} (\sqrt{ck}+1)^2d^2. \]
\end{proposition}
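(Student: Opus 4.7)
The plan is to mirror the proof of Proposition~\ref{prop:gaussian-infty}, replacing the GUE matrix by the Wishart matrix $W=GG^{\dagger}$ (with $G$ a $d^{2}\times cd^{2}$ matrix of i.i.d.\ complex standard Gaussian entries) and the semicircular laws by the free-Poisson (Marchenko--Pastur) laws, whose upper spectral edge for parameter $ck$ is exactly $(\sqrt{ck}+1)^{2}$. Concretely, the core step is to establish the Wishart analogue of Proposition~\ref{prop:gaussian-p}, namely
\[ \mathbf{E}_{W_{\A\B}\sim \mathcal{W}_{d^{2},cd^{2}}}\,\tr\left[\left(\sum_{j=1}^{k}\widetilde{W}_{\A\B^{k}}(j)\right)^{p}\right] \underset{d\to+\infty}{\sim} \mathrm{M}_{MP(ck)}^{(p)}\, d^{\,2p+k+1},\qquad p\in\N, \]
where $\mathrm{M}_{MP(ck)}^{(p)}$ denotes the $p$-th moment of $MP(ck)$, and then to extract the operator-norm asymptotic from it via the usual moment method.

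For the moment asymptotic, I would apply the Gaussian Wick formula to the entries of $G$ inside each of the $p$ factors. As developed in Appendix~\ref{appendix:wishart}, the resulting expansion should read
\[ \mathbf{E}\,\tr\left[\left(\sum_{j=1}^{k}\widetilde{W}_{\A\B^{k}}(j)\right)^{p}\right] = \sum_{f:[p]\to[k]}\,\sum_{\alpha\in\mathfrak{S}(p)} d^{\,\sharp(\gamma^{-1}\alpha)+\sharp(\gamma_{f}^{-1}\alpha)+k-|\im(f)|}\,(cd^{2})^{\sharp(\alpha)}, \]
where $\gamma$ and $\gamma_{f}$ are the canonical full cycles of Proposition~\ref{prop:gaussian-p}, and the outer sum now runs over all of $\mathfrak{S}(p)$ rather than just pair partitions, because each Wick contraction pairs one $G$-entry with one $\bar{G}$-entry and is therefore encoded by a permutation. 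The two geodesic inequalities from the proof of Proposition~\ref{prop:gaussian-p} still apply: $\sharp(\alpha)+\sharp(\gamma^{-1}\alpha)\leq p+1$ and $\sharp(\alpha)+\sharp(\gamma_{f}^{-1}\alpha)\leq p+|\im(f)|$, with equalities characterising $\alpha$ non-crossing and $f$ constant on the blocks of $\alpha$. Summing the two relations, the total $d$-exponent in the expansion above is at most $2p+k+1$, and the dominant combinatorial weight becomes
\[ \sum_{\alpha\in NC(p)}\,\sum_{\substack{f:[p]\to[k]\\ f\,\text{constant on blocks of}\,\alpha}} c^{\sharp(\alpha)} = \sum_{\alpha\in NC(p)}(ck)^{\sharp(\alpha)} = \mathrm{M}_{MP(ck)}^{(p)}, \]
by the classical Narayana-type formula for the moments of the free-Poisson distribution, with $NC(p)$ the set of non-crossing partitions of $[p]$.

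Once the moment asymptotic is in hand, the operator-norm estimate follows exactly as in Proposition~\ref{prop:gaussian-infty}. The lower bound $\mathbf{E}\|\sum_{j}\widetilde{W}(j)\|_{\infty}\geq c_{d}(\sqrt{ck}+1)^{2}d^{2}$ with $c_{d}\to 1$ comes from the convergence of the empirical spectral distribution of $\sum_{j}\widetilde{W}(j)/d^{2}$ to $MP(ck)$ combined with the positivity of this matrix (so only the upper edge matters). For the matching upper bound, Jensen's inequality together with monotonicity of Schatten norms gives
\[ \mathbf{E}\left\|\sum_{j}\widetilde{W}_{\A\B^{k}}(j)\right\|_{\infty} \leq \left(\mathbf{E}\,\tr\left[\left(\sum_{j}\widetilde{W}_{\A\B^{k}}(j)\right)^{p}\right]\right)^{1/p}, \]
and I would pick $p=p_{d}$ growing slowly with $d$ (typically $p_{d}\sim d^{a}$ for some $a<1/2$) after quantifying how fast the subleading contributions to the moment expansion, coming from permutations $\alpha$ with positive defect from the two geodesics, decay in $d$.

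The main technical obstacle I anticipate is precisely this last piece: controlling the non-geodesic contributions is more delicate than in the GUE case because $\alpha$ now ranges over all of $\mathfrak{S}(p)$ and every summand carries the extra factor $(cd^{2})^{\sharp(\alpha)}$. The counting lemma that plays the role of Lemma~\ref{lemma:number-functions,pairings-defect} thus has to keep simultaneous track of $\sharp(\alpha)$, of the defect $|\alpha|+|\gamma^{-1}\alpha|-(p-1)$, and of the defect $|\alpha|+|\gamma_{f}^{-1}\alpha|-(p-|\im(f)|)$, uniformly in $p$ up to the chosen $p_{d}$. Such a bound should emerge from the general geodesic-counting machinery of Appendix~\ref{appendix:geodesics}; once it is available, plugging it in and taking $p_{d}$-th roots concludes that $\mathbf{E}\|\sum_{j}\widetilde{W}(j)\|_{\infty}\sim(\sqrt{ck}+1)^{2}d^{2}$, as advertised.
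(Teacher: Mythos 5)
Your proposal is correct and follows essentially the same route as the paper: the Wick/genus expansion over $f:[p]\to[k]$ and $\alpha\in\mathfrak{S}(p)$ with the two geodesic inequalities identifying the dominant $NC(p)$ terms, the lower bound from moment convergence to $\mu_{MP(ck)}$, and the upper bound via Jensen with $p=p_d$ growing polynomially in $d$. The counting ingredient you flag as the main obstacle is exactly what the paper supplies as Lemma~\ref{lemma:number-functions,permutations-defect}, which tracks the number of cycles $\sharp(\alpha)$ and the geodesic defect simultaneously.
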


As a preliminary step towards estimating the sup-norm $\mathbf{E} \left\|\sum_{j=1}^k\widetilde{W}_{\A\B^k}(j)\right\|_{\infty}$, we will look at the $p$-order moments $\mathbf{E}\, \mathrm{Tr}\left[\left(\sum_{j=1}^k\widetilde{W}_{\A\B^k}(j)\right)^p\right]$, $p\in\mathbf{N}$, and show that they can be expressed in terms of the $p$-order moments of a Mar\v{c}enko-Pastur distribution of appropriate parameter.

So let us recall first a few required definitions. For any $\lambda>0$, we shall denote by $\mu_{MP(\lambda)}$ the Mar\v{c}enko-Pastur distribution of parameter $\lambda$, whose density is given by
\[ \mathrm{d}\mu_{MP(\lambda)}(x) = \begin{cases} f_{\lambda}(x)\mathrm{d}x\ \text{if}\ \lambda>1 \\
\left(1-\lambda\right)\delta_0 + \lambda f_{\lambda}(x)\mathrm{d}x\ \text{if}\ \lambda\leq 1 \end{cases} ,\]
where, setting $\lambda_{\pm}=(\sqrt{\lambda}\pm 1)^2$, we defined the function $f_{\lambda}$ by
\[ f_{\lambda}(x)=\frac{\sqrt{(\lambda_+-x)(x-\lambda_-)}}{2\pi\lambda x}\mathbf{1}_{[\lambda_-,\lambda_+]}(x). \]
We shall also denote, for each $p\in\N$, by $\mathrm{M}_{MP(\lambda)}^{(p)}$ its $p$-order moment, i.e.~ $\mathrm{M}_{MP(\lambda)}^{(p)}=\int_{-\infty}^{+\infty}x^p\mathrm{d}\mu_{MP(\lambda)}(x)$. It is well-known that
\[ \forall\ p\in\N,\ \mathrm{M}_{MP(\lambda)}^{(p)}=\sum_{m=1}^p\lambda^m\mathrm{Nar}_p^m, \]
where $\mathrm{Nar}_p^m$ is the $(p,m)^{th}$ Narayana number defined in Lemma \ref{lemma:catalan}. In particular,  $\mathrm{M}_{MP(1)}^{(p)}=\mathrm{Cat}_p$, the $p^{\text{th}}$ Catalan number defined in Lemma \ref{lemma:catalan} as well.

\begin{proposition}\label{prop:wishart-p} Fix $k\in\N$ and $c>0$. Then, when $d\rightarrow+\infty$, the random matrix $\left(\sum_{j=1}^k\widetilde{W}_{\A\B^k}(j)\right)/d^2$ converges in moments towards a Mar\v{c}enko-Pastur distribution of parameter $ck$. Equivalently, this means that, for any $p\in\N$,
\[ \mathbf{E}_{W_{\A\B}\sim\mathcal{W}_{d^2,cd^2}}\, \mathrm{Tr}\left[\left(\underset{j=1}{\overset{k}{\sum}}\widetilde{W}_{\A\B^k}(j)\right)^p\right] \underset{d\rightarrow+\infty}{\sim} \mathrm{M}_{MP(ck)}^{(p)}d^{2p+k+1}. \]
\end{proposition}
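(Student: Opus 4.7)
The plan is to mirror closely the proof of Proposition \ref{prop:gaussian-p}, but replacing the Gaussian Wick formula by its Wishart counterpart. Write $W = GG^{\dagger}$ with $G$ a $d^2 \times cd^2$ matrix of i.i.d.\ complex normal entries, and expand the $p$-th moment by pairing each $G$-entry with a $G^{\dagger}$-entry via a permutation $\alpha \in \mathfrak{S}(p)$. The combinatorial scheme of Appendix \ref{appendix:wishart} should yield a formula of the shape
\[
\mathbf{E}\,\mathrm{Tr}\!\left[\left(\sum_{j=1}^k\widetilde{W}_{\A\B^k}(j)\right)^{\!p}\right] = \sum_{f:[p]\to[k]}\,\sum_{\alpha\in\mathfrak{S}(p)}\, d^{\,\sharp(\gamma^{-1}\alpha)+\sharp(\gamma_f^{-1}\alpha)+k-|\im(f)|}\, s^{\,\sharp(\alpha)},
\]
with $s=cd^2$, $\gamma=(p\,\ldots\,1)$ the canonical full cycle on $[p]$, and $\gamma_f$ the product of the canonical full cycles on the level sets of $f$, exactly as in equation \eqref{eq:p-moment-gaussian}. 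The only structural change compared with the Gaussian case is that the pair-partition $\lambda \in \mathfrak{P}^{(2)}(2p)$ is replaced by an \emph{arbitrary} permutation $\alpha \in \mathfrak{S}(p)$, and that each cycle of $\alpha$ brings an additional factor of $s = cd^2$ (because contractions are among the $s$ columns of $G$).

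Substituting $s=cd^2$, the exponent of $d$ becomes $\sharp(\gamma^{-1}\alpha)+\sharp(\gamma_f^{-1}\alpha)+2\sharp(\alpha)+k-|\im(f)|$, and the next step is to identify the $(\alpha,f)$ saturating this exponent. By Lemmas \ref{lemma:cycles-transpositions} and \ref{lemma:distance} (applied to $\alpha$ rather than a pair-partition),
\[
\sharp(\alpha)+\sharp(\gamma^{-1}\alpha) \leq p+1,\qquad \sharp(\alpha)+\sharp(\gamma_f^{-1}\alpha) \leq p+|\im(f)|,
\]
so the total $d$-exponent is at most $2p+k+1$, with equality if and only if $\alpha$ lies simultaneously on a geodesic from $\id$ to $\gamma$ and on a geodesic from $\id$ to $\gamma_f$. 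The first condition characterises $\alpha$ as the standard permutation encoding of a non-crossing partition of $[p]$ (via the Kreweras/Biane correspondence); the second then amounts to asking that $f$ be constant on each block of this non-crossing partition.

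To conclude, I would count the saturating pairs $(\alpha,f)$. A non-crossing partition of $[p]$ with exactly $m$ blocks contributes $\sharp(\alpha)=m$, hence a factor $s^m = c^m d^{2m}$, and there are $k^m$ maps $f:[p]\to[k]$ constant on these $m$ blocks. Since the number of non-crossing partitions of $[p]$ with $m$ blocks is the Narayana number $\mathrm{Nar}_p^m$ (Lemma \ref{lemma:catalan}), summing over $m$ gives the dominant contribution
\[
\sum_{m=1}^{p}\mathrm{Nar}_p^m\,(ck)^m\, d^{2p+k+1} = \mathrm{M}_{MP(ck)}^{(p)}\, d^{2p+k+1},
\]
which is exactly the asserted equivalent.

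The main technical obstacle is the combinatorial characterisation in the second step: namely, checking that the permutations $\alpha \in \mathfrak{S}(p)$ saturating both geodesic inequalities simultaneously are precisely the non-crossing partition-permutations whose block partition is finer than $f^{-1}$, and that the number of cycles of such an $\alpha$ equals the number of blocks of the associated non-crossing partition. This requires the cycle-counting identity of Appendix \ref{appendix:technical} together with the geodesic bookkeeping of Appendix \ref{appendix:geodesics} (both already set up and used in the Gaussian proof). The subleading terms, needed only if one later wishes to pass to the operator-norm statement of Proposition \ref{prop:wishart-infty}, can be controlled by the same defect argument as in the proof of Proposition \ref{prop:gaussian-p}, using (a Wishart version of) Lemma \ref{lemma:number-functions,pairings-defect} to bound the number of non-geodesic contributing pairs $(\alpha,f)$.
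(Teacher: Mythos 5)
Your proposal is correct and follows essentially the same route as the paper: the moment formula from the Wishart Wick/graphical calculus (Proposition \ref{prop:wishart-p-preliminary}), the two geodesic inequalities \eqref{eq:alpha} and \eqref{eq:alpha_f} characterising the saturating pairs $(\alpha,f)$ as $\alpha\in NC(p)$ with $f\circ\alpha=f$, and the Narayana-number count yielding $\mathrm{M}_{MP(ck)}^{(p)}$. You also correctly identify that the defect bookkeeping is only needed later for the operator-norm statement, not for this moment estimate.
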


\begin{remark}
Stronger convergence results than the one established in Proposition \ref{prop:wishart-p} may in fact be proved, as discussed in Appendix \ref{appendix:convergence}.
\end{remark}

\begin{proof} [Proof of Proposition \ref{prop:wishart-p}]
Let $p\in\N$. Computing the value of the $p$-order moment $\mathbf{E}\, \mathrm{Tr}\left[\left(\sum_{j=1}^k\widetilde{W}_{\A\B^k}(j)\right)^p\right]$ may be done using the Gaussian Wick formula (see Lemma \ref{lemma:wick} for the statement and Appendix \ref{appendix:wick-wishart} for a succinct summary of how to derive moments of Wishart matrices from it). In our case, we get by the computations carried out in Appendix \ref{appendix:wishart} and summarized in Proposition \ref{prop:wishart-p-preliminary} that, for any $d,s\in\N$, denoting by $\sharp(\cdot)$ the number of cycles in a permutation,
\begin{align*} \mathbf{E}_{W_{\A\B}\sim\mathcal{W}_{d^2,s}}\, \mathrm{Tr}\left[\left(\sum_{j=1}^k\widetilde{W}_{\A\B^k}(j)\right)^p\right] = & \sum_{f:[p]\rightarrow[k]}\mathbf{E}_{W_{\A\B}\sim\mathcal{W}_{d^2,s}}\, \mathrm{Tr}\left[\underset{i=1}{\overset{p}{\overrightarrow{\prod}}}\widetilde{W}_{\A\B^k}(f(i))\right] \\
= & \underset{f:[p]\rightarrow[k]}{\sum}\underset{\alpha\in\mathfrak{S}(p)}{\sum} d^{\sharp(\gamma^{-1}\alpha)+\sharp(\gamma_f^{-1}\alpha)+k-|\im(f)|}s^{\sharp(\alpha)}, \end{align*}
where we defined on $\{1,\ldots,p\}$, $\mathfrak{S}(p)$ as the set of permutations, $\gamma=(p\,\ldots\,1)$ as the canonical full cycle, and for each $f:[p]\rightarrow[k]$, $\gamma_f=\gamma_{f=1}\cdots\gamma_{f=k}$ as the product of the canonical full cycles on each of the level sets of $f$.

Hence, in the case where $s=cd^2$, for some constant $c>0$, we have
\begin{equation}
\label{eq:p-moment}
\mathbf{E}_{W_{\A\B}\sim\mathcal{W}_{d^2,cd^2}}\, \mathrm{Tr}\left[\left(\sum_{j=1}^k\widetilde{W}_{\A\B^k}(j)\right)^p\right] = \sum_{f:[p]\rightarrow[k]} \sum_{\alpha\in\mathfrak{S}(p)} c^{\sharp(\alpha)} d^{2\sharp(\alpha)+\sharp(\gamma^{-1}\alpha)+\sharp(\gamma_f^{-1}\alpha)+k-|\im(f)|}.
\end{equation}

We now have to understand which $\alpha\in\mathfrak{S}(p)$ and $f:[p]\rightarrow[k]$ contribute to the dominating term in the moment expansion \eqref{eq:p-moment}, i.e.~are such that the quantity $2\sharp(\alpha)+\sharp(\gamma^{-1}\alpha)+\sharp(\gamma_f^{-1}\alpha)+k-|\im(f)|$ is maximal.

First of all, for any $\alpha\in\mathfrak{S}(p)$, we have
\begin{equation} \label{eq:alpha} \sharp(\alpha)+\sharp(\gamma^{-1}\alpha)= 2p-(|\alpha|+|\gamma^{-1}\alpha|)\leq 2p-|\gamma| =p+\sharp(\gamma) =p+1, \end{equation}
where the first equality is by Lemma \ref{lemma:cycles-transpositions}, whereas the second inequality is by equation \eqref{eq:geodesic'} in Lemma \ref{lemma:distance} and is an equality if and only if $\alpha\in NC(p)$. Next, for any $\alpha\in\mathfrak{S}(p)$ and $f:[p]\rightarrow[k]$, we have
\begin{equation} \label{eq:alpha_f} \sharp(\alpha)+\sharp(\gamma_f^{-1}\alpha)= 2p-(|\alpha|+|\gamma_f^{-1}\alpha|)\leq 2p-|\gamma_f| =p+\sharp(\gamma_f) =p+|\im(f)|, \end{equation}
where the first equality is once more by Lemma \ref{lemma:cycles-transpositions}, whereas the second inequality is by equation \eqref{eq:geodesic''} in Lemma \ref{lemma:distance} and is an equality if and only if $\alpha\in NC(p)$ and $f\circ\alpha=f$. So equations \eqref{eq:alpha} and \eqref{eq:alpha_f} together yield that, for any $\alpha\in\mathfrak{S}(p)$ and $f:[p]\rightarrow[k]$,
\begin{equation} \label{eq:alpha-alpha_f} 2\sharp(\alpha)+\sharp(\gamma^{-1}\alpha)+\sharp(\gamma_f^{-1}\alpha)+k-|\im(f)|\leq 2p+k+1, \end{equation}
with equality if and only if $\alpha\in NC(p)$ and $f\circ\alpha=f$.

We thus get the asymptotic estimate
\[ \mathbf{E}\, \mathrm{Tr}\left[\left(\sum_{j=1}^k\widetilde{W}_{\A\B^k}(j)\right)^p\right] \underset{d\rightarrow+\infty}{\sim} \left(\sum_{\alpha\in NC(p)} \sum_{\underset{f\circ\alpha=f}{f:[p]\rightarrow[k]}}c^{\sharp(\alpha)}\right) d^{2p+k+1}. \]
Yet, a function $f$ satisfying $f\circ\alpha=f$ is fully characterized by its value on each of the $\sharp(\alpha)$ cycles of $\alpha$. So there are $k^{\sharp(\alpha)}$ such functions. Hence in the end, the asymptotic estimate
\[ \mathbf{E}\, \mathrm{Tr}\left[\left(\sum_{j=1}^k\widetilde{W}_{\A\B^k}(j)\right)^p\right] \underset{d\rightarrow+\infty}{\sim} \left(\sum_{\alpha\in NC(p)}(ck)^{\sharp(\alpha)}\right) d^{2p+k+1} = \mathrm{M}_{MP(ck)}^{(p)} d^{2p+k+1}, \]
the last equality being because, for any $\lambda>0$, $\sum_{\alpha\in NC(p)}\lambda^{\sharp(\alpha)} = \sum_{m=1}^p\lambda^m\mathrm{Nar}_p^m = \mathrm{M}_{MP(\lambda)}^{(p)}$.
\end{proof}

\begin{proof} [Proof of Proposition \ref{prop:wishart-infty}]
The argument will follow the exact same lines as the one used to derive Proposition \ref{prop:gaussian-infty} from Proposition \ref{prop:gaussian-p}.

As pointed out there, showing the inequality ``$\geq$'' in Proposition \ref{prop:wishart-infty} is easy. Indeed, the convergence in moments established in Proposition \ref{prop:wishart-p} implies 
that, asymptotically, the matrix $\left(\sum_{j=1}^k\widetilde{W}_{\A\B^k}(j)\right)/d^2$ has a largest eigenvalue which is, on average, at least the upper-edge of the support of $\mu_{MP(ck)}$, i.e.~$(\sqrt{ck}+1)^2$. In other words, it guarantees that there exist positive constants $c_d\rightarrow_{d\rightarrow+\infty}1$ such that
\begin{equation} \label{eq:lower-bound} \mathbf{E} \left\|\sum_{j=1}^k\widetilde{W}_{\A\B^k}(j)\right\|_{\infty} 
\geq c_d\,(\sqrt{ck}+1)^2d^2. \end{equation}

Let us now turn to the more tricky part, which is showing the inequality ``$\leq$'' in Proposition \ref{prop:wishart-infty}. For $d\in\N$ fixed, it holds thanks to Jensen's inequality and monotonicity of Schatten norms that
\begin{equation} \label{eq:infty-p-wishart} \forall\ p\in\N,\ \mathbf{E} \left\|\sum_{j=1}^k\widetilde{W}_{\A\B^k}(j)\right\|_{\infty} \leq \left( \mathbf{E} \left\|\sum_{j=1}^k\widetilde{W}_{\A\B^k}(j)\right\|_{\infty}^p \right)^{1/p} \leq \left(\mathbf{E}\, \mathrm{Tr}\left[\left(\sum_{j=1}^k\widetilde{W}_{\A\B^k}(j)\right)^p\right] \right)^{1/p} .\end{equation}
So let us fix $d\in\N$ and $p\in\N$, and rewrite \eqref{eq:p-moment} explicitly as an expansion in powers of $d$, keeping in the sum the permutations not saturating equation \eqref{eq:alpha-alpha_f}. Being cautious only regarding the permutations not saturating equation \eqref{eq:alpha_f}, and not regarding those not saturating equation \eqref{eq:alpha}, we thus get the upper bound
\begin{equation} \label{eq:momentsW-defect}
\mathbf{E}\, \mathrm{Tr}\left[\left(\sum_{j=1}^k\widetilde{W}_{\A\B^k}(j)\right)^p\right]\leq \left(\underset{f:[p]\rightarrow[k]}{\sum} \sum_{\delta=0}^{\lfloor(p+k)/2\rfloor}\sum_{m=1}^p \left|\mathfrak{S}_{f,\delta,m}(p)\right|c^m d^{-2\delta}\right) d^{2p+k+1},
\end{equation}
where we defined, for each $f:[p]\rightarrow[k]$, each $1\leq m\leq p$ and each $0\leq\delta\leq \lfloor(p+k)/2\rfloor$,
\[ \mathfrak{S}_{f,\delta,m}(p)=\left\{ \alpha\in\mathfrak{S}(p) \st \sharp(\alpha)=m\ \ \text{and}\ \ \sharp(\alpha)+\sharp(\gamma_f^{-1}\alpha) =p+|\im(f)|-2\delta \right\}. \]
$\mathfrak{S}_{f,\delta,m}(p)$ is thus nothing else than the set of permutations which are composed of $m$ cycles and have a defect $2\delta$ from lying on the geodesics between the identity and the product of the canonical full cycles on each of the level sets of $f$. This justifies in particular \textit{a posteriori} why the summation in \eqref{eq:momentsW-defect} is only over even defects (see the parity argument in Lemma \ref{lemma:distance-general}). Note that the definition of $\mathfrak{S}_{f,\delta,m}(p)$ can actually be extended to all $m\in\N$, with $\left|\mathfrak{S}_{f,\delta,m}(p)\right|=0$ if $m\geq p+|\im(f)|-2\delta$, which we shall do in what follows for writing convenience.

Now, by Lemma \ref{lemma:number-functions,permutations-defect}, we know that, if $0\leq\delta\leq\lfloor p/2\rfloor$, then for any $1\leq m\leq p$,
\[ \left|\big\{ (f,\alpha) \st \alpha\in\mathfrak{S}_{f,\delta,m}(p) \big\}\right| \leq \left(\sum_{\epsilon=0}^{2\delta}k^{m-\epsilon}\mathrm{Nar}_p^{m-\epsilon}\right)\times\left(2k^2p^2\right)^{2\delta}. \]
And if $\lceil p/2\rceil\leq\delta\leq \lfloor (p+k)/2\rfloor$, then trivially for any $1\leq m\leq p$,
\[ \left|\big\{ (f,\alpha) \st \alpha\in\mathfrak{S}_{f,\delta,m}(p) \big\}\right| \leq k^p\,p! \leq \left(\sum_{\epsilon=0}^{p}k^{m-\epsilon}\mathrm{Nar}_p^{m-\epsilon}\right)\times\left(2k^2p^2\right)^{p}. \]
What is more, for a given $0\leq\delta\leq\lceil p/2\rceil$, we have, making the change of summation index $m\mapsto m-\epsilon$,
\begin{align*}
\sum_{m=1}^pc^m\left(\sum_{\epsilon=0}^{2\delta}k^{m-\epsilon}\mathrm{Nar}_p^{m-\epsilon}\right) & = \left(\sum_{\epsilon=0}^{2\delta}c^{-\epsilon}\right) \left(\sum_{m=1}^p(ck)^{m}\mathrm{Nar}_p^{m}\right) \\
& \leq (1+c)^{2\delta} \sum_{m=1}^p(ck)^{m}\mathrm{Nar}_p^{m}\\
& = (1+c)^{2\delta} \mathrm{M}_{MP(ck)}^{(p)}.
\end{align*}
Putting everything together, we therefore get,
\[ \mathbf{E}\, \mathrm{Tr}\left[\left(\sum_{j=1}^k\widetilde{W}_{\A\B^k}(j)\right)^{p}\right] \leq \mathrm{M}_{MP(ck)}^{(p)} \left(1 + \sum_{\delta=1}^{\lfloor p/2\rfloor}\left(\frac{2(1+c)k^2p^2}{d}\right)^{2\delta} +\frac{k}{2}\left(\frac{2(1+c)k^2p^2}{d}\right)^p\right) d^{2p+k+1}.\]
Yet, $\max\left\{\left(2(1+c)k^2p^2/d\right)^{2\delta} \st 1\leq\delta\leq \lceil p/2\rceil\right\}$ is attained for $\delta=1$, provided $p\leq(d/2(1+c)k^2)^{1/2}$. So if such is the case,
\[ \sum_{\delta=1}^{\lfloor p/2\rfloor}\left(\frac{2(1+c)k^2p^2}{d}\right)^{2\delta}+\frac{k}{2}\left(\frac{2(1+c)k^2p^2}{d}\right)^p \leq \frac{p+k}{2} \frac{4(1+c)^2k^4p^4}{d^2} \leq \frac{4(1+c)^2k^4p^5}{d^2}, \]
where the last inequality holds as long as $p\geq k$. And hence, under all the previous assumptions,
\[ \mathbf{E}\, \mathrm{Tr}\left[\left(\sum_{j=1}^k\widetilde{W}_{\A\B^k}(j)\right)^{p}\right] \leq \mathrm{M}_{MP(ck)}^{(p)} \left(1 +\frac{4(1+c)^2k^4p^5}{d^2}\right) d^{2p+k+1}. \]

So set $p_d=(d/2(1+c)k^2)^{(2-\epsilon)/5}$ for some $0<\epsilon<1$ (which is indeed smaller than $(d/2(1+c)k^2)^{1/2}$ and bigger than $k$ for $d$ big enough, in particular bigger than $2(1+c)k^{9/2}$). And using inequality \eqref{eq:infty-p-wishart} in the special case $p=p_d$, we eventually get
\begin{equation} \label{eq:upper-bound} \mathbf{E} \left\|\sum_{j=1}^k\widetilde{W}_{\A\B^k}(j)\right\|_{\infty} \leq \left(\mathrm{M}_{MP(ck)}^{(p_d)} \left(1 + \frac{4(1+c)^2p_d^4}{d^2}\right)\right)^{1/p_d} d^{2+(k+1)/p_d}
\underset{d\rightarrow+\infty}{\sim} (\sqrt{ck}+1)^2d^2.
\end{equation}

Combining the lower bound in equation \eqref{eq:lower-bound} and the upper bound in equation \eqref{eq:upper-bound} yields Proposition \ref{prop:wishart-infty}.
\end{proof}

\subsection{Conclusion} Having at hand the operator-norm estimate from Proposition \ref{prop:wishart-infty}, we can now easily answer our initial question. It is the content of Theorem \ref{th:not-kext} below.

\begin{theorem}
\label{th:not-kext}
Let $k\in\N$, and for any $0<\epsilon<1/2$ define $c_{\epsilon}(k)=\frac{(k-1)^2}{4k}(1-\epsilon)$. Then, there exists a constant $C_{k,\epsilon}>0$ such that
\[ \mathbf{P}_{\rho\sim\mu_{d^2,c_{\epsilon}(k)d^2}} \big(\rho\notin\mathcal{E}_k(\C^d{:}\C^d)\big) \geq 1-e^{-C_{k,\epsilon}d^2}. \]
One can take $C_{k,\epsilon}=C\epsilon^2/k$ for some universal constant $C>0$.
\end{theorem}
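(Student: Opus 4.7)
The plan is to apply the ``state as its own witness'' implication \eqref{eq:proba-notinK} with $K=\mathcal{E}_k(\C^d{:}\C^d)$ and $s=c_\epsilon(k)d^2$, which reduces the task to exhibiting a gap
\[ \eta \ =\ \mathbf{E}_{\rho\sim\mu_{d^2,s}}\big[\mathrm{Tr}(\rho^2)\big] - \mathbf{E}_{\rho\sim\mu_{d^2,s}}\bigg[\underset{\sigma\in\mathcal{E}_k(\C^d{:}\C^d)}{\sup}\mathrm{Tr}(\rho\sigma)\bigg] \]
that is positive and large enough that the resulting exponent $cs\min(1,d^4\eta^2)$ comes out $\gtrsim\epsilon^2 d^2/k$.

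For the first expectation, the classical closed form for the average purity of Wishart-induced states gives $\mathbf{E}[\mathrm{Tr}(\rho^2)]=(d^2+s)/(d^2s+1)\sim(1+c_\epsilon(k))/(c_\epsilon(k)d^2)$ as $d\to+\infty$. For the second, Lemma \ref{lemma:sup-k-ext} combined with the representation $\rho=W/\mathrm{Tr}(W)$, $W\sim\mathcal{W}_{d^2,s}$, yields
\[ \underset{\sigma\in\mathcal{E}_k(\C^d{:}\C^d)}{\sup}\mathrm{Tr}(\rho\sigma) \ =\ \frac{1}{k\,\mathrm{Tr}(W)}\left\|\sum_{j=1}^k\widetilde{W}_{\A\B^k}(j)\right\|_\infty. \]
Invoking the classical independence of $\rho=W/\mathrm{Tr}(W)$ and $\mathrm{Tr}(W)$ for complex Wishart matrices, together with $\mathbf{E}[\mathrm{Tr}(W)]=c_\epsilon(k)d^4$ and Proposition \ref{prop:wishart-infty} applied to the numerator, the expectation factorizes cleanly into
\[ \mathbf{E}\bigg[\underset{\sigma\in\mathcal{E}_k}{\sup}\mathrm{Tr}(\rho\sigma)\bigg] \underset{d\to+\infty}{\sim}\frac{(\sqrt{c_\epsilon(k)k}+1)^2}{c_\epsilon(k)k\,d^2}. \]

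The whole argument then hinges on an algebraic reconciliation of these two asymptotic expressions. The choice $c_\epsilon(k)=(k-1)^2(1-\epsilon)/(4k)$ is precisely tailored so that $2\sqrt{c_\epsilon(k)k}=(k-1)\sqrt{1-\epsilon}$, hence
\[ \eta \underset{d\to+\infty}{\sim}\frac{(k-1)-2\sqrt{c_\epsilon(k)k}}{c_\epsilon(k)k\,d^2} = \frac{(k-1)(1-\sqrt{1-\epsilon})}{c_\epsilon(k)k\,d^2} \geq \frac{2\epsilon}{(k-1)d^2}, \]
using $1-\sqrt{1-\epsilon}\geq\epsilon/2$ for $0<\epsilon<1/2$ together with $c_\epsilon(k)k\leq(k-1)^2/4$. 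Substituting this lower bound into \eqref{eq:proba-notinK} yields $cs\min(1,d^4\eta^2)\gtrsim c\,c_\epsilon(k)d^2\cdot\epsilon^2/(k-1)^2\sim c\,\epsilon^2 d^2/k$, giving the advertised bound with $C_{k,\epsilon}=C\epsilon^2/k$. The most delicate ingredient is really the identification of the critical value $(k-1)^2/(4k)$: below it the average purity strictly exceeds the average $k$-extendibility witness value, and above it the order reverses; once this threshold is pinpointed, everything else is mechanical bookkeeping and a Taylor expansion of $\sqrt{1-\epsilon}$.
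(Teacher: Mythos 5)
Your proof is correct and follows essentially the same route as the paper: reduce via \eqref{eq:proba-notinK} to comparing $\mathbf{E}\big[\mathrm{Tr}(\rho^2)\big]$ with $\mathbf{E}\big[\sup_{\sigma\in\mathcal{E}_k}\mathrm{Tr}(\rho\sigma)\big]$, evaluate the latter through Lemma \ref{lemma:sup-k-ext} and Proposition \ref{prop:wishart-infty}, and check that the choice $c_\epsilon(k)=(1-\epsilon)(k-1)^2/4k$ produces a gap of order $\epsilon/(kd^2)$, which feeds back into the concentration bound. The only (harmless) deviations are cosmetic: you use the exact purity formula and invoke the independence of $W/\mathrm{Tr}(W)$ and $\mathrm{Tr}(W)$ explicitly where the paper divides Wishart moment asymptotics, and your lower bound $2\epsilon/\big((k-1)d^2\big)$ on the gap, obtained from $1-\sqrt{1-\epsilon}\geq\epsilon/2$, is if anything slightly more carefully justified than the paper's corresponding estimate.
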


\begin{proof}
As a direct consequence of Proposition \ref{prop:wishart-infty}, we have
\[ \mathbf{E}_{W_{\A\B}\sim\mathcal{W}_{d^2,cd^2}} \left[\underset{\sigma_{\A\B}\in\mathcal{E}_k(\A{:}\B)}{\sup}\ \tr\big(W_{\A\B}\sigma_{\A\B}\big)\right] \underset{d\rightarrow+\infty}{\sim} \frac{(\sqrt{ck}+1)^2}{k}d^2. \]
And since $\mathbf{E}_{W\sim\mathcal{W}_{d^2,s}}\mathrm{Tr}W\sim_{d,s\rightarrow +\infty} d^2s$ (see e.g.~\cite{CN} or Appendix \ref{appendix:wick-wishart}), the result we eventually come to after renormalizing by $\mathrm{Tr}W$ is
\begin{equation}
\label{eq:k-ext}
\mathbf{E}_{\rho\sim\mu_{d^2,cd^2}} \left[\underset{\sigma\in\mathcal{E}_k(\C^d{:}\C^d)}{\sup}\ \mathrm{Tr}(\rho\sigma)\right] \underset{d\rightarrow+\infty}{\sim} \frac{1}{d^2cd^2}\frac{(\sqrt{ck}+1)^2}{k}d^2 = \frac{(\sqrt{ck}+1)^2}{ck}\frac{1}{d^2}.
\end{equation}
On the other hand, $\mathbf{E}_{W\sim\mathcal{W}_{d^2,s}}\mathrm{Tr}(W^2)\sim_{d,s\rightarrow +\infty} d^2s^2 + (d^2)^2s$ (see e.g.~\cite{CN} or Appendix \ref{appendix:wick-wishart}), so we also have
\[ \mathbf{E}_{\rho\sim\mu_{d^2,cd^2}} \big[\mathrm{Tr}(\rho^2)\big] \underset{d\rightarrow+\infty}{\sim} \frac{d^2(cd^2)^2+(d^2)^2cd^2}{(d^2cd^2)^2} = \left(1+\frac{1}{c}\right)\frac{1}{d^2}. \]
Now, if $c=(1-\epsilon)(k-1)^2/4k$ for some $0<\epsilon<1/2$, then
\[ \frac{(\sqrt{ck}+1)^2}{ck}-\left(1+\frac{1}{c}\right)=\frac{4\epsilon}{(k-1)(1-\epsilon)} < \frac{8\epsilon}{k-1}. \]
So by equation \eqref{eq:proba-notinK}, we have in such case
\[ \mathbf{P}_{\rho\sim\mu_{d^2,cd^2}} \big(\rho\notin\mathcal{E}_k(\C^d{:}\C^d)\big) \geq 1-\exp\left(-Ckd^2d^4(\epsilon/kd^2)^2\right)= 1 -\exp\left(-Cd^2\epsilon^2/k\right), \]
for some universal constant $C>0$.
\end{proof}

\section{Discussion and comparison with other separability criteria}
\label{section:threshold-comparison}

For each $k\in\N$, define $c_{k-ext}$ as the smallest constant $c$ such that a random state $\rho$ on $\C^d\otimes\C^d$ induced by an environment of dimension $cd^2$ is not $k$-extendible with high probability when $d$ is large. That is,
\[ c_{k-ext}= \inf\left\{ c \st \P_{\rho\sim\mu_{d^2,cd^2}}\left(\rho\notin\mathcal{E}_k(\C^d{:}\C^d)\right) \underset{d\rightarrow+\infty}{\rightarrow} 1 \right\}. \]
What we established in Theorem \ref{th:not-kext} is that $c_{k-ext}\leq(k-1)^2/4k$.

Yet, we know from \cite{ASY} that for $c>0$ fixed, $\rho\sim\mu_{d^2,cd^2}$ is with high probability entangled when $d\rightarrow+\infty$: the threshold for $\rho\sim\mu_{d^2,s(d)}$ being with high probability either entangled or separable occurs for some $s(d)=s_0(d)$ with $d^3\lesssim s_0(d)\lesssim d^3\log^2 d$. So what we proved is that if $c<(k-1)^2/4k$, i.e.~if $k>2c+2\sqrt{c(c+1)}+1$, then this generic entanglement will be generically detected by the $k$-extendibility test.

Furthermore, it is well-known (see e.g.~\cite{ZS}) that $\rho\sim\mu_{d^2,d^2}$ is equivalent to $\rho$ being uniformly distributed on the set of mixed states on $\C^d\otimes\C^d$ (for the Haar measure induced by the Hilbert--Schmidt distance). As just mentioned, when $d\rightarrow+\infty$, such states are typically not separable. Now, for $k\geq 6$, $(k-1)^2/4k>1$, so such states are also typically not $k$-extendible. Hence, entanglement of uniformly distributed mixed states on $\C^d\otimes\C^d$ is typically detected by the $k$-extendibility test for $k\geq 6$.

Let us define, in a similar way to what was done for the $k$-extendibility criterion, $c_{ppt}$, resp.~$c_{ra}$, as the smallest constant $c$ such that a random state $\rho$ on $\C^d\otimes\C^d$ induced by an environment of dimension $cd^2$ is, with probability tending to one when $d$ tends to infinity, not satisfying the PPT, resp.~realignment, criterion.
We know from \cite{Aubrun1} that $c_{ppt}=4$, whereas we know from \cite{AN} that $c_{ra}=\left(8/3\pi\right)^2$. Now, for $k\geq 17$, $(k-1)^2/4k>4$, and for $k\geq 5$, $(k-1)^2/4k>\left(8/3\pi\right)^2$. So roughly speaking, this means that the $k$-extendibility criterion for separability becomes ``better'' than the PPT one at most for $k\geq 17$, and ``better'' than the realignment one at most for $k\geq 5$. This is to be taken in the following sense: if $k\geq 17$, resp.~$k\geq 5$, then there is a range of environment dimensions for which random-induced states have a generic entanglement which is generically detected by the $k$-extendibility test but not detected by the PPT, resp.~realignment, test.

Note also that for the reduction criterion \cite{HH}, it was established in \cite{JLN} that the threshold for a random-induced state on $\C^d\otimes\C^d$ either passing or failing it with high probability occurs at an environment dimension $d$, hence much smaller than for all previously mentioned criteria.

\section{The unbalanced case}
\label{section:unbalanced}

For the sake of simplicity, we previously focussed on the case where $\mathrm{H}=\mathrm{A}\otimes\mathrm{B}$ is a balanced bipartite Hilbert space. One may now wonder what happens, more generally, when $\mathrm{A}\equiv\C^{d_\A}$ and $\mathrm{B}\equiv\C^{d_\B}$ with $d_\A$ and $d_\B$ being possibly different. It is easy to see that the results from Theorems \ref{th:not-kext} and \ref{th:w(k-ext)} straightforwardly generalize to the case where $d_\A$ and $d_\B$ both tend to infinity (but possibly at different rates). The corresponding statements appear in Theorem \ref{th:unbalanced} below.

\begin{theorem} \label{th:unbalanced}
Let $k\in\N$ and let $d_\A,d_\B\in\N$. The mean width of the set of $k$-extendible states on $\C^{d_\A}\otimes\C^{d_\B}$ (with respect to $\C^{d_\B}$) satisfies
\[ w\big(\mathcal{E}_k(\C^{d_\A}{:}\C^{d_\B})\big) \underset{d_\A,d_\B\rightarrow+\infty}{\sim} \frac{2}{\sqrt{k}}\frac{1}{\sqrt{d_\A d_\B}} .\]
Also, when $d_\A,d_\B\rightarrow+\infty$, a random state on $\C^{d_\A}\otimes\C^{d_\B}$ which is sampled from $\mu_{d_\A d_\B,c d_\A d_\B}$, with $c<(k-1)^2/4k$, is with high probability not $k$-extendible (with respect to $\C^{d_\B}$).
\end{theorem}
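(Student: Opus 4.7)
The plan is to rerun verbatim the arguments of Sections \ref{section:w(k-ext)} and \ref{section:non-k-ext-witness}, with the single modification that in the moment computations underlying Propositions \ref{prop:gaussian-p} and \ref{prop:wishart-p} the contributions of the $\A$ subsystem (of dimension $d_\A$) are tracked separately from those of the $\B^{\otimes k}$ subsystem (of dimension $d_\B^k$). Concretely, the weight $d^{\sharp(\gamma^{-1}\lambda)+\sharp(\gamma_f^{-1}\lambda)+k-|\im(f)|}$ appearing in the Wick expansion naturally splits as $d_\A^{\sharp(\gamma^{-1}\lambda)}\, d_\B^{\sharp(\gamma_f^{-1}\lambda)+k-|\im(f)|}$, because $\sharp(\gamma^{-1}\lambda)$ counts cycles induced on the $\A$ indices (every $\widetilde{G}_{\A\B^k}(j)$ acts nontrivially on $\A$) while $\sharp(\gamma_f^{-1}\lambda)+k-|\im(f)|$ counts cycles induced on the $\B^{\otimes k}$ indices (the second summand keeping track of $\B$-subsystems left untouched by the product). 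In the Wishart case, where additionally $s^{\sharp(\alpha)}=c^{\sharp(\alpha)}d_\A^{\sharp(\alpha)}d_\B^{\sharp(\alpha)}$ when $s=cd_\A d_\B$, the analogous splitting yields weight $c^{\sharp(\alpha)}\, d_\A^{\sharp(\gamma^{-1}\alpha)+\sharp(\alpha)}\, d_\B^{\sharp(\gamma_f^{-1}\alpha)+\sharp(\alpha)+k-|\im(f)|}$.

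Since the combinatorial inequalities \eqref{eq:lambda-lambda_f'}, \eqref{eq:lambda-lambda_f''}, \eqref{eq:alpha}, \eqref{eq:alpha_f} are statements about permutations and pair-partitions independent of any dimension, exactly the same $(\lambda,f)$'s, resp.~$(\alpha,f)$'s, saturate the dominant-order constraints as in the balanced case. This identifies the dominant-order equivalents
\[ \E_{G_{\A\B}\sim GUE(d_\A d_\B)}\, \tr\!\left[\left(\sum_{j=1}^k \widetilde{G}_{\A\B^k}(j)\right)^{\!2p}\right] \underset{d_\A,d_\B\to\infty}{\sim} \mathrm{M}_{SC(k)}^{(2p)}\, d_\A^{p+1}\, d_\B^{p+k}, \]
\[ \E_{W_{\A\B}\sim\mathcal{W}_{d_\A d_\B,\, cd_\A d_\B}}\, \tr\!\left[\left(\sum_{j=1}^k \widetilde{W}_{\A\B^k}(j)\right)^{\!p}\right] \underset{d_\A,d_\B\to\infty}{\sim} \mathrm{M}_{MP(ck)}^{(p)}\, d_\A^{p+1}\, d_\B^{p+k}. \]
Upgrading these moment estimates to operator-norm estimates then proceeds as in Propositions \ref{prop:gaussian-infty} and \ref{prop:wishart-infty}: the sub-dominant terms with defect $2\delta$ are suppressed by a factor bounded by $\min(d_\A,d_\B)^{-4\delta}$, so repeating the original control with $\min(d_\A,d_\B)$ in place of $d$ (and choosing $p$ of the form $(\min(d_\A,d_\B)/C_k)^{(2-\epsilon)/5}$) yields
\[ \E\!\left\|\sum_{j=1}^k \widetilde{G}_{\A\B^k}(j)\right\|_{\infty}\! \underset{d_\A,d_\B\to\infty}{\sim} 2\sqrt{k}\sqrt{d_\A d_\B}, \qquad \E\!\left\|\sum_{j=1}^k \widetilde{W}_{\A\B^k}(j)\right\|_{\infty}\! \underset{d_\A,d_\B\to\infty}{\sim} (\sqrt{ck}+1)^2\, d_\A d_\B. \]

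The first estimate, combined with the obvious unbalanced analogue of equation \eqref{eq:w(k-ext)-definition} and with $\gamma(d_\A d_\B)\sim d_\A d_\B$, delivers the mean-width asymptotics $\frac{2}{\sqrt{k}}(d_\A d_\B)^{-1/2}$. The second, combined with Lemma \ref{lemma:sup-k-ext} and with the Wishart asymptotics $\E[\tr W]\sim (d_\A d_\B)\, s$, $\E[\tr(W^2)]\sim (d_\A d_\B)\,s^2+(d_\A d_\B)^2\, s$, gives
\[ \E_{\rho\sim\mu_{d_\A d_\B,\, cd_\A d_\B}}\!\!\left[\sup_{\sigma\in\mathcal{E}_k(\C^{d_\A}{:}\C^{d_\B})}\tr(\rho\sigma)\right] \sim \frac{(\sqrt{ck}+1)^2}{ck}\,\frac{1}{d_\A d_\B}, \quad \E_\rho[\tr(\rho^2)] \sim \left(1+\frac{1}{c}\right)\frac{1}{d_\A d_\B}. \]
When $c<(k-1)^2/4k$ the latter strictly exceeds the former with a gap of order $1/(d_\A d_\B)$, and Proposition \ref{prop:concentration1} applied with $n=d_\A d_\B$, $s=cd_\A d_\B$ (whose proof makes no use of bipartite structure of the state-space) closes the argument via \eqref{eq:proba-notinK}. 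I anticipate that the main challenge will be essentially bookkeeping: carefully rewriting the computations of Appendices \ref{appendix:gaussian} and \ref{appendix:wishart} with each dimension factor attributed to the correct subsystem, and checking that the error-term estimates used to pass from moments to operator norms remain valid uniformly in the relative growth rates of $d_\A$ and $d_\B$ once $d$ is replaced throughout by $\min(d_\A,d_\B)$.
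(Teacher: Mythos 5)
Your proposal is correct and follows essentially the same route as the paper, which itself only presents Theorem \ref{th:unbalanced} as a direct generalization of Theorems \ref{th:w(k-ext)} and \ref{th:not-kext}: the dimension-splitting $d_\A^{\sharp(\gamma^{-1}\lambda)}d_\B^{\sharp(\gamma_f^{-1}\lambda)+k-|\im(f)|}$ you use is exactly the paper's equation \eqref{eq:p-dAdB}, the dominant $(\lambda,f)$ and $(\alpha,f)$ are identified by the same dimension-free combinatorial inequalities, and the operator-norm upgrade and witness/concentration steps are carried over verbatim. The only quibble is your stated suppression factor $\min(d_\A,d_\B)^{-4\delta}$ for a term of total defect $2\delta$, which should read $\min(d_\A,d_\B)^{-2\delta}$, but this does not affect the argument.
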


Oppositely, when one of the two subsystems has a fixed dimension and the other one only has an increasing dimension, the sets of $k$-extendible states with respect to either the smaller or the bigger subsystem exhibit different size scalings. This is made precise in Theorem \ref{th:unbalanced'} below.

\begin{theorem} \label{th:unbalanced'}
Let $k\in\N$ and let $d_\A,d_\B\in\N$. If $d_\A$ is fixed, the mean width of the set of $k$-extendible states on $\C^{d_\A}\otimes\C^{d_\B}$ (with respect to $\C^{d_B}$) satisfies
\[ w\big(\mathcal{E}_k(\C^{d_\A}{:}\C^{d_\B})\big) \underset{d_\B\rightarrow+\infty}{\sim} \frac{2}{\sqrt{k}}\frac{1}{\sqrt{d_\A d_\B}} .\]
Whereas if $d_\B$ is fixed, the mean width of the set of $k$-extendible states on $\C^{d_\A}\otimes\C^{d_\B}$ (with respect to $\C^{d_\B}$) satisfies
\[ w\big(\mathcal{E}_k(\C^{d_\A}{:}\C^{d_\B})\big) \underset{d_\A\rightarrow+\infty}{\sim} \frac{2\,C(d_\B,k)}{\sqrt{k}}\frac{1}{\sqrt{d_\A d_\B}} ,\]
with $C(d_\B,k)\geq \left(1+(k-1)/d_\B^2\right)^{1/4}$.
\end{theorem}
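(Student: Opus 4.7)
The plan is to mirror Section \ref{section:w(k-ext)} while tracking $d_\A$ and $d_\B$ separately. By Lemma \ref{lemma:sup-k-ext},
\[ w\big(\mathcal{E}_k(\C^{d_\A}{:}\C^{d_\B})\big) = \frac{1}{k\,\gamma(d_\A d_\B)}\,\E_{G \sim GUE(d_\A d_\B)}\,\Big\|\sum_{j=1}^k \widetilde G_{\A\B^k}(j)\Big\|_\infty, \]
and the Gaussian Wick computation of Appendix \ref{appendix:gaussian} extends verbatim, each summation over $\A$-indices producing a factor $d_\A$ and each summation over $\B$-indices producing $d_\B$:
\[ \E\,\tr\!\left[\Big(\sum_{j=1}^k \widetilde G(j)\Big)^{2p}\right] = \underset{\lambda \in \mathfrak{P}^{(2)}(2p),\, f:[2p]\to[k]}{\sum} d_\A^{\sharp(\gamma^{-1}\lambda)}\, d_\B^{\sharp(\gamma_f^{-1}\lambda)+k-|\im f|}. \]

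For Case 1 ($d_\A$ fixed, $d_\B \to \infty$), I would first maximise the power of $d_\B$: by Lemma \ref{lemma:distance}, $\sharp(\gamma_f^{-1}\lambda)+k-|\im f|\leq p+k$ with equality iff $\lambda \in NC^{(2)}(2p)$ and $f\circ\lambda=f$, and these same $(\lambda,f)$ simultaneously maximise the $d_\A$-power to $\sharp(\gamma^{-1}\lambda)=p+1$. The resulting leading contribution $\mathrm{Cat}_p\,k^p\,d_\A^{p+1}\,d_\B^{p+k}$ matches the moments of $SC(k)$ once the matrix is rescaled by $\sqrt{d_\A d_\B}$. The operator-norm extraction argument of Proposition \ref{prop:gaussian-infty}, which only requires suitable control over the speed at which sub-dominant defect terms vanish, carries through with $d$ replaced everywhere by $\sqrt{d_\A d_\B}$, and yields $\E\,\|\sum_j \widetilde G(j)\|_\infty \sim 2\sqrt{k d_\A d_\B}$, hence the first claim.

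For Case 2 ($d_\B$ fixed, $d_\A \to \infty$), the roles reverse: the dominant power of $d_\A$ is $p+1$, requiring only $\lambda \in NC^{(2)}(2p)$, with $f$ otherwise arbitrary. Setting
\[ N_p(d_\B, k) := \underset{\lambda \in NC^{(2)}(2p),\, f:[2p]\to[k]}{\sum} d_\B^{\sharp(\gamma_f^{-1}\lambda)+k-|\im f|}, \]
the same truncation argument (the defect-in-$d_\A$ coefficients now depend on $d_\B,k,p$ but are controlled by choosing $p=p_{d_\A}$ suitably) gives $\E\,\|\sum_j \widetilde G(j)\|_\infty / \sqrt{d_\A} \to R(d_\B, k) := \lim_{p\to\infty} N_p(d_\B, k)^{1/(2p)}$. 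This is the upper edge of the compactly supported limiting spectral distribution $\nu_{d_\B, k}$ of $(\sum_j \widetilde G(j))/\sqrt{d_\A}$, whose $2p$-th moment is $N_p(d_\B, k)/d_\B^k$. Setting $C(d_\B, k) := R(d_\B, k)/(2\sqrt{k d_\B})$ then gives the claimed equivalent.

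The delicate step is the lower bound $C(d_\B, k) \geq (1+(k-1)/d_\B^2)^{1/4}$, equivalently $R(d_\B, k)^2 \geq 4k\sqrt{d_\B^2+k-1}$. The naive contribution of the $f\circ\lambda=f$ pairings alone yields only $R\geq 2\sqrt{k d_\B}$, i.e.\ $C\geq 1$, so the non-trivial input must come from the sub-leading pairings. I would first use the exact second moment $\E\,\tr[(\sum_j \widetilde G(j))^2] = k d_\A^2 d_\B^{k-1}(d_\B^2+k-1)$, obtained directly by separating the $j_1=j_2$ contributions (each equal to $\E\tr(G^2)\cdot d_\B^{k-1}=d_\A^2 d_\B^{k+1}$) from the $j_1\neq j_2$ ones (each equal to $d_\A^2 d_\B^{k-1}$); this gives $R \geq \sqrt{m_2}=\sqrt{k(d_\B^2+k-1)/d_\B}$ via $\int x^2 d\nu_{d_\B,k}\leq R^2$, which already matches the target whenever $k$ is sufficiently large compared to $d_\B^2$. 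The complementary regime (where $\sqrt{m_2}$ is too weak) should be covered by exploiting that $\sum_j\widetilde G(j)$ commutes with the $\mathfrak{S}_k$-action on $\B^{\otimes k}$ and hence block-diagonalises on the Schur--Weyl decomposition $\B^{\otimes k}=\bigoplus_\mu V_\mu\otimes W_\mu$: restricting to a well-chosen isotypic component $W_\mu$ (e.g.\ the alternating one when $k\leq d_\B$, or the symmetric one $\Sym^k\B$ otherwise) gives a competing lower bound through a direct moment computation on the smaller random matrix obtained by this restriction. The hardest part of this plan is identifying the correct $W_\mu$ and carrying out the associated moment computation so that the two bounds together cover the full parameter range.
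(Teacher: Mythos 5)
Your overall framework is the same as the paper's: the bivariate moment formula $\sum_{\lambda,f} d_\A^{\sharp(\gamma^{-1}\lambda)}d_\B^{\sharp(\gamma_f^{-1}\lambda)+k-|\im(f)|}$, the treatment of Case 1, and the reduction of Case 2 to computing $R(d_\B,k)=\lim_p N_p(d_\B,k)^{1/2p}$ with $N_p=\sum_{\lambda\in NC^{(2)}(2p),\,f}d_\B^{\sharp(\gamma_f^{-1}\lambda)+k-|\im(f)|}$ are all correct and match the paper. The gap is in the one step that carries the actual content of the second half of the theorem, namely the lower bound $C(d_\B,k)\geq(1+(k-1)/d_\B^2)^{1/4}$. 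Your second-moment bound $R^2\geq m_2=k(d_\B^2+k-1)/d_\B$ is correctly computed, but the target is $R^2\geq 4k\sqrt{d_\B^2+k-1}$, so the second moment suffices only when $\sqrt{d_\B^2+k-1}\geq 4d_\B$, i.e.\ roughly $k\geq 15d_\B^2$; for the complementary (and generic) regime you offer only a Schur--Weyl restriction plan whose key computation you explicitly have not identified or carried out. As it stands the claimed inequality is therefore not established for, say, $k=2$ and $d_\B\geq 2$.

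The paper's route to this lower bound is elementary and stays entirely inside the moment sum: fix $\lambda\in NC^{(2)}(2p)$ and, for each $\delta$, consider the functions $f$ that take two values on exactly $\delta$ pair-blocks of $\lambda$. A short transposition argument (writing $\gamma_f=\gamma_g\tau\tau'$ or $\gamma_f=\gamma_g\tau$ for a corrected $g$ with $g\circ\lambda=g$, according to whether the extra values reappear on other blocks) shows that each such $f$ still satisfies $\sharp(\gamma_f^{-1}\lambda)+k-|\im(f)|\geq p+k-2\delta$, i.e.\ loses at most $d_\B^{-2\delta}$ relative to the leading term. Since there are ${p\choose\delta}k^p(k-1)^{\delta}$ such functions, summing over $\delta$ gives
\begin{equation*}
N_p(d_\B,k)\ \geq\ \mathrm{Cat}_p\,k^p\left(\sum_{\delta}{p\choose\delta}\left(\frac{k-1}{d_\B^2}\right)^{\delta}\right)d_\B^{p+k}\ \geq\ \mathrm{Cat}_p\,k^p\left(1+\frac{k-1}{d_\B^2}\right)^{\lfloor(p+k)/2\rfloor}d_\B^{p+k},
\end{equation*}
and taking $2p$-th roots and letting $p\to\infty$ yields exactly the $1/4$-power. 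If you want to salvage your proposal, replacing the second-moment/Schur--Weyl combination by this direct count of the $\delta$-defective functions is the missing ingredient; note that it is a refinement of the very sum $N_p$ you already wrote down, so no new structure is needed.
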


\begin{proof}
Using the same notation as in the proof of Proposition \ref{prop:gaussian-p}, we start in both cases from the exact expression for the $2p$-order moment (slightly generalizing Proposition \ref{prop:gaussian-p-preliminary})
\begin{equation} \label{eq:p-dAdB} \mathbf{E}_{G_{\A\B}\sim GUE(d_\A d_\B)}\, \mathrm{Tr}\left[\left(\sum_{j=1}^k\widetilde{G}_{\A\B^k}(j)\right)^{2p}\right] =\underset{f:[2p]\rightarrow[k]}{\sum} \underset{\lambda\in\mathfrak{P}^{(2)}(2p)}{\sum} d_\A^{\sharp(\gamma^{-1}\lambda)}d_\B^{\sharp(\gamma_f^{-1}\lambda)+k-|\im(f)|}. \end{equation}

First, fix $d_\A$. The argument then follows the exact same lines as in the proof of Proposition \ref{prop:gaussian-p}. Indeed, the pair partitions $\lambda\in\mathfrak{P}^{(2)}(2p)$ contributing to the dominant order in $d_B$ in the expansion \eqref{eq:p-dAdB} are the $\mathrm{Cat}_p$ non-crossing pair partitions $\lambda\in NC^{(2)}(2p)$, for which $\sharp(\gamma^{-1}\lambda)=p+1$. Moreover, for each of these $\lambda$, the functions $f:[2p]\rightarrow[k]$ contributing to the dominant order in $d_B$ in the expansion \eqref{eq:p-dAdB} are the $k^p$ functions which are such that $f\circ\lambda=f$, for which $\sharp(\gamma_f^{-1}\lambda)+k-|\im(f)|=p+k$. So we eventually get
\[ \mathbf{E}_{G_{\A\B}\sim GUE(d_\A d_\B)}\, \mathrm{Tr}\left[\left(\sum_{j=1}^k\widetilde{G}_{\A\B^k}(j)\right)^{2p}\right] \underset{d_\B\rightarrow+\infty}{\sim} \mathrm{Cat}_p\,k^pd_\A^{p+1}d_\B^{p+k}. \]

Now, fix $d_\B$. Again, the pair partitions $\lambda\in\mathfrak{P}^{(2)}(2p)$ contributing to the dominant order in $d_\A$ in the expansion \eqref{eq:p-dAdB} are the $\mathrm{Cat}_p$ non-crossing pair partitions $\lambda\in NC^{(2)}(2p)$, for which $\sharp(\gamma^{-1}\lambda)=p+1$.
So consider one of these $\lambda$. Observe that, for any $0\leq\delta\leq \lfloor(p+k)/2\rfloor$, if $f:[2p]\rightarrow[k]$ is such that there are exactly $\delta$ pair blocks of $\lambda$ on which $f$ takes two values, then necessarily  $\sharp(\gamma_f^{-1}\lambda)+k-|\im(f)|\geq p+k-2\delta$. Indeed, the case $\delta=0$ is already known. So let us describe precisely what happens in the case $\delta=1$, i.e.~when there is exactly $1$ pair block of $\lambda$ on which $f$ takes $2$ values.\\
$\bullet$ If amongst these $2$ values, at least $1$ of them is also taken on another pair block of $\lambda$, then there exist transpositions $\tau,\tau'$ and a function $g$ satisfying $g\circ\lambda=g$, such that $\gamma_f=\gamma_g\tau\tau'$ and $|\im(f)|=|\im(g)|$. Hence,
\[ \sharp(\gamma_f^{-1}\lambda) = \sharp(\tau'\tau\gamma_f^{-1}\lambda) \geq \sharp(\gamma_g^{-1}\lambda) -2 = p+|\im(g)|-2 = p+|\im(f)|-2. \]\\
$\bullet$ If none of these $2$ values is also taken on another pair block of $\lambda$, then there exist a transposition $\tau$ and a function $g$ satisfying $g\circ\lambda=g$, such that $\gamma_f=\gamma_g\tau$ and $|\im(f)|=|\im(g)|+1$. Hence,
\[ \sharp(\gamma_f^{-1}\lambda) = \sharp(\tau\gamma_f^{-1}\lambda) \geq \sharp(\gamma_g^{-1}\lambda) -1 = p+|\im(g)|-1 = p+|\im(f)|-2. \]
And this generalizes in a similar way to $\delta>1$. Yet, for a given $0\leq\delta\leq \lfloor(p+k)/2\rfloor$, there are ${p \choose \delta}k^p(k-1)^{\delta}$ functions which take $2$ values on exactly $\delta$ pair blocks of $\lambda$ (assuming of course that $p\geq k$). So we eventually get
\begin{align*} \mathbf{E}_{G_{\A\B}\sim GUE(d_\A d_\B)}\, \mathrm{Tr}\left[\left(\sum_{j=1}^k\widetilde{G}_{\A\B^k}(j)\right)^{2p}\right] & \geq \mathrm{Cat}_p\,k^p\left(\sum_{\delta=0}^{\lfloor(p+k)/2\rfloor}{p \choose \delta}(k-1)^{\delta}d_\B^{-2\delta}\right)d_\B^{p+k}d_\A^{p+1}\\
& \geq \mathrm{Cat}_p\,k^p\left(\sum_{\delta=0}^{\lfloor(p+k)/2\rfloor}{\lfloor(p+k)/2\rfloor \choose \delta}\left(\frac{k-1}{d_\B^2}\right)^{\delta}\right)d_\B^{p+k}d_\A^{p+1}\\
& = \mathrm{Cat}_p\,k^p\left(1+\frac{k-1}{d_\B^2}\right)^{\lfloor(p+k)/2\rfloor}d_\B^{p+k}d_\A^{p+1}. \end{align*}

One can then argue as in the proof of the derivation of Proposition \ref{prop:gaussian-infty} from Proposition \ref{prop:gaussian-p} that we additionally have $\mathbf{E}\left\| \sum_{j=1}^{k} \widetilde{G}_{\A\B^k}\right\|_{\infty}\sim\lim_{p\rightarrow+\infty} \left( \mathbf{E}\, \mathrm{Tr}\left[ \left( \sum_{j=1}^{k} \widetilde{G}_{\A\B^k}\right)^{2p} \right] \right)^{1/2p}$, when either $d_\B\rightarrow+\infty$ or $d_\A\rightarrow+\infty$. This automatically yields the two announced statements on the mean width of $\mathcal{E}_k(\C^{d_\A}{:}\C^{d_\B})$.
\end{proof}

\begin{remark}
In the situation where $d_\B$ is fixed, if we had an exact expression
\[ \forall\ p\in\N,\ \mathbf{E}_{G_{\A\B}\sim GUE(d_\A d_\B)}\, \mathrm{Tr}\left[\left(\sum_{j=1}^k\widetilde{G}_{\A\B^k}(j)\right)^{2p}\right] \underset{d_\A\rightarrow+\infty}{\sim} M(d_\A,p), \]
then we would be able to conclude without any further argument that
\[ \mathbf{E}\left\| \sum_{j=1}^{k} \widetilde{G}_{\A\B^k}\right\|_{\infty} \underset{d_\A\rightarrow+\infty}{\sim} \lim_{p\rightarrow+\infty} M(d_\A,p)^{1/2p}. \]
This would indeed follow from the convergence result of \cite{HT} for non-commutative polynomials in multi-variables with matrix coefficients (in our case, $d_\B^2$ variables and $d_\B^k\times d_\B^k$ coefficients).
\end{remark}

The asymmetry in the definition of $k$-extendibility appears more strikingly in this unbalanced setting. Indeed, for a finite $k\in\N$, a given state on $\mathrm{A}\otimes\mathrm{B}$ may be $k$-extendible with respect to $\mathrm{B}$ but not $k$-extendible with respect to $\mathrm{A}$. It is only in the limit $k\rightarrow +\infty$ that there is equivalence between the two notions: a state on $\mathrm{A}\otimes\mathrm{B}$ is $k$-extendible with respect to $\mathrm{B}$ for all $k\in\N$ if and only if it is $k$-extendible with respect to $\mathrm{A}$ for all $k\in\N$ (and if and only if it is separable).

However, what Theorem \ref{th:unbalanced} stipulates is that, even for a finite $k\in\N$, when both subsystems grow, being $k$-extendible with respect to either one or the other are two constraints which are, on average, equivalently restricting. On the contrary, what Theorem \ref{th:unbalanced'} shows is that when only one subsystem grows, and the other remains of fixed size, being $k$-extendible with respect to the bigger one is, on average, a tougher constraint than being $k$-extendible with respect to the smaller one (as one would have probably expected).

This is to be put in perspective with some of the original observations made in \cite{DPS}. It was indeed noticed that checking whether a state on $\C^{d}\otimes\C^{d'}$ is $k$-extendible with respect to $\C^{d'}$ requires space resources which scale as $\left[{d'+k-1 \choose k}d\right]^2$ when implemented. It was therefore advised that in the unbalanced situation of $d$ ``big'' and $d'$ ``small'', one should check $k$-extendibility with respect to $\C^{d'}$ rather than $\C^d$, the former being much more economical. On the other hand, it comes out from our study that, in this case, an entangled state is likely to fail passing the $k$-extendibility test for a smaller $k$ when the extension is searched with respect to $\C^d$ than when it is searched with respect to $\C^{d'}$. But understanding the precise trade-off seems out of reach at the moment.

\section{Miscellaneous questions}
\label{section:miscellaneous}

\subsection{What about the mean width of the set of $k$-extendible states for ``big'' $k$?}

All the statements proven sofar, regarding either the $k$-extendibility of random-induced states or the mean width of the set of $k$-extendible states, converge towards the same (expected) conclusion: for any given $k\in\N$, the $k$-extendibility criterion becomes a very weak necessary condition for separability when the dimension of the considered bipartite system increases. So the natural question at that point is: what can be said about the $k$-extendibility criterion on $\C^d\otimes\C^d$ when $k\equiv k(d)$ is allowed to grow in some way with $d$? Unfortunately, most of the results we established rely at some point on the assumption that $k$ is a fixed parameter, and therefore do not seem to be directly generalizable to the case where $k$ depends on $d$.

There is at least one estimate though that remains valid in this setting, which is the lower bound on the mean width of $k$-extendible states.

\begin{theorem} \label{th:w(k(d)-ext)-lower}
There exist positive constants $c_d\rightarrow_{d\rightarrow+\infty}1$ such that, for any $k(d)\in\N$, the mean width of the set of $k(d)$-extendible states on $\C^d\otimes\C^d$ satisfies
\[ w\left(\mathcal{E}_{k(d)}(\C^d{:}\C^d)\right) \geq c_d\, \frac{2}{\sqrt{k(d)}}\frac{1}{d}. \]
\end{theorem}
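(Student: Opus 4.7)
The plan is to revisit the proof of Proposition \ref{prop:gaussian-p} and extract the purely combinatorial lower bound, which does not require $k$ to be fixed. Since every term in the moment expansion \eqref{eq:p-moment-gaussian} is non-negative, restricting the double sum to the pairs $(f,\lambda)$ that saturate inequality \eqref{eq:lambda-lambda_f}---namely $\lambda\in NC^{(2)}(2p)$ together with $f:[2p]\to[k(d)]$ constant on each pair-block of $\lambda$---gives, for every $d,p\in\N$,
\[ \E\,\tr\!\left[M_d^{2p}\right] \geq \mathrm{Cat}_p\, k(d)^p\, d^{2p+k(d)+1}, \text{ where } M_d := \sum_{j=1}^{k(d)}\widetilde{G}_{\A\B^{k(d)}}(j). \]
Since $M_d$ acts on a space of dimension $d^{k(d)+1}$, the elementary inequality $\tr(M_d^{2p})\leq d^{k(d)+1}\|M_d\|_\infty^{2p}$ yields $\big(\E\|M_d\|_\infty^{2p}\big)^{1/2p}\geq \mathrm{Cat}_p^{1/2p}\,\sqrt{k(d)}\,d$.

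The non-trivial step is passing from this $L^{2p}$ lower bound to one on $\E\|M_d\|_\infty$ itself, since Jensen's inequality goes the wrong way. I would invoke Gaussian concentration: the map $G\mapsto\|M_d(G)\|_\infty$ is Lipschitz with respect to the Hilbert--Schmidt (equivalently Gaussian) norm on $G$ with constant at most $k(d)$, because $\|M_d(G)\|_\infty\leq k(d)\|G\|_\infty\leq k(d)\|G\|_2$. The standard Gaussian moment comparison (Borell--TIS combined with Minkowski in $L^{2p}$) then gives, for a universal $C>0$ and every $p\geq 1$,
\[ \big(\E\|M_d\|_\infty^{2p}\big)^{1/2p}\leq \E\|M_d\|_\infty + C\sqrt{p}\, k(d), \]
and hence
\[ \E\|M_d\|_\infty \geq \mathrm{Cat}_p^{1/2p}\,\sqrt{k(d)}\,d - C\sqrt{p}\,k(d). \]

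To conclude, in the main regime $k(d)=o(d^2)$ I would choose $p(d):=\lfloor\sqrt{d^2/k(d)}\rfloor$; this tends to $+\infty$, so $\mathrm{Cat}_{p(d)}^{1/2p(d)}\to 2$, while the concentration correction satisfies $\sqrt{p(d)}\,k(d)/(\sqrt{k(d)}\,d)=(k(d)/d^2)^{1/4}\to 0$. Plugging into identity \eqref{eq:w(k-ext)-definition} and using $\gamma(d^2)\sim d^2$ then gives $w\big(\mathcal{E}_{k(d)}(\C^d{:}\C^d)\big)\geq (1-o(1))\cdot 2/(d\sqrt{k(d)})$. The remaining regime $k(d)\gtrsim d^2$ is dispatched by the inclusion $\mathcal{S}(\C^d{:}\C^d)\subseteq\mathcal{E}_{k(d)}(\C^d{:}\C^d)$ together with the estimate $w\big(\mathcal{S}(\C^d{:}\C^d)\big)\sim 1/d^{3/2}$ recalled in Section \ref{section:w-comparison}, which in that range dominates the target $2/(d\sqrt{k(d)})\lesssim 1/d^2$ by a factor of order $\sqrt{d}$.

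The main obstacle is the reverse-Jensen step above: $p(d)$ must be pushed to infinity fast enough that $\mathrm{Cat}_{p(d)}^{1/2p(d)}$ genuinely approaches its limit $2$, yet slowly enough that the concentration correction $\sqrt{p(d)}\,k(d)$ remains negligible compared with $\sqrt{k(d)}\,d$. Fortunately the crude Lipschitz bound $L\leq k(d)$ leaves a non-empty window for this balancing throughout the regime $k(d)=o(d^2)$, which is precisely the range in which the asserted bound is non-trivial.
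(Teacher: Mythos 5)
Your proof is correct and shares the paper's combinatorial core: both arguments retain only the non-crossing pairings $\lambda$ with $f\circ\lambda=f$ in the exact moment formula of Proposition \ref{prop:gaussian-p-preliminary}, yielding $\E\,\tr\big[M_d^{2p}\big]\geq \mathrm{Cat}_p\,k(d)^p\,d^{2p+k(d)+1}$ for all $d,p,k(d)$, where $M_d=\sum_{j=1}^{k(d)}\widetilde{G}_{\A\B^{k(d)}}(j)$. The divergence is in passing from moments to $\E\|M_d\|_{\infty}$. The paper merely points back to the derivation of Proposition \ref{prop:gaussian-infty} from Proposition \ref{prop:gaussian-p}; that derivation rests on two-sided convergence of the empirical spectral distribution (one tests against continuous functions supported near the upper edge of $\mu_{SC(k)}$), and a one-sided moment bound does not by itself give weak convergence, so the paper is implicitly also invoking a variance estimate as in Appendix \ref{appendix:convergence} together with concentration of the operator norm. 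You make this step explicit and self-contained: $\tr(M_d^{2p})\leq d^{k(d)+1}\|M_d\|_{\infty}^{2p}$ gives $\big(\E\|M_d\|_{\infty}^{2p}\big)^{1/2p}\geq \mathrm{Cat}_p^{1/2p}\sqrt{k(d)}\,d$, and the reverse-Jensen step is supplied by Gaussian concentration with the valid (if crude) Lipschitz constant $k(d)$ for $G\mapsto\|M_d(G)\|_{\infty}$. The choice $p(d)=\lfloor\sqrt{d^2/k(d)}\rfloor$ does balance $\mathrm{Cat}_{p(d)}^{1/2p(d)}\rightarrow 2$ against a relative correction of order $(k(d)/d^2)^{1/4}$, and large $k(d)$ is correctly dispatched via $\mathcal{S}(\C^d{:}\C^d)\subseteq\mathcal{E}_{k(d)}(\C^d{:}\C^d)$ and $w\big(\mathcal{S}(\C^d{:}\C^d)\big)\gtrsim d^{-3/2}$. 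What your route buys is a fully quantitative argument that is uniform in $k$, which is what the statement (a single sequence $c_d\rightarrow 1$ valid for all $k(d)$ simultaneously) actually demands. One point to tighten: ``$k(d)=o(d^2)$'' and ``$k(d)\gtrsim d^2$'' do not partition $\N$ for a fixed $d$; placing the cutoff at an explicit threshold such as $k=d^{3/2}$ makes both estimates uniform on their respective ranges (below it $p(d)\geq \lfloor d^{1/4}\rfloor$ and the correction is $O(d^{-1/8})$; above it the separable-width bound exceeds the target by a factor of order $d^{1/4}$) and produces the required single sequence $c_d$.
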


\begin{proof}
Let $d,k(d)\in\N$. For any $p\in\N$, the exact expression for the $2p$-order moment established in Proposition \ref{prop:gaussian-p-preliminary} of course remains true. So by the same arguments as in the proof of Proposition \ref{prop:gaussian-p}, we still have in that case at least the lower bound
\begin{align*}
\mathbf{E}_{G_{\A\B}\sim GUE(d^2)} \mathrm{Tr}\left[\left(\sum_{j=1}^{k(d)}\widetilde{G}_{\A\B^{k(d)}}(j)\right)^{2p}\right] & = \underset{f:[2p]\rightarrow[k(d)]}{\sum} \underset{\lambda\in\mathfrak{P}^{(2)}(2p)}{\sum} d^{\sharp(\gamma^{-1}\lambda)+\sharp(\gamma_f^{-1}\lambda)+k(d)-|\im(f)|}\\
& \geq \mathrm{Cat}_p\,k(d)^pd^{2p+k(d)+1}.
\end{align*}
This lower bound on moments in turn guarantees, as explained in the derivation of Proposition \ref{prop:gaussian-infty} from Proposition \ref{prop:gaussian-p}, that there exist positive constants $c_d\rightarrow_{d\rightarrow+\infty}1$ such that we have the inequality
\[ \mathbf{E}_{G_{\A\B}\sim GUE(d^2)} \left\| \underset{j=1}{\overset{k(d)}{\sum}}\widetilde{G}_{\A\B^{k(d)}}(j) \right\|_{\infty} 
\geq c_d\, 2\sqrt{k(d)}d ,\]
which yields the announced lower bound for the mean width of $\mathcal{E}_{k(d)}(\C^d{:}\C^d)$.
\end{proof}

Theorem \ref{th:w(k(d)-ext)-lower} only provides a lower bound on the asymptotic mean width of $\mathcal{E}_{k}(\C^d{:}\C^d)$ when $k$ is allowed to depend on $d$. It is nevertheless already an interesting piece of information. Indeed, as mentioned in Section \ref{section:w-comparison}, we know from \cite{AS} that the mean width of the set of separable states on $\C^d\otimes\C^d$ is of order $1/d^{3/2}$. Theorem \ref{th:w(k(d)-ext)-lower} therefore asserts that, on $\C^d\otimes\C^d$, one has to go at least to $k$ of order $d$ to obtain a set of $k$-extendible states whose mean width scales as the one of the set of separable states.

Furthermore, it may be worth mentioning that the proof of Proposition \ref{prop:gaussian-infty} actually provides additional information, namely an upper bound on the mean width of $k$-extendible states which remains valid for a quite wide range of $k$.

\begin{theorem} \label{th:w(k(d)-ext)-upper}
For any $d,k(d)\in\N$, provided that $k(d)<d^{2/7}$ and $d$ is big enough, the mean width of the set of $k(d)$-extendible states on $\C^d\otimes\C^d$ satisfies
\[ w\big(\mathcal{E}_{k(d)}(\C^d{:}\C^d)\big) \leq \frac{2}{\sqrt{k(d)}}\frac{1}{d}\exp\left(\frac{k(d)^{7/5}\ln d}{d^{2/5}}\right). \]
\end{theorem}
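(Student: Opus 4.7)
The plan is to revisit the upper-bound computation performed inside the proof of Proposition~\ref{prop:gaussian-infty}, and to observe that the argument there uses $k$ as a fixed parameter only through the two constraints $k \leq p \leq (2d/k)^{1/2}$ imposed on the auxiliary integer $p$. More precisely, that proof established
\[ \mathbf{E}\left\|\sum_{j=1}^k\widetilde{G}_{\A\B^k}(j)\right\|_\infty \leq \left(\mathrm{M}_{SC(k)}^{(2p)}\left(1 + \frac{k^2p^5}{4d^2}\right)\right)^{1/(2p)} d^{1 + (k+1)/(2p)}, \]
and this inequality remains valid when $k = k(d)$ depends on $d$, as long as some $p = p_d$ verifying both constraints can be exhibited. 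The task therefore reduces to choosing $p_d$ approximately optimally in terms of $k(d)$ and $d$.

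My choice would be $p_d = \lfloor(2d/k(d))^{2/5}\rfloor$. The upper constraint $p_d \leq (2d/k)^{1/2}$ is automatic since $2/5 < 1/2$, and the lower constraint $p_d \geq k(d)$ becomes $k(d)^{7/2} \leq 2d$, which is guaranteed by the hypothesis $k(d) < d^{2/7}$ as soon as $d$ is large enough (this is precisely where the threshold $d^{2/7}$ enters). For this specific $p_d$, the quantity $k^2p_d^5/(4d^2)$ stays uniformly bounded, and combined with the Catalan bound $\mathrm{Cat}_{p_d} \leq 4^{p_d}$ this yields
\[ \left(\mathrm{M}_{SC(k)}^{(2p_d)}\left(1+\frac{k^2p_d^5}{4d^2}\right)\right)^{1/(2p_d)} \leq 2\sqrt{k(d)}\,\exp\!\left(O\!\left(\frac{k(d)^{2/5}}{d^{2/5}}\right)\right). \]

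The dominant part of the extra factor then comes from $d^{1+(k+1)/(2p_d)}$, whose logarithm equals $\ln d + (k+1)k^{2/5}\ln d/(2(2d)^{2/5}) \leq \ln d + k(d)^{7/5}\ln d/d^{2/5}$ for $d$ large enough. Plugging this back into equation~\eqref{eq:w(k-ext)-definition}, and using $\gamma(d^2) \sim d^2$ together with the $1/k$ prefactor coming from Lemma~\ref{lemma:sup-k-ext}, the announced bound follows at once. The only modest obstacle is keeping track of the various sub-leading corrections (the Catalan factor, the bracketed $1+k^2p_d^5/(4d^2)$ term, the $O(k^{2/5}/d^{2/5})$ correction, and the integer-part rounding in $p_d$) so that they all accumulate inside the single exponential factor $\exp(k(d)^{7/5}\ln d/d^{2/5})$ advertised in the theorem; this is routine once one verifies that each of them is $o(k(d)^{7/5}\ln d/d^{2/5})$, and this is why the statement comes with the proviso ``$d$ big enough''.
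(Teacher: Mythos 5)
Your proposal is correct and follows essentially the same route as the paper: the paper's proof of Theorem \ref{th:w(k(d)-ext)-upper} likewise observes that the intermediate bound \eqref{eq:upper-bound-gaussian} from the proof of Proposition \ref{prop:gaussian-infty} is valid for $d$-dependent $k$ whenever $k\leq p\leq (2d/k)^{1/2}$, substitutes $p_d=(d/k(d))^{2/5}$ (your $\lfloor(2d/k(d))^{2/5}\rfloor$ differs only by an immaterial constant and rounding), uses $k(d)<d^{2/7}$ exactly to guarantee $p_d\geq k(d)$, and absorbs the sub-leading corrections into the single exponential for $d$ large.
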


\begin{proof}
Let $d,k(d)\in\N$ with $k(d)<d^{2/7}$. Taking $p_d=(d/k(d))^{2/5}$ in equation \eqref{eq:upper-bound-gaussian} (which is indeed, as required, bigger than $k(d)$ and smaller than $(2d/k(d))^{1/2}$ for $d$ big enough) we get
\[ \mathbf{E}_{G_{\A\B}\sim GUE(d^2)} \left\|\sum_{j=1}^{k(d)}\widetilde{G}_{\A\B^k}(j)\right\|_{\infty} \leq 2\sqrt{k(d)}\,d\, \exp\left(\frac{k(d)^{7/5}}{2d^{2/5}}\ln d +\frac{k(d)^{2/5}}{2d^{2/5}}\left[\ln d + \ln \left(\frac{5}{4}\right)\right] \right). \]
The latter quantity is smaller than $2\sqrt{k(d)}\,d\,\exp\left(k(d)^{7/5}\ln d /d^{2/5}\right)$ for $d$ big enough, which yields the advertised upper bound for the mean width of $\mathcal{E}_{k(d)}(\C^d{:}\C^d)$.
\end{proof}

Of course, the upper bound provided by Theorem \ref{th:w(k(d)-ext)-upper} is interesting only for $k(d)<k_0(d)=d^{2/7}/(\ln d)^{5/7}$. Nevertheless, since the set of $k$-extendible states contains the set of $k'$-extendible states for all $k'\geq k$, we also have as a (potentially weak) consequence of Theorem \ref{th:w(k(d)-ext)-upper} that for $k(d)\geq k_0(d)$ and $d$ big enough,
\[ w\big(\mathcal{E}_{k(d)}(\C^d{:}\C^d)\big) \leq w\big(\mathcal{E}_{k_0(d)}(\C^d{:}\C^d)\big) \leq \frac{2e\,(\ln d)^{5/14}}{d^{8/7}} . \]

Theorems \ref{th:w(k(d)-ext)-lower} and \ref{th:w(k(d)-ext)-upper} together imply in particular the following: in the regime where $k(d)$ grows with $d$ slower than $d$ itself, both the ratio $w\big(\mathcal{S}(\C^d{:}\C^d)\big)/w\big(\mathcal{E}_{k(d)}(\C^d{:}\C^d)\big)$ and the ratio $w\big(\mathcal{E}_{k(d)}(\C^d{:}\C^d)\big)/w\big(\mathcal{D}(\C^d\otimes\C^d)\big)$ are unbounded. To rephrase it, for $k(d)$ having this growth rate, the set of $k(d)$-extendible states lies ``strictly in between'' the set of separable states and the set of all states from an asymptotic size point of view.

\subsection{When is a random-induced state with high probability $k$-extendible?}

The result provided by Theorem \ref{th:not-kext} is only one-sided: it tells us that if $s<s(k,d)=d^2(k-1)^2/4k$, then a random mixed state on $\C^d\otimes\C^d$ obtained by partial tracing on $\C^s$ a uniformly distributed pure state on $\C^d\otimes\C^d\otimes\C^s$ is with high probability not $k$-extendible. But what can be said about the case $s>s(k,d)$? Or more generally, can one find a reasonable $s'(k,d)$ such that if $s>s'(k,d)$, then a random mixed state on $\C^d\otimes\C^d$ obtained by partial tracing on $\C^s$ a uniformly distributed pure state on $\C^d\otimes\C^d\otimes\C^s$ is with high probability $k$-extendible?

By the arguments discussed in extensive depth in \cite{ASY}, one can assert at least that there exists a universal constant $c>0$ such that $ckd^2\log^2d$ is a possible value for such $s'(k,d)$. We will not repeat the whole reasoning here, but let us still give the key ideas underlying it.

Define $\overline{\mathcal{E}}_k(\C^d{:}\C^d)$ the translation of $\mathcal{E}_k(\C^d{:}\C^d)$ by its center of mass, the maximally mixed state $\Id/d^2$, i.e.~\[ \overline{\mathcal{E}}_k(\C^d{:}\C^d) =\left\{ \rho-\frac{\Id}{d^2} \st \rho\in\mathcal{E}_k(\C^d{:}\C^d) \right\}. \]
Define also $\overline{\mathcal{E}}_k(\C^d{:}\C^d)^{\circ}$ the convex body polar to $\overline{\mathcal{E}}_k(\C^d{:}\C^d)$, i.e.~\[ \overline{\mathcal{E}}_k(\C^d{:}\C^d)^{\circ} = \left\{ \Delta\in\mathcal{H}(d^2) \st \forall\ X\in\overline{\mathcal{E}}_k(\C^d{:}\C^d),\ \tr\left(\Delta X\right)\leq 1 \right\}. \]
What then has to be specifically determined is (see \cite{ASY}, Section 2, for further comments)
\[ \left\{ s\in\N \st \E_{\rho\sim\mu_{d^2,s}}\sup_{\Delta\in\overline{\mathcal{E}}_k(\C^d{:}\C^d)^{\circ}}\tr\left(\left(\rho-\frac{\Id}{d^2}\right)\Delta\right)<1 \right\}, \]
One can first of all use the fact that, roughly speaking, when $d,s\rightarrow+\infty$, the random matrix $\rho-\Id/d^2$ for $\rho\sim\mu_{d^2,s}$ ``looks the same as'' the random matrix $G/d^2\sqrt{s}$ for $G\sim GUE(d^2)$ (see \cite{ASY}, Proposition 3.1 and Remark 3.2 as well as Appendices A and B, for precise majorization statements and proofs). In particular, there exists a constant $C>0$ such that, for all $d,s\in\N$ with (say) $d^2\leq s\leq d^3$, we have the upper bound
\[ \E_{\rho\sim\mu_{d^2,s}}\sup_{\Delta\in\overline{\mathcal{E}}_k(\C^d{:}\C^d)^{\circ}}\tr\left(\left(\rho-\frac{\Id}{d^2}\right)\Delta\right)\leq \frac{C}{d^2\sqrt{s}} \E_{G\sim GUE(d^2)}\sup_{\Delta\in\overline{\mathcal{E}}_k(\C^d{:}\C^d)^{\circ}}\tr\left(G\Delta\right). \]
Next, due to the fact that, again putting it vaguely, the convex body $\overline{\mathcal{E}}_k(\C^d{:}\C^d)$ is ``sufficiently well-balanced'' (see \cite{ASY}, Section 4 as well as Appendices C and D, for a complete exposition of the $\ell$-position argument), we know that there exists a constant $C'>0$ such that, for all $d\in\N$, we have the upper bound
\[ \left(\E_{G\sim GUE(d^2)}\sup_{\Delta\in\overline{\mathcal{E}}_k(\C^d{:}\C^d)^{\circ}}\tr\left(G\Delta\right)\right) \left(\E_{G\sim GUE(d^2)}\sup_{\Delta\in\overline{\mathcal{E}}_k(\C^d{:}\C^d)}\tr\left(G\Delta\right)\right) \leq C'd^4\log d. \]
Now, $\E_{G\sim GUE(d^2)}\sup_{\Delta\in\overline{\mathcal{E}}_k(\C^d{:}\C^d)}\tr\left(G\Delta\right)$ is nothing else than the Gaussian mean width of $\overline{\mathcal{E}}_k(\C^d{:}\C^d)$, which is the same as the Gaussian mean width of $\mathcal{E}_k(\C^d{:}\C^d)$, so for which we have an estimate thanks to Theorem \ref{th:w(k-ext)}, namely $w_G\left(\overline{\mathcal{E}}_k(\C^d{:}\C^d)\right) \sim_{d\rightarrow+\infty} 2d/\sqrt{k}$.

Putting everything together, we see that
\[ \E_{\rho\sim\mu_{d^2,s}}\sup_{\Delta\in\overline{\mathcal{E}}_k(\C^d{:}\C^d)^{\circ}}\tr\left(\left(\rho-\frac{\Id}{d^2}\right)\Delta\right)\leq \widetilde{C}\frac{\sqrt{k}d\log d}{\sqrt{s}}, \]
for some constant $\widetilde{C}>0$ independent of $d,s,k\in\N$, which implies as claimed that if $s> \widetilde{C}^2kd^2\log^2d$, then $\E_{\rho\sim\mu_{d^2,s}}\sup_{\Delta\in\overline{\mathcal{E}}_k(\C^d{:}\C^d)^{\circ}}\tr\left(\left(\rho-\Id/d^2\right)\Delta\right)<1$.

\begin{remark}
Let us briefly comment on a notable difference, from a convex geometry point of view, between the $k$-extendibility criterion and other common separability criteria. In the case of $k$-extendibility, computing the support function of $\overline{\mathcal{E}}_k$ is easier than computing the support function of its polar $\overline{\mathcal{E}}_k^{\circ}$, while for other separability relaxations it is usually the opposite. Indeed, for a given traceless unit Hilbert-Schmidt norm Hermitian $\Delta$ on $\C^d\otimes\C^d$, we have for instance the closed formulas
\[ h_{\overline{\mathcal{E}}_k(\C^d{:}\C^d)}(\Delta) = \sup \left\{ \tr(\Delta X) \st X\in\overline{\mathcal{E}}_k(\C^d{:}\C^d) \right\} = \left\|\widetilde{\Delta}\right\|_{\infty}, \]
\[ h_{\overline{\mathcal{P}}(\C^d{:}\C^d)^{\circ}}(\Delta) = \sup \left\{ \tr(\Delta X) \st X\in\overline{\mathcal{P}}(\C^d{:}\C^d)^{\circ} \right\} = d^2\left\|\Delta^{\Gamma}\right\|_{\infty}, \]
whereas the dual quantities $h_{\overline{\mathcal{E}}_k(\C^d{:}\C^d)^{\circ}}(\Delta)$ and $h_{\overline{\mathcal{P}}(\C^d{:}\C^d)}(\Delta)$ cannot be written in such a simple way.

This explains why in the case of $\mathcal{E}_k$ it is the mean width that can be exactly computed, contrary to the threshold value which can only be approximated, while for other approximations of $\mathcal{S}$ the reverse generally happens.
\end{remark}

\section*{Acknowledgements}

This research project originated from a question raised by Fernando Brand\~{a}o, Toby Cubitt and Ashley Montanaro. I am therefore extremely grateful to them, first of all of course for this enthusiastic launching, but even more so for their unfailing interest and support afterwards. I would also like to thank Beno\^{\i}t Collins and Ion Nechita for their most insightful comments at various key stages of this work. 
And last but not least, many thanks to Guillaume Aubrun for his numerous helpful remarks and his careful proof-reading.

This research was supported by the ANR projects OSQPI 11-BS01-0008 and Stoq 14-CE25-0033, and by the ERC advanced grant IRQUAT 2010-AdG-267386. It was initiated during the ``Intensive month on operator algebras and quantum information'' taking place at the ICMAT in Madrid in the summer 2013, and pursued during the thematic programme ``Mathematical challenges in quantum information'' taking place at the Isaac Newton Institute in Cambridge in the fall 2013. The hospitality of both institutions and the work of both organising teams are gratefully acknowledged as well.

\appendix

\section{Combinatorics of permutations and partitions: short summary of standard facts}
\label{appendix:combinatorics}

Let $p\in\N$. We denote by $\mathfrak{S}(p)$ the set of permutations on $\{1,\ldots,p\}$. For any $\pi\in\mathfrak{S}(p)$, we denote by $\sharp(\pi)$ the number of cycles in the decomposition of $\pi$ into a product of disjoint cycles, and by $|\pi|$ the minimal number of transpositions in the decomposition of $\pi$ into a product of transpositions. We also define $\gamma\in\mathfrak{S}(p)$ as the canonical full cycle $(p\,\ldots\,1)$. More generally, we shall say that $c$ is the canonical full cycle on a set $\{i_1,\ldots,i_p\}$ with $i_1<\cdots<i_p$ if $c=(i_p\,\ldots\,i_1)$.

Some standard results related to $\mathfrak{S}(p)$ are gathered below (see e.g.~\cite{NS}, Lectures 9 and 23, for more details).

\begin{lemma}
\label{lemma:cycles-transpositions}
For any $\pi\in\mathfrak{S}(p)$, $\sharp(\pi)+|\pi|=p$.
\end{lemma}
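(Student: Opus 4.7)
The plan is to prove the two inequalities $|\pi| \geq p - \sharp(\pi)$ and $|\pi| \leq p - \sharp(\pi)$ separately. The key ingredient for the lower bound is the elementary observation that multiplying a permutation by a single transposition changes its number of cycles by exactly one (in either direction); the key ingredient for the upper bound is an explicit factorisation of each cycle into transpositions.

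First I would establish the pivotal auxiliary fact: for any $\pi \in \mathfrak{S}(p)$ and any transposition $\tau = (a\,b)$, one has $\sharp(\tau\pi) = \sharp(\pi) \pm 1$. Concretely, if $a$ and $b$ lie in the same cycle of $\pi$, then left-multiplying by $\tau$ splits that cycle into two (so $\sharp(\tau\pi) = \sharp(\pi) + 1$); if $a$ and $b$ lie in different cycles of $\pi$, then these two cycles get merged into one (so $\sharp(\tau\pi) = \sharp(\pi) - 1$). This is checked directly by writing the two cycles in question in standard form and tracking the action of $\tau$.

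From this, a straightforward induction on the length $k$ of any transposition factorisation $\pi = \tau_1 \cdots \tau_k$ gives $\sharp(\pi) \geq \sharp(\mathrm{id}) - k = p - k$, since each multiplication by $\tau_i$ drops the cycle count by at most one relative to the identity (which has $p$ cycles). Minimising over factorisations yields $\sharp(\pi) \geq p - |\pi|$, i.e.\ $\sharp(\pi) + |\pi| \geq p$.

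For the reverse inequality, I would exhibit an explicit decomposition achieving length $p - \sharp(\pi)$. Writing $\pi$ as a product of its $\sharp(\pi)$ disjoint cycles of lengths $k_1, \ldots, k_{\sharp(\pi)}$ (with $k_1 + \cdots + k_{\sharp(\pi)} = p$), and using the standard identity $(a_1\,a_2\,\ldots\,a_\ell) = (a_1\,a_\ell)(a_1\,a_{\ell-1}) \cdots (a_1\,a_2)$, each $k_i$-cycle contributes $k_i - 1$ transpositions. Summing gives a factorisation of $\pi$ into $\sum_i (k_i - 1) = p - \sharp(\pi)$ transpositions, so $|\pi| \leq p - \sharp(\pi)$. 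Combining the two inequalities yields the claimed identity; there is no real obstacle here, as both ingredients are entirely elementary.
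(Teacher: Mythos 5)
Your proof is correct and is the standard argument: the paper does not actually prove this lemma (it simply cites Nica--Speicher, Lectures 9 and 23), and the two-inequality argument you give — the $\pm 1$ change in cycle count under multiplication by a transposition for the lower bound, and the explicit factorisation of an $\ell$-cycle into $\ell-1$ transpositions for the upper bound — is precisely the classical proof found there.
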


\begin{lemma}
\label{lemma:distance-general}
$d:(\pi,\varsigma)\in\mathfrak{S}(p)\times\mathfrak{S}(p)\mapsto|\pi^{-1}\varsigma|$ defines a distance on $\mathfrak{S}(p)$, so that for any $\pi,\varsigma\in\mathfrak{S}(p)$,
\begin{equation}
\label{eq:geodesic}
|\varsigma^{-1}\pi|+|\pi|=d(\varsigma,\pi)+d(\id,\pi)\geq d(\id,\varsigma) =|\varsigma|,
\end{equation}
with equality in \eqref{eq:geodesic} if and only if $\pi$ lies on the geodesic between $\id$ and $\varsigma$. And whenever this is not the case, there exists $\delta\in\{1,\ldots, p-1\}$ such that $|\varsigma^{-1}\pi|+|\pi|= |\varsigma|+2\delta$.
\end{lemma}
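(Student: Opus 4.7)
The plan is to verify in turn the three assertions of the lemma: that $d$ is a metric, that the displayed inequality with its equality case follows from general metric considerations, and that the defect is always an even integer.

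First I would check that $d(\pi,\varsigma)=|\pi^{-1}\varsigma|$ defines a distance on $\mathfrak{S}(p)$. The crucial point is that the generating set of transpositions is closed under inversion, so $|\tau|=|\tau^{-1}|$ for every $\tau\in\mathfrak{S}(p)$. Non-negativity and the condition $d(\pi,\varsigma)=0\Leftrightarrow\pi=\varsigma$ are immediate from $|\sigma|=0\Leftrightarrow\sigma=\id$. Symmetry follows from $|\pi^{-1}\varsigma|=|(\pi^{-1}\varsigma)^{-1}|=|\varsigma^{-1}\pi|$. For the triangle inequality, given $\pi,\varsigma,\rho$, writing $\pi^{-1}\rho=(\pi^{-1}\varsigma)(\varsigma^{-1}\rho)$ and concatenating minimal transposition decompositions of the two factors yields $|\pi^{-1}\rho|\leq|\pi^{-1}\varsigma|+|\varsigma^{-1}\rho|$.

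Next I would apply the triangle inequality to the triple $(\id,\pi,\varsigma)$ and use symmetry to obtain
\[ |\varsigma|=d(\id,\varsigma)\leq d(\id,\pi)+d(\pi,\varsigma) = |\pi|+|\pi^{-1}\varsigma| = |\pi|+|\varsigma^{-1}\pi|, \]
which is the displayed inequality. By definition, equality characterises those $\pi$ lying on a geodesic between $\id$ and $\varsigma$ in the Cayley graph of $\mathfrak{S}(p)$ generated by transpositions.

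The remaining point, and arguably the most delicate although still elementary, is the parity statement. I would invoke the signature homomorphism $\mathrm{sgn}:\mathfrak{S}(p)\to\{\pm 1\}$, which satisfies $\mathrm{sgn}(\sigma)=(-1)^{|\sigma|}$ for every $\sigma\in\mathfrak{S}(p)$ (this is the usual well-definedness of the sign of a permutation). From $\varsigma=\pi\cdot(\pi^{-1}\varsigma)$ and multiplicativity one gets $(-1)^{|\varsigma|}=(-1)^{|\pi|+|\pi^{-1}\varsigma|}$, hence $|\pi|+|\varsigma^{-1}\pi|-|\varsigma|$ is a non-negative even integer, which we write as $2\delta$. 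Whenever $\pi$ is not on the geodesic from $\id$ to $\varsigma$, this defect is strictly positive so $\delta\geq 1$. For the upper bound $\delta\leq p-1$, I would use the elementary fact that every element of $\mathfrak{S}(p)$ can be expressed as a product of at most $p-1$ transpositions (each cycle of length $\ell$ needs $\ell-1$ transpositions and the cycle lengths sum to $p$), so $|\pi|,|\varsigma^{-1}\pi|\leq p-1$ and $|\varsigma|\geq 0$, giving $2\delta\leq 2(p-1)$.

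I do not anticipate any serious obstacle: the whole statement reduces to standard Cayley-graph/metric reasoning together with the sign homomorphism. The only point that requires minimal attention is remembering that the transposition distance coincides with $p-\sharp(\cdot)$ by Lemma \ref{lemma:cycles-transpositions}, which would also allow one to re-derive the parity statement directly, since cycle counts of $\pi$ and $\pi\tau$ differ by exactly $\pm 1$ for any transposition $\tau$.
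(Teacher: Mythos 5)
Your proof is correct and complete: the word-metric verification, the triangle inequality applied to $(\id,\pi,\varsigma)$, and the parity argument via the sign homomorphism (together with the bound $|\sigma|\leq p-1$ for the range of $\delta$) are all sound. The paper itself gives no proof of this lemma, merely citing it as a standard fact from the combinatorics literature, and your argument is precisely the standard one that citation points to, so there is nothing to compare beyond noting that you have correctly filled in the omitted details.
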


\begin{definition}
A partition of $\{1,\ldots,p\}$ is a family $\lambda=\{I_1,\ldots,I_L\}$ of disjoint non-empty subsets of $\{1,\ldots,p\}$ whose union is $\{1,\ldots,p\}$. The sets $I_1,\ldots,I_L$ are called the blocks of $\lambda$. If each of them contains exactly $2$ elements, $\lambda$ is said to be a pair partition of $\{1,\ldots,p\}$. We shall denote by $\mathfrak{P}(p)$ the set of partitions of $\{1,\ldots,p\}$, and by $\mathfrak{P}^{(2)}(p)$ the set of pair partitions of $\{1,\ldots,p\}$. Note that $\mathfrak{P}^{(2)}(p)=\emptyset$ if $p$ is odd. Remark also that, whenever $p$ is even, the set of pair partitions of $\{1,\ldots,p\}$ is in bijection with the set of pairings on $\{1,\ldots,p\}$ (i.e.~ the set of permutations on $\{1,\ldots,p\}$ which are a product of $p/2$ disjoint transpositions). We shall therefore make no distinction between both.

A partition of $\{1,\ldots,p\}$ is said to be non-crossing if there does not exist $i<j<k<l$ in $\{1,\ldots,p\}$ such that $i,k$ belong to the same block, $j,l$ belong to the same block, and $i,j$ belong to different blocks. We shall denote by $NC(p)$ the set of non-crossing partitions of $\{1,\ldots,p\}$, and by $NC^{(2)}(p)$ the set of pair non-crossing partitions of $\{1,\ldots,p\}$. Note that $NC^{(2)}(p)=\emptyset$ if $p$ is odd.
\end{definition}

A well-known combinatorial result regarding non-crossing partitions is the following.

\begin{lemma} \label{lemma:catalan}
The number of non-crossing partitions of $\{1,\ldots,p\}$ and the number of pair non-crossing partitions of $\{1,\ldots,2p\}$ are both equal to the $p^{th}$ Catalan number
\[ \mathrm{Cat}_p=\frac{1}{p+1}{2p \choose p}. \]
More precisely, for any $1\leq m\leq p$, the number of non-crossing partitions of $\{1,\ldots,p\}$ which are composed of exactly $m$ blocks is equal to the $(p,m)^{th}$ Narayana number
\[ \mathrm{Nar}_p^m=\frac{1}{p+1}{p+1 \choose m}{p-1 \choose m-1}. \]
Obviously, these numbers are such that $\sum_{m=1}^p\mathrm{Nar}_p^m=\mathrm{Cat}_p$.
\end{lemma}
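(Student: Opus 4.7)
The plan is to proceed in three steps, since the lemma really contains three claims: the Catalan count for $|NC^{(2)}(2p)|$, its equality with $|NC(p)|$, and the finer Narayana refinement of $|NC(p)|$ by number of blocks.

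First, I would establish $|NC^{(2)}(2p)| = \mathrm{Cat}_p$ by a direct splitting recursion. Given $\lambda \in NC^{(2)}(2p)$, look at the block containing the element $1$; write it as $\{1,j\}$. The non-crossing property forces every block of $\lambda$ to lie entirely in $\{2,\ldots,j-1\}$ or entirely in $\{j+1,\ldots,2p\}$, so each of these two intervals must carry a non-crossing pair partition on its own. A parity argument then imposes $j=2k$. Setting $c_p = |NC^{(2)}(2p)|$ with $c_0 = 1$, this decomposition gives the Catalan recursion
\[ c_p = \sum_{k=1}^p c_{k-1} c_{p-k}, \]
which identifies $c_p$ with $\mathrm{Cat}_p = \frac{1}{p+1}\binom{2p}{p}$.

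Second, I would set up an explicit bijection $\Phi : NC(p) \to NC^{(2)}(2p)$ that simultaneously refines into a bijection between $\{\lambda \in NC(p) : \sharp(\lambda) = m\}$ and $\{\tilde{\lambda} \in NC^{(2)}(2p) : \tilde{\lambda} \text{ has exactly } m \text{ maximal runs of openers}\}$. The construction I have in mind goes through Dyck words: for $\lambda \in NC(p)$, read the positions $1,2,\ldots,p$ in order, and for each $i$ produce first a string of $U$'s recording the blocks of $\lambda$ that open at $i$ (i.e.\ whose smallest element is $i$, combined with any nesting structure) and then a string of $D$'s recording the blocks that close at $i$. The non-crossing property guarantees the resulting $2p$-letter word is a Dyck word, and the inverse is obtained by matching each $U$ with its corresponding $D$, which yields a pair partition $\tilde{\lambda} \in NC^{(2)}(2p)$. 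The number of blocks of $\lambda$ equals the number of peaks (runs of $U$'s) of the Dyck word, providing the refined control needed for Narayana.

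Third, I would count Dyck paths of length $2p$ with exactly $m$ peaks. This is the classical Narayana count, and the cleanest route is the reflection principle: encode such a path by the $m$ positions where the peaks occur and the $m$ positions where valleys occur, which gives an injection into pairs of $m$-subsets of appropriate cardinality, and subtract the "bad" paths touching the forbidden line through a reflection argument. This yields
\[ \mathrm{Nar}_p^m = \frac{1}{p+1}\binom{p+1}{m}\binom{p-1}{m-1}. \]
The identity $\sum_{m=1}^p \mathrm{Nar}_p^m = \mathrm{Cat}_p$ then follows tautologically from the partition of $NC(p)$ by number of blocks combined with the first two steps.

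The main obstacle is the Narayana step, and specifically checking that the bijection $\Phi$ really sends block count to peak count; once this compatibility is verified, the Catalan recursion in step one and the peak-reflection argument in step three are both entirely standard. All other pieces are bookkeeping.
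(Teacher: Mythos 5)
The paper does not actually prove this lemma: it is quoted as a standard combinatorial fact, with a pointer to Nica--Speicher (Lectures 9 and 23) for details. So there is no ``paper proof'' to compare against, and the only question is whether your outline is sound. It is, in its broad architecture: the interval-splitting recursion $c_p=\sum_{k=1}^p c_{k-1}c_{p-k}$ for $NC^{(2)}(2p)$ is correct (the parity argument forcing $j$ even is exactly right), the Narayana identity $\frac{1}{p}{p \choose m}{p \choose m-1}=\frac{1}{p+1}{p+1 \choose m}{p-1 \choose m-1}$ checks out, and routing the block-refined count of $NC(p)$ through Dyck paths with $m$ peaks is the standard textbook argument.

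Two soft spots deserve attention. First, your bijection $\Phi$ in step two is underspecified, and as literally written it fails: at each $i$ at most one block of $\lambda$ opens (the block whose minimum is $i$) and at most one closes, so ``a string of $U$'s recording the blocks that open at $i$'' produces a word of length $2m$ rather than $2p$, which cannot biject onto Dyck words of semilength $p$. The standard fix is to emit exactly one $U$ for every $i\in\{1,\ldots,p\}$ and then, whenever $i$ is the maximum of its block $B$, emit $|B|$ letters $D$; non-crossingness gives nonnegativity of the path, and peaks (factors $UD$) correspond exactly to block maxima, hence to blocks. You correctly flag this compatibility as the main obstacle, but the construction needs to be pinned down before the statistic-matching claim can be verified. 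Second, the reflection/encoding argument in step three is only gestured at; the cleaner standard route is to record the ascent lengths $a_1,\ldots,a_m$ and descent lengths $b_1,\ldots,b_m$ (two compositions of $p$ into $m$ positive parts subject to a ballot condition) and apply the cycle lemma, which yields $\frac{1}{p}{p \choose m}{p \choose m-1}$ directly. Neither issue is a conceptual gap --- both are entirely standard --- but as written the proof is an outline rather than a complete argument, which is roughly the level of detail at which the paper itself chose to leave this lemma to the references.
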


With these definitions in mind, we can now state a special case of particular interest of Lemma \ref{lemma:distance-general}.

\begin{lemma}
\label{lemma:distance}
Denote by $\gamma$ the canonical full cycle on $\{1,\ldots,p\}$. Then, for any $\pi\in\mathfrak{S}(p)$,
\begin{equation}
\label{eq:geodesic'}
|\gamma^{-1}\pi|+|\pi|\geq |\gamma|= p-1,
\end{equation}
with equality in \eqref{eq:geodesic'} if and only if $\pi$ lies on the geodesic between $\id$ and $\gamma$. The latter subset of $\mathfrak{S}(p)$ is in bijection with the set of non-crossing partitions of $\{1,\ldots,p\}$ (by the mapping which associates to a given partition the product of the canonical full cycles on each of its blocks). We shall thus write $\pi\in NC(p)$ in such case, not distinguishing a geodesic permutation from its corresponding non-crossing partition.

More generally, let $\{I_1,\ldots,I_L\}$ be a partition of $\{1,\ldots,p\}$ and denote by $\gamma_1,\ldots,\gamma_L$ the canonical full cycles on $I_1,\ldots,I_L$. Then, for any $\pi\in\mathfrak{S}(p)$,
\begin{equation}
\label{eq:geodesic''}
|(\gamma_1\cdots\gamma_L)^{-1}\pi|+|\pi|\geq |\gamma_1\cdots\gamma_L|= p-L,
\end{equation}
with equality in \eqref{eq:geodesic''} if and only if $\pi$ lies on the geodesic between $\id$ and $\gamma_1\cdots\gamma_L$. The latter subset of $\mathfrak{S}(p)$ is in bijection with the set of non-crossing partitions of $\{1,\ldots,p\}$ which are finer than $I_1\sqcup\cdots\sqcup I_L$, which itself is in bijection with $NC(|I_1|)\times\cdots\times NC(|I_L|)$.
\end{lemma}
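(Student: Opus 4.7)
The inequalities \eqref{eq:geodesic'} and \eqref{eq:geodesic''} are direct specializations of the triangle inequality \eqref{eq:geodesic} in Lemma \ref{lemma:distance-general}. My first step is to compute the two relevant lengths using Lemma \ref{lemma:cycles-transpositions}: since $\gamma$ is a single $p$-cycle, $\sharp(\gamma)=1$ and hence $|\gamma|=p-1$; since $\gamma_1\cdots\gamma_L$ is a product of $L$ cycles supported on pairwise disjoint sets whose union is $\{1,\ldots,p\}$, one has $\sharp(\gamma_1\cdots\gamma_L)=L$ and hence $|\gamma_1\cdots\gamma_L|=p-L$. Substituting $\varsigma=\gamma$ (resp.~$\varsigma=\gamma_1\cdots\gamma_L$) into \eqref{eq:geodesic} then yields \eqref{eq:geodesic'} (resp.~\eqref{eq:geodesic''}) together with the desired characterization of equality, namely that $\pi$ lies on a geodesic between $\id$ and the relevant target.

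The substantive content is the bijection between geodesic permutations and non-crossing partitions. My plan is to define the map $\Phi:\mathfrak{P}(p)\to\mathfrak{S}(p)$ sending $\lambda=\{B_1,\ldots,B_m\}$ to the product of the canonical full cycles on its blocks. This is manifestly injective (the blocks are exactly the cycle supports of $\Phi(\lambda)$), and $|\Phi(\lambda)|=p-m$ by Lemma \ref{lemma:cycles-transpositions}. To prove that $\Phi(\lambda)$ is geodesic whenever $\lambda$ is non-crossing, I would verify the identity $\gamma^{-1}\Phi(\lambda)=\Phi(K(\lambda))$, where $K(\lambda)$ denotes the Kreweras complement of $\lambda$; as $K(\lambda)$ is non-crossing with $p+1-m$ blocks, one gets $|\gamma^{-1}\Phi(\lambda)|=m-1$ and the sum $|\Phi(\lambda)|+|\gamma^{-1}\Phi(\lambda)|=p-1$ saturates the bound. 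For the reverse inclusion --- that every geodesic comes from a non-crossing partition --- I would close the loop by a counting argument: the classical Biane theorem (see e.g.~\cite{NS}, Lecture 23) asserts that the geodesic set has cardinality exactly $\mathrm{Cat}_p$, which matches $|NC(p)|$ by Lemma \ref{lemma:catalan}, so $\Phi|_{NC(p)}$ must be a bijection onto this set.

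For the general statement \eqref{eq:geodesic''}, I would argue that any geodesic $\pi$ for $\gamma_1\cdots\gamma_L$ cannot have a cycle straddling two distinct supports $I_i,I_j$: mixing elements across disjoint orbits of $\gamma_1\cdots\gamma_L$ inflates $|(\gamma_1\cdots\gamma_L)^{-1}\pi|$ by at least as much as is saved in $|\pi|$, breaking equality in the triangle inequality. Hence each cycle of $\pi$ sits inside a single $I_i$, and the restriction $\pi|_{I_i}$ is geodesic between $\id_{I_i}$ and $\gamma_i$, to which the first part applies. Taking the union of the resulting block-wise non-crossing partitions yields an element of $NC(|I_1|)\times\cdots\times NC(|I_L|)$, which is identified with a non-crossing partition of $\{1,\ldots,p\}$ refining $\{I_1,\ldots,I_L\}$. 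The main obstacle in this plan is justifying the Biane correspondence between geodesics and non-crossing partitions; either one appeals to the classical reference, or one proves directly, via a swap argument, that geodesicity forces each cycle of $\pi$ to list its elements in the canonical cyclic order and forces the cycle supports to be pairwise non-crossing --- everything else is routine bookkeeping.
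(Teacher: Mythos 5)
The paper does not actually prove this lemma: it sits in Appendix \ref{appendix:combinatorics} among ``standard results'' with a pointer to \cite{NS}, Lectures 9 and 23, so there is no in-paper argument to compare against, and your outline is the classical proof of exactly those cited facts. The reduction of \eqref{eq:geodesic'} and \eqref{eq:geodesic''} to Lemma \ref{lemma:distance-general} together with $|\gamma|=p-1$ and $|\gamma_1\cdots\gamma_L|=p-L$ (via Lemma \ref{lemma:cycles-transpositions}) is correct, and the real content is, as you say, the identification of the equality case with $NC$. Your two-step argument there --- the Kreweras complement identity showing $\Phi(\lambda)$ is geodesic for $\lambda\in NC(p)$, then surjectivity by comparing cardinalities with $\mathrm{Cat}_p$ --- is the standard route (Biane's theorem), though note that the cardinality of the geodesic set is essentially the theorem you are trying to prove, so this step is a citation rather than an argument; a self-contained proof needs the direct ``swap'' characterization you mention at the end. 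Two smaller points. First, the exact form $\gamma^{-1}\Phi(\lambda)=\Phi(K(\lambda))$ depends on orientation conventions (the paper's canonical cycles are decreasing), but since $|\sigma|=|\sigma^{-1}|$ this does not affect the length count. Second, your ``straddling'' argument for \eqref{eq:geodesic''} is the right idea but too loose as stated; the clean way to make it rigorous is the orbit bound $\sharp(\pi)+\sharp(\pi^{-1}\varsigma)+\sharp(\varsigma)\leq p+2|\Pi|$, where $\Pi$ is the partition of $\{1,\ldots,p\}$ into orbits of the group generated by $\pi$ and $\varsigma=\gamma_1\cdots\gamma_L$: equality in \eqref{eq:geodesic''} forces $|\Pi|=L$, hence every cycle of $\pi$ lies inside a single $I_i$, and the first part applies blockwise. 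Finally, be aware (this is an imprecision of the statement itself, which you inherit) that when the sets $I_i$ cross one another, being blockwise non-crossing is not the same as being a globally non-crossing partition of $\{1,\ldots,p\}$; the operationally relevant identification, and the one used later in the paper, is the one with $NC(|I_1|)\times\cdots\times NC(|I_L|)$.
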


Combining Lemma \ref{lemma:distance} with Lemma \ref{lemma:catalan}, we can in fact say the following: Let $\varsigma\in\mathfrak{S}(p)$ and assume that its decomposition into disjoint cycles is $\varsigma=c_1\cdots c_L$ where, for each $1\leq i\leq L$, $c_i$ is of length $p_i$ (hence with $p_1+\cdots+p_L=p$). Then, for any $\pi\in\mathfrak{S}(p)$, $|\varsigma^{-1}\pi|+|\pi|\geq p-L$, and
\[ \left|\left\{ \pi\in\mathfrak{S}(p) \st |\varsigma^{-1}\pi|+|\pi| = p-L \right\}\right| = \mathrm{Cat}_{p_1}\times\cdots\times\mathrm{Cat}_{p_L}. \]
Having this easy observation in mind might be useful later on.



\section{Computing moments of Gaussian matrices: Wick formula and genus expansion}
\label{appendix:wick}

When computing expectations of Gaussian random variables, a useful tool is the Wick formula (see e.g.~\cite{Zvonkin} or \cite{NS}, Lecture 22, for a proof).

\begin{lemma}[Gaussian Wick formula] Let $X_1,\ldots,X_q$ be jointly Gaussian centered random variables (real or complex).
\begin{align*}
& \text{If}\ q=2p+1\ \text{is odd},\ \text{then}\ \E\left[X_1\cdots X_q\right] = 0.\\
& \text{If}\ q=2p\ \text{is even},\ \text{then}\ \E\left[X_1\cdots X_q\right] = \underset{\{\{i_1,j_1\},\ldots,\{i_p,j_p\}\}\in\mathfrak{P}^{(2)}(2p)}{\sum} \,\underset{m=1}{\overset{p}{\prod}}\E\left[X_{i_m}X_{j_m}\right].
\end{align*}
\label{lemma:wick}
\end{lemma}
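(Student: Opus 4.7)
The plan is to prove the formula via the joint moment generating function (equivalently, characteristic function) of a Gaussian vector. First I would recall that any centered jointly Gaussian vector $(X_1,\ldots,X_q)$ with covariance $C_{ij}:=\E[X_iX_j]$ has MGF
\[
M(t_1,\ldots,t_q) \;=\; \E\!\left[\exp\!\bigl(\textstyle\sum_k t_kX_k\bigr)\right] \;=\; \exp\!\bigl(\tfrac12\sum_{i,j}C_{ij}\,t_it_j\bigr),
\]
this being essentially the defining property of a Gaussian vector (and extending from the real to the complex case by decomposing each $X_k$ into real and imaginary parts, or by analytic continuation). Since $\E[X_1\cdots X_q]=\partial^q M/\partial t_1\cdots\partial t_q\big|_{t=0}$, the combinatorics will come out of the Taylor expansion of $M$ at the origin.

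Next I would expand $M(t)=\sum_{n\geq 0}\tfrac{1}{n!\,2^n}\bigl(\sum_{i,j}C_{ij}t_it_j\bigr)^n$ and observe that the multilinear derivative $\partial_{t_1}\cdots\partial_{t_q}$ evaluated at $t=0$ annihilates every summand of total $t$-degree different from $q$. Only $n=q/2$ can therefore contribute; if $q$ is odd no such integer exists, which immediately gives the vanishing of odd moments. For $q=2p$ even, the surviving contribution is $\tfrac{1}{p!\,2^p}\bigl(\sum_{i,j}C_{ij}t_it_j\bigr)^p$, which I would expand as a sum over ordered $p$-tuples of ordered pairs $\bigl((i_1,j_1),\ldots,(i_p,j_p)\bigr)$ with each $i_m,j_m\in\{1,\ldots,2p\}$.

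The key bookkeeping step is to extract the coefficient of $t_1\cdots t_{2p}$: only those tuples for which the concatenation $(i_1,j_1,\ldots,i_p,j_p)$ is a permutation of $(1,\ldots,2p)$ contribute. Any unordered pair partition $\lambda=\{\{i_m,j_m\}\}_{m=1}^p\in\mathfrak{P}^{(2)}(2p)$ arises from exactly $p!\cdot 2^p$ such ordered tuples (the $p!$ orderings of the $p$ pairs times the $2^p$ internal swaps within each pair), each producing the same summand $\prod_m C_{i_mj_m}$ thanks to the symmetry $C_{ij}=C_{ji}$. This factor $p!\,2^p$ exactly cancels the prefactor $1/(p!\,2^p)$, yielding the announced sum over $\mathfrak{P}^{(2)}(2p)$.

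The main obstacle is the clean execution of the $2^p\,p!$ cancellation, which is routine but demands care to avoid off-by-one factors; apart from this, the complex case requires a brief sanity check that the result involves the bilinear covariance $\E[X_iX_j]$ rather than the sesquilinear $\E[X_i\overline{X_j}]$ (the former being what the real/imaginary decomposition of $M$ naturally produces). An alternative route I would keep in reserve, should the direct expansion get unwieldy, is to proceed by induction on $q$ using the Gaussian integration by parts identity $\E[X_1 f(X_2,\ldots,X_q)]=\sum_{k=2}^q C_{1k}\,\E[\partial_{X_k}f]$ applied to $f=X_2\cdots X_q$; this immediately recovers the recursive structure of pair partitions and would give an alternative, equally short proof.
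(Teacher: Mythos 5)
The paper does not actually prove Lemma \ref{lemma:wick}; it simply cites \cite{Zvonkin} and \cite{NS}, Lecture 22, so there is no in-paper argument to compare against. Your proposal is a correct and complete proof by the standard moment-generating-function route. The bookkeeping is right: only the degree-$2p$ term $\frac{1}{p!\,2^p}\bigl(\sum_{i,j}C_{ij}t_it_j\bigr)^p$ survives the mixed derivative $\partial_{t_1}\cdots\partial_{t_{2p}}$ at the origin (which also disposes of the odd case and of the diagonal terms $i=j$, since any monomial in which some $t_k$ is missing or repeated is killed), and each unordered pair partition is counted exactly $p!\,2^p$ times among the ordered tuples, with all representatives giving the same product because $C_{ij}=C_{ji}$; this cancels the prefactor exactly. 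Your treatment of the complex case is also the right one: both sides of the identity are $\C$-multilinear in $(X_1,\ldots,X_q)$, and the components of a centered complex Gaussian vector are $\C$-linear combinations of components of a centered real Gaussian vector, so the real case implies the complex one with the bilinear covariance $\E[X_iX_j]$ appearing, as stated. The integration-by-parts recursion you keep in reserve is the other standard proof (essentially the one in the cited references) and would work equally well.
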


\subsection{Moments of GUE matrices}
\label{appendix:wick-gaussian}

A first important application of Lemma \ref{lemma:wick} is to the computation of the moments of matrices from the Gaussian Unitary Ensemble. Indeed, for any $q\in\N$, we have
\[ \E_{G\sim GUE(n)} \tr\left(G^q\right) = \underset{1\leq l_1,\ldots,l_q\leq n}{\sum} \E\left[G_{l_1,l_2}\cdots G_{l_q,l_1}\right], \]
where the $G_{i,j}$, $1\leq i,j\leq n$, are centered Gaussian random variables satisfying $\E[G_{i,j}G_{i',j'}]=\delta_{i=j',j=i'}$. So what we get applying the Wick formula is that, for any $p\in\N$,
\begin{align*}
& \E_{G\sim GUE(n)} \tr\left(G^{2p+1}\right) = 0 \\
& \E_{G\sim GUE(n)} \tr\left(G^{2p}\right) = \underset{\lambda\in\mathfrak{P}^{(2)}(2p)}{\sum} n^{\flat(\lambda)},
\end{align*}
where for each pair partition $\lambda=\{\{i_1,j_1\},\ldots,\{i_p,j_p\}\}$ of $\{1,\ldots,2p\}$, $\flat(\lambda)$ is the number of free parameters $l_1,\ldots,l_{2p}\in\{1,\ldots,n\}$ when imposing that $\forall\ 1\leq m\leq p,\ l_{i_m+1}=l_{j_m},\ l_{j_m+1}=l_{i_m}$. Identifying the pair partition $\{\{i_1,j_1\},\ldots,\{i_p,j_p\}\}$ with the pairing $(i_1\,j_1)\ldots(i_p\,j_p)$ and denoting by $\gamma$ the canonical full cycle $(2p\,\ldots\,1)$, the latter condition can be written as $\forall\ 1\leq i\leq 2p,\ l_{\gamma^{-1}\lambda(i)}=l_{i}$. So in fact, $\flat(\lambda)= \sharp(\gamma^{-1}\lambda)$ and the expression above becomes
\[ \E_{G\sim GUE(n)} \tr\left(G^{2p}\right) = \underset{\lambda\in\mathfrak{P}^{(2)}(2p)}{\sum} n^{\sharp(\gamma^{-1}\lambda)}. \]
We thus have the so-called \textit{genus expansion} (see e.g.~\cite{NS}, Lecture 22)
\[ \E_{G\sim GUE(n)} \tr\left(G^{2p}\right) = \sum_{\delta=0}^{\lfloor p/2\rfloor} P(2p,\delta)n^{p+1-2\delta}, \]
where for each $0\leq\delta\leq\lfloor p/2\rfloor$, we defined $P(2p,\delta)$ as the number of pairings of $\{1,\ldots,2p\}$ having \textit{genus} $\delta$, i.e.~\[ P(2p,\delta)=\left|\{\lambda\in\mathfrak{P}^{(2)}(2p) \st \sharp(\gamma^{-1}\lambda)=p+1-2\delta\}\right|= \left|\{\lambda\in\mathfrak{P}^{(2)}(2p) \st \sharp(\gamma^{-1}\lambda)+\sharp(\lambda)=2p+1-2\delta\}\right|. \]
Equivalently, $P(2p,\delta)$ is the number of pairings of $\{1,\ldots,2p\}$ having a defect $2\delta$ of being on the geodesics between $\id$ and $\gamma$. Hence, $P(2p,0)$ is the number of pairings of $\{1,\ldots,2p\}$ lying exactly on the geodesics between $\id$ and $\gamma$, i.e.~ the number of non-crossing pair partitions of $\{1,\ldots,2p\}$. So $P(2p,0)=\left|NC^{(2)}(2p)\right|=\mathrm{Cat}_p$, and we recover the well-known asymptotic estimate
\[ \E_{G\sim GUE(n)} \tr\left(G^{2p}\right) \underset{n\rightarrow+\infty}{\sim} \mathrm{Cat}_p\, n^{p+1} .\]


\subsection{Moments of Wishart matrices}
\label{appendix:wick-wishart}

A second important application of Lemma \ref{lemma:wick} is to the computation of the moments of matrices from the Wishart Ensemble. In such case, a graphical way of visualising the Wick formula has been developed in \cite{CN}, to which the reader is referred for further details and proofs, a brief summary only being provided here.

In the graphical formalism, a matrix $X:\C^m\rightarrow\C^n$ is represented by a ``box'' with two ``gates'', one specifying the size $m$ at its entrance and the other specifying the size $n$ at its exit. For $X:\C^m\rightarrow\C^n$  and $Y:\C^n\rightarrow\C^m$, the product $XY:\C^n\rightarrow\C^n$ is represented by a wire connecting the exit of $Y$ to the entrance of $X$. For $Z:\C^m\rightarrow\C^m$, the trace $\tr(Z)$ is represented by a wire connecting the exit and the entrance of $Z$.

Let $W$ be a $(n,s)$-Wishart matrix, i.e.~$W=GG^{\dagger}$ with $G$ a $n\times s$ matrix with independent complex normal entries. Representing by $\blacklozenge$ a $n$-dimensional gate and by $\blacktriangledown$ a $s$-dimensional gate, the quantity $\tr(W^p)$ is then graphically represented by $p$ boxes $G$ and $p$ boxes $G^{\dagger}$ connected by wires in the following way.

\begin{center}
\begin{tikzpicture} [scale=0.8]
\node[draw=lightgray, minimum height=0.8cm, minimum width=0.8cm, fill=lightgray, =white] (G1) at (0,0) {$G$};
\node[draw=lightgray, minimum height=0.8cm, minimum width=0.8cm, fill=lightgray] (G1*) at (1.5,0) {$G^{\dagger}$};
\node[draw=lightgray, minimum height=0.8cm, minimum width=0.8cm, fill=lightgray] (Gp) at (5.5,0) {$G$};
\node[draw=lightgray, minimum height=0.8cm, minimum width=0.8cm, fill=lightgray] (Gp*) at (7,0) {$G^{\dagger}$};
\draw(G1.west) node{$\blacklozenge$}; \draw(G1.east) node{$\blacktriangledown$};
\draw(G1*.east) node{$\blacklozenge$}; \draw(G1*.west) node{$\blacktriangledown$};
\draw(Gp.west) node{$\blacklozenge$}; \draw(Gp.east) node{$\blacktriangledown$};
\draw(Gp*.east) node{$\blacklozenge$}; \draw(Gp*.west) node{$\blacktriangledown$};
\draw (G1.east) -- (G1*.west); \draw (Gp.east) -- (Gp*.west);
\draw (G1.west) to[out=155,in=25] (Gp*.east);
\draw (G1*.east) -- (2.5,0);
\draw (Gp.west) -- (4.5,0);
\draw[dotted] (3,0) -- (4,0);
\end{tikzpicture}
\end{center}

For any $\alpha\in\mathfrak{S}(p)$, we will denote by $\mathcal{G}_{\alpha}$ the diagram obtained from the one above by ``erasing'' the boxes, just keeping their gates, and then connecting, for each $1\leq i\leq p$, the entrance of the $i^{th}$ box $G$ to the exit of the $\alpha(i)^{th}$ box $G^{\dagger}$, and the exit of the $i^{th}$ box $G$ to the entrance of the $\alpha(i)^{th}$ box $G^{\dagger}$. Doing so, $\sharp(\gamma^{-1}\alpha)$ loops connecting $n$-dimensional gates and $\sharp(\alpha)$ loops connecting $s$-dimensional gates are obtained. And the graphical version of the Wick formula tells us that
\[ \E_{W\sim\mathcal{W}_{n,s}}\tr(W^p) = \sum_{\alpha\in\mathfrak{S}(p)} \mathcal{D}_{\alpha} = \sum_{\alpha\in\mathfrak{S}(p)} n^{\sharp(\gamma^{-1}\alpha)}s^{\sharp(\alpha)} .\]

In the special case where $s=n$, this can be rewritten as a so-called \textit{genus expansion} (see e.g.~\cite{CN})
\[ \E_{W\sim\mathcal{W}_{n,n}}\tr(W^p) = \sum_{\delta=0}^{\lfloor p/2\rfloor}S(p,\delta)n^{p+1-2\delta}, \]
where for each $0\leq\delta\leq\lfloor p/2\rfloor$, we defined $S(p,\delta)$ as the number of permutations on $\{1,\ldots,p\}$ having \textit{genus} $\delta$, i.e.~$S(p,\delta)= \left| \{\alpha\in\mathfrak{S}(p) \st \sharp(\gamma^{-1}\alpha)+\sharp(\alpha)=p+1-2\delta\} \right|$. Since $\{\alpha\in\mathfrak{S}(p) \st \sharp(\gamma^{-1}\alpha)+\sharp(\alpha)=p+1\}=NC(p)$, we have $S(p,0)=\mathrm{Cat}_p$ and hence recover the well-known asymptotic estimate
\[ \E_{W\sim\mathcal{W}_{n,n}} \tr\left(W^{p}\right) \underset{n\rightarrow+\infty}{\sim} \mathrm{Cat}_p\, n^{p+1} .\]


\section{One needed combinatorial fact: relating the number of cycles in some specific permutations on either $[p]\times[k]$ or $[p]$}
\label{appendix:technical}

Let $\alpha\in\mathfrak{S}(p)$ and $f:[p]\rightarrow[k]$. We define $\hat{\alpha}_f$ on $[p]\times[k]$ as
\[ \forall\ (i,r)\in[p]\times[k],\ \hat{\alpha}_f(i,r)=
\begin{cases} (\alpha(i),f(\alpha(i)))\ \text{if}\ r=f(i)\\
(i,r)\ \text{if}\ r\neq f(i)\end{cases}. \]
We also define $\hat{\gamma}$ on $[p]\times[k]$ as $(\gamma,\id)$, where $\gamma\in\mathfrak{S}(p)$ is the canonical full cycle $(p\,\ldots\,1)$.

We would like to understand what is the number of cycles in $\hat{\gamma}^{-1}\hat{\alpha}_f$. For that, it will be convenient to do a bit of rewriting. Let us first extend the definition of $\hat{\alpha}_f$ and $\hat{\gamma}$ to $[p]\times\left(\{0\}\cup[k]\right)$. We shall denote by $\bar{\alpha}_f$ and $\bar{\gamma}$ the respective extensions. Note that since $f$ takes values in $[k]$, we have $\bar{\alpha}_f(i,0)=(i,0)$ for all $i\in[p]$.

We will now make two easy observations.

\begin{fact}
\label{fact:alpha_f}
For any $f:[p]\rightarrow[k]$, define for each $i\in[p]$, $\bar{\tau}_f^{(i)}$ as the transposition on $[p]\times\left(\{0\}\cup[k]\right)$ which swaps $(i,0)$ and $(i,f(i))$, and set $\bar{\beta}_f=\bar{\tau}_f^{(1)}\cdots\bar{\tau}_f^{(p)}$. We then have, for any $\alpha\in\mathfrak{S}(p)$,
\[ \bar{\alpha}_f=\bar{\beta}_f^{-1}\bar{\alpha}'\bar{\beta}_f,\ \text{where}\ \forall\ (i,r)\in[p]\times\left(\{0\}\cup[k]\right),\ \bar{\alpha}'(i,r)=\begin{cases} (\alpha(i),r)\ \text{if}\ r=0\\
(i,r)\ \text{if}\ r\neq 0 \end{cases}. \]
\end{fact}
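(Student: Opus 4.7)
The plan is a direct verification of the conjugation identity $\bar{\alpha}_f = \bar{\beta}_f^{-1}\bar{\alpha}'\bar{\beta}_f$ by evaluating both sides on an arbitrary point $(i,r)\in[p]\times(\{0\}\cup[k])$ and comparing the outputs.

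First I would observe that the transpositions $\bar{\tau}_f^{(1)},\ldots,\bar{\tau}_f^{(p)}$ pairwise commute: for $i\neq j$, the $2$-element sets $\{(i,0),(i,f(i))\}$ and $\{(j,0),(j,f(j))\}$ are disjoint, simply because their first coordinates differ. Consequently $\bar{\beta}_f$ is well-defined independently of the order of its factors, and being a product of commuting involutions it satisfies $\bar{\beta}_f^{-1}=\bar{\beta}_f$. Moreover, the action of $\bar{\beta}_f$ on a fixed slice $\{i\}\times(\{0\}\cup[k])$ is governed by the single transposition $\bar{\tau}_f^{(i)}$: it swaps $(i,0)$ with $(i,f(i))$ and fixes $(i,r)$ for every $r\notin\{0,f(i)\}$.

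The verification then naturally splits into three cases depending on the value of $r$. If $r\notin\{0,f(i)\}$, each of the three successive maps $\bar{\beta}_f$, $\bar{\alpha}'$, $\bar{\beta}_f$ fixes the current point, so the composition returns $(i,r)$, which indeed equals $\bar{\alpha}_f(i,r)$ by definition. If $r=f(i)$, then $\bar{\beta}_f$ sends $(i,f(i))$ to $(i,0)$, next $\bar{\alpha}'$ sends $(i,0)$ to $(\alpha(i),0)$, and finally $\bar{\beta}_f$ sends $(\alpha(i),0)$ to $(\alpha(i),f(\alpha(i)))$, in agreement with $\hat{\alpha}_f(i,f(i))$. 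If $r=0$, the point $(i,0)$ is sent by $\bar{\beta}_f$ to $(i,f(i))$, is then left fixed by $\bar{\alpha}'$ (its second coordinate being non-zero), and is returned to $(i,0)$ by the second $\bar{\beta}_f$, matching $\bar{\alpha}_f(i,0)=(i,0)$.

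No real obstacle is expected; the only point requiring momentary attention is confirming that, in the case $r=f(i)$, the middle application of $\bar{\alpha}'$ produces $(\alpha(i),0)$ rather than some other point of the $r=0$ slice, which is precisely the content of the definition of $\bar{\alpha}'$. The entire argument is thus a short bookkeeping calculation whose only structural input is the disjointness of the supports of the $\bar{\tau}_f^{(i)}$'s.
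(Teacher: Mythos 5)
Your verification is correct: the disjointness of the supports of the $\bar{\tau}_f^{(i)}$ gives that $\bar{\beta}_f$ is an involution acting slice-by-slice, and the three-case check of $\bar{\beta}_f\bar{\alpha}'\bar{\beta}_f$ against the definition of $\bar{\alpha}_f$ (including that $\bar{\alpha}_f(i,0)=(i,0)$ since $f$ never takes the value $0$) is complete. The paper states this Fact without proof as an easy observation, and your direct computation is exactly the intended argument.
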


The advantage of expressing $\bar{\alpha}_f$ in this way is that $\bar{\alpha}'$ is particularly simple: it acts as $\alpha\times\id$ on $[p]\times\{0\}$ and does nothing on $[p]\times[k]$. Furthermore, due to the cyclicity of $\sharp(\cdot)$, a direct consequence of Fact \ref{fact:alpha_f} is that
$\sharp(\bar{\gamma}^{-1}\bar{\alpha}_f) = \sharp(\bar{\gamma}_f^{-1}\bar{\alpha}')$, where $\bar{\gamma}_f=\bar{\beta}_f\bar{\gamma}\bar{\beta}_f^{-1}$. It may then be easily checked that $\bar{\gamma}_f$ decomposes into $k+1$ disjoint cycles as stated in Fact \ref{fact:gamma_f} below.

\begin{fact}
\label{fact:gamma_f}
For any $f:[p]\rightarrow[k]$, we have
\[ \bar{\gamma}_f=\bar{c}_1\cdots\bar{c}_k\bar{c}, \]
with $\bar{c}=\left((p,f(p))\ldots(1,f(1))\right)$, and for each $r\in[k]$, $\bar{c}_r=\left((p,s_r(p))\ldots(1,s_r(1))\right)$, where for each $i\in[p]$, $s_r(i)=0$ if $f(i)=r$ and $s_r(i)=r$ if $f(i)\neq r$.
\end{fact}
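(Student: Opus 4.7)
The plan is to exploit the fact that the statement is a pure conjugation computation: $\bar{\gamma}_f = \bar{\beta}_f \bar{\gamma} \bar{\beta}_f^{-1}$, and conjugation preserves cycle structure, turning each cycle $(a_1 \ldots a_m)$ of $\bar{\gamma}$ into the cycle $(\bar{\beta}_f(a_1) \ldots \bar{\beta}_f(a_m))$ of $\bar{\gamma}_f$. So the entire task reduces to (i) writing down the cycles of $\bar{\gamma}$ explicitly, (ii) applying $\bar{\beta}_f$ pointwise, and (iii) checking that the resulting $k+1$ cycles are pairwise disjoint (hence do form the full cycle decomposition of $\bar{\gamma}_f$).

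First I would observe that the transpositions $\bar{\tau}_f^{(1)}, \ldots, \bar{\tau}_f^{(p)}$ have pairwise disjoint supports (each $\bar{\tau}_f^{(i)}$ acts only on elements with first coordinate $i$), so they commute and $\bar{\beta}_f$ is an involution, in particular $\bar{\beta}_f^{-1} = \bar{\beta}_f$. Since $\bar{\gamma} = (\gamma, \id)$ acts as $\gamma$ on the first coordinate and trivially on the second, its cycle decomposition consists of the $k+1$ disjoint cycles $\left((p,r)\,\ldots\,(1,r)\right)$ indexed by $r \in \{0\} \cup [k]$.

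Next I would apply $\bar{\beta}_f$ to the elements of these cycles. For $r=0$, one has $\bar{\beta}_f(i,0) = (i, f(i))$ for every $i \in [p]$, so the image cycle is exactly $\left((p,f(p))\,\ldots\,(1,f(1))\right) = \bar{c}$. For $r \in [k]$, one has $\bar{\beta}_f(i, r) = (i, 0)$ if $f(i) = r$ and $\bar{\beta}_f(i, r) = (i, r)$ otherwise, i.e.~exactly $\bar{\beta}_f(i,r) = (i, s_r(i))$ with $s_r$ as in the statement. So the image cycle is precisely $\bar{c}_r$. This shows $\bar{\gamma}_f = \bar{c}_1 \cdots \bar{c}_k\, \bar{c}$, provided the cycles in question are disjoint.

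The only point that I expect to require a little care is the disjointness check, so I would verify it explicitly: for fixed $i \in [p]$, the element of $\bar{c}$ with first coordinate $i$ is $(i, f(i))$, while the element of $\bar{c}_r$ with first coordinate $i$ is $(i, 0)$ when $f(i) = r$ (hence with second coordinate $0 \neq f(i)$) and $(i, r)$ when $f(i) \neq r$ (hence $\neq f(i)$ as well); similarly the elements of $\bar{c}_r$ and $\bar{c}_{r'}$ with first coordinate $i$ differ whenever $r \neq r'$. The elements within a single $\bar{c}_r$ (and within $\bar{c}$) already have distinct first coordinates, hence are distinct tuples. This pairwise disjointness confirms that the listed $k+1$ cycles are the genuine cycle decomposition of $\bar{\gamma}_f$, completing the proof.
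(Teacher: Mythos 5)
Your proof is correct and is exactly the verification the paper leaves implicit (the paper merely asserts the fact "may be easily checked" after defining $\bar{\gamma}_f=\bar{\beta}_f\bar{\gamma}\bar{\beta}_f^{-1}$): conjugation sends each cycle $((p,r)\ldots(1,r))$ of $\bar{\gamma}$ to its pointwise image under $\bar{\beta}_f$, which yields precisely $\bar{c}$ for $r=0$ and $\bar{c}_r$ for $r\in[k]$. Your explicit disjointness check is harmless but redundant, since conjugating a disjoint cycle decomposition by a bijection automatically produces a disjoint cycle decomposition.
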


\begin{example} For the sake of concreteness, let us have a look at a simple example. In the case where $p=4$, $k=3$, and $f$ is defined by $f(1)=f(2)=f(4)=1$, $f(3)=2$, we obtain that the cycles in $\bar{\gamma}_f$ are $\bar{c}_1=((4,0)(3,1)(2,0)(1,0))$, $\bar{c}_2=((4,2)(3,0)(2,2)(1,2))$, $\bar{c}_3=((4,3)(3,3)(2,3)(1,3))$, $\bar{c}=((4,1)(3,2)(2,1)(1,1))$. This is schematically represented in Figure \ref{fig:bargamma_f}, where the elements in $\bar{c}_i$ are marked by ``$i$'', for $i\in\{1,2,3\}$, and the elements in $\bar{c}$ are marked by ``$\bullet$''.

\begin{figure}[h]
\caption{$f:[4]\rightarrow[3]$ such that $f^{-1}(1)=\{1,2,4\}$, $f^{-1}(2)=\{3\}$, $f^{-1}(3)=\emptyset$. }
\begin{center}
\begin{tikzpicture} [scale=0.7]
\draw (6,4.3) node {$ $};
\draw [<->] (0,-0.5) -- (4,-0.5); \draw [<->] (-0.5,0) -- (-0.5,4);
\draw (2,-1) node {$i\in[4]$}; \draw (-2,2) node {$r\in\{0\}\cup[3]$};
\draw (0,0) -- (4,0); \draw (0,1) -- (4,1); \draw (0,2) -- (4,2); \draw (0,3) -- (4,3); \draw (0,4) -- (4,4);
\draw (0,0) -- (0,4); \draw (1,0) -- (1,4); \draw (2,0) -- (2,4); \draw (3,0) -- (3,4); \draw (4,0) -- (4,4);
\draw (0.5,0.5) node {$1$}; \draw (1.5,0.5) node {$1$}; \draw (2.5,0.5) node {$2$}; \draw (3.5,0.5) node {$1$};
\draw (0.5,1.5) node {$\bullet$}; \draw (1.5,1.5) node {$\bullet$}; \draw (2.5,1.5) node {$1$}; \draw (3.5,1.5) node {$\bullet$};
\draw (0.5,2.5) node {$2$}; \draw (1.5,2.5) node {$2$}; \draw (2.5,2.5) node {$\bullet$}; \draw (3.5,2.5) node {$2$};
\draw (0.5,3.5) node {$3$}; \draw (1.5,3.5) node {$3$}; \draw (2.5,3.5) node {$3$}; \draw (3.5,3.5) node {$3$};
\end{tikzpicture}
\end{center}
\label{fig:bargamma_f}
\end{figure}
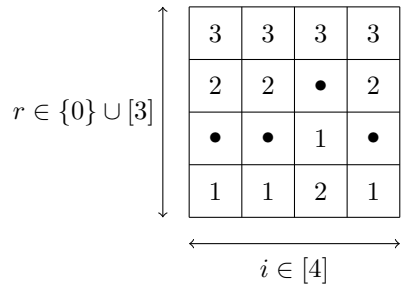

\end{example}

\begin{lemma}
\label{lemma:bargamma_f-gamma_f}
Let $f:[p]\rightarrow[k]$ and define $\gamma_f\in\mathfrak{S}(p)$ as $\gamma_f=\gamma_{f=1}\cdots\gamma_{f=k}$, where for each $r\in[k]$, $\gamma_{f=r}$ is the canonical full cycle on $f^{-1}(r)$. Then, for any $\alpha\in\mathfrak{S}(p)$,
\begin{equation}\label{eq:bargamma_f-gamma_f} \sharp(\bar{\gamma}_f^{-1}\bar{\alpha}') = \sharp(\gamma_f^{-1}\alpha) + 1 + k - |\im(f)|.\end{equation}
\end{lemma}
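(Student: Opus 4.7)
The plan is to partition the cycles of $T:=\bar{\gamma}_f^{-1}\bar{\alpha}'$ according to whether they meet the subset $S:=[p]\times\{0\}$ or avoid it, and count each group separately. This split is useful because $\bar{\alpha}'$ is the identity off $S$, so cycles of $T$ disjoint from $S$ coincide with cycles of $\bar{\gamma}_f^{-1}$ disjoint from $S$; and cycles of $T$ meeting $S$ are in one-to-one correspondence with cycles of the first-return map of $T$ to $S$.

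For the cycles avoiding $S$, I would simply read off Fact \ref{fact:gamma_f}. The cycle $\bar{c}$ involves only points of the form $(i,f(i))$, all having second coordinate in $[k]$, hence lies entirely outside $S$. For $r\in[k]$, the cycle $\bar{c}_r$ meets $S$ at exactly those $(i,0)$ with $f(i)=r$, so $\bar{c}_r$ avoids $S$ precisely when $r\notin\im(f)$. This produces $1+k-|\im(f)|$ cycles of $T$ disjoint from $S$.

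To handle the remaining cycles, identify $S$ with $[p]$ via $(j,0)\leftrightarrow j$ and compute the first-return map $R:[p]\to[p]$ of $T$ to $S$. Starting from $(j,0)$, one application of $\bar{\alpha}'$ takes us to $(\alpha(j),0)$, which sits in $\bar{c}_{f(\alpha(j))}$. Iterating $T$ then amounts to iterating $\bar{\gamma}_f^{-1}$ along $\bar{c}_{f(\alpha(j))}^{-1}$, since $\bar{\alpha}'$ is trivial while we stay off $S$. Explicitly, after $m$ further steps we are at $\bigl(\alpha(j)+m,\,s_{f(\alpha(j))}(\alpha(j)+m)\bigr)$ with indices cyclic modulo $p$, and the second coordinate equals $0$ exactly when $f(\alpha(j)+m)=f(\alpha(j))$. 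Hence the first return to $S$ lands at the cyclically next index after $\alpha(j)$ inside $f^{-1}(f(\alpha(j)))$, which is $\gamma_{f=f(\alpha(j))}^{-1}(\alpha(j))=\gamma_f^{-1}(\alpha(j))$ (using that the $\gamma_{f=r}$ have disjoint supports and commute). Thus $R=\gamma_f^{-1}\alpha$, giving $\sharp(\gamma_f^{-1}\alpha)$ cycles of $T$ meeting $S$. Summing both contributions yields \eqref{eq:bargamma_f-gamma_f}.

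I expect the main obstacle to be the bookkeeping in the first-return computation: one must carefully check that the iteration of $\bar{\gamma}_f^{-1}$ stays inside the single cycle $\bar{c}_{f(\alpha(j))}$ until its first recurrence to $S$, and in particular handle the degenerate case $f^{-1}(f(\alpha(j)))=\{\alpha(j)\}$, where the walk goes all the way around $[p]$ before returning to $(\alpha(j),0)$, consistently with $\gamma_{f=f(\alpha(j))}$ being a fixed point. Both checks are exactly what the explicit form of $\bar{c}_r$ given in Fact \ref{fact:gamma_f} guarantees.
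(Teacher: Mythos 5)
Your proof is correct and follows essentially the same route as the paper's: both arguments first identify the $1+k-|\im(f)|$ cycles of $\bar{\gamma}_f^{-1}\bar{\alpha}'$ that avoid $[p]\times\{0\}$ (the cycle through the points $(i,f(i))$ and the cycles $[p]\times\{r\}$ for $r\notin\im(f)$), and then trace orbits starting from points $(j,0)$ to show that the induced return map on $[p]\times\{0\}$ is $\gamma_f^{-1}\alpha$. Your packaging of the second step as a first-return map, together with the explicit handling of the singleton level-set case, is just a cleaner formulation of the paper's ``two crucial observations'' and the accompanying computation of $L=\inf\{l\geq 0 \st f((\gamma^{-1})^{l+1}\alpha(i))=f(\alpha(i))\}$.
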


\begin{proof} In $\bar{\gamma}_f^{-1}\bar{\alpha}'$ there are, first of all:\\
$\bullet$ $k-|\im(f)|$ cycles of the form $((1,r)\ldots(p,r))$ for $r\in[k]\setminus\im (f)$, because for any $(i,r)\in[p]\times\left([k]\setminus\im (f)\right)$, $\bar{\gamma}_f^{-1}\bar{\alpha}'(i,r)=\bar{\gamma}_f^{-1}(i,r)=(\gamma^{-1}(i),r)$.\\
$\bullet$ $1$ cycle $((1,f(1))\ldots(p,f(p)))$, because for any $i\in[p]$, $\bar{\gamma}_f^{-1}\bar{\alpha}'(i,f(i))=\bar{\gamma}_f^{-1}(i,f(i))=(\gamma^{-1}(i),f(\gamma^{-1}(i)))$.\\
For the cycles in $\bar{\gamma}_f^{-1}\bar{\alpha}'$ which belong to none of these two categories, there are two crucial observations to be made. First, for any $i,j\in[p]$, $(i,0)$ and $(j,0)$ belong to the same cycle of $\bar{\gamma}_f^{-1}\bar{\alpha}'$ if and only if $i$ and $j$ belong to the same cycle of $\gamma_f^{-1}\alpha$. And second, for each $i\in[p]$ and $r\in[k]\setminus\{f(\alpha(i))\}$, there exists $j\in[p]$ such that $(i,r)$ belongs to the same cycle of $\bar{\gamma}_f^{-1}\bar{\alpha}'$ as $(j,0)$. Indeed, for any $i\in[p]$, we have on the one hand
\[ \gamma_f^{-1}\alpha(i)= (\gamma^{-1})^{L+1}\alpha(i),\ \text{with}\ L=\inf\{ l\geq 0 \st f((\gamma^{-1})^{l+1}\alpha(i))= f(\alpha(i))\}. \]
While we have on the other hand,
\begin{align*}
f(\gamma^{-1}\alpha(i))=f(\alpha(i))\ & \Rightarrow\ \bar{\gamma}_f^{-1}\bar{\alpha}'(i,0) =(\gamma^{-1}\alpha(i),0) =(\gamma_f^{-1}\alpha(i),0), \\
f(\gamma^{-1}\alpha(i))\neq f(\alpha(i))\ & \Rightarrow\ \begin{cases}\forall\ 0\leq l\leq L-1,\ (\bar{\gamma}_f^{-1}\bar{\alpha}')^l(i,0)=((\gamma^{-1})^{l}\gamma^{-1}\alpha(i),f(\alpha(i)))\\
(\bar{\gamma}_f^{-1}\bar{\alpha}')^{L}(i,0)=((\gamma^{-1})^{L}\gamma^{-1}\alpha(i),0) =(\gamma_f^{-1}\alpha(i),0) \end{cases}.
\end{align*}
So there are in fact exactly $\sharp(\gamma_f^{-1}\alpha)$ remaining cycles in $\bar{\gamma}_f^{-1}\bar{\alpha}'$.
\end{proof}

\begin{example} Looking at the same example as before, namely $p=4$, $k=3$, and $f$ such that $f^{-1}(1)=\{1,2,4\}$, $f^{-1}(2)=\{3\}$, $f^{-1}(3)=\emptyset$, we see that, for $\alpha=(14)(23)$, the cycles in $\bar{\gamma}_f^{-1}\bar{\alpha}'$ are:\\
$\bullet$ $((1,3)(2,3)(3,3)(4,3))$, because $3\notin\im (f)$.\\
$\bullet$ $((1,1)(2,1)(3,2)(4,1))$, because $f(1)=1$, $f(2)=1$, $f(3)=2$ and $f(4)=1$.\\
$\bullet$ $(1,0)$ and $((2,0)(2,2)(1,2)(4,2)(3,0)(4,0)(3,1))$, corresponding to the cycles $(1)$ and $(2,3,4)$ in $\gamma_f^{-1}\alpha$.
\end{example}

Putting together these preliminary technical results, we straightforwardly obtain Proposition \ref{prop:geodesic-alpha_f} below.

\begin{proposition}
\label{prop:geodesic-alpha_f}
Let $f:[p]\rightarrow[k]$ and define $\gamma_f\in\mathfrak{S}(p)$ as $\gamma_f=\gamma_{f=1}\cdots\gamma_{f=k}$, where for each $r\in[k]$, $\gamma_{f=r}$ is the canonical full cycle on $f^{-1}(r)$. Then, for any $\alpha\in\mathfrak{S}(p)$,
\begin{equation}\label{eq:geodesic-alpha_f} \sharp(\hat{\gamma}^{-1}\hat{\alpha}_f) = \sharp(\gamma_f^{-1}\alpha) + k - |\im(f)|. \end{equation}
\end{proposition}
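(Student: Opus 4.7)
The plan is to chain together the three preliminary results in a short bookkeeping argument.

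First, I would relate the quantity of interest, $\sharp(\hat{\gamma}^{-1}\hat{\alpha}_f)$ on $[p]\times[k]$, to the corresponding quantity $\sharp(\bar{\gamma}^{-1}\bar{\alpha}_f)$ on the enlarged set $[p]\times(\{0\}\cup[k])$. The extension is chosen precisely so that $\bar{\alpha}_f$ fixes $[p]\times\{0\}$ pointwise while $\bar{\gamma}$ acts on $[p]\times\{0\}$ as the full cycle $(p,0)\cdots(1,0)$. Hence the restriction of $\bar{\gamma}^{-1}\bar{\alpha}_f$ to $[p]\times\{0\}$ (which is stable) is just $\gamma^{-1}\times\id$, contributing one single $p$-cycle. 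This gives the clean identity
\[ \sharp(\bar{\gamma}^{-1}\bar{\alpha}_f) = \sharp(\hat{\gamma}^{-1}\hat{\alpha}_f) + 1. \]

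Second, I would use Fact \ref{fact:alpha_f} to rewrite $\bar{\alpha}_f = \bar{\beta}_f^{-1}\bar{\alpha}'\bar{\beta}_f$. Combining this with the definition $\bar{\gamma}_f = \bar{\beta}_f\bar{\gamma}\bar{\beta}_f^{-1}$ and the invariance of the number of cycles under conjugation (and cyclic permutation of a product), I obtain
\[ \sharp(\bar{\gamma}^{-1}\bar{\alpha}_f) = \sharp\bigl(\bar{\beta}_f\bar{\gamma}^{-1}\bar{\beta}_f^{-1}\bar{\alpha}'\bigr) = \sharp(\bar{\gamma}_f^{-1}\bar{\alpha}'). \]
This step is pure conjugation and carries no combinatorial content — it just reduces the problem to counting cycles of a much simpler permutation $\bar{\alpha}'$, which only permutes $[p]\times\{0\}$.

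Third, I would invoke Lemma \ref{lemma:bargamma_f-gamma_f}, which is the technical heart of the argument (and where the real work resides): it gives
\[ \sharp(\bar{\gamma}_f^{-1}\bar{\alpha}') = \sharp(\gamma_f^{-1}\alpha) + 1 + k - |\im(f)|. \]
Putting the three identities together immediately yields
\[ \sharp(\hat{\gamma}^{-1}\hat{\alpha}_f) = \sharp(\gamma_f^{-1}\alpha) + k - |\im(f)|, \]
the $+1$ from extending the base set and the $+1$ from the lemma exactly cancelling with one of the $+1$'s to leave the stated formula. The only subtle point worth double-checking is that the set $[p]\times\{0\}$ is indeed stable under both $\bar{\gamma}$ and $\bar{\alpha}_f$ (so that it contributes exactly one separate cycle and does not merge with the other orbits) — this is immediate from the fact that $f$ takes values in $[k]$, so $\bar{\alpha}_f$ never maps any $(i,0)$ to some $(j,r)$ with $r\neq 0$.
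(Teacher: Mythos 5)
Your proof is correct and follows essentially the same route as the paper: the paper's own argument is exactly the chain $\sharp(\hat{\gamma}^{-1}\hat{\alpha}_f)+1=\sharp(\bar{\gamma}^{-1}\bar{\alpha}_f)=\sharp(\bar{\gamma}_f^{-1}\bar{\alpha}')$ (the first equality from the stability of $[p]\times\{0\}$, the second from Fact \ref{fact:alpha_f} and conjugation invariance of $\sharp(\cdot)$), followed by an appeal to Lemma \ref{lemma:bargamma_f-gamma_f}. Your additional check that $[p]\times\{0\}$ is stable and contributes exactly one $p$-cycle is the right point to verify, and the bookkeeping of the $+1$'s is accurate.
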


\begin{proof} This is a direct consequence of Lemma \ref{lemma:bargamma_f-gamma_f}, just noticing that $\sharp(\bar{\gamma}^{-1}\bar{\alpha}_f)=\sharp(\hat{\gamma}^{-1}\hat{\alpha}_f)+1$.
\end{proof}

\section{Proof of the moments expression for ``modified'' GUE matrices (mean-width of the set of $k$-extendible states)}
\label{appendix:gaussian}

The goal of this Appendix is to generalize the methodology described in Appendix \ref{appendix:wick-gaussian} in order to compute the $2p$-order moments of the matrix $\sum_{j=1}^k\widetilde{G}_{\A\B^k}(j)$. Recall that this issue arises when trying to estimate the mean width of the set of $k$-extendible states. We are thus dealing here, not with standard GUE matrices, but with $d^2$-dimensional GUE matrices which are tensorized with $d^{k-1}$-dimensional identity matrices.

For any $i_1,\ldots,i_{2p}\in[k]$, we can write
\[ \mathrm{Tr}\left[\underset{j=1}{\overset{2p}{\overrightarrow{\prod}}}\widetilde{G}_{\A\B^k}(i_j)\right] = \underset{\vec{l}_1,\ldots,\vec{l}_{2p}\in[d]^{k+1}}{\sum} \widetilde{G}_{\A\B^k}(i_1)_{\vec{l}_1,\vec{l}_2}\cdots\widetilde{G}_{\A\B^k}(i_{2p})_{\vec{l}_{2p},\vec{l}_1} ,\]
where for each $j\in[2p]$ and each $\vec{l}_j=\left(a_j,b_j^1,\ldots,b_j^k\right),\vec{l}_{j+1}=\left(a_{j+1},b_{j+1}^1,\ldots,b_{j+1}^k\right)\in[d]^{k+1}$, we have \[ \widetilde{G}_{\A\B^k}(i_j)_{\vec{l}_j,\vec{l}_{j+1}}=G_{(a_j,b_j^{i_j}),(a_{j+1},b_{j+1}^{i_j})} \delta_{\vec{b}_j\setminus b_j^{i_j}=\vec{b}_{j+1}\setminus b_{j+1}^{i_j}}. \]
Consequently, for any $f:[2p]\rightarrow[k]$, we have
\[ \mathrm{Tr}\left[\underset{i=1}{\overset{2p}{\overrightarrow{\prod}}}\widetilde{G}_{\A\B^k}(f(i))\right] = \underset{a_1,\ldots,a_{2p}\in[d]}{\sum}\ \underset{\vec{b}_1,\ldots,\vec{b}_{2p}\in I_f}{\sum} G_{(a_1,b_1^{f(1)}),(a_2,b_2^{f(1)})}\cdots G_{(a_{2p},b_{2p}^{f(2p)}),(a_1,b_1^{f(2p)})}, \]
where $I_f=\left\{\vec{b}_1,\ldots,\vec{b}_{2p}\in[d]^k \st \forall\ i\in[2p],\ \forall\ r\in[k]\setminus\{f(i)\},\ b_{i+1}^r=b_i^r\right\}$.

What we therefore get by the Wick formula for Gaussian matrices is that, for any $f:[2p]\rightarrow[k]$,
\[ \mathbf{E}_{G_{\A\B}\sim GUE(d^2)} \mathrm{Tr} \left[\underset{i=1}{\overset{2p}{\overrightarrow{\prod}}}\widetilde{G}_{\A\B^k}(f(i))\right] = \underset{\lambda\in\mathfrak{P}^{(2)}(2p)}{\sum} d^{\flat(\lambda)+\flat(\hat{\lambda}_f)}, \]
where for each pair partition $\lambda=\{\{i_1,j_1\},\ldots,\{i_p,j_p\}\}$ of $\{1,\ldots,2p\}$, $\flat(\lambda)$ is the number of free parameters $a_1,\ldots,a_{2p}\in[d]$ when imposing that $\forall\ 1\leq m\leq p,\ a_{i_m+1}=a_{j_m},\ a_{j_m+1}=a_{i_m}$, and $\flat(\hat{\lambda}_f)$ is the number of free parameters $\vec{b}_1,\ldots,\vec{b}_{2p}\in I_f$ when imposing that $\forall\ 1\leq m\leq p,\ b_{i_m+1}^{f(i_m)}=b_{j_m}^{f(j_m)},\ b_{j_m+1}^{f(j_m)}=b_{i_m}^{f(i_m)}$. As noticed before, identifying the pair partition $\{\{i_1,j_1\},\ldots,\{i_p,j_p\}\}$ with the pairing $(i_1\,j_1)\,\ldots\,(i_p\,j_p)$ and denoting by $\gamma$ the canonical full cycle $(2p\,\ldots\,1)$, the latter conditions may be written as
\[ \forall\ i\in[2p],\ a_{\gamma^{-1}\lambda(i)}=a_{i}\ \ \text{and}\ \ \forall\ r\in[k],\ \begin{cases} b_{\gamma^{-1}\lambda(i)}^{f(\lambda(i))}=b_{i}^r\ \text{if}\ r=f(i) \\ b_{\gamma^{-1}(i)}^r=b_{i}^r\ \text{if}\ r\neq f(i) \end{cases}. \]
So in fact, $\flat(\lambda)=\sharp(\gamma^{-1}\lambda)$ and $\flat(\hat{\lambda}_f)=\sharp(\hat{\gamma}^{-1}\hat{\lambda}_f)$, where
\[ \forall\ (i,r)\in[2p]\times[k],\ \hat{\gamma}(i,r)=(\gamma(i),r)\ \ \text{and}\ \ \hat{\lambda}_f(i,r)=\begin{cases}  (\lambda(i),f(\lambda(i)))\ \text{if}\ r=f(i) \\ (i,r)\ \text{if}\ r\neq f(i) \end{cases}. \]
What is more, we know by Proposition \ref{prop:geodesic-alpha_f} that $\sharp(\hat{\gamma}^{-1}\hat{\lambda}_f) = \sharp(\gamma_f^{-1}\lambda) + k - |\im(f)|$, where $\gamma_f$ is the product of the canonical full cycles on the level sets of $f$.

Let us summarize.

\begin{proposition}
\label{prop:gaussian-p-preliminary}
For any $d\in\N$ and any $p\in\N$, we have
\begin{align*} \mathbf{E}_{G_{\A\B}\sim GUE(d^2)} \mathrm{Tr}\left[\left(\sum_{j=1}^k\widetilde{G}_{\A\B^k}(j)\right)^{2p}\right] = & \sum_{f:[2p]\rightarrow[k]}\mathbf{E}_{G_{\A\B}\sim GUE(d^2)} \mathrm{Tr}\left[\underset{i=1}{\overset{2p}{\overrightarrow{\prod}}}\widetilde{G}_{\A\B^k}(f(i))\right] \\
= & \underset{f:[2p]\rightarrow[k]}{\sum} \underset{\lambda\in\mathfrak{P}^{(2)}(2p)}{\sum} d^{\sharp(\gamma^{-1}\lambda)+\sharp(\gamma_f^{-1}\lambda) + k - |\im(f)|}. \end{align*}
\end{proposition}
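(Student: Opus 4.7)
The plan is to expand the $2p$-th power, apply the Gaussian Wick formula to each resulting monomial, and then reduce the combinatorial bookkeeping to counting cycles in two specific permutations, one on $[2p]$ and one on $[2p]\times[k]$. The latter is handled by the key identity of Appendix \ref{appendix:technical} (Proposition \ref{prop:geodesic-alpha_f}), which is what packages the exponent into its stated closed form.

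First I would linearise the sum via
\[ \mathrm{Tr}\left[\left(\sum_{j=1}^k\widetilde{G}_{\A\B^k}(j)\right)^{2p}\right] = \sum_{f:[2p]\to[k]} \mathrm{Tr}\left[\overrightarrow{\prod}_{i=1}^{2p}\widetilde{G}_{\A\B^k}(f(i))\right], \]
and for each $f$ parametrise the trace by multi-indices $\vec{l}_i=(a_i,b_i^1,\ldots,b_i^k)\in[d]^{k+1}$, $i\in[2p]$. Since $\widetilde{G}_{\A\B^k}(j)=G_{\A\B_j}\otimes\Id_{\widehat{\B}^k_j}$ acts as the identity on every $\B_r$ with $r\neq j$, its matrix element factors as $G_{(a_i,b_i^{f(i)}),(a_{i+1},b_{i+1}^{f(i)})}$ times the product of Kronecker deltas $\prod_{r\neq f(i)}\delta_{b_i^r=b_{i+1}^r}$ (with indices read cyclically modulo $2p$).

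Next, I would apply Wick's formula (Lemma \ref{lemma:wick}) to the expectation of the product of $2p$ centred jointly Gaussian GUE entries. Using the covariance $\E[G_{(x,y),(x',y')}G_{(u,v),(u',v')}] = \delta_{x=u'}\delta_{y=v'}\delta_{x'=u}\delta_{y'=v}$, each pair-partition $\lambda\in\mathfrak{P}^{(2)}(2p)$, identified with its pairing permutation $(i_1\,j_1)\cdots(i_p\,j_p)$, contributes a product of Kronecker deltas forcing, on the $a$-coordinates, $a_{\gamma^{-1}\lambda(i)}=a_i$ for every $i\in[2p]$, and, on the $b$-coordinates, the cyclic-identity constraints $b^r_{\gamma^{-1}(i)}=b^r_i$ for $r\neq f(i)$ together with the Wick-induced identifications on the $b^{f(i)}$-coordinates. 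The number of free $a$-tuples is then $d^{\sharp(\gamma^{-1}\lambda)}$ by the standard cycle-counting argument recalled in Appendix \ref{appendix:wick-gaussian}.

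The main obstacle is counting the free $\vec{b}$-tuples, since the Wick constraints and the identity constraints interact non-trivially across the different $\B$-coordinates. The natural encoding is to observe that all these constraints precisely say that the $(2p\cdot k)$-tuple $(b_i^r)_{(i,r)\in[2p]\times[k]}$ is constant along the cycles of the permutation $\hat{\gamma}^{-1}\hat{\lambda}_f$ on $[2p]\times[k]$, where $\hat{\gamma}(i,r)=(\gamma(i),r)$ and $\hat{\lambda}_f(i,r)=(\lambda(i),f(\lambda(i)))$ if $r=f(i)$, $(i,r)$ otherwise. The number of free $\vec{b}$-tuples is thus $d^{\sharp(\hat{\gamma}^{-1}\hat{\lambda}_f)}$, and invoking Proposition \ref{prop:geodesic-alpha_f} with $\alpha$ replaced by the pairing $\lambda$ gives
\[ \sharp(\hat{\gamma}^{-1}\hat{\lambda}_f) = \sharp(\gamma_f^{-1}\lambda) + k - |\im(f)|, \]
with $\gamma_f$ the product of the canonical full cycles on the level sets of $f$. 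Multiplying the $a$- and $\vec{b}$-counts and summing over $f$ and $\lambda$ yields the announced formula.
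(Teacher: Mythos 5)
Your proposal is correct and follows essentially the same route as the paper's own proof in Appendix \ref{appendix:gaussian}: expanding over functions $f:[2p]\to[k]$, applying the Wick formula, counting free $a$- and $\vec{b}$-indices as cycles of $\gamma^{-1}\lambda$ and $\hat{\gamma}^{-1}\hat{\lambda}_f$ respectively, and invoking Proposition \ref{prop:geodesic-alpha_f} to rewrite $\sharp(\hat{\gamma}^{-1}\hat{\lambda}_f)$ as $\sharp(\gamma_f^{-1}\lambda)+k-|\im(f)|$. No gaps.
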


\section{Proof of the moments expression for partial transposition of ``modified'' GUE matrices (mean-width of the set of $k$-PPT-extendible states)}
\label{appendix:gaussian-gamma}

The goal of this Appendix is to compute the $2p$-order moments of $\sum_{j=1}^k \widetilde{G}_{\A\B^k}(j)^{\Gamma}$, where $\Gamma$ stands here for the partial transposition over the $\lceil k/2\rceil$ last $\mathrm{B}$ subsystems. Recall that this issue arises when trying to estimate the mean width of the set of $k$-PPT-extendible states.

Using the same notation as in Appendix \ref{appendix:gaussian}, and reasoning in a completely analogous way, we have that, for any $f:[2p]\rightarrow[k]$,
\[ \mathrm{Tr}\left[\underset{i=1}{\overset{2p}{\overrightarrow{\prod}}}\widetilde{G}_{\A\B^k}(f(i))^{\Gamma}\right] = \underset{a_1,\ldots,a_{2p}\in[d]}{\sum}\ \underset{\vec{b}_1,\ldots,\vec{b}_{2p}\in I_f}{\sum} G_{(a_1,b_{x_1}^{f(1)}),(a_2,b_{\bar{x}_1}^{f(1)})}\cdots G_{(a_{2p},b_{x_{2p}}^{f(2p)}),(a_1,b_{\bar{x}_{2p}}^{f(2p)})}, \]
where for each $1\leq i\leq 2p$, $x_i=\begin{cases} i\ \text{if}\ f(i)\leq \lfloor k/2\rfloor \\ i+1\ \text{if}\ f(i)> \lfloor k/2\rfloor \end{cases}$ and $\bar{x}_i=\begin{cases} i+1\ \text{if}\ f(i)\leq \lfloor k/2\rfloor \\ i\ \text{if}\ f(i)> \lfloor k/2\rfloor \end{cases}$.

What we therefore get by the Wick formula for Gaussian matrices is that, for any $f:[2p]\rightarrow[k]$,
\[ \mathbf{E}_{G_{\A\B}\sim GUE(d^2)} \mathrm{Tr}\left[\underset{i=1}{\overset{2p}{\overrightarrow{\prod}}}\widetilde{G}_{\A\B^k}(f(i))^{\Gamma}\right] = \underset{\lambda\in\mathfrak{P}^{(2)}(2p)}{\sum} d^{\flat(\lambda)+\flat(\check{\lambda}_f)}, \]
where for each pair partition $\lambda=\{\{i_1,j_1\},\ldots,\{i_p,j_p\}\}$ of $\{1,\ldots,2p\}$,  $\flat(\check{\lambda}_f)$ is the number of free parameters $\vec{b}_1,\ldots,\vec{b}_{2p}\in [d]^k$ when imposing that for all $i\in[2p]$, first $b_{\gamma^{-1}(i)}^r=b_{i}^r$ if $r\neq f(i)$, and second the one condition $b_{\gamma^{-1}\lambda(i)}^{f(\lambda(i))}=b_{i}^{f(i)}$ if $f(i),f(\lambda(i))\leq\lfloor k/2\rfloor$ or $f(i),f(\lambda(i))>\lfloor k/2\rfloor$, while the two conditions $b_{\gamma^{-1}\lambda(i)}^{f(\lambda(i))}=b_{i}^{f(i)}$ and $b_{\lambda(i)}^{f(\lambda(i))}=b_{\gamma^{-1}(i)}^{f(i)}$ if $f(i)\leq\lfloor k/2\rfloor$, $f(\lambda(i))>\lfloor k/2\rfloor$ or $f(i)>\lfloor k/2\rfloor$, $f(\lambda(i))\leq\lfloor k/2\rfloor$.

Let us rephrase what we just established. Fix $\lambda\in\mathfrak{P}^{(2)}(2p)$. For functions $f:[2p]\rightarrow\{1,\ldots,\lfloor k/2\rfloor\}\equiv\left[\lfloor k/2\rfloor\right]$ or $f:[2p]\rightarrow\{\lfloor k/2\rfloor+1,\ldots,k\}\equiv\left[\lceil k/2\rceil\right]$, the number of free parameters associated to the pair $(\lambda,f)$ is the same as the one observed in Appendix \ref{appendix:gaussian}. On the contrary, for functions $f$ which are not of this form, extra matching conditions are imposed. So these will for sure not contribute to the dominating term in the expansion of $\E\mathrm{Tr}\left[\left(\sum_{j=1}^k\widetilde{G}_{\A\B^k}(j)^{\Gamma}\right)^{2p}\right]$ into powers of $d$. Consequently, we have the asymptotic estimate
\[ \E_{G_{\A\B}\sim GUE(d^2)}\mathrm{Tr}\left[\left(\sum_{j=1}^k\widetilde{G}_{\A\B^k}(j)^{\Gamma}\right)^{2p}\right]
 \underset{d\rightarrow+\infty}{\sim} \underset{\lambda\in\mathfrak{P}^{(2)}(2p)}{\sum}\, \underset{f:[2p]\rightarrow\left[\lfloor k/2\rfloor\right]\,\text{or}\,\left[\lceil k/2\rceil\right]}{\sum} d^{\sharp(\gamma^{-1}\lambda)+\sharp(\gamma_f^{-1}\lambda)+k-|\im(f)|}. \]

\section{Proof of the moments expression for ``modified'' Wishart matrices ($k$-extendibility of random-induced states)}
\label{appendix:wishart}

The goal of this Appendix is to generalize the methodology described in Appendix \ref{appendix:wick-wishart} in order to compute the $p$-order moments of the matrix $\sum_{j=1}^k\widetilde{W}_{\A\B^k}(j)$. Recall that this issue arises when trying to characterize $k$-extendibility of random-induced states. We are thus dealing here, not with standard Wishart matrices, but with $(d^2,s)$-Wishart matrices which are tensorized with $d^{k-1}$-dimensional identity matrices.

Representing by $\bullet$ a $d$-dimensional gate and by $\blacktriangledown$ a $s$-dimensional gate, the matrix $\widetilde{W}_{\A\B^k}(1)$, for instance, may be graphically represented as in Figure \ref{fig:W_AB^k}.

\begin{figure}[h]
\caption{$\widetilde{W}_{\A\B^k}(1)=X_{\A\B^k}(1)X_{\A\B^k}(1)^{\dagger}$, with $X_{\A\B^k}(1)=G_{\A\B_1}\otimes\Id_{\B_2\ldots \B_k}$}
\begin{center}
\begin{tikzpicture} [scale=0.9]
\node[draw=lightgray, minimum height=2.4cm, minimum width=2cm, fill=lightgray] at (0,-1.6) { };
\node[draw=lightgray, minimum height=2.4cm, minimum width=2cm, fill=lightgray] at (3,-1.6) { };
\node[draw=lightgray, minimum height=0.8cm, minimum width=2cm, fill=lightgray] (X) at (0,0) { };
\node[draw=lightgray, minimum height=0.8cm, minimum width=2cm, fill=lightgray] (X*) at (3,0) { };
\draw (0,0.5) node{} node[above]{$X_{\A\B^k}(1)$}; \draw (3,0.5) node{} node[above]{$X_{\A\B^k}(1)^{\dagger}$};
\draw(X.north west) node{$\bullet$} node[left]{$\A$}; \draw(X.south west) node{$\bullet$} node[left]{$B_1$}; \draw(X.east) node{$\blacktriangledown$};
\draw (-1.1,-1.2) node{$\bullet$} node[left]{$B_2$}; \draw (-1.1,-2.8) node{$\bullet$} node[left]{$B_k$}; \draw (1.1,-1.2) node{$\bullet$}; \draw (1.1,-2.8) node{$\bullet$};
\draw(X*.north east) node{$\bullet$} node[right]{$\A$}; \draw(X*.south east) node{$\bullet$} node[right]{$B_1$}; \draw(X*.west) node{$\blacktriangledown$};
\draw (1.9,-1.2) node{$\bullet$}; \draw (1.9,-2.8) node{$\bullet$}; \draw (4.1,-1.2) node{$\bullet$} node[right]{$B_2$}; \draw (4.1,-2.8) node{$\bullet$} node[right]{$B_k$};
\draw (X.east) -- (X*.west); \draw (-1,-1.2) -- (4,-1.2); \draw (-1,-2.8) -- (4,-2.8);
\draw[dotted] (-1.3,-1.5) -- (-1.3,-2.5);  \draw[dotted] (4.3,-1.5) -- (4.3,-2.5);
\end{tikzpicture}
\end{center}
\label{fig:W_AB^k}
\end{figure}

The products $\widetilde{W}_{\A\B^k}(1)\widetilde{W}_{\A\B^k}(1)$ and $\widetilde{W}_{\A\B^k}(1)\widetilde{W}_{\A\B^k}(2)$, for instance, are then obtained by the wirings represented in Figure \ref{fig:prod-W_AB^k}.

\begin{figure}[h]
\caption{$\widetilde{W}_{\A\B^k}(1)\widetilde{W}_{\A\B^k}(1)$ (on the left) and $\widetilde{W}_{\A\B^k}(1)\widetilde{W}_{\A\B^k}(2)$ (on the right)}
\begin{center}
\begin{tikzpicture} [scale=0.9]
\node[draw=lightgray, minimum height=2.4cm, minimum width=0.8cm, fill=lightgray] at (0,-1.6) { };
\node[draw=lightgray, minimum height=2.4cm, minimum width=0.8cm, fill=lightgray] at (1.6,-1.6) { };
\node[draw=lightgray, minimum height=0.8cm, minimum width=0.8cm, fill=lightgray] (X) at (0,0) { };
\node[draw=lightgray, minimum height=0.8cm, minimum width=0.8cm, fill=lightgray] (X*) at (1.6,0) { };
\draw(X.north west) node{$\bullet$} node[left]{$\A$}; \draw(X.south west) node{$\bullet$} node[left]{$B_1$}; \draw(X.east) node{$\blacktriangledown$};
\draw (-0.4,-1.2) node{$\bullet$} node[left]{$B_2$}; \draw (-0.4,-2.8) node{$\bullet$} node[left]{$B_k$};
\draw(X*.north east) node{$\bullet$}; \draw(X*.south east) node{$\bullet$}; \draw(X*.west) node{$\blacktriangledown$};
\draw (2,-1.2) node{$\bullet$}; \draw (2,-2.8) node{$\bullet$};
\draw (X.east) -- (X*.west);
\draw[dotted] (-0.6,-1.5) -- (-0.6,-2.5);
\node[draw=lightgray, minimum height=2.4cm, minimum width=0.8cm, fill=lightgray] at (3.2,-1.6) { };
\node[draw=lightgray, minimum height=2.4cm, minimum width=0.8cm, fill=lightgray] at (4.8,-1.6) { };
\node[draw=lightgray, minimum height=0.8cm, minimum width=0.8cm, fill=lightgray] (Y) at (3.2,0) { };
\node[draw=lightgray, minimum height=0.8cm, minimum width=0.8cm, fill=lightgray] (Y*) at (4.8,0) { };
\draw(Y.north west) node{$\bullet$}; \draw(Y.south west) node{$\bullet$}; \draw(Y.east) node{$\blacktriangledown$};
\draw (2.8,-1.2) node{$\bullet$}; \draw (2.8,-2.8) node{$\bullet$};
\draw(Y*.north east) node{$\bullet$} node[right]{$\A$}; \draw(Y*.south east) node{$\bullet$} node[right]{$B_1$}; \draw(Y*.west) node{$\blacktriangledown$};
\draw (5.2,-1.2) node{$\bullet$} node[right]{$B_2$}; \draw (5.2,-2.8) node{$\bullet$} node[right]{$B_k$};
\draw (Y.east) -- (Y*.west);
\draw (-0.4,-1.2) -- (5.2,-1.2); \draw (-0.4,-2.8) -- (5.2,-2.8);
\draw[dotted] (5.4,-1.5) -- (5.4,-2.5);
\draw (X*.north east) -- (Y.north west); \draw (X*.south east) -- (Y.south west);

\node[draw=lightgray, minimum height=2.4cm, minimum width=0.8cm, fill=lightgray] at (8,-1.6) { };
\node[draw=lightgray, minimum height=2.4cm, minimum width=0.8cm, fill=lightgray] at (9.6,-1.6) { };
\node[draw=lightgray, minimum height=0.8cm, minimum width=0.8cm, fill=lightgray] (X') at (8,0) { };
\node[draw=lightgray, minimum height=0.8cm, minimum width=0.8cm, fill=lightgray] (X'*) at (9.6,0) { };
\draw(X'.north west) node{$\bullet$} node[left]{$\A$}; \draw(X'.south west) node{$\bullet$} node[left]{$B_1$}; \draw(X'.east) node{$\blacktriangledown$};
\draw (7.6,-1.2) node{$\bullet$} node[left]{$B_2$}; \draw (7.6,-2.8) node{$\bullet$} node[left]{$B_k$};
\draw(X'*.north east) node{$\bullet$}; \draw(X'*.south east) node{$\bullet$}; \draw(X'*.west) node{$\blacktriangledown$};
\draw (10,-1.2) node{$\bullet$}; \draw (10,-2.8) node{$\bullet$};
\draw (X'.east) -- (X'*.west);
\draw (7.6,-1.2) -- (10,-1.2);
\draw[dotted] (7.4,-1.5) -- (7.4,-2.5);
\node[draw=lightgray, minimum height=2.4cm, minimum width=0.8cm, fill=lightgray] at (11.2,-1.6) { };
\node[draw=lightgray, minimum height=2.4cm, minimum width=0.8cm, fill=lightgray] at (12.8,-1.6) { };
\node[draw=lightgray, minimum height=0.8cm, minimum width=0.8cm, fill=lightgray] (Y') at (11.2,0) { };
\node[draw=lightgray, minimum height=0.8cm, minimum width=0.8cm, fill=lightgray] (Y'*) at (12.8,0) { };
\draw(Y'.north west) node{$\bullet$}; \draw(Y'.south west) node{$\bullet$}; \draw(Y'.east) node{$\blacktriangledown$};
\draw (10.8,-1.2) node{$\bullet$}; \draw (10.8,-2.8) node{$\bullet$};
\draw(Y'*.north east) node{$\bullet$} node[right]{$\A$}; \draw(Y'*.south east) node{$\bullet$} node[right]{$B_2$}; \draw(Y'*.west) node{$\blacktriangledown$};
\draw (13.2,-1.2) node{$\bullet$} node[right]{$B_1$}; \draw (13.2,-2.8) node{$\bullet$} node[right]{$B_k$};
\draw (Y'.east) -- (Y'*.west);
\draw (10.8,-1.2) -- (13.2,-1.2);
\draw[dotted] (13.4,-1.5) -- (13.4,-2.5);
\draw (7.6,-2.8) -- (13.2,-2.8);
\draw (X'*.north east) -- (Y'.north west);
\draw (X'*.south east) -- (10.8,-1.2); \draw (10,-1.2) -- (Y'.south west);
\end{tikzpicture}
\end{center}
\label{fig:prod-W_AB^k}
\end{figure}

So what we get by the graphical Wick formula for Wishart matrices is that for any $f:[p]\rightarrow[k]$,
\[ \mathbf{E}_{W_{\A\B}\sim\mathcal{W}_{d^2,s}} \mathrm{Tr}\left[\underset{i=1}{\overset{p}{\overrightarrow{\prod}}}\widetilde{W}_{\A\B^k}(f(i))\right] = \underset{\alpha\in\mathfrak{S}(p)}{\sum} \mathcal{D}_{f,\alpha} = \underset{\alpha\in\mathfrak{S}(p)}{\sum} d^{\sharp(\gamma^{-1}\alpha)}d^{\sharp(\hat{\gamma}^{-1}\hat{\alpha}_f)}s^{\sharp(\alpha)}, \]
where $\hat{\alpha}_f$ is defined by
\[ \hat{\alpha}_f:(i,r)\in [p]\times[k]\mapsto
\begin{cases}(\alpha(i),f(\alpha(i)))\ \text{if}\ r=f(i)\\
(i,r)\ \text{if}\ r\neq f(i)\end{cases}, \]
and where $\hat{\gamma}$ stands for $\gamma$ applied to the first argument. Indeed, for each $\alpha\in\mathfrak{S}(p)$, there are $\sharp(\gamma^{-1}\alpha)$ loops connecting the $d$-dimensional gates corresponding to $\A$, $\sharp(\hat{\gamma}^{-1}\hat{\alpha}_f)$ loops connecting the $d$-dimensional gates corresponding to $B_1,\ldots,B_k$, and $\sharp(\alpha)$ loops connecting $s$-dimensional gates. This is because for each $1\leq i\leq p$, on  subsystems $\A$ and $B_{f(i)}$, the entrances (respectively the exit) of the $i^{th}$ box $X_{\A\B^k}(f(i))$ are connected to the exits (respectively the entrance) of the $\alpha(i)^{th}$ box $X_{\A\B^k}(f(\alpha(i)))^{\dagger}$.

What happens in the special case $p=2$ and $k=2$ is detailed in Figures \ref{fig:p=2_k=2_1} and \ref{fig:p=2_k=2_2} below as an illustration.

\begin{figure}[h]
\caption{$f(1)=f(2)=1$. On the left, $\alpha=\id$: $\mathcal{D}_{f,\alpha}=d^3s^2$. On the right, $\alpha=(1\,2)$: $\mathcal{D}_{f,\alpha}=d^5s$.}
\begin{center}
\begin{tikzpicture} [scale=0.7]
\draw (-1,0) node {$B_2$}; \draw (-1,1) node {$B_1$}; \draw (-1,2) node {$\A$};
\draw (8,0) node {$B_2$}; \draw (8,1) node {$B_1$}; \draw (8,2) node {$\A$};
\draw (0,0) node {$\bullet$}; \draw (0,1) node {$\bullet$}; \draw (0,2) node {$\bullet$};
\draw (3,0) node {$\bullet$}; \draw (3,1) node {$\bullet$}; \draw (3,2) node {$\bullet$};
\draw (1,3/2) node {$\blacktriangledown$}; \draw (2,3/2) node {$\blacktriangledown$};
\draw (4,0) node {$\bullet$}; \draw (4,1) node {$\bullet$}; \draw (4,2) node {$\bullet$};
\draw (7,0) node {$\bullet$}; \draw (7,1) node {$\bullet$}; \draw (7,2) node {$\bullet$};
\draw (5,3/2) node {$\blacktriangledown$}; \draw (6,3/2) node {$\blacktriangledown$};
\draw (0,0) -- (7,0); \draw (3,1) -- (4,1); \draw (3,2) -- (4,2);
\draw (1,3/2) -- (2,3/2); \draw (5,3/2) -- (6,3/2);
\draw (0,0) to[out=-160,in=-20] (7,0);
\draw (0,1) to[out=-160,in=-20] (7,1); \draw (0,1) to[out=30,in=180] (3,1); \draw (4,1) to[out=0,in=150] (7,1);
\draw (0,2) to[out=160,in=20] (7,2); \draw (0,2) to[out=0,in=150] (3,2); \draw (4,2) to[out=30,in=180] (7,2);
\draw (1,3/2) to[bend left=90] (2,3/2); \draw (5,3/2) to[bend left=90] (6,3/2);

\draw (10,0) node {$B_2$}; \draw (10,1) node {$B_1$}; \draw (10,2) node {$\A$};
\draw (19,0) node {$B_2$}; \draw (19,1) node {$B_1$}; \draw (19,2) node {$\A$};
\draw (11,0) node {$\bullet$}; \draw (11,1) node {$\bullet$}; \draw (11,2) node {$\bullet$};
\draw (14,0) node {$\bullet$}; \draw (14,1) node {$\bullet$}; \draw (14,2) node {$\bullet$};
\draw (12,3/2) node {$\blacktriangledown$}; \draw (13,3/2) node {$\blacktriangledown$};
\draw (15,0) node {$\bullet$}; \draw (15,1) node {$\bullet$}; \draw (15,2) node {$\bullet$};
\draw (18,0) node {$\bullet$}; \draw (18,1) node {$\bullet$}; \draw (18,2) node {$\bullet$};
\draw (16,3/2) node {$\blacktriangledown$}; \draw (17,3/2) node {$\blacktriangledown$};
\draw (11,0) -- (18,0); \draw (14,1) -- (15,1); \draw (14,2) -- (15,2);
\draw (12,3/2) -- (13,3/2); \draw (16,3/2) -- (17,3/2);
\draw (11,0) to[out=-160,in=-20] (18,0);
\draw (11,1) to[out=-160,in=-20] (18,1); \draw (11,1) to[bend right=10] (18,1); \draw (14,1) to[bend right=45] (15,1);
\draw (11,2) to[out=160,in=20] (18,2); \draw (11,2) to[bend left=10] (18,2); \draw (14,2) to[bend left=45] (15,2);
\draw (12,3/2) to[out=-30, in=180] (16,3/2); \draw (13,3/2) to[out=0, in=150] (17,3/2);
\end{tikzpicture}
\end{center}
\label{fig:p=2_k=2_1}
\end{figure}

\begin{figure}[h]
\caption{$f(1)=1$, $f(2)=2$. On the left, $\alpha=\id$: $\mathcal{D}_{f,\alpha}=d^3s^2$. On the right, $\alpha=(1\,2)$: $\mathcal{D}_{f,\alpha}=d^3s$.}
\begin{center}
\begin{tikzpicture} [scale=0.7]
\draw (-1,0) node {$B_2$}; \draw (-1,1) node {$B_1$}; \draw (-1,2) node {$\A$};
\draw (8,0) node {$B_1$}; \draw (8,1) node {$B_2$}; \draw (8,2) node {$\A$};
\draw (0,0) node {$\bullet$}; \draw (0,1) node {$\bullet$}; \draw (0,2) node {$\bullet$};
\draw (3,0) node {$\bullet$}; \draw (3,1) node {$\bullet$}; \draw (3,2) node {$\bullet$};
\draw (1,3/2) node {$\blacktriangledown$}; \draw (2,3/2) node {$\blacktriangledown$};
\draw (4,0) node {$\bullet$}; \draw (4,1) node {$\bullet$}; \draw (4,2) node {$\bullet$};
\draw (7,0) node {$\bullet$}; \draw (7,1) node {$\bullet$}; \draw (7,2) node {$\bullet$};
\draw (5,3/2) node {$\blacktriangledown$}; \draw (6,3/2) node {$\blacktriangledown$};
\draw (0,0) -- (3,0); \draw (4,0) -- (7,0); \draw (3,2) -- (4,2);
\draw (1,3/2) -- (2,3/2); \draw (5,3/2) -- (6,3/2);
\draw (0,0) to[out=-160,in=-90] (7,1); \draw (0,1) to[out=-90,in=-20] (7,0);
\draw (3,0) to[out=0, in=180] (4,1); \draw (3,1) to[out=0, in=180] (4,0);
\draw (0,1) to[out=30, in=180] (3,1); \draw (4,1) to[out=0, in=150] (7,1);
\draw (0,2) to[out=160,in=20] (7,2); \draw (0,2) to[out=0,in=150] (3,2); \draw (4,2) to[out=30,in=180] (7,2);
\draw (1,3/2) to[bend left=90] (2,3/2); \draw (5,3/2) to[bend left=90] (6,3/2);

\draw (10,0) node {$B_2$}; \draw (10,1) node {$B_1$}; \draw (10,2) node {$\A$};
\draw (19,0) node {$B_1$}; \draw (19,1) node {$B_2$}; \draw (19,2) node {$\A$};
\draw (11,0) node {$\bullet$}; \draw (11,1) node {$\bullet$}; \draw (11,2) node {$\bullet$};
\draw (14,0) node {$\bullet$}; \draw (14,1) node {$\bullet$}; \draw (14,2) node {$\bullet$};
\draw (12,3/2) node {$\blacktriangledown$}; \draw (13,3/2) node {$\blacktriangledown$};
\draw (15,0) node {$\bullet$}; \draw (15,1) node {$\bullet$}; \draw (15,2) node {$\bullet$};
\draw (18,0) node {$\bullet$}; \draw (18,1) node {$\bullet$}; \draw (18,2) node {$\bullet$};
\draw (16,3/2) node {$\blacktriangledown$}; \draw (17,3/2) node {$\blacktriangledown$};
\draw (11,0) -- (14,0); \draw (15,0) -- (18,0); \draw (14,2) -- (15,2);
\draw (12,3/2) -- (13,3/2); \draw (16,3/2) -- (17,3/2);
\draw (11,0) to[out=-160, in=-90] (18,1); \draw (11,1) to[out=-90, in=-20] (18,0);
\draw (14,0) to[out=0, in=-135] (15,1); \draw (14,1) to[out=-45, in=180] (15,0);
\draw (11,1) to[bend right=15] (18,1); \draw (14,1) to[bend left=45] (15,1);
\draw (11,2) to[out=160,in=20] (18,2); \draw (11,2) to[bend left=10] (18,2); \draw (14,2) to[bend left=45] (15,2);
\draw (12,3/2) to[out=-30, in=180] (16,3/2); \draw (13,3/2) to[out=0, in=150] (17,3/2);
\end{tikzpicture}
\end{center}
\label{fig:p=2_k=2_2}
\end{figure}

Finally, we also know by Proposition \ref{prop:geodesic-alpha_f} that for any $\alpha\in\mathfrak{S}(p)$ and $f:[p]\rightarrow[k]$, denoting by $\gamma_f$ the product of the canonical full cycles on the level sets of $f$, we have $\sharp(\hat{\gamma}^{-1}\hat{\alpha}_f) = \sharp(\gamma_f^{-1}\alpha) + k - |\im(f)|$.

Putting everything together, we eventually come to the result summarized in Proposition \ref{prop:wishart-p-preliminary} below.

\begin{proposition} \label{prop:wishart-p-preliminary} For any $d,s\in\N$ and any $p\in\N$, we have
\begin{align*} \mathbf{E}_{W_{\A\B}\sim\mathcal{W}_{d^2,s}} \mathrm{Tr}\left[\left(\sum_{j=1}^k\widetilde{W}_{\A\B^k}(j)\right)^p\right] = & \sum_{f:[p]\rightarrow[k]}\mathbf{E}_{W_{\A\B}\sim\mathcal{W}_{d^2,s}} \mathrm{Tr}\left[\underset{i=1}{\overset{p}{\overrightarrow{\prod}}}\widetilde{W}_{\A\B^k}(f(i))\right] \\
= & \underset{f:[p]\rightarrow[k]}{\sum}\underset{\alpha\in\mathfrak{S}(p)}{\sum} d^{\sharp(\gamma^{-1}\alpha)+\sharp(\gamma_f^{-1}\alpha)+k-|\im(f)|}s^{\sharp(\alpha)}.
\end{align*}
\end{proposition}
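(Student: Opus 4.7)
The first equality is immediate: expanding $\left(\sum_{j=1}^k \widetilde{W}_{\A\B^k}(j)\right)^p$ as a sum over functions $f:[p]\to[k]$ and using linearity of trace and expectation yields it. So the substantive task is, for fixed $f$, to compute $\mathbf{E}\,\tr\left[\overrightarrow{\prod}_{i=1}^p\widetilde{W}_{\A\B^k}(f(i))\right]$.

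My plan is to adapt the graphical Wick calculus of Appendix \ref{appendix:wick-wishart} to the tensorized setting. Writing $\widetilde{W}_{\A\B^k}(j) = X(j)X(j)^\dagger$ with $X(j) = G_{\A\B_j}\otimes\Id_{\widehat{\B}^k_j}$, the trace of the ordered product becomes a planar diagram with $p$ boxes $X$ and $p$ boxes $X^\dagger$. Its wirings reflect three features: (a) within each factor $\widetilde{W}$, the $X$ and $X^\dagger$ boxes are paired at an $s$-dimensional gate; (b) along the cyclic ordering of the product, the gates corresponding to $\A$ and to every $\B_r$ are identified between successive factors; and (c) on each $\B_r$, the wiring inside box $i$ passes through the nontrivial Gaussian factor only when $r=f(i)$, and through the identity tensor otherwise. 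Applying the Gaussian Wick formula, the expectation produces a sum over pairings of $X$-boxes with $X^\dagger$-boxes, equivalently a sum over $\alpha\in\mathfrak{S}(p)$. Each $\alpha$ contributes three independent collections of loops: $\sharp(\alpha)$ loops at the $s$-dimensional gates (weight $s$ each), $\sharp(\gamma^{-1}\alpha)$ loops on the $\A$-subsystem (weight $d$ each, exactly as in the non-tensorized case), and a number $L$ of loops on the combined $\B$-subsystems (weight $d$ each).

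The key combinatorial step is to identify $L=\sharp(\hat\gamma^{-1}\hat\alpha_f)$, where $\hat\gamma$ and $\hat\alpha_f$ are the permutations on $[p]\times[k]$ introduced in Appendix \ref{appendix:technical}. Indeed, the cyclic identifications in (b) realise $\hat\gamma$ on the full index set, while on each layer $r\in[k]$ the wirings through the boxes realise $\alpha$ precisely on the fiber of $f$ above $r$ and the identity elsewhere, which is exactly the action of $\hat\alpha_f$. I expect this step to be the main obstacle: the identity wirings on the non-$f(i)$ layers interleave with the pairings induced by $\alpha$, and the resulting cycle structure is not obvious on inspection; one naturally wants to perform the conjugation by $\bar\beta_f$ of Fact \ref{fact:alpha_f} to decouple the layers on which $f$ is constant from those on which it is not. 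Drawing out the small cases $p=k=2$ (as in Figures \ref{fig:p=2_k=2_1} and \ref{fig:p=2_k=2_2}) provides a useful sanity check.

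Finally, I would invoke Proposition \ref{prop:geodesic-alpha_f} to simplify $\sharp(\hat\gamma^{-1}\hat\alpha_f) = \sharp(\gamma_f^{-1}\alpha) + k - |\im(f)|$, re-expressing the $\B$-loop count in terms of a cycle count on $[p]$ alone. Substituting back into the Wick expansion and summing over $\alpha\in\mathfrak{S}(p)$ and $f:[p]\to[k]$ yields the advertised formula. With the combinatorial identity of Appendix \ref{appendix:technical} in hand, what remains is only the bookkeeping verification of the three independent loop counts.
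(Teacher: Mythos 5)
Your proposal is correct and follows essentially the same route as the paper's Appendix \ref{appendix:wishart}: expand the power over functions $f:[p]\to[k]$, apply the graphical Wick calculus to the tensorized Wishart boxes to obtain a sum over $\alpha\in\mathfrak{S}(p)$ with $\sharp(\alpha)$ loops at the $s$-gates, $\sharp(\gamma^{-1}\alpha)$ loops on $\A$, and $\sharp(\hat\gamma^{-1}\hat\alpha_f)$ loops on the $\B$-subsystems, then invoke Proposition \ref{prop:geodesic-alpha_f} to rewrite the latter as $\sharp(\gamma_f^{-1}\alpha)+k-|\im(f)|$. You correctly flag the identification of the $\B$-loop count with $\sharp(\hat\gamma^{-1}\hat\alpha_f)$ as the delicate bookkeeping step, and the tools you name (the conjugation of Fact \ref{fact:alpha_f}, the $p=k=2$ diagrams) are precisely those the paper relies on.
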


\section{Counting geodesics vs non-geodesics pairings and permutations}
\label{appendix:geodesics}

Let us recall once and for all two notation that we will use repeatedly in this section, and that were introduced in Lemma \ref{lemma:catalan}. For any $p,m\in\N$ with $m\leq p$, we denote by $\mathrm{Cat}_p=\frac{1}{p+1}{2p \choose p}$ the $p^{th}$ Catalan number, and by $\mathrm{Nar}_p^m=\frac{1}{p+1}{p+1 \choose m}{p-1 \choose m-1}$ the $(p,m)^{th}$ Narayana number.

\subsection{Number of pairings of $2p$ elements which are not on the geodesics between the identity and the canonical full cycle}

\begin{lemma} \label{lemma:number-pairings-defect}
Let $p\in\N$ and denote by $\gamma$ the canonical full cycle on $\{1,\ldots,2p\}$. For any $0\leq\delta\leq\lfloor p/2\rfloor$, define the set of pairings having a defect $2\delta$ of being on the geodesics between $\id$ and $\gamma$ as
\[ \mathfrak{P}^{(2)}_{\delta}(2p) = \{\lambda\in\mathfrak{P}^{(2)}(2p) \st \sharp(\gamma^{-1}\lambda)=p+1-2\delta\}. \]
Then, the cardinality of $\mathfrak{P}^{(2)}_{\delta}(2p)$ is upper bounded by $\mathrm{Cat}_p \left(p^4/4\right)^{\delta}$.
\end{lemma}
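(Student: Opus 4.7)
The plan is to proceed by induction on $\delta\geq 0$ via a canonical uncrossing reduction. The base case $\delta=0$ is immediate: by Lemma \ref{lemma:distance}, $\mathfrak{P}^{(2)}_{0}(2p)$ is exactly the set $NC^{(2)}(2p)$ of non-crossing pair partitions, which by Lemma \ref{lemma:catalan} has cardinality $\mathrm{Cat}_p$. The cases $p\leq 1$ are trivial since they force $\delta=0$, so I may assume $p\geq 2$ throughout.

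For the inductive step, assume the bound for $\delta-1\geq 0$. Given $\lambda\in\mathfrak{P}^{(2)}_{\delta}(2p)$, I will construct a canonical $\lambda'\in\mathfrak{P}^{(2)}_{\delta-1}(2p)$ plus auxiliary data sufficient to recover $\lambda$. Since $\delta\geq 1$, the pair partition $\lambda$ is not a geodesic one, so it must contain at least one crossing, i.e.\ two pairs $\{a,b\},\{c,d\}$ with $a<c<b<d$. Pick the lexicographically smallest such pair of pairs (smallest $a$, then smallest $c$), and define $\lambda'$ by replacing these two pairs with one of the two non-crossing alternatives $\{\{a,c\},\{b,d\}\}$ or $\{\{a,d\},\{b,c\}\}$. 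A direct computation shows that such a move corresponds to left-multiplying $\lambda$ by a product of two disjoint transpositions (e.g.\ $\lambda'=(a,d)(b,c)\,\lambda$), so by the triangle inequality from Lemma \ref{lemma:distance-general} one has $\bigl||\gamma^{-1}\lambda'|-|\gamma^{-1}\lambda|\bigr|\leq 2$; combined with the parity statement at the end of that lemma, the defect of $\lambda'$ relative to $\gamma$ is thus $2\delta$, $2\delta-2$, or $2\delta+2$. A short case analysis on the relative positions of $a,b,c,d$ in the cycles of $\gamma^{-1}\lambda$ shows that \emph{at least one} of the two uncrossing choices decreases the defect by exactly $2$, and I select that one as canonical, producing $\lambda'\in\mathfrak{P}^{(2)}_{\delta-1}(2p)$.

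To recover $\lambda$ from $\lambda'$ it suffices to record which two of the $p$ pairs of $\lambda'$ were produced by the move, and which of the (at most two) recrossings to apply. This yields at most $2\binom{p}{2}=p(p-1)\leq p^2$ reversal options at each inductive step, so by the induction hypothesis
\[
\bigl|\mathfrak{P}^{(2)}_{\delta}(2p)\bigr| \leq p^2\,\bigl|\mathfrak{P}^{(2)}_{\delta-1}(2p)\bigr| \leq \mathrm{Cat}_p\,\bigl(p^2\bigr)^{\delta} \leq \mathrm{Cat}_p\left(\frac{p^4}{4}\right)^{\delta},
\]
the last inequality holding because $p^2\leq p^4/4$ for $p\geq 2$.

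The main technical obstacle is the ``existence'' claim in the inductive step: that among the two canonical uncrossing moves at the smallest crossing, at least one strictly decreases the defect by $2$ rather than leaving it unchanged or increasing it. This is where the structure of $\gamma$ as the canonical full cycle is crucially used, and I expect the verification to rest on tracking how the cycle containing $a$ in $\gamma^{-1}\lambda$ splits under multiplication by $(a,d)(b,c)$ versus $(a,b)(c,d)$, of which exactly one operation splits a single cycle into three (hence increases $\sharp(\gamma^{-1}\lambda)$ by $2$, decreasing the defect by $2$). The auxiliary canonicity of the lex-smallest crossing ensures the reduction map is well defined and injective, which is what the counting requires.
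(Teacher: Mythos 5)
Your base case, the parity observation, and the recovery count (at most $2\binom{p}{2}\leq p^2$ preimages per reduction step) are all fine, and if the inductive step worked your argument would even give a sharper bound than $\mathrm{Cat}_p\left(p^4/4\right)^{\delta}$. But the step you yourself flag as the main obstacle --- that at least one of the two uncrossings of the lexicographically smallest crossing decreases the defect by $2$ --- is not merely unverified, it is false. Take $p=6$ and $\lambda=\{\{1,7\},\{2,6\},\{3,11\},\{4,10\},\{5,9\},\{8,12\}\}$. Then $\gamma^{-1}\lambda$ has cycles $(1\,8)$, $(2\,7)$, $(4\,11)$, $(5\,10)$, $(3\,12\,9\,6)$, so $\sharp(\gamma^{-1}\lambda)=5=p+1-2$ and $\lambda\in\mathfrak{P}^{(2)}_{1}(12)$. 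Your lexicographically smallest crossing is $\{1,7\},\{3,11\}$ (i.e.\ $a=1$, $c=3$, $b=7$, $d=11$), and a direct check shows that both uncrossings, $\{\{1,3\},\{7,11\}\}$ and $\{\{1,11\},\{3,7\}\}$, yield pairings $\lambda'$ with $\sharp(\gamma^{-1}\lambda')=3$, i.e.\ defect $4$: both moves \emph{increase} the defect. The structural reason is that $1,7,3,11$ lie in four distinct cycles of $\gamma^{-1}\lambda$; since an uncrossing alters the permutation only on these four points, the cycles disjoint from them are untouched and the cycles meeting them can number at most $4$ before and after, so the total cycle count cannot rise by $2$. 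Worse, in this example \emph{every} crossing pair of chords has its four endpoints spread over at least three cycles of $\gamma^{-1}\lambda$, so no single uncrossing of any crossing reduces the defect: the induction cannot be rescued by a cleverer choice of crossing, and a genus-reducing move must be genuinely non-local.

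For contrast, the paper's proof avoids local surgery entirely: it writes $\lambda$ as a product of $p$ disjoint transpositions and uses Fact \ref{fact:ordering-defects} to reorder them so that, along the partial products $\gamma^{-1}(i_1\,j_1)\cdots(i_q\,j_q)$, all $\delta$ cycle-count decreases occur within the first $2\delta$ factors. It then bounds $\left|\mathfrak{P}^{(2)}_{\delta}(2p)\right|$ by choosing those first $2\delta$ transpositions arbitrarily (contributing the factor $\binom{2p}{2}\cdots$) and forcing the remaining $p-2\delta$ transpositions to form a non-crossing pairing of the leftover points (contributing $\mathrm{Cat}_{p-2\delta}$), which after simplification gives exactly $\mathrm{Cat}_p\left(p^4/4\right)^{\delta}$. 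If you want to retain an inductive structure, the reduction step would have to remove or reorder transpositions globally in this spirit rather than uncross one chord pair at a time.
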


To prove Lemma \ref{lemma:number-pairings-defect} (and later on Lemma \ref{lemma:number-functions,pairings-defect}) we will need the simple observation below. Roughly speaking, it will allow us to assume without loss of generality that, in the decomposition of an element of $\mathfrak{P}^{(2)}_{\delta}(2p)$ into $p$ disjoint transpositions, the ones ``creating'' the $2\delta$ geodesic defects are the $2\delta$ first ones.

\begin{fact} \label{fact:ordering-defects}
Let $\varsigma$ be a permutation on $\{1,\ldots,q\}$ and $\tau_1,\tau_2,\tau_3$ be three disjoint transpositions on $\{1,\ldots,q\}$, for some integer $q\geq 6$. Define $\varsigma^{(1)}=\varsigma\,\tau_1$, $\varsigma^{(2)}=\varsigma\,\tau_1\tau_2$, $\varsigma^{(3)}=\varsigma\,\tau_1\tau_2\tau_3$, and assume that
\begin{equation} \label{eq:not-canonical} \sharp(\varsigma^{(1)})=\sharp(\varsigma)+1,\ \sharp(\varsigma^{(2)})=\sharp(\varsigma)+2,\  \sharp(\varsigma^{(3)})=\sharp(\varsigma)+1. \end{equation}
Then, there exists a permutation $\pi$ of the three indices $\{1,2,3\}$ such that, defining this time $\varsigma_{\pi}^{(1)}=\varsigma\,\tau_{\pi(1)}$, $\varsigma_{\pi}^{(2)}=\varsigma\,\tau_{\pi(1)}\tau_{\pi(2)}$, $\varsigma_{\pi}^{(3)}=\varsigma\,\tau_{\pi(1)}\tau_{\pi(2)}\tau_{\pi(3)}$, we have
\[ \sharp(\varsigma_{\pi}^{(1)})=\sharp(\varsigma)+1,\ \sharp(\varsigma_{\pi}^{(2)})=\sharp(\varsigma),\  \sharp(\varsigma_{\pi}^{(3)})=\sharp(\varsigma)+1. \]
\end{fact}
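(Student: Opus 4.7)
Since $\tau_1,\tau_2,\tau_3$ are pairwise disjoint they commute, hence $\varsigma\tau_{\pi(1)}\tau_{\pi(2)}\tau_{\pi(3)}=\varsigma\tau_1\tau_2\tau_3=\varsigma^{(3)}$ for every $\pi\in\mathfrak{S}(3)$, so the condition $\sharp(\varsigma_\pi^{(3)})=\sharp(\varsigma)+1$ is automatic and the only thing to do is to pick $\pi$ so that the first multiplication \emph{increases} $\sharp$ by one (we will say $\tau_{\pi(1)}$ \emph{splits} at $\varsigma$) and the second one \emph{decreases} it by one (we will say $\tau_{\pi(2)}$ \emph{merges} at $\varsigma\tau_{\pi(1)}$); the third step then necessarily splits, by counting. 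The whole plan will rest on two elementary observations: multiplication by a transposition $(a,b)$ increases $\sharp$ iff $a,b$ lie in a common cycle, so that a split can never reunite elements previously in distinct cycles, and dually a merge can never separate elements previously in a common cycle.

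My first observation is that $\tau_1,\tau_2$ are both splits at $\varsigma$: $\tau_1$ by assumption, and $\tau_2$ because, after the split of the cycle $C_1\ni a_1,b_1$ by $\tau_1$, the endpoints of the disjoint $\tau_2$ must already lie in a common cycle at $\varsigma$ in order to lie in a common cycle at $\varsigma\tau_1$. Let $C_i$ denote the cycle of $\varsigma$ containing $a_i,b_i$. I would then split the analysis according to the type of $\tau_3$ at $\varsigma$. If $\tau_3$ merges at $\varsigma$, I claim $\pi=(1,3,2)$ works: $\tau_1$ splits; since $\tau_1$ only creates splits, $a_3,b_3$ remain in distinct cycles of $\varsigma\tau_1$, so $\tau_3$ merges there; finally the endpoints of $\tau_2$, which were in a common cycle at $\varsigma\tau_1$, stay so after the merge $\tau_3$, so $\tau_2$ splits. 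If instead $\tau_3$ splits at $\varsigma$, i.e.~$a_3,b_3\in C_3$ for some cycle $C_3$ of $\varsigma$, then the fact that $\tau_3$ merges at $\varsigma\tau_1\tau_2$ forces $C_3$ to be broken by $\tau_1$ or $\tau_2$, so $C_3\in\{C_1,C_2\}$. When $C_3=C_1\ne C_2$, the ordering $(1,3,2)$ again works: the split of $C_1$ by $\tau_1$ already separates $a_3$ from $b_3$ (since $\tau_2$ never touches $C_1$), and $\tau_2$'s endpoints, living in $C_2$, remain there all along. The symmetric case $C_3=C_2\ne C_1$ is handled by $\pi=(2,3,1)$.

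The delicate sub-case, which I expect to be the only real obstacle, is $C_1=C_2=C_3=:C$, where the three transpositions act as chords of the same cycle of $\varsigma$. I would handle it geometrically: the assumption that $\tau_2$ splits at $\varsigma\tau_1$ forces the chords $\tau_1,\tau_2$ to be non-crossing in $C$ (two crossing chords would cause the second one to merge the two sub-cycles produced by the first), and the assumption that $\tau_3$ merges at $\varsigma\tau_1\tau_2$ then forces the chord $\tau_3$ to cross at least one of $\tau_1,\tau_2$, otherwise $a_3$ and $b_3$ would remain in a common arc. Say WLOG that $\tau_3$ crosses $\tau_1$. Then I take $\pi=(3,1,2)$: $\tau_3$ splits $C$; since $\tau_1$ crosses $\tau_3$, its endpoints lie in distinct sub-cycles of $\varsigma\tau_3$, so $\tau_1$ merges; finally, rewriting $\varsigma\tau_3\tau_1=\varsigma\tau_1\tau_3$ and using that $\tau_2$ splits at $\varsigma\tau_1$ together with the fact that the subsequent merge $\tau_3$ cannot separate $a_2$ from $b_2$, we conclude that $\tau_2$ splits at $\varsigma\tau_3\tau_1$. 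The symmetric possibility, where only $\tau_2$ crosses $\tau_3$, is handled analogously by $\pi=(3,2,1)$.
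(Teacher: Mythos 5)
Your proof is correct and takes essentially the same route as the paper's: a direct case analysis on where the endpoints of the three transpositions sit relative to the cycles of $\varsigma$, with your sub-cases $C_3=C_1\neq C_2$, $C_3=C_2\neq C_1$ and $C_1=C_2=C_3$ matching the paper's two bullet scenarios ($c_1\neq c_2$ versus $\tau_1,\tau_2$ cutting the same cycle $c$). If anything your organization is slightly tighter --- the reduction to the first two steps via commutativity, the refinement/coarsening monotonicity of splits and merges, and the chord-crossing picture make the exhaustiveness of the check more transparent (in particular your Case A uniformly covers the configurations where an endpoint of $\tau_3$ lies outside $c_1\cup c_2$, which the paper's enumeration leaves implicit).
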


\begin{proof}
Assume that $\varsigma$ and $\tau_1=(i_1\,j_1),\tau_2=(i_2\,j_2),\tau_3=(i_3\,j_3)$ satisfy equation \eqref{eq:not-canonical}. This means that $i_1,j_1$ belong to the same cycle of $\varsigma$, $i_2,j_2$ belong to the same cycle of $\varsigma^{(1)}$, and $i_3,j_3$ belong to two different cycles of $\varsigma^{(2)}$. So let us inspect all the scenarios which may occur.\\
$\bullet$ $c_1 \overset{(i_1\,j_1)}{\rightarrow} c_1^xc_1^y$ and $c_2 \overset{(i_2\,j_2)}{\rightarrow} c_2^xc_2^y$, with $c_1,c_2$ two different cycles of $\varsigma$: If $i_3\in c_1^x$ and $j_3\in c_1^y$ then the re-ordering $1,3,2$ is suitable. If $i_3\in c_2^x$ and $j_3\in c_2^y$ then the re-ordering $2,3,1$ is suitable. If $i_3\in c_1^a$ and $j_3\in c_2^b$, for $a,b\in\{x,y\}$, then both re-orderings $1,3,2$ and $2,3,1$ are suitable. And similarly when the roles of $i_3$ and $j_3$ are exchanged.\\
$\bullet$ $c \overset{(i_1\,j_1)}{\rightarrow} c'c'' \overset{(i_2\,j_2)}{\rightarrow} c^xc^yc^z$, with $c$ a cycle of $\varsigma$, while $c^z=c''$ and $c' \overset{(i_2\,j_2)}{\rightarrow} c^xc^y$: If $i_3\in c^x$ and $j_3\in c^y$ then the re-ordering $2,3,1$ is suitable. If $i_3\in c^a$, for $a\in\{x,y\}$, and $j_3\in c^z$ then the re-ordering $1,3,2$ is suitable. And similarly when the roles of $i_3$ and $j_3$ are exchanged.
\end{proof}

As an immediate consequence of Fact \ref{fact:ordering-defects}, we have the following: Let $\varsigma\in\mathfrak{S}(2p)$ and $\lambda=\tau_1\cdots\tau_p\in\mathfrak{P}^{(2)}(2p)$. Define for each $1\leq q\leq p$, $\varsigma^{(q)}=\varsigma\,\tau_1\cdots\tau_q$, as well as $\varsigma^{(0)}=\varsigma$. Assume next that, for some $0\leq\delta\leq\lfloor (p+\sharp(\varsigma))/2\rfloor$,
\[ \sharp(\varsigma^{(p)})=\sharp(\varsigma)+p-2\delta. \]
Then, there exists a permutation $\pi$ of the $p$ indices $\{1,\ldots,p\}$ such that, defining this time for each $1\leq q\leq p$, $\varsigma_{\pi}^{(q)}=\varsigma\,\tau_{\pi(1)}\cdots\tau_{\pi(q)}$, as well as $\varsigma_{\pi}^{(0)}=\varsigma$, we have
\begin{equation} \label{eq:canonical} \forall\ 1\leq q\leq p,\ \begin{cases} \sharp(\varsigma_{\pi}^{(q)})=\sharp(\varsigma_{\pi}^{(q-1)})-1\ \text{if}\ q\in\{2\epsilon \st 1\leq\epsilon\leq\delta\} \\  \sharp(\varsigma_{\pi}^{(q)})=\sharp(\varsigma_{\pi}^{(q)})+1\ \text{if}\ q\notin\{2\epsilon \st 1\leq\epsilon\leq\delta\} \end{cases}. \end{equation}
Since $\lambda=\tau_1\cdots\tau_p=\tau_{\pi(1)}\cdots\tau_{\pi(p)}$, we see that we may always assume without loss of generality that, given $\varsigma$, the transpositions $\tau_1,\ldots,\tau_p$ in the decomposition of $\lambda$ are ordered so that $\lambda$ is under the canonical form \eqref{eq:canonical}. The behaviour of the function $q\in[p]\mapsto \sharp(\varsigma^{(q)})$ under this hypothesis, depending on the value of $\delta$, is represented in Figure \ref{fig:pairings} (in the special case $p=6$ and $\sharp(\varsigma)=1$).

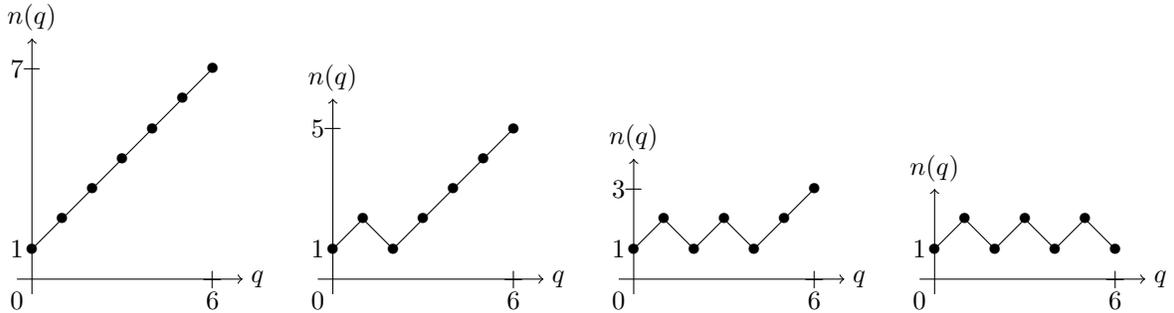
\begin{figure}[h]
\caption{Case $p=6$ and $\sharp(\varsigma)=1$. From left to right: $\delta=0$, $\delta=1$, $\delta=2$ and $\delta=3$.}
\begin{center}
\begin{tikzpicture} [scale=0.4]
\draw [->] (-0.5,0) -- (7,0); \draw [->] (0,-0.5) -- (0,8);
\draw (0,8.7) node {$n(q)$}; \draw (7.5,0) node {$q$};
\draw (-0.5,-0.7) node {$0$}; \draw (-0.5,1) node {$1$}; \draw (0,7) node {$-$}; \draw (-0.5,7) node {$7$}; \draw (6,0) node {$+$}; \draw (6,-0.7) node {$6$};
\draw (0,1) -- (6,7);
\draw (0,1) node {$\bullet$}; \draw (1,2) node {$\bullet$}; \draw (2,3) node {$\bullet$}; \draw (3,4) node {$\bullet$}; \draw (4,5) node {$\bullet$}; \draw (5,6) node {$\bullet$}; \draw (6,7) node {$\bullet$};

\draw [->] (9.5,0) -- (17,0); \draw [->] (10,-0.5) -- (10,6);
\draw (10,6.7) node {$n(q)$}; \draw (17.5,0) node {$q$};
\draw (9.5,-0.7) node {$0$}; \draw (9.5,1) node {$1$}; \draw (10,5) node {$-$}; \draw (9.5,5) node {$5$}; \draw (16,0) node {$+$}; \draw (16,-0.7) node {$6$};
\draw (10,1) -- (11,2); \draw (11,2) -- (12,1); \draw (12,1) -- (16,5);
\draw (10,1) node {$\bullet$}; \draw (11,2) node {$\bullet$}; \draw (12,1) node {$\bullet$}; \draw (13,2) node {$\bullet$}; \draw (14,3) node {$\bullet$}; \draw (15,4) node {$\bullet$}; \draw (16,5) node {$\bullet$};

\draw [->] (19.5,0) -- (27,0); \draw [->] (20,-0.5) -- (20,4);
\draw (20,4.7) node {$n(q)$}; \draw (27.5,0) node {$q$};
\draw (19.5,-0.7) node {$0$}; \draw (19.5,1) node {$1$}; \draw (20,3) node {$-$}; \draw (19.5,3) node {$3$}; \draw (26,0) node {$+$}; \draw (26,-0.7) node {$6$};
\draw (20,1) -- (21,2); \draw (21,2) -- (22,1); \draw (22,1) -- (23,2); \draw (23,2) -- (24,1); \draw (24,1) -- (26,3);
\draw (20,1) node {$\bullet$}; \draw (21,2) node {$\bullet$}; \draw (22,1) node {$\bullet$}; \draw (23,2) node {$\bullet$}; \draw (24,1) node {$\bullet$}; \draw (25,2) node {$\bullet$}; \draw (26,3) node {$\bullet$};

\draw [->] (29.5,0) -- (37,0); \draw [->] (30,-0.5) -- (30,3);
\draw (30,3.7) node {$n(q)$}; \draw (37.5,0) node {$q$};
\draw (29.5,-0.7) node {$0$}; \draw (29.5,1) node {$1$}; \draw (36,0) node {$+$}; \draw (36,-0.7) node {$6$};
\draw (30,1) -- (31,2); \draw (31,2) -- (32,1); \draw (32,1) -- (33,2); \draw (33,2) -- (34,1); \draw (34,1) -- (35,2); \draw (35,2) -- (36,1);
\draw (30,1) node {$\bullet$}; \draw (31,2) node {$\bullet$}; \draw (32,1) node {$\bullet$}; \draw (33,2) node {$\bullet$}; \draw (34,1) node {$\bullet$}; \draw (35,2) node {$\bullet$}; \draw (36,1) node {$\bullet$};
\end{tikzpicture}
\end{center}
\label{fig:pairings}
\end{figure}

With this result in mind, let us now turn to the proof of Lemma \ref{lemma:number-pairings-defect}.

\begin{proof}[Proof of Lemma \ref{lemma:number-pairings-defect}]
Given $\lambda=(i_1\,j_1)\cdots (i_p\,j_p)\in\mathfrak{P}^{(2)}(2p)$, we will always assume from now that the transpositions $(i_1\,j_1),\ldots,(i_p\,j_p)$ in its decomposition are ordered so that $\lambda$ is under the canonical form \eqref{eq:canonical} for $\gamma^{-1}$. This means the following: defining, for each $1\leq q\leq p$, the permutation $\widetilde{\lambda}^{(q)}=\gamma^{-1}(i_1\,j_1)\cdots (i_q\,j_q)$ and the integer $n(q)=\sharp(\widetilde{\lambda}^{(q)})$, as well as $\widetilde{\lambda}^{(0)}=\gamma^{-1}$ and $n(0)=\sharp(\widetilde{\lambda}^{(0)})=1$, we have, for any $0\leq\delta\leq\lfloor p/2\rfloor$,
\[ \lambda\in\mathfrak{P}^{(2)}_{\delta}(2p)\ \Leftrightarrow\ \forall\ 1\leq q\leq p,\ \begin{cases} n(q)=n(q-1)-1\ \text{if}\ q\in\{2\epsilon \st 1\leq\epsilon\leq\delta\} \\  n(q)=n(q-1)+1\ \text{if}\ q\notin\{2\epsilon \st 1\leq\epsilon\leq\delta\} \end{cases}. \]
In particular, $\lambda\in NC^{(2)}(2p)\ \Leftrightarrow\ \forall\ 1\leq q\leq p,\ n(q)=n(q-1)+1$, and we know that there are precisely $\mathrm{Cat}_p$ possibilities to build such pairing $\lambda$.
This implies that, for each $1\leq\delta\leq\lfloor p/2\rfloor$, there are necessarily less than
${2p \choose 2}\cdots{2(p-\delta+1) \choose 2}\times\mathrm{Cat}_{p-2\delta}$ possibilities to build a $\lambda\in\mathfrak{P}^{(2)}_{\delta}(2p)$.
Indeed, for the choice of the $2\delta$ first disjoint transpositions we can use the trivial upper bound that would consist in picking them completely arbitrarily, while the $p-2\delta$ last ones have to be chosen so that they form a non-crossing pairing of the $2p-4\delta$ not yet selected indices. Now, we just have to observe that
\begin{align*} {2p \choose 2}\cdots{2(p-2\delta+1) \choose 2}\times \mathrm{Cat}_{p-2\delta} =\, & \frac{2p\cdots(2p-4\delta+1)}{2^{2\delta}}\times\frac{(2p-4\delta)!}{(p-2\delta)!(p-2\delta+1)!} \\
=\, & \frac{1}{2^{2\delta}}\times\frac{p!(p+1)!}{(p-2\delta)!(p-2\delta+1)!}\times \frac{(2p)!}{p!(p+1)!} \\
\leq\, & \frac{p^{4\delta}}{2^{2\delta}}\times\mathrm{Cat}_p, \end{align*}
which completes the proof.
\end{proof}

\subsection{One needed generalization: bounding the number of pairings of $2p$ elements which are not on the geodesic path between the identity and a product of (few) cycles}

The proof of Proposition \ref{prop:gaussian-infty} crucially relies at some point on a statement of the same kind as the one appearing in Lemma \ref{lemma:number-pairings-defect}. Nevertheless, what we actually need there is a slight generalization of the latter. More specifically, we have to bound the number of pairings which have some defect of lying on the geodesics between the identity and, not only a full cycle, but also a product of (few) cycles. So let us give the following extension of Lemma \ref{lemma:number-pairings-defect}, which is really directed towards the application that we have in mind.

\begin{lemma} \label{lemma:number-functions,pairings-defect}
Let $p\in\N$. For any $f:[2p]\rightarrow[k]$ and any $0\leq\delta\leq \left\lfloor \left(p+|\im(f)|\right)/2\right\rfloor$, define the set of pairings having a defect $2\delta$ of being on the geodesics between $\id$ and $\gamma_f$ (the product of the canonical full cycles on each of the $|\im(f)|$ level sets of $f$) as
\[ \mathfrak{P}^{(2)}_{f,\delta}(2p) = \{\lambda\in\mathfrak{P}^{(2)}(2p) \st \sharp(\gamma_f^{-1}\lambda)=p+|\im(f)|-2\delta\}. \]
Then, for any $0\leq\delta\leq\lfloor p/2\rfloor$, we have the upper bound
\[ \left|\left\{(f,\lambda) \st \lambda\in\mathfrak{P}^{(2)}_{f,\delta}(2p) \right\} \right| \leq k^{p+2\delta}\mathrm{Cat}_p \left(\frac{p^4}{4}\right)^{\delta}. \]
\end{lemma}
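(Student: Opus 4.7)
The plan is to adapt the canonical-ordering argument of Lemma \ref{lemma:number-pairings-defect} to the present setting where the reference permutation is $\gamma_f$ (rather than the single full cycle $\gamma$) and the function $f$ must be counted alongside $\lambda$. For $\delta=0$, Lemma \ref{lemma:distance} gives that $\sharp(\gamma_f^{-1}\lambda)=p+|\im(f)|$ is equivalent to $\lambda$ being a non-crossing pair partition on which $f$ is constant, yielding exactly $\mathrm{Cat}_p\cdot k^p$ admissible pairs --- the base case of the bound.

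For $\delta\geq 1$, I would apply an iteration of Fact \ref{fact:ordering-defects} to $\gamma_f^{-1}$ in order to reorder the transpositions in $\lambda=\tau_1\cdots\tau_p$ canonically, so that the partial cycle counts $\nu(q)=\sharp(\gamma_f^{-1}\tau_1\cdots\tau_q)$ oscillate between $|\im(f)|$ and $|\im(f)|+1$ for $1\leq q\leq 2\delta$ and strictly increase for $q>2\delta$. Writing $\alpha=\tau_1\cdots\tau_{2\delta}$ and $\beta=\tau_{2\delta+1}\cdots\tau_p$, the first part $\alpha$ carries the defects while $\beta$ is a pairing of $[2p]\setminus\supp(\alpha)$ lying on the geodesic from $\id$ to $\gamma_f^{-1}\alpha$. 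The count of such skeletons for fixed $f$ splits as the product of $\binom{2p}{2}\cdots\binom{2(p-2\delta+1)}{2}$ (free choice of $\alpha$) and $\mathrm{Cat}_{p-2\delta}$ (non-crossing completion $\beta$, by Lemma \ref{lemma:distance} together with Lemma \ref{lemma:catalan} applied to the cycles of $\gamma_f^{-1}\alpha$), and is bounded, exactly as in the final computation of Lemma \ref{lemma:number-pairings-defect}, by $\mathrm{Cat}_p\,(p^4/4)^\delta$.

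To sum over $f$ with the correct exponent $k^{p+2\delta}$ rather than the trivial $k^{2p}$, I would argue that, for a fixed skeleton $(\alpha,\beta)$, the geodesic condition on $\beta$ forces --- through the refinement of the cycle partition of $\gamma_f^{-1}\alpha$ by the level-set partition of $f$ outside $\supp(\alpha)$ --- that $f$ be constant on each of the $p-2\delta$ pairs of $\beta$. This leaves at most $k^{p-2\delta}$ choices of $f$ on $[2p]\setminus\supp(\alpha)$ and at most $k^{4\delta}$ free choices on $\supp(\alpha)$, producing the factor $k^{p+2\delta}$ and hence the announced bound $k^{p+2\delta}\,\mathrm{Cat}_p\,(p^4/4)^\delta$.

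The main obstacle is making this last ``$f$ constant on pairs of $\beta$'' step entirely rigorous: a priori, a cycle of $\gamma_f^{-1}\alpha$ may bridge distinct level sets of $f$ by excursioning through $\supp(\alpha)$, so that two pair-partners from $\beta$ sitting in a common cycle need not share their $f$-value. Controlling these excursions --- in a trajectory analysis reminiscent of the proof of Lemma \ref{lemma:bargamma_f-gamma_f} --- and absorbing the resulting slack into the $k^{2\delta}$ overhead already present in the bound, is the core technical content. An alternative route, which I would pursue if the direct analysis proves too delicate, is an induction on $\delta$ that produces each admissible $(f,\lambda)$ from some $(f',\lambda')$ with defect $2(\delta-1)$ by a single ``defect-creating move'' costing at most $k^2p^4/4=(kp^2/2)^2$ choices (two index pairs in $[2p]$ together with a local reassignment of $f$ on the two new defect indices), iterating $\delta$ times to recover the same multiplicative factor $(kp^2/2)^{2\delta}$.
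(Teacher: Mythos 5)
Your plan is, essentially, a faithful reconstruction of the argument the paper itself gives for this lemma: the canonical reordering of the transpositions via Fact \ref{fact:ordering-defects}, the completely free choice of the first $2\delta$ transpositions together with the $k^{4\delta}$ values of $f$ on their $4\delta$-element support, the bound $\mathrm{Cat}_{p-2\delta}$ for the geodesic completion, and the claim that $f$ must be constant on each of the remaining $p-2\delta$ pairs. The ``main obstacle'' you flag at the end is exactly the step the paper also leaves unjustified (it silently replaces $f$ by an auxiliary function $\overline{f}$ whose level sets are the cycle supports of $\gamma_f^{-1}\tau_1\cdots\tau_{2\delta}$, and counts pairs $(\overline{f},\lambda)$ without explaining how $f$ would be recovered from $\overline{f}$). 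So as a reading of the paper's proof your proposal is accurate; the trouble is that the gap you identify cannot be closed by either of your two suggested routes, because the inequality being proved is false.

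The failure already occurs at $\delta=0$. Being geodesic between $\id$ and $\gamma_f$ constrains $\lambda$ only inside each level set of $f$, with respect to the \emph{induced} cyclic orders; it does not force $\lambda$ to be non-crossing for the ambient cycle $\gamma$ (in Lemma \ref{lemma:distance}, only the bijection with $NC(|I_1|)\times\cdots\times NC(|I_L|)$ is correct --- the ``non-crossing and finer than'' description fails when the blocks interleave). Concretely, for $p=k=2$, take $f=(1,2,1,2)$ and $\lambda=(1\,3)(2\,4)$: then $\gamma_f=\lambda$, so $\sharp(\gamma_f^{-1}\lambda)=4=p+|\im(f)|$ and the pair has defect $0$, although $\lambda\notin NC^{(2)}(4)$ and $\lambda$ is not reachable as ``non-crossing with $f$ constant on pairs''. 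A direct enumeration gives $\sum_{r}\binom{4}{2r}\mathrm{Cat}_r\mathrm{Cat}_{2-r}=10$ defect-zero pairs $(f,\lambda)$, against the claimed bound $k^p\mathrm{Cat}_p=8$; for general $p$ and $k=2$ the defect-zero count $\sum_r\binom{2p}{2r}\mathrm{Cat}_r\mathrm{Cat}_{p-r}$ grows like $16^p/\mathrm{poly}(p)$ rather than $k^p\mathrm{Cat}_p\sim 8^p/\mathrm{poly}(p)$. Your worry about mixed pairs for $\delta\geq 1$ is equally well founded and equally unabsorbable: taking $f$ to be any of the $\binom{2p}{p}$ balanced two-colourings and $\lambda$ any of the $\Theta(4^p/\mathrm{poly}(p))$ connected genus-zero (annular non-crossing) pairings relative to $\gamma_f$ yields $\Theta(16^p/\mathrm{poly}(p))$ pairs of defect exactly $2$ on which $f$ is non-constant on up to $p$ pairs of $\lambda$, so no $(kp^2/2)^{2\delta}$-type overhead can compensate. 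These extra configurations are harmless for the actual moment only because they are strongly penalised by the other factor $d^{\sharp(\gamma^{-1}\lambda)}$, which the proof of Proposition \ref{prop:gaussian-infty} throws away when it invokes this lemma; a correct argument therefore has to control the two geodesic defects (relative to $\gamma$ and to $\gamma_f$) jointly, not the $\gamma_f$-defect alone.
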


\begin{proof}
We will follow the same strategy and employ the same notation as in the proof of Lemma \ref{lemma:number-pairings-defect}. Given $\lambda=(i_1\,j_1)\cdots(i_p\,j_p)\in\mathfrak{P}^{(2)}(2p)$ and $f:[2p]\rightarrow[k]$, we will always assume that the transpositions $(i_1\,j_1),\ldots,(i_p\,j_p)$ in the decomposition of $\lambda$ are ordered so that $\lambda$ is under the canonical form \eqref{eq:canonical} for $\gamma_f^{-1}$. This means the following: defining, for each $1\leq q\leq p$, $\widetilde{\lambda}^{(q)}=\gamma_f^{-1}(i_1\,j_1)\cdots (i_q\,j_q)$ and $n(q)=\sharp(\widetilde{\lambda}^{(q)})$, as well as $\widetilde{\lambda}^{(0)}=\gamma_f^{-1}$ and $n(0)=\sharp(\widetilde{\lambda}^{(0)})=|\im(f)|$, we have, for any $0\leq\delta\leq\lfloor (p+|\im(f)|)/2\rfloor$,
\begin{equation} \label{eq:b} \lambda\in\mathfrak{P}^{(2)}_{f,\delta}(2p)\ \Leftrightarrow\ \forall\ 1\leq q\leq p,\ \begin{cases} n(q)=n(q-1)-1\ \text{if}\ q\in\{2\epsilon \st 1\leq\epsilon\leq\delta\} \\  n(q)=n(q-1)+1\ \text{if}\ q\notin\{2\epsilon \st 1\leq\epsilon\leq\delta\} \end{cases}. \end{equation}

In particular, $\lambda\in \mathfrak{P}^{(2)}_{f,0}(2p)\ \Leftrightarrow\ \forall\ 1\leq q\leq p,\ n(q)=n(q-1)+1$, and we know that there are precisely $k^p\,\mathrm{Cat}_p$ possibilities to build a pair $(f,\lambda)$ satisfying this condition (because the latter holds if and only if both constraints $\lambda\in NC^{(2)}(2p)$ and $f\circ\lambda=f$ are fulfilled).

In the case $1\leq\delta\leq\lfloor p/2\rfloor$, notice that after the $2\delta$ first steps, we are left with a permutation $\overline{\varsigma}$ having $|\im(f)|$ cycles, and we have to impose that the partial pairing $\overline{\lambda}=(i_{2\delta+1}\,j_{2\delta+1})\cdots(i_{2p}\,j_{2p})$ lies on the geodesics between $\id$ and $\overline{\varsigma}$. Now, the number of such partial pairings is the same as the number of partial pairings lying on the geodesics between $\id$ and $\gamma_{\overline{f}}$, for any function $\overline{f}:[2p]\rightarrow[k]$ whose level sets are the supports of the cycles of $\overline{\varsigma}$. Hence, to build a pair $(\overline{f},\lambda)$ meeting our requirements, we have at most $k^{4\delta}{2p \choose 2}\cdots{2(p-\delta+1) \choose 2} \times k^{p-2\delta}\,\mathrm{Cat}_{p-2\delta}$ possibilities. Indeed, for the $2\delta$ first disjoint transpositions we can use the trivial upper bound that would consist in picking them, as well as the values of $\overline{f}$ on them, completely arbitrarily, while for the $p-2\delta$ last ones we have to impose that they are non-crossing and that $\overline{f}$ takes only one value on a given transposition. Now, we know from the proof of Lemma \ref{lemma:number-pairings-defect} that ${2p \choose 2}\cdots{2(p-\delta+1) \choose 2}\mathrm{Cat}_{p-2\delta}\leq(p^4/4)^{\delta}\mathrm{Cat}_p$. So we get as announced that there are less than $k^{p+2\delta}\,\mathrm{Cat}_p(p^4/4)^{\delta}$ pairs $(f,\lambda)$ satisfying condition \eqref{eq:b}.
\end{proof}

\subsection{One needed adaptation: bounding the number of permutations of $p$ elements which are not on the geodesic path between the identity and a product of (few) cycles}

The proof of Proposition \ref{prop:wishart-infty} requires a statement analogous to the one appearing in Lemma \ref{lemma:number-functions,pairings-defect}, but for permutations instead of pairings. In order to derive it, we need first to explicit a bit how an element of $\mathfrak{S}(p)$ can be put in one-to-one correspondence with an element of $\mathfrak{P}^{(2)}(2p)$ whose pairs are all composed of one even integer and one odd integer.

To a full cycle $c=(i_l\,\ldots\,i_1)$ on $\{1,\ldots,l\}$ we associate the pairing $\lambda_c=(2i_1\,2i_2-1)\cdots(2i_l\,2i_1-1)$ on $\{1,\ldots,2l\}$. The reverse operation is obtained by collapsing the two elements $2i$ and $2i-1$ to a single element $i$ for each $1\leq i\leq p$. Then as expected, we associate to a general permutation $\alpha=c_1\cdots c_m\in\mathfrak{S}(p)$ the pairing $\lambda_{\alpha}=\lambda_{c_1}\cdots\lambda_{c_m}\in\mathfrak{P}^{(2)}(2p)$.

Observe that, denoting by $\gamma$ the canonical full cycle either on $\{1,\ldots,p\}$ or on $\{1,\ldots,2p\}$, we have
\begin{equation} \label{eq:permutation-pairing} \forall\ \alpha\in\mathfrak{S}(p),\ \sharp(\alpha) + \sharp(\gamma^{-1}\alpha) = \sharp(\gamma^{-1}\lambda_{\alpha}). \end{equation}
Indeed, the cycles of $\gamma^{-1}\lambda_{\alpha}$ are precisely cycles of the form $(2i_i\,\ldots\,2i_l)$ for $(i_l\,\ldots\,i_1)$ a cycle of $\alpha$ (supported on even integers) and of the form $(2i_{l'}-1\,\ldots\,2i_1-1)$ for $(i_{l'}\,\ldots\,i_1)$ a cycle of $\gamma^{-1}\alpha$ (supported on odd integers). So what equation \eqref{eq:permutation-pairing} shows is that the elements of $\mathfrak{S}(p)$ having a given geodesic defect are in bijection with the elements of $\mathfrak{P}^{(2)}(2p)$ with even-odd pairs only and having the same geodesic defect (between $\id$ and $\gamma$ in both cases). In particular, we recover the well-known bijection between $NC(p)$ and $NC^{(2)}(2p)$ (because a non-crossing pairing is necessarily composed of even-odd pairs only).

Next, for any function $g$, either from $[p]$ to $[k]$ or from $[2p]$ to $[k]$, we will denote by $\gamma_g$ the permutation, either on $\{1,\ldots,p\}$ or on $\{1,\ldots,2p\}$, which is the product of the canonical full cycles on the level sets of $g$. For any function $f:[p]\rightarrow[k]$, we define the function $\widetilde{f}:[2p]\rightarrow[k]$ by $\widetilde{f}(2i)=\widetilde{f}(2i-1)=f(i)$ for each $1\leq i\leq p$. It is then easy to see that we have more generally
\[ \forall\ f:[p]\rightarrow[k],\ \forall\ \alpha\in\mathfrak{S}(p),\ \sharp(\alpha) + \sharp(\gamma_f^{-1}\alpha) = \sharp(\gamma_{\widetilde{f}}^{-1}\lambda_{\alpha}). \]

This simple observation will allow us to derive, as a slight adaptation of Lemma \ref{lemma:number-functions,pairings-defect}, a corresponding estimate for permutations instead of pairings.

\begin{lemma} \label{lemma:number-functions,permutations-defect}
Let $p\in\N$.
For any $f:[p]\rightarrow[k]$, any $0\leq\delta\leq \left\lfloor \left(p+|\im(f)|\right)/2\right\rfloor$, and any $1\leq m\leq p-2\delta$, define the set of permutations which are composed of $m$ disjoint cycles and which have a defect $2\delta$ of being on the geodesics between $\id$ and $\gamma_f$ (the product of the canonical full cycles on each of the $|\im(f)|$ level sets of $f$) as
\[ \mathfrak{S}_{f,\delta,m}(p) = \{\alpha\in\mathfrak{S}(p) \st \sharp(\alpha)=m\ \ \text{and}\ \ \sharp(\gamma_f^{-1}\alpha)+\sharp(\alpha)=p+|\im(f)|-2\delta\}. \]
Then, for any $0\leq\delta\leq\lfloor p/2\rfloor$ and any $1\leq m\leq p-2\delta$, we have the upper bound
\[ \left|\big\{(f,\alpha) \st \alpha\in\mathfrak{S}_{f,\delta,m}(p) \big\} \right| \leq \left(4k^4p^4\right)^{\delta}\sum_{\epsilon=0}^{2\delta}k^{m-\epsilon}\mathrm{Nar}_p^{m-\epsilon}. \]
\end{lemma}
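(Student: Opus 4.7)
The strategy is to mimic the proof of Lemma \ref{lemma:number-functions,pairings-defect}, while additionally keeping track of the parameter $m=\sharp(\alpha)$. I would first dispose of the base case $\delta=0$ directly: by Lemma \ref{lemma:distance}, $\alpha\in\mathfrak{S}_{f,0,m}(p)$ iff $\alpha\in NC(p)$ has $m$ cycles and $f\circ\alpha=f$. This gives $\mathrm{Nar}_p^m$ choices for $\alpha$ and $k^m$ choices for $f$ (constant on the $m$ cycles of $\alpha$), which is exactly the $\epsilon=0$ term of the target sum.

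For $\delta\geq 1$, I would pass through the bijection $\alpha\leftrightarrow\lambda_\alpha$ and $f\leftrightarrow\widetilde f$ recalled in the paragraph preceding the lemma, using the identity $\sharp(\gamma_{\widetilde f}^{-1}\lambda_\alpha)=\sharp(\alpha)+\sharp(\gamma_f^{-1}\alpha)$. Under this correspondence, $\alpha\in\mathfrak{S}_{f,\delta,m}(p)$ translates into $\lambda_\alpha\in\mathfrak{P}^{(2)}_{\widetilde f,\delta}(2p)$, subject to the extra constraints that $\lambda_\alpha$ is composed of only even--odd pairs, that $\widetilde f$ takes the same value on $\{2i-1,2i\}$ for every $i\in[p]$, and that $\gamma^{-1}\lambda_\alpha$ has exactly $m$ cycles supported on the even indices. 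I would then run the canonical form argument from the proof of Lemma \ref{lemma:number-functions,pairings-defect}: order the $p$ transpositions composing $\lambda_\alpha$ so that the first $2\delta$ of them realize the $\delta$ geodesic defects (alternately $+1$ and $-1$ cycle-count steps of the running product $\gamma_{\widetilde f}^{-1}\tau_1\cdots\tau_q$) and the last $p-2\delta$ form a non-crossing completion on the remaining $2p-4\delta$ indices.

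The count would then factorise into two independent pieces. The non-crossing completion, under the inverse bijection and combined with the even--odd and $\widetilde f$-constancy constraints, corresponds to a non-crossing partition $\alpha_0\in NC(p)$ with $\sharp(\alpha_0)=m-\epsilon$ cycles together with a function $f$ satisfying $f\circ\alpha_0=f$. Here the parameter $\epsilon$ records the net cycle-count change produced by the $2\delta$ defect transpositions and, after iterating Fact \ref{fact:ordering-defects} to push the ``minus'' steps as early as possible, may be taken in $\{0,1,\ldots,2\delta\}$. This block contributes $\mathrm{Nar}_p^{m-\epsilon}\cdot k^{m-\epsilon}$. The defect block contributes at most $p^2$ choices per transposition for the underlying even--odd pair of $[p]$-indices, at most $k^{4\delta}$ choices for the values of $\widetilde f$ on the (at most $4\delta$) newly touched $[p]$-indices, and a further $\delta$-bounded combinatorial factor for the reorderings, which together lie within $(4k^4p^4)^\delta$. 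Summing over $\epsilon$ produces the announced bound.

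The main obstacle is the third step: careful bookkeeping is needed in order to identify the non-crossing block (which \emph{a priori} lives on only $2p-4\delta$ of the $2p$ indices) with a genuine element of $NC(p)$ of at most $m$ cycles, and to make sure that the crude overcount of the defect block really stays within the stated budget $(4k^4p^4)^\delta$ uniformly in $f$ and $m$. Both are in the same spirit as the estimates already carried out in the proof of Lemma \ref{lemma:number-functions,pairings-defect}, but the constraint $\sharp(\alpha)=m$ is what makes the fine structure non-trivial to track.
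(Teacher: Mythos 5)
Your proposal is correct and follows essentially the same route as the paper: the reduction via $\lambda_\alpha$ and $\widetilde f$ using $\sharp(\gamma_{\widetilde f}^{-1}\lambda_\alpha)=\sharp(\alpha)+\sharp(\gamma_f^{-1}\alpha)$, the canonical-form reordering of the $2\delta$ defect transpositions, and the split into a crude $(4k^4p^4)^{\delta}$ bound for the defect block times the sum $\sum_{\epsilon=0}^{2\delta}k^{m-\epsilon}\mathrm{Nar}_p^{m-\epsilon}$ for the non-crossing completion. The bookkeeping issue you flag is resolved in the paper exactly as you anticipate, via the monotonicity $\mathrm{Nar}_{p-2\delta}^{m-\epsilon}\leq\mathrm{Nar}_p^{m-\epsilon}$.
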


\begin{proof}
We just observed that, for any $0\leq\delta\leq\lfloor p/2\rfloor$ and $1\leq m\leq p-2\delta$, the following equivalence holds
\begin{equation} \label{eq:c} \alpha\in\mathfrak{S}_{f,\delta,m}(p)\ \Leftrightarrow\ \sharp(\alpha)=m\ \ \text{and}\ \ \lambda_{\alpha}\in\mathfrak{P}^{(2)}_{\widetilde{f},\delta}, \end{equation}
where $\mathfrak{P}^{(2)}_{\widetilde{f},\delta}$ denotes the set of pairings having a defect $2\delta$ of lying on the geodesics between $\id$ and $\gamma_{\widetilde{f}}$, as defined in Lemma \ref{lemma:number-functions,pairings-defect}.

In particular, $\alpha\in\mathfrak{S}_{f,0,m}(p)\ \Leftrightarrow\ \sharp(\alpha)=m\ \ \text{and}\ \ \lambda_{\alpha}\in\mathfrak{P}^{(2)}_{\widetilde{f},0}$, and we know that there are precisely $k^m\mathrm{Nar}_p^m$ possibilities to build a pair $(f,\alpha)$ satisfying this condition (because the latter holds if and only if the three constraints $\sharp(\alpha)=m$, $\alpha\in NC(p)$ and $f\circ\alpha=f$ are fulfilled).

For the case $1\leq\delta\leq\lfloor p/2\rfloor$, we will mimic the proof of Lemma \ref{lemma:number-functions,pairings-defect}. So let $(f,\alpha)$ be such that $\alpha\in\mathfrak{S}_{f,\delta,m}(p)$ and assume without loss of generality that the transpositions $(i_1\,j_1),\ldots,(i_p\,j_p)$ in $\lambda_{\alpha}$ are ordered so that $\lambda_{\alpha}$ is under the canonical form \eqref{eq:canonical} for $\gamma_{\widetilde{f}}$. This means that the partial pairing $(i_{2\delta+1}\,j_{2\delta+1})\cdots(i_p\,j_p)$ is on the geodesics between $\id$ and some $\overline{\varsigma}$ with $|\im(f)|$ cycles, and the number of such partial pairings is the same as the number of partial pairings being on the geodesics between $\id$ and some $\gamma_{\overline{f}}$ with $|\im(\overline{f})|=|\im(f)|$. Hence, to count how many ways there are of constructing what happens on $\{i_1,j_1,\ldots,i_{2\delta},j_{2\delta}\}$, we have the trivial upper bound that would arise if picking the $2\delta$ first transpositions in $\lambda_{\alpha}$, as well as the values of $\overline{f}$ on them, completely arbitrarily. This yields a number of possibilities of at most $k^{4\delta}{2p \choose 2}\cdots{2(p-\delta+1) \choose 2}$. While on $\{i_{2\delta+1},j_{2\delta+1},\ldots,i_{2p},j_{2p}\}$, we have to impose that the $p-2\delta$ last transpositions in $\lambda_{\alpha}$ are non-crossing, and that, when collapsed into a permutation of $p-2\delta$ elements, the latter has between $m-2\delta$ and $m$ cycles and the function $\overline{f}$ takes only one value on each of them. This leaves us with a number of possibilities of at most $\sum_{\epsilon=0}^{2\delta}k^{m-\epsilon}\mathrm{Nar}_{p-2\delta}^{m-\epsilon}$. Putting everything together, we see that the number of pairs $(f,\alpha)$ satisfying condition \eqref{eq:c} is less than
\[ k^{4\delta}{2p \choose 2}\cdots{2(p-\delta+1) \choose 2}\sum_{\epsilon=0}^{2\delta}k^{m-\epsilon}\mathrm{Nar}_{p-2\delta}^{m-\epsilon} \leq k^{4\delta}\left(2p^2\right)^{2\delta}\sum_{\epsilon=0}^{2\delta}k^{m-\epsilon}\mathrm{Nar}_p^{m-\epsilon}, \]
which is exactly what we wanted to show.
\end{proof}

\begin{remark}
The upper bound we established in Lemma \ref{lemma:number-functions,permutations-defect} is probably far from optimal (e.g.~it is likely that the exponent $4\delta$ in the polynomial pre-factor in $k$ and $p$ can be improved). But this does not really matter for our specific goal. Nonetheless, in the special case of non-geodesic permutations between $\id$ and $\gamma$ on $\{1,\ldots,p\}$, it is in fact quite easy to obtain an upper bound which scales as $p^{3\delta}$ for the ratio between the number of $2\delta$ non-geodesic permutations with a given number of cycles and the number of geodesic permutations with the same number of cycles. We present the result in Lemma \ref{lemma:number-permutations-defect} below, the problem being that the proof method does not seem to generalize so straightforwardly to the case that we truly need, that is the one of non-geodesic permutations between $\id$ and $\gamma_f$.

Note also that very similar looking upper bounds had previously been derived regarding the cardinality of the set $\mathfrak{S}_{\delta}(p) = \{\alpha\in\mathfrak{S}(p) \st \sharp(\gamma^{-1}\alpha)+\sharp(\alpha)=p+1-2\delta\}$, which is the union of the sets $\mathfrak{S}_{\delta,m}(p)$ defined in Lemma \ref{lemma:number-permutations-defect}, for $1\leq m\leq p$. In particular, it was established in \cite{Montanaro}, Lemma 12, that for any $0\leq\delta\leq\lfloor p/2\rfloor$,
\[ \big|\mathfrak{S}_{\delta}(p)\big| \leq \big|\mathfrak{S}_{0}(p)\big|\,p^{3\delta} = \mathrm{Cat}_p \,p^{3\delta}.\]
However, this is definitely even less enough for our purpose: the latter really requires an upper bound on the number of permutations which have a given defect and a given number of cycles in terms of the number of permutations which have no defect and the same (or a related) number of cycles.

On the other hand, one may have hoped for a stronger result than these simply counting ones. For instance something like
\[ d(\id,\alpha)+d(\alpha,\gamma)=d(\id,\gamma)+2\delta\ \Rightarrow\ \exists\ \alpha' \st d(\alpha,\alpha')=2\delta'\ \ \text{and}\ \ d(\id,\alpha')+d(\alpha',\gamma)=d(\id,\gamma), \]
with $\delta'\leq \theta\delta$ and with the mapping $\phi:\alpha\mapsto\alpha'$ satisfying $\left|\phi^{-1}(\alpha')\right|\leq p^{\kappa\delta}$, for some coefficients $\theta,\kappa$. However, determining whether this kind of statement holds or not seems to remain an open question.
\end{remark}

\begin{lemma} \label{lemma:number-permutations-defect}
Let $p\in\N$ and denote by $\gamma$ the canonical full cycle on $\{1,\ldots,p\}$. For any $0\leq\delta\leq\lfloor p/2\rfloor$ and $1\leq m\leq p-2\delta$, define the set of permutations which are composed of $m$ disjoint cycles and which are $2\delta$-away from the geodesics between $\id$ and $\gamma$ as
\[ \mathfrak{S}_{\delta,m}(p) = \{\alpha\in\mathfrak{S}(p) \st \sharp(\alpha)=m\ \ \text{and}\ \ \sharp(\gamma^{-1}\alpha)+\sharp(\alpha)=p+1-2\delta\}. \]
Then, the cardinality of $\mathfrak{S}_{\delta,m}(p)$ is upper bounded in terms of the cardinality of $\mathfrak{S}_{0,m}(p)$ as
\[ \big|\mathfrak{S}_{\delta,m}(p)\big| \leq \big|\mathfrak{S}_{0,m}(p)\big|\left(\frac{p^3}{2}\right)^{\delta}. \]
\end{lemma}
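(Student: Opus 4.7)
The plan is to proceed by induction on $\delta$. The base case $\delta = 0$ is a tautology. For the inductive step, I would construct a map
\[
\phi\colon \mathfrak{S}_{\delta,m}(p) \longrightarrow \mathfrak{S}_{\delta-1,m}(p)
\]
whose fibers all have cardinality at most $p^3/2$; iterating then gives the claim $|\mathfrak{S}_{\delta,m}(p)| \leq (p^3/2)^\delta |\mathfrak{S}_{0,m}(p)|$.

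The map $\phi$ would be defined by $\phi(\alpha) = \alpha\,\sigma(\alpha)^{-1}$, where $\sigma(\alpha) = (a\,b\,c)$ is a canonically chosen $3$-cycle satisfying two requirements: exactly two of the elements $a, b, c$ should lie in the same cycle of $\alpha$, so that $\sharp(\phi(\alpha)) = \sharp(\alpha) = m$; and $a, b, c$ should all lie in the same cycle of $\gamma^{-1}\alpha$ in a cyclic arrangement making $\sigma(\alpha)^{-1}$ split that cycle into three pieces, so that $\sharp(\gamma^{-1}\phi(\alpha)) = \sharp(\gamma^{-1}\alpha) + 2$ and the geodesic defect drops by exactly one. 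Canonicity can be enforced by e.g.\ selecting the lexicographically smallest admissible triple. Since the total number of $3$-cycles in $\mathfrak{S}(p)$ equals $p(p-1)(p-2)/3 \leq p^3/2$, and $\alpha$ is uniquely recovered from $\phi(\alpha)$ and $\sigma(\alpha)$, the advertised fiber bound follows.

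The main obstacle is proving the existence of such a $3$-cycle for every $\alpha \in \mathfrak{S}_{\delta,m}(p)$ with $\delta \geq 1$. To this end, I would pass to the even-odd pairing $\lambda_\alpha \in \mathfrak{P}^{(2)}_\delta(2p)$ associated to $\alpha$ through the correspondence recalled just before Lemma~\ref{lemma:number-functions,permutations-defect}, apply the canonical form argument of Lemma~\ref{lemma:number-pairings-defect} to $\lambda_\alpha$ to isolate a ``first handle'' consisting of two disjoint even-odd transpositions $\tau_1,\tau_2$, and then translate this handle back through the collapsing $\{2i-1, 2i\}\mapsto i$. The even-odd structure of $\lambda_\alpha$ is precisely what forces the four supports of $\tau_1, \tau_2$ to project onto only three elements on the $p$-side, so that the handle descends to a $3$-cycle (rather than a double transposition) on $\{1, \ldots, p\}$; this is the mechanism by which the $p^4$ polynomial factor from Lemma~\ref{lemma:number-pairings-defect} is improved to $p^3$. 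Verifying that the resulting $3$-cycle satisfies the two combinatorial conditions on cycles of $\alpha$ and $\gamma^{-1}\alpha$, and in particular preserves the cycle count $m$, reduces to a finite case analysis based on the canonical form structure together with the split/merge pattern enforced there.
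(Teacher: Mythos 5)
There is a genuine gap: the defect-reducing $3$-cycle on which your map $\phi$ relies need not exist. To send $\alpha\in\mathfrak{S}_{\delta,m}(p)$ into $\mathfrak{S}_{\delta-1,m}(p)$ via $\alpha\mapsto\alpha\sigma(\alpha)^{-1}$ you need $\sharp\big(\gamma^{-1}\alpha\sigma(\alpha)^{-1}\big)=\sharp(\gamma^{-1}\alpha)+2$, and writing the inverse $3$-cycle as a product of two transpositions sharing a point shows that this forces all three of $a,b,c$ to lie in a \emph{single} cycle of $\gamma^{-1}\alpha$, necessarily of length at least $3$. Such a cycle can fail to exist: take $p=4$, $\gamma=(4\,3\,2\,1)$ and $\alpha=(1\,2\,3\,4)$; then $\sharp(\alpha)=1$ and $\gamma^{-1}\alpha=(1\,3)(2\,4)$, so $\alpha\in\mathfrak{S}_{1,1}(4)$, yet $\gamma^{-1}\alpha$ has only $2$-cycles and no $3$-cycle can raise its cycle count by $2$ --- your $\phi$ is undefined at this $\alpha$. (One can escape from this $\alpha$ using a product of two disjoint transpositions, but there are of order $p^4$ of those, which ruins the $p^3$ fiber bound.) The mechanism you invoke to justify existence is also off: re-pairing the two blocks $\tau_1,\tau_2$ of the canonical ``handle'' of $\lambda_\alpha$ touches four points of $[2p]$ and corresponds to multiplying $\alpha$ by a single transposition (which changes $\sharp(\alpha)$ by $\pm1$ and so cannot preserve $m$ while reducing the even defect), not by a $3$-cycle; a $3$-cycle corresponds to re-pairing three blocks, i.e.~six points, and two generic even-odd pairs of $\lambda_\alpha$ project to four, not three, elements of $[p]$. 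Finally, the ``exactly two of $a,b,c$ in the same cycle of $\alpha$'' requirement is vacuously unsatisfiable whenever $m=1$.

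Note that the remark preceding this lemma in the paper explicitly flags the existence of a defect-reducing map with polynomially bounded fibers as an apparently open question, and the paper proves the bound by an entirely different route: it plugs the exact Goupil--Schaeffer enumeration of $\big|\mathfrak{S}_{\delta,m}(p)\big|$ (Theorems 4.1 and 4.2 of \cite{GS}) into a direct, term-by-term estimate of the ratio $\big|\mathfrak{S}_{\delta,m}(p)\big|/\big|\mathfrak{S}_{0,m}(p)\big|$. If you want to pursue an injective argument, the ``finite case analysis'' you defer to is where all the difficulty lives, and it must in particular cope with permutations $\alpha$ for which $\gamma^{-1}\alpha$ has no cycle of length $\geq 3$.
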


\begin{proof} Let $p\in\N$ and $1\leq m\leq p$. We know from \cite{GS}, Theorems 4.1 and 4.2, that there exist polynomials $P_{q}$ of degree $q$, for $0\leq q\leq \lfloor p/2\rfloor$, such that for any $0\leq\delta\leq \lfloor(p-m)/2\rfloor$,
\begin{equation} \label{eq:S_delta,m} \big|\mathfrak{S}_{\delta,m}(p)\big| = \frac{p!}{2^{2\delta}(2\delta)!} {p+1-2\delta \choose m} \sum_{\epsilon=0}^{\delta} {p-1 \choose m-1+2\epsilon}P_{\epsilon}(m)P_{\delta-\epsilon}(p+1-m-2\delta) .\end{equation}
What is more, one can check from the explicit expression provided there for the polynomials $P_q$ that, for any $x\geq 0$, $P_q(x)\leq (2x)^q$. As a particular instance of equation \eqref{eq:S_delta,m}, we have
\[ \big|\mathfrak{S}_{0,m}(p)\big| = p!\, {p+1 \choose m}{p-1 \choose m-1}. \]
And as a consequence, we get by a brutal upper bounding that, for any $1\leq\delta\leq \lfloor(p-m)/2\rfloor$,
\begin{align*}
\frac{\big|\mathfrak{S}_{\delta,m}(p)\big|}{\big|\mathfrak{S}_{0,m}(p)\big|} = & \,\frac{1}{2^{2\delta}(2\delta)!}\prod_{i=0}^{m-1}\frac{p+1-2\delta-i}{p+1-i} \sum_{\epsilon=0}^{\delta}\prod_{j=0}{2\epsilon-1}\frac{p-m-j}{m+j} P_{\epsilon}(m)P_{\delta-\epsilon}(p+1-m-2\delta) \\
\leq  & \,\frac{1}{2^{2\delta}(2\delta)!} \sum_{\epsilon=0}^{\delta}\left(\frac{p-m}{m}\right)^{2\epsilon}\, 2^{\epsilon}\,m^{\epsilon}\, 2^{\delta-\epsilon}\,(p+1-m-2\delta)^{\delta-\epsilon} \\
\leq  & \,\frac{1}{2^{2\delta}(2\delta)!} \times (\delta+1)\,2^{\delta}\,p^{3\delta}\\
\leq & \,\left(\frac{p^3}{2}\right)^{\delta}.
\end{align*}
And this is precisely the claimed upper bound.
\end{proof}

\begin{remark} There is a close link between the problem we are concerned with and the one of finding tractable expressions for the so-called \textit{connection coefficients} of the symmetric group (the reader is referred e.g.~to \cite{GJ} for more on that topic). Closed formulas are actually known for the connection coefficients of $\mathfrak{S}(p)$, involving the characters of its irreducible representations. But unfortunately, they are not really handleable in there full generality. And it seems it is only in some specific cases that more manageable forms can been obtained (i.e.~in the first place as a sum of positive terms, so that one can see more easily what its order of magnitude is). The two situations which are well-understood are, on the one hand when the function $f$ is constant (which corresponds to the case where $\gamma_f$ is the canonical full cycle, and hence has a particularly simple cycle type, that is treated e.g.~in \cite{GS}), and on the other hand when the defect $2\delta$ is $0$ (which corresponds to the case of so-called \textit{top connection coefficients}).
\end{remark}

\section{Extra remarks on the convergence of the studied random matrix ensembles}
\label{appendix:convergence}

For any Hermitian $M$ on $\C^n$, we shall denote by $\lambda_1(M),\ldots,\lambda_n(M)\in\R$ its eigenvalues, and by $N_M$ its eigenvalue distribution, i.e.~the probability measure on $\R$ defined by
\[ N_M = \frac{1}{n}\sum_{i=1}^n\delta_{\lambda_i(M)}. \]
In words, for any $I\subset\R$, $N_M(I)$ is the proportion of eigenvalues of $M$ which belong to $I$.

\subsection{``Modified'' Wishart ensemble}

Fix $k\in\N$ and $c>0$. Then, for each $d\in\N$, let $W\sim\mathcal{W}_{d^2,cd^2}$ and define the random positive semidefinite matrix $W_d$ on $(\C^d)^{\otimes k+1}$ by
\begin{equation} \label{eq:W_d} W_d=\frac{1}{d^2}\sum_{j=1}^k\widetilde{W}(j). \end{equation}
Proposition \ref{prop:wishart-p} establishes that when $d\rightarrow+\infty$, the eigenvalue distribution of $W_d$ converges in moments towards a Mar\v{c}enko-Pastur distribution of parameter $ck$. But a stronger result actually holds, namely that there is convergence in probability of $N_{W_d}$ towards $\mu_{MP(ck)}$. What is meant is made precise in Theorem \ref{th:convergenceW} below.

\begin{theorem} \label{th:convergenceW}
For any $I\in\R$ and any $\e>0$,
\[ \underset{d\rightarrow+\infty}{\lim} \P_{W\sim\mathcal{W}_{d^2,cd^2}}\left(\left|N_{W_d}(I)-\mu_{MP(ck)}(I)\right|>\e\right) =0 ,\]
where the matrix $W_d$ is as defined in equation \eqref{eq:W_d}.
\end{theorem}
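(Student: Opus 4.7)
The plan is to upgrade the convergence of moments in expectation, already established in Proposition \ref{prop:wishart-p}, to convergence in probability. It suffices to prove that, for every fixed $p\in\N$, the normalized trace
\[ \frac{1}{d^{k+1}}\tr\big(W_d^{p}\big)=\frac{1}{d^{2p+k+1}}\tr\!\left[\left(\sum_{j=1}^k\widetilde{W}(j)\right)^{\!p}\right] \]
converges in probability to $\mathrm{M}_{MP(ck)}^{(p)}$. Proposition \ref{prop:wishart-p} already handles the expectation, so the missing ingredient is a variance estimate. Combined with Chebyshev's inequality, it yields convergence in probability of every polynomial moment of $N_{W_d}$ to the corresponding moment of $\mu_{MP(ck)}$. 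Since $\mu_{MP(ck)}$ is compactly supported, hence determined by its moments, a standard approximation argument (restricting attention to test functions supported in a fixed compact neighbourhood of $\supp \mu_{MP(ck)}$, which almost surely contains $\supp N_{W_d}$ for $d$ large enough --- itself deducible from higher-moment bounds) then promotes this to the weak convergence in probability $N_{W_d}\Rightarrow\mu_{MP(ck)}$. The portmanteau theorem finally gives $N_{W_d}(I)\to\mu_{MP(ck)}(I)$ in probability for every Borel set $I\subset\R$ whose topological boundary is $\mu_{MP(ck)}$-negligible, which is the correct reading of the stated conclusion.

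The core of the argument is therefore the variance bound
\[ \mathrm{Var}\!\left(\tr\!\left[\left(\sum_{j=1}^k\widetilde{W}(j)\right)^{\!p}\right]\right)=O\!\left(d^{2(2p+k+1)-2}\right), \]
which implies $\mathrm{Var}(\tr(W_d^p)/d^{k+1})=O(1/d^2)$. I would expand $\E\,\tr[\cdots]\cdot\tr[\cdots]$ via the graphical Wick formula of Appendix \ref{appendix:wick-wishart}, now on two disjoint circular traces of $p$ boxes each, indexed by a pair of functions $f_1,f_2:[p]\to[k]$ and by a single Wick pairing $\alpha\in\mathfrak{S}(2p)$. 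The relevant combinatorial data are now: the product $\gamma=\gamma_1\gamma_2$ of canonical cycles on $\{1,\ldots,2p\}$ (one per trace) and the associated $\gamma_{f_1\sqcup f_2}$. Applying Lemma \ref{lemma:distance-general} to $\gamma$ yields $\sharp(\alpha)+\sharp(\gamma^{-1}\alpha)\leq 2p+\sharp(\gamma)=2p+2$, and similarly $\sharp(\alpha)+\sharp(\gamma_{f_1\sqcup f_2}^{-1}\alpha)\leq 2p+|\im(f_1)|+|\im(f_2)|$. Adding these bounds produces a maximum $d$-exponent of $4p+2k+2=2(2p+k+1)$, saturated exactly on configurations where $\alpha$ respects the bipartition $\{1,\ldots,p\}\sqcup\{p+1,\ldots,2p\}$, is non-crossing, and is compatible with $f_1,f_2$ on each half separately. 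These ``disconnected'' contributions reproduce $(\E\,\tr[\cdots])^2$ and cancel in the variance. For any ``connected'' $\alpha$ (some cycle of $\alpha$ straddles the two traces), saturation fails in at least one of the two geodesic inequalities, and by the parity statement in Lemma \ref{lemma:distance-general} the defect is at least $2$; each such contribution therefore loses at least a factor $d^{-2}$. The non-dominating terms are controlled along the lines of the proof of Proposition \ref{prop:wishart-infty}, via Lemma \ref{lemma:number-functions,permutations-defect} applied on the $2p$-element set, to ensure that the combinatorial explosion does not re-compensate the analytic decay.

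The main obstacle is this connected-versus-disconnected bookkeeping. Conceptually it is just ``the leading-order Wishart moments factorize'', but rigorously it requires redoing the full genus analysis of Appendix \ref{appendix:wishart} on the union of the two traces, and in particular checking that a connecting cycle in $\alpha$ forces a \emph{simultaneous} defect in both geodesic inequalities (otherwise the gain would only be $d^{-1}$, not $d^{-2}$). The parity argument at the end of Lemma \ref{lemma:distance-general} is precisely what prevents odd defects from occurring, so the bound $O(d^{-2})$ is indeed the correct order. Apart from this point, the remaining steps --- Chebyshev, moment determinacy of the compactly-supported $\mu_{MP(ck)}$, and portmanteau --- are routine.
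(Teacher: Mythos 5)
Your proposal is correct and follows essentially the same route as the paper: the paper proves Theorem \ref{th:convergenceW} by combining the moment estimate of Proposition \ref{prop:wishart-p} with exactly the two-trace Wick/genus expansion you describe (Proposition \ref{prop:wishart-p-var}), where the geodesic inequalities for $\gamma_1\gamma_2$ and $\gamma_{1\,f_1}\gamma_{2\,f_2}$ force the dominant terms to factorize over the two traces and cancel against $\left(\E\,\tr[\cdots]\right)^2$, the remaining steps being the standard Chebyshev and moment-determinacy arguments from \cite{AGZ}. The only minor over-engineering is your appeal to Lemma \ref{lemma:number-functions,permutations-defect}: since $p$ is fixed here, the sum over $\alpha\in\mathfrak{S}(2p)$ and $(f_1,f_2)$ has a bounded number of terms, so the even-defect parity argument alone already yields the $O(d^{-2})$ decay of the normalized variance without any counting of non-geodesic permutations.
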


Theorem \ref{th:convergenceW} is a direct consequence of the estimate on the $p$-order moments $\mathbf{E}\, \mathrm{Tr}\left[\left(\sum_{j=1}^k\widetilde{W}_{\A\B^k}(j)\right)^p\right]$ from Proposition \ref{prop:wishart-p}, combined with the estimate on the $p$-order variances $\mathbf{Var}\, \mathrm{Tr}\left[\left(\sum_{j=1}^k\widetilde{W}_{\A\B^k}(j)\right)^p\right]$ from Proposition \ref{prop:wishart-p-var} below. The proof, which follows a quite standard procedure, may be found detailed in \cite{AGZ} and sketched in \cite{Aubrun1}.

\begin{proposition}\label{prop:wishart-p-var} Let $p\in\N$. For any constant $c>0$,
\[ \mathbf{Var}_{W_{\A\B}\sim\mathcal{W}_{d^2,cd^2}}\, \mathrm{Tr}\left[\left(\underset{j=1}{\overset{k}{\sum}}\widetilde{W}_{\A\B^k}(j)\right)^p\right] \underset{d\rightarrow+\infty}{=} o\left(d^{2p+k+1}\right). \]
\end{proposition}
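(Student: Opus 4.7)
The plan is to mimic the Wick--formula computation of Proposition \ref{prop:wishart-p}, applied now to the squared trace $\mathrm{Tr}(X^p)\,\mathrm{Tr}(X^p)$, where $X=\sum_j\widetilde{W}_{\A\B^k}(j)$. Graphically, the square of the trace corresponds to two disjoint circular chains of $p$ boxes each, and the Wick formula pairs the $2p$ copies of $G$ with the $2p$ copies of $G^{\dagger}$ freely across both chains, yielding an expansion of the shape
\[
\E_{W_{\A\B}\sim\mathcal{W}_{d^2,s}}\,\mathrm{Tr}(X^p)^2 \;=\; \underset{f:[2p]\rightarrow[k]}{\sum}\,\underset{\alpha\in\mathfrak{S}(2p)}{\sum} d^{\sharp((\gamma')^{-1}\alpha)+\sharp(\gamma_f^{-1}\alpha)+k-|\im(f)|}\,s^{\sharp(\alpha)},
\]
where this time $\gamma'=(p\,\ldots\,1)(2p\,\ldots\,p{+}1)$ is the product of the two canonical full $p$-cycles on the halves $\{1,\ldots,p\}$ and $\{p{+}1,\ldots,2p\}$, and $\gamma_f$ is the product of the canonical full cycles on the level sets of $f$, now viewed as a function on $[2p]$. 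The key structural difference with Proposition \ref{prop:wishart-p-preliminary} is that $\gamma'$ has two cycles instead of one.

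Next, I would split this sum into ``disconnected'' contributions, coming from $\alpha=\alpha_1\sqcup\alpha_2$ preserving the partition $\{1,\ldots,p\}\sqcup\{p{+}1,\ldots,2p\}$ together with $f=f_1\sqcup f_2$ whose image sets on the two halves are disjoint, and the remaining ``connected'' ones. The disconnected contributions exactly reproduce $(\E\,\mathrm{Tr}(X^p))^2$ by Proposition \ref{prop:wishart-p-preliminary}, hence cancel against the second term in the variance; everything else contributes to $\mathbf{Var}\,\mathrm{Tr}(X^p)$.

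The last step is to bound these residual terms. The key observation is that the residuals are exactly those pairs $(f,\alpha)$ for which either $\alpha$ mixes the two halves of $[2p]$, or $f_1,f_2$ share a common value so that some cycle of $\gamma_f$ itself spans the two halves. In the first case, the analogue of Lemma \ref{lemma:distance} for the product $\gamma'$ of two disjoint $p$-cycles forces $\sharp((\gamma')^{-1}\alpha)+\sharp(\alpha)\leq 2p$, strictly two less than the non-crossing disconnected maximum $2p+2$; in the second case, an analogous ``defect of at least one'' is picked up on the $\sharp(\gamma_f^{-1}\alpha)+k-|\im(f)|$ side. Combining this with the counting estimates of Lemma \ref{lemma:number-functions,permutations-defect} and its obvious generalization to the two-cycle canonical object $\gamma'$ shows that each residual term is damped by a factor of at least $d^{-2}$ relative to the leading disconnected order $d^{4p+2k+2}$, which after a polynomial-in-$p$ absorption gives the advertised variance bound.

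The main technical obstacle will be the joint handling of the two defects (relative to $\gamma'$ and to $\gamma_f$): they are not independent, since a permutation $\alpha$ preserving the partition may still have its $\gamma_f$-defect increased whenever a level set of $f$ straddles both halves, and conversely a disjoint $f$ cannot save a badly-mixing $\alpha$. The heart of the argument is therefore to prove a two-cycle analogue of Proposition \ref{prop:geodesic-alpha_f}, which reduces the joint defect count to the single-cycle geodesic machinery of Appendix \ref{appendix:geodesics} and makes the above ``at least $d^{-2}$'' damping quantitative and uniform in the remaining combinatorial parameters.
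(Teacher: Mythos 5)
Your overall strategy --- expand $\E\,\mathrm{Tr}(X^p)^2$ by Wick over two disjoint chains, isolate the terms that factorize into $\big(\E\,\mathrm{Tr}(X^p)\big)^2$, and show the rest is damped by a power of $d$ --- is indeed the paper's strategy, and you correctly identify that the crux is a two-cycle analogue of Proposition \ref{prop:geodesic-alpha_f}. But the version of that analogue you write down is wrong, and the error is fatal rather than cosmetic. The identity-tensor loops must be counted separately within each of the two traces: the correct exponent carries the correction $2k-|\im(f_1)|-|\im(f_2)|$ (each trace has its own family of $k$ copies of $\mathrm{B}$, with $f_1,f_2$ the restrictions of $f$ to the two halves), and the relevant target permutation is $\gamma_{1\,f_1}\gamma_{2\,f_2}$, the product of the canonical full cycles on the level sets of $f_1$ and, separately, on those of $f_2$ --- not $k-|\im(f)|$ and a $\gamma_f$ whose cycles may straddle both halves. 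With your formula the two sides of the intended cancellation do not even have the same order in $d$: combining $\sharp(\alpha)+\sharp((\gamma')^{-1}\alpha)\leq 2p+2$ with $\sharp(\alpha)+\sharp(\gamma_f^{-1}\alpha)\leq 2p+|\im(f)|$ caps your whole expansion at $d^{4p+k+2}$, whereas $\big(\E\,\mathrm{Tr}(X^p)\big)^2\sim\mathrm{const}\cdot d^{4p+2k+2}$, so for $k\geq 1$ your second moment would be negligible compared to the squared first moment and the ``variance'' would come out negative.

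The classification of terms is also off. The product $\big(\E\,\mathrm{Tr}(X^p)\big)^2$ is a sum over \emph{all} pairs $(f_1,f_2)$, with no disjointness condition on their images; for $k=1$ there are no image-disjoint pairs at all, so your ``disconnected'' set would be empty and no cancellation could occur. In the corrected expansion, a split permutation $\alpha=\alpha_1\alpha_2$ with $\alpha_i$ non-crossing and $f_i\circ\alpha_i=f_i$ attains the maximal exponent $4p+2k+2$ \emph{whether or not} $f_1$ and $f_2$ share values; the ``defect of at least one on the $\gamma_f$ side'' you attribute to overlapping images is an artifact of the wrong bookkeeping, and discarding those terms would destroy the factorization. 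Once the formula is fixed, the argument runs as you outline for the $\alpha$-mixing case (a mixing $\alpha$ loses at least $2$ from $\sharp(\alpha)+\sharp((\gamma_1\gamma_2)^{-1}\alpha)\leq 2p+2$), and since $p$ is fixed no counting lemma is needed: the finitely many non-maximal terms are each $O(d^{4p+2k})$, which already gives the claim.
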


\begin{proof}
Let $p\in\N$. We already know that $\left(\mathbf{E}\, \mathrm{Tr}\left[\left(\sum_{j=1}^k\widetilde{W}_{\A\B^k}(j)\right)^p\right]\right)^2 \sim_{d\rightarrow +\infty} \left(\mathrm{M}_{MP(ck)}^{(p)}d^{2p+k+1} \right)^2$ thanks to Proposition \ref{prop:wishart-p}. Consequently, the only thing that remains to be shown in order to establish Proposition \ref{prop:wishart-p-var} is that we also have $\mathbf{E}\,\left( \mathrm{Tr}\left[\left(\sum_{j=1}^k\widetilde{W}_{\A\B^k}(j)\right)^p\right]\right)^2 \sim_{d\rightarrow +\infty} \left(\mathrm{M}_{MP(ck)}^{(p)}d^{2p+k+1} \right)^2$. The combinatorics involved in the proof of the latter estimate is very similar to the one already appearing in the proof of the former. We will therefore skip some of the details here.

To begin with, let us fix a few additional notation. We define $\gamma_1=(p\,\ldots\,1)$ and $\gamma_2=(2p\,\ldots\,p+1)$ as the canonical full cycles on $\{1,\ldots,p\}$ and $\{p+1,\ldots,2p\}$ respectively. Also, for each functions $f_1:\{1,\ldots,p\}\rightarrow[k]$, $f_2:\{p+1,\ldots,2p\}\rightarrow[k]$, we define the function $f_{1,2}:[2p]\rightarrow[k]$ by $f_{1,2}=f_1$ on $\{1,\ldots,p\}$ and $f_{1,2}=f_2$ on $\{p+1,\ldots,2p\}$. Then, by a slight generalization of Proposition \ref{prop:geodesic-alpha_f} we have that, for any $\alpha\in\mathfrak{S}(2p)$,
\[ \sharp((\hat{\gamma}_1\hat{\gamma}_2)^{-1}\hat{\alpha}_{f_{1,2}})=\sharp((\gamma_{1\,f_1}\gamma_{2\,f_2})^{-1}\alpha)+ 2k -|\im(f_1)|-|\im(f_2)|. \]
We can thus derive from the graphical calculus for Wishart matrices (in complete analogy to the way formula \eqref{eq:p-moment} was obtained) that, for any $d\in\N$,
\begin{equation} \label{eq:varp} \mathbf{E}_{W_{\A\B}\sim\mathcal{W}_{d^2,cd^2}} \left(\mathrm{Tr}\left[\left(\sum_{j=1}^k\widetilde{W}_{\A\B^k}(j)\right)^p\right]\right)^2 = \sum_{\substack{f_1:\{1,\ldots,p\}\rightarrow[k]\\f_2:\{p+1,\ldots,2p\}\rightarrow[k]}}\sum_{\alpha\in\mathfrak{S}(2p)} c^{\sharp(\alpha)} d^{n(\alpha,f_1,f_2)}, \end{equation}
where for each $\alpha\in\mathfrak{S}(2p)$ and $f_1:\{1,\ldots,p\}\rightarrow[k]$, $f_2:\{p+1,\ldots,2p\}\rightarrow[k]$,
\[ n(\alpha,f_1,f_2)= 2\sharp(\alpha)+ \sharp((\gamma_1\gamma_2)^{-1}\alpha)+ \sharp((\gamma_{1\,f_1}\gamma_{2\,f_2})^{-1}\alpha)+2k-|\im(f_1)|-|\im(f_2)|. \]
Yet, by Lemma \ref{lemma:cycles-transpositions} and equation \eqref{eq:geodesic''} in Lemma \ref{lemma:distance}, we get: First, for any $\alpha\in\mathfrak{S}(2p)$,
\begin{equation}
\label{eq:alpha-var}
\sharp(\alpha)+\sharp((\gamma_1\gamma_2)^{-1}\alpha)= 4p-\left(|\alpha|+|(\gamma_1\gamma_2)^{-1}\alpha|\right)\leq 4p-|\gamma_1\gamma_2|= 2p+\sharp(\gamma_1\gamma_2) =2p+2,
\end{equation}
with equality if and only if $\alpha=\alpha_1\alpha_2$ where $\alpha_1\in NC(\{1,\ldots,p\})$, $\alpha_2\in NC(\{p+1,\ldots,2p\})$. And second, for any $\alpha\in\mathfrak{S}(2p)$ and any $f_1:\{1,\ldots,p\}\rightarrow[k]$, $f_2:\{p+1,\ldots,2p\}\rightarrow[k]$,
\begin{equation}
\label{eq:alpha_f'-var}
\sharp(\alpha)+\sharp((\gamma_{1\,f_1}\gamma_{2\,f_2})^{-1}\alpha)\leq 2p+\sharp(\gamma_{1\,f_1}\gamma_{2\,f_2}) =2p+|\im(f_1)|+|\im(f_2)|,
\end{equation}
with equality if and only if $\alpha=\alpha_1\alpha_2$ where $\alpha_1\in NC(\{1,\ldots,p\})$ and $f_1\circ\alpha_1=f_1$, $\alpha_2\in NC(\{p+1,\ldots,2p\})$ and $f_2\circ\alpha_2=f_2$. So putting equations \eqref{eq:alpha-var} and \eqref{eq:alpha_f'-var} together, we get in the end that for any $\alpha\in\mathfrak{S}(2p)$ and $f_1:\{1,\ldots,p\}\rightarrow[k]$, $f_2:\{p+1,\ldots,2p\}\rightarrow[k]$,
\[ n(\alpha,f_1,f_2) \leq 4p+2k+2, \]
with equality if and only if $\alpha=\alpha_1\alpha_2$ where $\alpha_1\in NC(\{1,\ldots,p\})$ and $f_1\circ\alpha_1=f_1$, $\alpha_2\in NC(\{p+1,\ldots,2p\})$ and $f_2\circ\alpha_2=f_2$.

We thus get that, asymptotically, the dominant term in formula \eqref{eq:varp} factorizes as
\begin{align*}
\mathbf{E}\, \left(\mathrm{Tr}\left[\left(\sum_{j=1}^k\widetilde{W}_{\A\B^k}(j)\right)^p\right]\right)^2 & \underset{d\rightarrow+\infty}{\sim} d^{4p+2k+2} \sum_{\substack{\alpha_1\in NC(\{1,\ldots,p\})\\ \alpha_2\in NC(\{p+1,\ldots,2p\})}} \sum_{\substack{f_1:\{1,\ldots,p\}\rightarrow[k],\,f_1\circ\alpha_1=f_1 \\f_2:\{p+1,\ldots,2p\}\rightarrow[k],\,f_2\circ\alpha_2=f_2}}  c^{\sharp(\alpha_1\alpha_2)}\\
& \underset{d\rightarrow+\infty}{\sim} \left( d^{2p+k+1}\sum_{\alpha\in NC(p)}\sum_{\underset{f\circ\alpha=f} {f:[p]\rightarrow[k]}}  c^{\sharp(\alpha)} \right)^2,
\end{align*}
where the last equality is simply because $\sharp(\alpha_1\alpha_2)=\sharp(\alpha_1)+\sharp(\alpha_2)$. And hence,
\[ \mathbf{E}_{W_{\A\B}\sim\mathcal{W}_{d^2,cd^2}} \left(\mathrm{Tr}\left[\left(\sum_{j=1}^k\widetilde{W}_{\A\B^k}(j)\right)^p\right]\right)^2 \underset{d\rightarrow+\infty}{\sim} \left(d^{2p+k+1}\sum_{\alpha\in NC(p)}(ck)^{\sharp(\alpha)}\right)^2 = \left(d^{2p+k+1}\mathrm{M}_{MP(ck)}^{(p)}\right)^2, \]
which is exactly what we needed to conclude the proof.
\end{proof}

Let us illustrate the result stated in Theorem \ref{th:convergenceW} in the simplest case of $2$-extendibility and uniformly distributed mixed states. In Figure \ref{fig:wishart}, the spectral distribution of $W_d=\left(W_{\A\B_1}\otimes\Id_{\B_2}+W_{\A\B_2}\otimes\Id_{\B_1}\right)/d^2$, for $W_{\A\B}\sim\mathcal{W}_{d^2,d^2}$, and a Mar\v{c}enko-Pastur distribution of parameter $2$ are plotted together. The empirical eigenvalue histogram is done in dimension $d=12$, from $100$ repetitions.

\begin{figure}[h] \caption{Spectral distribution of $\left(W_{\A\B_1}\otimes\Id_{\B_2}+W_{\A\B_2}\otimes\Id_{\B_1}\right)/d^2$, for $W_{\A\B}\sim\mathcal{W}_{d^2,d^2}$ vs Mar\v{c}enko-Pastur distribution of parameter $2$.}
\label{fig:wishart}
\begin{center}
\includegraphics[width=12cm]{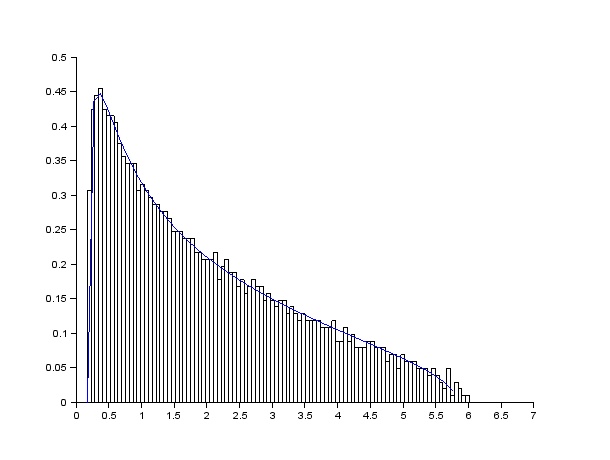}
\end{center}
\end{figure}

\subsection{``Modified'' GUE ensemble}

Fix $k\in\N$. Then, for each $d\in\N$, let $G\sim GUE(d^2)$ and define the random Hermitian matrix $G_d$ on $(\C^d)^{\otimes k+1}$ by
\begin{equation} \label{eq:G_d} G_d=\frac{1}{d}\sum_{j=1}^k\widetilde{G}(j). \end{equation}
In complete analogy to what was explained in the case of Wishart matrices, Proposition \ref{prop:gaussian-p} establishes that when $d\rightarrow+\infty$, the eigenvalue distribution of $G_d$ converges in moments towards a centered semicircular distribution of parameter $k$. But here again, there is in fact convergence in probability of $N_{G_d}$ towards $\mu_{SC(k)}$, which is made precise in Theorem \ref{th:convergenceG} below.

\begin{theorem} \label{th:convergenceG}
For any $I\in\R$ and any $\e>0$,
\[ \underset{d\rightarrow+\infty}{\lim} \P_{G\sim GUE(d^2)}\left(\left|N_{G_d}(I)-\mu_{SC(k)}(I)\right|>\e\right) =0, \]
where the matrix $G_d$ is as defined in equation \eqref{eq:G_d}.
\end{theorem}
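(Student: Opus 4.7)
The strategy is a direct analog of the derivation of Theorem \ref{th:convergenceW}, so the only piece that is genuinely new is a variance estimate on the moments: I claim that, for each fixed $p\in\N$,
\[ \mathbf{Var}_{G_{\A\B}\sim GUE(d^2)}\, \mathrm{Tr}\left[\left(\sum_{j=1}^k\widetilde{G}_{\A\B^k}(j)\right)^{2p}\right] \underset{d\rightarrow+\infty}{=} o\left(d^{4p+2k+2}\right). \]
Combined with Proposition \ref{prop:gaussian-p}, Chebyshev's inequality then yields convergence in probability of each normalized moment of $N_{G_d}$ to the corresponding moment of $\mu_{SC(k)}$. Since $\mu_{SC(k)}$ has compact support and is uniquely determined by its moments, a standard truncation-and-polynomial-approximation argument (see \cite{AGZ}, Chapter 2) upgrades this to convergence in probability of $\int f\,\mathrm{d} N_{G_d}$ to $\int f\,\mathrm{d}\mu_{SC(k)}$ for all bounded continuous $f$, and then, thanks to the absence of atoms in $\mu_{SC(k)}$, to convergence in probability of $N_{G_d}(I)$ to $\mu_{SC(k)}(I)$ for any interval $I\subset\R$.

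To establish the variance estimate, it suffices (since we already know the asymptotics of the squared mean) to prove the matching asymptotic estimate
\[ \mathbf{E}\left(\mathrm{Tr}\left[\left(\sum_{j=1}^k\widetilde{G}_{\A\B^k}(j)\right)^{2p}\right]\right)^2 \underset{d\rightarrow+\infty}{\sim} \left(\mathrm{M}_{SC(k)}^{(2p)}d^{2p+k+1}\right)^2. \]
Applying the Wick formula as in Appendix \ref{appendix:gaussian}, but now on $4p$ Gaussian factors split into a first block of $2p$ indexed by $\{1,\ldots,2p\}$ and a second block of $2p$ indexed by $\{2p+1,\ldots,4p\}$, one obtains
\[ \mathbf{E}\left(\mathrm{Tr}\left[\left(\sum_{j=1}^k\widetilde{G}_{\A\B^k}(j)\right)^{2p}\right]\right)^2 = \sum_{\substack{f_1:\{1,\ldots,2p\}\rightarrow[k]\\ f_2:\{2p+1,\ldots,4p\}\rightarrow[k]}} \sum_{\lambda\in\mathfrak{P}^{(2)}(4p)} d^{\sharp((\gamma_1\gamma_2)^{-1}\lambda)+\sharp((\gamma_{1,f_1}\gamma_{2,f_2})^{-1}\lambda)+2k-|\im(f_1)|-|\im(f_2)|}, \]
where $\gamma_1,\gamma_2$ denote the canonical full cycles on $\{1,\ldots,2p\}$ and $\{2p+1,\ldots,4p\}$ respectively, and $\gamma_{1,f_1},\gamma_{2,f_2}$ the corresponding products of canonical cycles on the level sets of $f_1,f_2$ (the $\sharp((\gamma_{1,f_1}\gamma_{2,f_2})^{-1}\lambda)$ term coming from Proposition \ref{prop:geodesic-alpha_f} applied to the disjoint union of the two level-set partitions).

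Repeating \emph{verbatim} the argument from Proposition \ref{prop:gaussian-p}, using Lemma \ref{lemma:cycles-transpositions} together with equation \eqref{eq:geodesic''} of Lemma \ref{lemma:distance} (now with the product $\gamma_1\gamma_2$ and $\gamma_{1,f_1}\gamma_{2,f_2}$), one finds that the exponent of $d$ in each term is at most $4p+2k+2$, with equality if and only if $\lambda$ decomposes as $\lambda=\lambda_1\lambda_2$ with $\lambda_i\in NC^{(2)}$ supported in the $i$-th block and $f_i\circ\lambda_i=f_i$ for $i=1,2$. In other words, the pair partitions that contribute to the dominant order are precisely those which pair no element of the first block with an element of the second block, and on each block they reproduce the leading contribution already identified in Proposition \ref{prop:gaussian-p}. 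The dominant term therefore factorizes as $(\mathrm{M}_{SC(k)}^{(2p)}d^{2p+k+1})^2$, as desired.

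The potentially delicate point is making sure that all ``crossed'' pair partitions (those having at least one block straddling $\{1,\ldots,2p\}$ and $\{2p+1,\ldots,4p\}$) contribute strictly less than $d^{4p+2k+2}$, and more generally that the non-saturating contributions can be bounded uniformly; but this is handled by the very same geodesic/defect bookkeeping as in the proof of Proposition \ref{prop:gaussian-infty}, relying on Lemma \ref{lemma:number-functions,pairings-defect} to keep the count of non-geodesic pair partitions under control. This is the only combinatorial step where some care is required, but no new idea beyond those already developed in the GUE moments computation and in Proposition \ref{prop:wishart-p-var} is needed.
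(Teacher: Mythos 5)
Your proposal is correct and follows essentially the same route as the paper: Theorem \ref{th:convergenceG} is there deduced from Proposition \ref{prop:gaussian-p} together with a variance estimate proved exactly as in Proposition \ref{prop:wishart-p-var}, i.e.\ by a two-block Wick/genus expansion of $\mathbf{E}\big(\mathrm{Tr}[\cdot]\big)^2$ whose dominant term factorizes over non-straddling non-crossing pairings satisfying $f_i\circ\lambda_i=f_i$, followed by Chebyshev and the standard moment-method upgrade to convergence in probability of $N_{G_d}(I)$. The only (harmless) discrepancy is that your variance bound $o\left(d^{4p+2k+2}\right)$ is the one the factorization argument directly delivers and is all that is needed for Chebyshev, whereas the paper's equation \eqref{eq:gaussian-p-var} records the stronger order $o\left(d^{2p+k+1}\right)$.
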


As already explained in the Wishart case, this follows directly from the moment's estimate in Proposition \ref{prop:gaussian-p}, together with the variance's estimate, for all $p\in\N$,
\begin{equation} \label{eq:gaussian-p-var} \mathbf{Var}_{G_{\A\B}\sim GUE(d^2)}\, \mathrm{Tr}\left[ \left( \underset{j=1}{\overset{k}{\sum}}\widetilde{G}_{\A\B^k}(j) \right)^{2p} \right] \underset{d\rightarrow+\infty}{=} o\left(d^{2p+k+1}\right) .\end{equation}
The proof follows the exact same lines as the one of Proposition \ref{prop:wishart-p-var} and is not repeated here.

Let us illustrate the result stated in Theorem \ref{th:convergenceG} in the simplest case of $2$-extendibility. In Figure \ref{fig:gaussian}, the spectral distribution of $G_d=\left(G_{\A\B_1}\otimes\Id_{\B_2}+G_{\A\B_2}\otimes\Id_{\B_1}\right)/d^2$, for $G_{\A\B}\sim GUE(d^2)$, and a centered semicircular distribution of parameter $2$ are plotted together. The empirical eigenvalue histogram is done in dimension $d=10$, from $100$ repetitions.

\begin{figure}[h] \caption{Spectral distribution of $\left(G_{\A\B_1}\otimes\Id_{\B_2}+G_{\A\B_2}\otimes\Id_{\B_1}\right)/d^2$, for $G_{\A\B}\sim GUE(d^2)$ vs Centered semicircular distribution of parameter $2$.}
\label{fig:gaussian}
\begin{center}
\includegraphics[width=12cm]{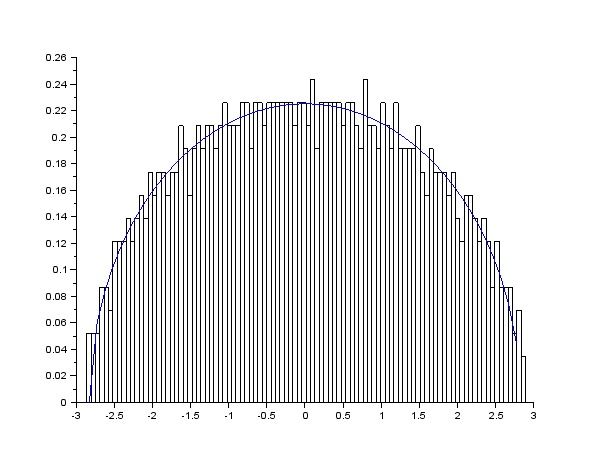}
\end{center}
\end{figure}

\begin{remark}
We may in fact say even more on the convergence of the random matrix sequences $(W_d)_{d\in\N}$ and $(G_d)_{d\in\N}$ defined by equations \eqref{eq:W_d} and \eqref{eq:G_d} respectively. Namely,
\[ N_{W_d}\overset{a.s.}{\underset{d\rightarrow+\infty}{\rightarrow}} \mu_{MP(ck)}\ \ \text{and}\ \  N_{G_d}\overset{a.s.}{\underset{d\rightarrow+\infty}{\rightarrow}} \mu_{SC(k)}.\]
To establish this almost sure convergence result, the only thing that has to be verified is that, for any $p\in\N$, the series of variances
\begin{equation} \label{eq:summable} \sum_{d=1}^{+\infty} \mathbf{Var}\left[\frac{1}{d^{k+1}}\mathrm{Tr}\left(W_d^p\right)\right]\ \text{and} \ \sum_{d=1}^{+\infty} \mathbf{Var}\left[\frac{1}{d^{k+1}}\mathrm{Tr}\left(G_d^{2p}\right)\right] \end{equation}
are summable. Indeed, almost sure convergence will then automatically follow from a standard application of the Chebyshev inequality and the Borel--Cantelli lemma. And condition \eqref{eq:summable} actually holds, as a consequence of the fact that, for any $p\in\N$,
\[ \mathbf{Var}\left[\frac{1}{d^{k+1}}\mathrm{Tr}\left(W_d^p\right)\right]= O\left(d^{-2}\right)\ \ \text{and}\ \ \mathbf{Var}\left[\frac{1}{d^{k+1}}\mathrm{Tr}\left(G_d^{2p}\right)\right]= O\left(d^{-2}\right). \]
\end{remark}

\subsection{Asymptotic freeness of certain Gaussian matrices}

Let us fix a few definitions and notation. Given $n\in\N$ and $[\Omega,P]$ a classical probability space, we define the free probability space $\left[\mathcal{M}_n(L^{\infty}[\Omega,P]),\varphi_n\right]$, where $\mathcal{M}_n(L^{\infty}[\Omega,P])$ is the set of $n\times n$ matrices with entries in $L^{\infty}[\Omega,P]$ and $\varphi_n(\cdot)=\E\tr(\cdot)/n$ is the normalized trace function on $\mathcal{M}_n(L^{\infty}[\Omega,P])$. The two particular examples we shall focus on in the sequel are the ones we have already been extensively dealing with, namely GUE and Wishart matrices.


\begin{lemma} \label{lem:phi-G}
Given two finite-dimensional Hilbert spaces $\A\equiv\C^{d_\A}$, $\B\equiv\C^{d_\B}$, and $G$ a random GUE matrix on $\A\otimes\B$, we define the following random matrices on $\A\otimes\B_1\otimes\B_2$:
\[ \widetilde{G}_{1}=\frac{1}{\sqrt{d_\A d_\B}}\,G_{\A\B_1}\otimes\Id_{\B_2}\ \ \text{and}\ \ \widetilde{G}_{2}=\frac{1}{\sqrt{d_\A d_\B}}\,G_{\A\B_2}\otimes\Id_{\B_1}. \]
Then, for any $p\in\N$ and any function $f:[2p]\rightarrow[2]$,
\begin{equation} \label{eq:phi-G} \lim_{d_\A\leq d_\B\rightarrow+\infty} \varphi_{d_\A d_\B^2}\left(\widetilde{G}_{f(1)}\ldots\widetilde{G}_{f(2p)}\right) = \left|\left\{\lambda\in NC^{(2)}(2p) \st f\circ\lambda =f \right\} \right|. \end{equation}
\end{lemma}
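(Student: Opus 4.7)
My strategy is a direct moment computation: apply the Gaussian Wick formula exactly as in Appendix \ref{appendix:gaussian} with $k=2$ and unbalanced dimensions, then analyze which pair partitions survive the limit.

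First, the same derivation that established Proposition \ref{prop:gaussian-p-preliminary}, adapted to unbalanced dimensions as in equation \eqref{eq:p-dAdB} from the proof of Theorem \ref{th:unbalanced'}, gives
\[ \mathbf{E}_{G\sim GUE(d_\A d_\B)}\,\tr\left[\widetilde{G}_{\A\B^2}(f(1))\cdots\widetilde{G}_{\A\B^2}(f(2p))\right] = \sum_{\lambda\in\mathfrak{P}^{(2)}(2p)} d_\A^{\sharp(\gamma^{-1}\lambda)}\,d_\B^{\sharp(\gamma_f^{-1}\lambda)+2-|\im(f)|}. \]
Incorporating the normalizing factor $1/(d_\A d_\B)^p$ built into $\widetilde{G}_1,\widetilde{G}_2$, and dividing by $\dim(\A\otimes\B_1\otimes\B_2)=d_\A d_\B^2$ to turn the trace into $\varphi_{d_\A d_\B^2}$, I obtain the exact identity
\[ \varphi_{d_\A d_\B^2}\left(\widetilde{G}_{f(1)}\cdots\widetilde{G}_{f(2p)}\right) = \sum_{\lambda\in\mathfrak{P}^{(2)}(2p)} d_\A^{\sharp(\gamma^{-1}\lambda)-p-1}\,d_\B^{\sharp(\gamma_f^{-1}\lambda)-|\im(f)|-p}. \]

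Next, I invoke Lemma \ref{lemma:distance} twice (exactly as in equations \eqref{eq:lambda-lambda_f'} and \eqref{eq:lambda-lambda_f''}): for every $\lambda\in\mathfrak{P}^{(2)}(2p)$ one has $\sharp(\gamma^{-1}\lambda)\leq p+1$ with equality iff $\lambda\in NC^{(2)}(2p)$, and for every such $\lambda$ and every $f:[2p]\rightarrow[2]$ one has $\sharp(\gamma_f^{-1}\lambda)\leq p+|\im(f)|$ with equality iff $\lambda\in NC^{(2)}(2p)$ and $f\circ\lambda=f$. Consequently both exponents in each summand above are non-positive, and they simultaneously vanish precisely when $\lambda\in NC^{(2)}(2p)$ satisfies $f\circ\lambda=f$; in that case the summand contributes $1$.

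Finally, I use the hypothesis $d_\A\leq d_\B\rightarrow+\infty$ to dispose of the subdominant terms. For any $\lambda$ not realizing both equalities, at least one of the exponents is a strictly negative integer, while the other is non-positive and $d_\A\leq d_\B$; hence the corresponding term is bounded by $d_\A^{-1}$ or $d_\B^{-1}$ (times a constant depending only on $p$, since $\mathfrak{P}^{(2)}(2p)$ is a finite set), and therefore tends to $0$. Passing to the limit in the finite sum yields \eqref{eq:phi-G}. The only subtle point is ensuring that $d_\A\to\infty$ along the double limit, which is exactly what $d_\A\leq d_\B\rightarrow+\infty$ guarantees (and is why the hypothesis cannot be simply $d_\B\rightarrow+\infty$ with $d_\A$ fixed, in line with the asymmetry highlighted in Theorem \ref{th:unbalanced'}).
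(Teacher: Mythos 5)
Your proof is correct and follows essentially the same route as the paper's: the same Wick-formula expansion $\sum_{\lambda\in\mathfrak{P}^{(2)}(2p)} d_\A^{\sharp(\gamma^{-1}\lambda)-p-1}d_\B^{\sharp(\gamma_f^{-1}\lambda)-p-|\im(f)|}$, followed by the two geodesic bounds from Lemma \ref{lemma:distance} to isolate the terms with both exponents equal to zero. Your closing remark on why $d_\A\leq d_\B\rightarrow+\infty$ (rather than $d_\A$ fixed) is needed to kill the subdominant terms is a welcome precision that the paper leaves implicit.
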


\begin{proof}
We know from the proof of Proposition 2.3 (and using the same notation as those employed there) that
\[ \varphi_{d_\A d_\B^2}\left(\widetilde{G}_{f(1)}\ldots\widetilde{G}_{f(2p)}\right) = \sum_{\lambda\in\mathfrak{P}^{(2)}(2p)} d_\A^{\sharp(\gamma^{-1}\lambda)-p-1}d_B^{\sharp(\gamma_f^{-1}\lambda)-p-|\im(f)|}. \]
Now, as explained there as well, for any $\lambda\in\mathfrak{P}^{(2)}(2p)$, on the one hand $\sharp(\gamma^{-1}\lambda)\leq p+1$ with equality iff $\lambda\in NC^{(2)}(2p)$, and on the other hand $\sharp(\gamma_f^{-1}\lambda)\leq p+ |\im(f)|$ with equality iff $\lambda\in NC^{(2)}(2p)$ and $f\circ\lambda=f$. The asymptotic estimate \eqref{eq:phi-G} therefore immediately follows.
\end{proof}

\begin{theorem} \label{th:free-G}
Let $G_{\A\B}$ be a random GUE matrix on $\A\otimes\B$. Then, the random matrices $G_{\A\B_1}\otimes\Id_{\B_2}$ and $G_{\A\B_2}\otimes\Id_{\B_1}$ on $\A\otimes\B_1\otimes\B_2$ are asymptotically free.
\end{theorem}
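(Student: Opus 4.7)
The plan is to reduce Theorem \ref{th:free-G} to Lemma \ref{lem:phi-G} via the standard combinatorial characterization of free semicircular families. Recall (see e.g.\ \cite{NS}, Lecture 8, or Voiculescu's original work) that a family $(s_i)_{i\in I}$ of self-adjoint elements in a non-commutative probability space $(\mathcal{A},\varphi)$ is a free family of standard semicircular variables if and only if, for every $q\in\N$ and every $f:[q]\rightarrow I$,
\[ \varphi\big(s_{f(1)}\cdots s_{f(q)}\big) = \begin{cases} 0 & \text{if $q=2p+1$ is odd,} \\ \left|\left\{\lambda\in NC^{(2)}(2p) \st f\circ\lambda =f \right\}\right| & \text{if $q=2p$ is even.} \end{cases} \]
This is precisely the formula established in Lemma \ref{lem:phi-G} for $\left(\widetilde{G}_1,\widetilde{G}_2\right)$ in the even case.

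First I would handle the (trivial) odd case: for any $f:[2p+1]\rightarrow[2]$, an argument identical to the one producing Lemma \ref{lem:phi-G} through the Wick formula shows
\[ \varphi_{d_\A d_\B^2}\big(\widetilde{G}_{f(1)}\cdots\widetilde{G}_{f(2p+1)}\big) = 0, \]
simply because $\mathfrak{P}^{(2)}(2p+1)=\emptyset$ (the Wick summation is empty, so the expectation vanishes exactly, not merely asymptotically). Combined with Lemma \ref{lem:phi-G}, this shows that all joint mixed $^*$-moments of $\left(\widetilde{G}_1,\widetilde{G}_2\right)$ converge, as $d_\A\leq d_\B\rightarrow+\infty$, to the corresponding joint mixed $^*$-moments of a pair $(s_1,s_2)$ of free standard semicircular elements.

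By definition of asymptotic freeness in moments, this is exactly the assertion that $G_{\A\B_1}\otimes\Id_{\B_2}$ and $G_{\A\B_2}\otimes\Id_{\B_1}$ (renormalised by $1/\sqrt{d_\A d_\B}$ to become of unit free variance) are asymptotically free, from which Theorem \ref{th:free-G} follows. The main (and essentially only) non-trivial content of the argument is contained in Lemma \ref{lem:phi-G}; once that genus-expansion calculation is in place, the passage to asymptotic freeness is an immediate invocation of the non-crossing pair-partition characterization of free semicirculars, so no real additional obstacle is expected.
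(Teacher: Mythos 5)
Your proposal is correct and follows essentially the same route as the paper: both deduce Theorem \ref{th:free-G} directly from Lemma \ref{lem:phi-G} by recognizing equation \eqref{eq:phi-G} as the non-crossing pair-partition rule for mixed moments of a free semicircular family. Your explicit remark that odd mixed moments vanish identically (empty Wick sum) is a small completeness point the paper leaves implicit, but it does not change the argument.
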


\begin{proof}
Theorem \ref{th:free-G} is a direct consequence of Lemma \ref{lem:phi-G} (see e.g.~\cite{NS}, proof of Proposition 22.22, for an entirely analogous argument). Indeed, as $d_\A,d_\B\rightarrow+\infty$, the two empirical spectral distributions $\mu_{\widetilde{G}_{1}}$ and $\mu_{\widetilde{G}_{2}}$ both converge to the semicircular distribution with mean $0$ and variance $1$. And equation \eqref{eq:phi-G} is exactly the rule for computing mixed moments in two free such semicircular distributions (see e.g.~\cite{NS}, Lecture 12).
\end{proof}

\begin{lemma} \label{lem:phi-W}
Given two finite-dimensional Hilbert spaces $\A\equiv\C^{d_\A}$, $\B\equiv\C^{d_\B}$, and $W$ a random Wishart matrix on $\A\otimes\B$ with parameter $c d_\A d_\B\in\N$, we define the following random matrices on $\A\otimes\B_1\otimes\B_2$:
\[ \widetilde{W}_{1}=\frac{1}{d_\A d_\B}\,W_{\A\B_1}\otimes\Id_{\B_2}\ \ \text{and}\ \ \widetilde{W}_{2}=\frac{1}{d_\A d_\B}\,W_{\A\B_2}\otimes\Id_{\B_1}. \]
Then, for any $p\in\N$ and any function $f:[p]\rightarrow[2]$,
\begin{equation} \label{eq:phi-W} \lim_{d_\A\leq d_\B\rightarrow+\infty} \varphi_{d_\A d_\B^2} \left(\widetilde{W}_{f(1)}\ldots\widetilde{W}_{f(p)}\right) = \sum_{\underset{f\circ\alpha=f}{\alpha\in NC(p)}}c^{\sharp(\alpha)}. \end{equation}
\end{lemma}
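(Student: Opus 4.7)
The plan is to proceed in direct analogy with the proof of Lemma \ref{lem:phi-G}, but now pushing through the graphical Wick formula for Wishart matrices developed in Appendix \ref{appendix:wishart} and specialized to the current unbalanced bipartite setting with $k=2$ tensor copies of $\mathrm{B}$. First, I would unfold $\mathbf{E}\,\tr[\widetilde{W}_{f(1)}\cdots \widetilde{W}_{f(p)}]$ via the same diagrammatic reasoning as in Proposition \ref{prop:wishart-p-preliminary}: each permutation $\alpha \in \mathfrak{S}(p)$ arising from the Gaussian pairing contributes a number of loops equal to $\sharp(\gamma^{-1}\alpha)$ on the $d_\A$-dimensional gates (system $\mathrm{A}$), $\sharp(\hat{\gamma}^{-1}\hat{\alpha}_f)$ on the $d_\B$-dimensional gates (systems $\mathrm{B}_1,\mathrm{B}_2$), and $\sharp(\alpha)$ on the $s$-dimensional gates, where $s = c\,d_\A d_\B$. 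Applying Proposition \ref{prop:geodesic-alpha_f} with $k=2$ replaces the second exponent by $\sharp(\gamma_f^{-1}\alpha) + 2 - |\im(f)|$. After dividing by $(d_\A d_\B)^p$ (from the $\widetilde{W}_j$ normalization) and by $d_\A d_\B^2$ (from $\varphi$), I obtain the exact expansion
\[ \varphi_{d_\A d_\B^2}\!\left(\widetilde{W}_{f(1)}\cdots \widetilde{W}_{f(p)}\right) = \sum_{\alpha\in\mathfrak{S}(p)} c^{\sharp(\alpha)}\, d_\A^{\,\sharp(\gamma^{-1}\alpha)+\sharp(\alpha)-p-1}\, d_\B^{\,\sharp(\gamma_f^{-1}\alpha)+\sharp(\alpha)-p-|\im(f)|}. \]

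Next, I would analyse the two exponents with Lemmas \ref{lemma:cycles-transpositions} and \ref{lemma:distance}. For the exponent of $d_\A$, equation \eqref{eq:geodesic'} together with $\sharp(\alpha) = p-|\alpha|$ and $\sharp(\gamma^{-1}\alpha) = p-|\gamma^{-1}\alpha|$ shows it equals $-(|\alpha| + |\gamma^{-1}\alpha| - |\gamma|) \leq 0$, with equality if and only if $\alpha \in NC(p)$. For the exponent of $d_\B$, equation \eqref{eq:geodesic''} yields $-(|\alpha|+|\gamma_f^{-1}\alpha|-|\gamma_f|) \leq 0$, with equality if and only if $\alpha \in NC(p)$ and $f\circ\alpha = f$. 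Moreover, by the parity argument of Lemma \ref{lemma:distance-general}, both exponents are even; hence whenever they are not zero they are at most $-2$.

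Under the assumption $d_\A \leq d_\B$ (both tending to infinity), I would then distinguish two cases for a permutation $\alpha$ not saturating both geodesic inequalities. If $\alpha \notin NC(p)$, then $d_\A^{\,\sharp(\gamma^{-1}\alpha)+\sharp(\alpha)-p-1} = O(d_\A^{-2})$ while the other factor is at most $1$, so the contribution vanishes. If $\alpha \in NC(p)$ but $f\circ\alpha \neq f$, then the $d_\A$-exponent is $0$ while the $d_\B$-exponent is at most $-2$, so the contribution again vanishes. Since the sum is over the finite set $\mathfrak{S}(p)$, only the surviving terms $\alpha \in NC(p)$ with $f\circ\alpha=f$ remain in the limit, each contributing $c^{\sharp(\alpha)}$, which is exactly the advertised right-hand side.

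Essentially no real obstacle arises, since the combinatorial infrastructure (Wick expansion, Proposition \ref{prop:geodesic-alpha_f}, the geodesic lemmas) has already been put in place. The only mild subtlety is the unbalanced limit $d_\A \leq d_\B \to +\infty$: one must check that a non-geodesic permutation with respect to $\gamma_f$ cannot be compensated by being geodesic with respect to $\gamma$ (even though $d_\B$ may be much larger than $d_\A$), which is ensured by the fact that both exponents are simultaneously non-positive and the saturation conditions for them are nested.
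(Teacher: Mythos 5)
Your proposal is correct and follows essentially the same route as the paper: the exact moment expansion $\sum_{\alpha\in\mathfrak{S}(p)} c^{\sharp(\alpha)} d_\A^{\sharp(\gamma^{-1}\alpha)+\sharp(\alpha)-p-1} d_\B^{\sharp(\gamma_f^{-1}\alpha)+\sharp(\alpha)-p-|\im(f)|}$ obtained from the graphical Wick formula together with Proposition \ref{prop:geodesic-alpha_f}, followed by the two geodesic inequalities and their saturation conditions. The only difference is that you spell out the case analysis for the unbalanced limit $d_\A\leq d_\B\to+\infty$ (both exponents non-positive, so no compensation is possible), which the paper leaves implicit in ``the asymptotic estimate therefore immediately follows.''
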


\begin{proof}
We know from the proof of Proposition 6.2 (and using the same notation as those employed there) that
\[ \varphi_{d_\A d_\B^2} \left(\widetilde{W}_{f(1)}\ldots\widetilde{W}_{f(p)}\right) = \sum_{\alpha\in\mathfrak{S}(p)} c^{\sharp(\alpha)}d_\A^{\sharp(\alpha)+\sharp(\gamma^{-1}\alpha)-p-1}d_\B^{\sharp(\alpha)+\sharp(\gamma_f^{-1}\lambda)-p-|\im(f)|}. \]
Now, as explained there as well, for any $\alpha\in\mathfrak{S}(p)$, on the one hand $\sharp(\alpha)+\sharp(\gamma^{-1}\alpha)\leq p+1$ with equality iff $\alpha\in NC(p)$, and on the other hand $\sharp(\alpha)+\sharp(\gamma_f^{-1}\alpha)\leq p+ |\im(f)|$ with equality iff $\alpha\in NC(p)$ and $f\circ\alpha=f$. The asymptotic estimate \eqref{eq:phi-W} therefore immediately follows.
\end{proof}

\begin{theorem} \label{th:free-W}
Let $W_{\A\B}$ be a random Wishart matrix on $\A\otimes\B$ with parameter $c d_\A d_\B\in\N$. Then, the random matrices $W_{\A\B_1}\otimes\Id_{\B_2}$ and $W_{\A\B_2}\otimes\Id_{\B_1}$ on $\A\otimes\B_1\otimes\B_2$ are asymptotically free.
\end{theorem}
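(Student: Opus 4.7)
The plan is to deduce Theorem \ref{th:free-W} from Lemma \ref{lem:phi-W} in the same way that Theorem \ref{th:free-G} is deduced from Lemma \ref{lem:phi-G}. First, one observes that the marginal distributions of $\widetilde{W}_1$ and $\widetilde{W}_2$ both converge, as $d_\A,d_\B\rightarrow +\infty$, to the Mar\v{c}enko--Pastur distribution $\mu_{MP(c)}$. Indeed, taking $f\equiv 1$ (or $f\equiv 2$) in equation \eqref{eq:phi-W}, the constraint $f\circ\alpha=f$ becomes vacuous, so the $p$-th moment of the marginal limit is $\sum_{\alpha\in NC(p)}c^{\sharp(\alpha)}=\mathrm{M}_{MP(c)}^{(p)}$, which matches the computation already recalled right after Proposition \ref{prop:wishart-p}.

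Next, I would recognize the right-hand side of \eqref{eq:phi-W} as precisely the mixed-moment formula for two free copies of a Mar\v{c}enko--Pastur element of parameter $c$. Recall the moment-cumulant relation in free probability: for any tuple $a_{f(1)},\ldots,a_{f(p)}$ drawn from a family of free random variables, one has
\[ \varphi\big(a_{f(1)}\cdots a_{f(p)}\big) = \sum_{\underset{f\circ\alpha=f}{\alpha\in NC(p)}} \prod_{C\in\alpha}\kappa_{|C|}\big(a_{f(C)}\big), \]
the condition $f\circ\alpha=f$ expressing exactly that each block of $\alpha$ is ``monochromatic'' (a necessary and sufficient condition for non-vanishing contribution under freeness). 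For a Mar\v{c}enko--Pastur distribution of parameter $c$, it is classical that all free cumulants equal $c$, so the above reduces to $\sum_{\alpha\in NC(p),\,f\circ\alpha=f}c^{\sharp(\alpha)}$, which is exactly the right-hand side of equation \eqref{eq:phi-W}.

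Thus, the asymptotic joint moments of $(\widetilde{W}_1,\widetilde{W}_2)$ coincide with the joint moments of two free Mar\v{c}enko--Pastur elements of parameter $c$; equivalently, the mixed free cumulants of $\widetilde{W}_1$ and $\widetilde{W}_2$ vanish asymptotically, which is by definition the asymptotic freeness of $W_{\A\B_1}\otimes\Id_{\B_2}$ and $W_{\A\B_2}\otimes\Id_{\B_1}$. I expect no serious obstacle here: the combinatorial content has essentially been done in Lemma \ref{lem:phi-W} (whose proof reuses the genus expansion of Proposition \ref{prop:wishart-p-preliminary}), and the remaining step is the purely formal identification of the combinatorial formula with the free moment-cumulant expansion for Mar\v{c}enko--Pastur. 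The only point requiring a bit of care is to state clearly in which free probability space asymptotic freeness is understood (namely the one $\left[\mathcal{M}_{d_\A d_\B^2}(L^\infty[\Omega,P]),\varphi_{d_\A d_\B^2}\right]$ introduced just before Lemma \ref{lem:phi-G}), so that the limiting mixed-moment identity genuinely corresponds to freeness in the non-commutative probabilistic sense.
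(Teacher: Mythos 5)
Your proposal is correct and follows essentially the same route as the paper: deduce the theorem from Lemma \ref{lem:phi-W} by observing that the marginals converge to $\mu_{MP(c)}$ and that the right-hand side of \eqref{eq:phi-W} is exactly the mixed-moment formula for two free Mar\v{c}enko--Pastur elements of parameter $c$ (the paper cites \cite{NS} for this identification, whereas you spell it out via the free moment--cumulant relation and the fact that all free cumulants of $\mu_{MP(c)}$ equal $c$, which is a correct and welcome elaboration of the same step).
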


\begin{proof}
Theorem \ref{th:free-W} is a direct consequence of Lemma \ref{lem:phi-W} (see e.g.~\cite{NS}, proof of Proposition 22.22, for an entirely analogous argument). Indeed, as $d_\A,d_\B\rightarrow+\infty$, the two empirical spectral distributions $\mu_{\widetilde{W}_{1}}$ and $\mu_{\widetilde{W}_{2}}$ both converge to the Mar\v{c}enko-Pastur distribution with parameter $c$. And equation \eqref{eq:phi-W} is exactly the rule for computing mixed moments in two free such Mar\v{c}enko-Pastur distributions (see e.g.~\cite{NS}, Lectures 12 and 13).
\end{proof}


\addcontentsline{toc}{section}{References}

\end{document}